\numberwithin{equation}{section}
\newtheorem{theorem}{Theorem}[section]
\newtheorem{proposition}{Proposition}[section]
\newtheorem{lemma}[proposition]{Lemma}
\newtheorem{corollary}[proposition]{Corollary}
\theoremstyle{definition}
\newtheorem{definition}{Definition}[section]
\newtheorem{remark}{Remark}[section]
\theoremstyle{remark}
\newtheorem{example}{Example}[section]
\renewcommand{\theproposition}{\arabic{section}-\arabic{proposition}}
\newcommand{\eqdef}{\overset{\mbox{\tiny{def}}}{=}}
\newcommand{\sforspace}{\mathbf{s}}
\newcommand{\Mg}{\underline{g}{}}
\newcommand{\Ent}{{\eta}}
\newcommand{\wtv}{\widetilde{v}}
\newcommand{\wtg}{\widetilde{\gamma}_c}
\newcommand{\wtRc}{\widetilde{R}_c}
\newcommand{\wtP}{\widetilde{P}}
\newcommand{\wtQc}{\widetilde{Q}_c}
\newcommand{\vsubc}{v}
\newcommand{\Qc}{Q_c}
\newcommand{\Qinfinity}{Q_{\infty}}
\newcommand{\Rc}{R_c}
\newcommand{\Rinfinity}{R_{\infty}}
\newcommand{\gammac}{\gamma_c}
\newcommand{\Vbold}{\mathbf{V}}
\newcommand{\Vboldt}{\widetilde{\Vbold}}
\newcommand{\Vb}{\bar{\Vbold}}
\newcommand{\Vdot}{\dot{\Vbold}}
\newcommand{\VID}{\mathring{\Vbold}}
\newcommand{\VIDSmoothed}{{^{(0)}{\VID}}}
\newcommand{\scrV}{\boldsymbol{\mathscr{V}}}
\newcommand{\scrVb}{\bar{\scrV}}
\newcommand{\scrVtilde}{\boldsymbol{\widetilde{\mathscr{V}}}}
\newcommand{\scrVID}{\mathring{\scrV}}
\newcommand{\Wbold}{\mathbf{W}}
\newcommand{\Wb}{\bar{\Wbold}}
\newcommand{\Wboldt}{\widetilde{\Wbold}}
\newcommand{\Wdot}{\dot{\Wbold}}
\newcommand{\WID}{\mathring{\Wbold}}
\newcommand{\WIDSmoothed}{{^{(0)}{\WID}}}
\newcommand{\scrW}{\boldsymbol{\mathscr{W}}}
\newcommand{\scrWb}{\bar{\scrW}}
\newcommand{\scrWID}{\mathring{\scrW}}
\newcommand{\scrWtilde}{\boldsymbol{\widetilde{\mathscr{W}}}}
\newcommand{\scrWdot}{\boldsymbol{\dot{\mathscr{W}}}}
\newcommand{\Phidot}{\dot{\Phi}}
\newcommand{\Phit}{\widetilde{\Phi}}
\newcommand{\PhiID}{\mathring{\Phi}}
\newcommand{\Phibar}{\bar{\Phi}}
\newcommand{\PsiID}{\mathring{\Psi}}
\newcommand{\vID}{\mathring{v}}
\newcommand{\pID}{\mathring{p}}
\newcommand{\pIDSmoothed}{{^{(0)}{\pID}}}
\newcommand{\pbar}{\bar{p}}
\newcommand{\EntID}{\mathring{\Ent}}
\newcommand{\Entdot}{\dot{\Ent}}
\newcommand{\EntIDSmoothed}{{^{(0)}{\EntID}}}
\newcommand{\Entbar}{\bar{\Ent}}
\newcommand{\Pbar}{\bar{P}}
\newcommand{\Pdot}{\dot{P}}
\newcommand{\Jscrdot}{\dot{\mathscr{J}}}
\newcommand{\Jscrdotc}{{^{(c)}{\dot{\mathscr{J}}}}}
\newcommand{\Jscrdotinfinity}{{^{(\infty)}{\dot{\mathscr{J}}}}}
\newcommand{\Ac}{{_c\mathscr{A}}}
\newcommand{\Ainfinity}{{_{\infty}\mathscr{A}}}
\newcommand{\leqc}{\lesssim}
\begin{document}
\pagestyle{fancy}

\title{The Non-Relativistic Limit of the Euler-Nordstr\"{o}m System with Cosmological Constant}

\author{Jared Speck}

\address{Department of Mathematics, Rutgers University\footnote{This article was
finalized while the author was a postdoctoral researcher in the Princeton University math department.},
Hill Center, 110 Frelinghuysen Rd., Piscataway, NJ 08854, USA}  
\email{jspeck@math.princeton.edu}

\begin{abstract}
In this paper the author studies the singular limit $c \to \infty$ of the family
of Euler-Nordstr\"{o}m systems indexed by the parameters $\kappa^2$
and $c$ (EN$_{\kappa}^c$), where $\kappa^2 > 0$ is the cosmological
constant and $c$ is the speed of light. Using Christodoulou's
techniques to generate energy currents, the author develops
Sobolev estimates that show that initial data belonging to an appropriate Sobolev space
launch unique solutions to the EN$_{\kappa}^c$ system that converge to corresponding unique solutions of the
Euler-Poisson system with the cosmological constant $\kappa^2$ as $c$ tends to infinity.
\end{abstract}

\keywords{Cosmological constant; energy current; Euler equations; Euler-Poisson, hyperbolic PDEs; Newtonian limit; non-relativistic limit; Gunnar Nordstr\"{o}m; relativistic fluid; scalar gravity; singular limit; Vlasov-Nordstr\"{o}m}

\subjclass{35L81, 35M99, 83C55, 83D05}

\date{Version of \today}
\maketitle

\tableofcontents

\section{Introduction}
The Euler-Nordstr\"{o}m system models the evolution of a relativistic perfect fluid
with self-interaction mediated by Nordstr\"{o}m's theory of scalar gravity. In \cite{jS2008a}, we introduced the
system in dimensionless units and showed that the Cauchy problem is locally well-posed in the
Sobolev space\footnote{More precisely, we showed local well-posedness in a suitable affine shift of $H^N$ for 
$N \geq 3,$ where by ``affine shift'' of $H^N$ we mean the collection of all functions $F$ such
that $\|F - \Vb\|_{H^N} < \infty,$ where $\Vb$ is a fixed constant array; see Section \ref{S:Remarks}
for further discussion of this function space.} $H^N$ for $N \geq 3.$ In this article, we study the non-relativistic (also known as the ``Newtonian'') limit of the family of Euler-Nordstr\"{o}m systems indexed by the parameters $\kappa$ and
$c$ (EN$_{\kappa}^c$), where $\kappa^2$ is the cosmological
constant\footnote{The parameter $\kappa^2 > 0 $ is fixed throughout
this article. Remark \ref{R:KernelRemark} contains an explanation of why
our proof breaks down in the case $\kappa^2 =0.$ } and $c$ is the speed of light. The limit $c \to
\infty$ is singular because the EN$_{\kappa}^c$ system is hyperbolic
for all finite $c,$ while the limiting system, namely the Euler-Poisson system with
a cosmological constant (EP$_{\kappa}$), is not hyperbolic.
Using Christodoulou's techniques \cite{dC2000} to generate energy
currents, together with elementary harmonic analysis, we develop Sobolev estimates and use them to 
study the singular limit $c \to \infty.$ 

Before introducing our main theorem, we place this article in context by mentioning some related works. We remark that our list of references is not exhaustive. In \cite{sKaM1981}, Klainerman and Majda study singular limits in quasilinear symmetric hyperbolic systems, and in particular the incompressible limit (as the Mach number tends to $0$) of compressible fluids. In \cite{aR1994}, Rendall studies the singular limit $c \to \infty$ of the Vlasov-Einstein system and proves
that a class of data launches solutions to this system that converge to corresponding solutions of the Vlasov-Poisson system
as $c \to \infty$, thereby obtaining the first rigorous existence proof for the $c \to \infty$ limit of the Einstein equations coupled to a matter field. In \cite{sChL2004}, Calogero and Lee study the singular limit $c \to \infty$ of the Vlasov-Nordstr\"{o}m system and prove that a class of data launches solutions to this system that converge to corresponding solutions of the Vlasov-Poisson system at the rate $O(c^{-1}),$ a result analogous to our main theorem. In \cite{sB2005}, Bauer improves the rate of convergence to $O(c^{-4}),$ which is known as a ``1.5 post-Newtonian approximation.'' In \cite{sBmKgRaR2006}, Bauer, Kunze, Rein, and Rendall study the Vlasov-Maxwell and Vlasov-Nordstr\"{o}m systems and obtain a formula that relates the radiation flux at infinity to the motion of matter and that is analogous to the Einstein quadrupole formula (see e.g. \cite{nS1984}) in general relativity. In \cite{tO2007}, Oliynyk studies the singular limit $c \to \infty$ of the Euler-Einstein system. He exhibits a class of data that launches solutions that converge to corresponding solutions of the Euler-Poisson system as $c \to \infty,$ while in \cite{tO2008}, he improves the rate of convergence by showing that the ``first post-Newtonian expansion'' is valid.   

Our main theorem is in the spirit of the above results. We state it loosely here, and we state and prove it rigorously as Theorem \ref{T:NewtonianLimit}:

\begin{changemargin}{.25in}{.25in} 
\textbf{Main Theorem.} \ 
Let $N \geq 4$ be an integer, and assume that $\kappa^2 > 0.$ Then initial data belonging to a 
suitable affine shift of the Sobolev space $H^N$ launch unique solutions to the EN$_{\kappa}^c$ system that converge uniformly on a spacetime slab $[0,T] \times \mathbb{R}^3$ to corresponding unique solutions of the EP$_{\kappa}$ system as the speed of light $c$ tends to infinity.
\end{changemargin}
\noindent We remark that although we explicitly discuss only the EN$_{\kappa}^c$ system in this article, the techniques we apply can be generalized under suitable hypotheses to study singular limits of hyperbolic systems that derive from a Lagrangian and that feature a small parameter\footnote{The small parameter is $c^{-2}$ in the case of the EN$_{\kappa}^c$ system.}.

As discussed in \cite{jS2008a}, we consider the EN$_{\kappa}^c$ system to be a mathematical scalar 
caricature of the Euler-Einstein system with cosmological constant (EE$_{\kappa}^c$). We now provide some justification for this point of view. First of all, like the EE$_{\kappa}^c$ system, the EN$_{\kappa}^c$ system is a metric theory of gravity featuring gravitational waves that propagate along null cones. Second, the main theorem stated above shows that if $\kappa^2 > 0,$ then the Newtonian limit of the EN$_\kappa^c$ system is the EP$_{\kappa}$ system. Furthermore, as previously mentioned, Oliynyk's work \cite{tO2007} shows that
the Newtonian limit of the EE$_{0}^c$ system is the EP$_0$ system. Based on these considerations, we therefore expect\footnote{We temper this expectation by recalling that our proof does not work in the case $\kappa^2 =0$ and that in contrast to the initial value problem studied here, Oliynyk considers the case $\kappa^2 = 0$ with compactly supported data under an adiabatic equation of state. This special class of equations of state allows one to make a ``Makino'' change of variables that regularizes the equations and overcomes the singularities that typically occur in the equations in regions where the proper energy density vanishes. Furthermore, this change of variables enables one to write the relativistic Euler equations in symmetric hyperbolic form. See \cite{tM1986} and \cite{aR1992a} for additional examples of this change of variables in the context of various fluid models.} that achieving an understanding of the evolution of solutions to the EN$_{\kappa}^c$ system will provide insight into the behavior of solutions to the vastly more complicated EE$_{\kappa}^c$ system. 

\subsection{Outline of the structure of the paper} 

Before proceeding, we outline the structure of this article. In Section \ref{S:Remarks}, we introduce some notation 
that we use throughout our discussion. In Section \ref{S:Origin}, we derive the EN$_{\kappa}^c$ equations with the parameter $c$ and then rewrite the equations using Newtonian state-space variables, a change of variables that is essential for comparing
the relativistic system EN$_{\kappa}^c$ to the non-relativistic system EP$_{\kappa}.$ In Section \ref{S:FormalLimit}, we provide for convenience the EN$_{\kappa}^c$ and EP$_{\kappa}$ systems in the form used for the remainder of the article. From this form, it is clear that formally, $\lim_{c \to \infty}$ EN$_{\kappa}^c$ $=$ EP$_{\kappa}.$ In Section \ref{SS:EOVc}, we introduce standard PDE matrix notation and discuss the Equations of Variation (EOV$_{\kappa}^c$), which are the linearization of the EN$_{\kappa}^c$ and EP$_{\kappa}$ systems. In Section \ref{S:cDependence}, we provide an extension of the Sobolev-Moser calculus that is useful for bookkeeping powers of $c.$ We also introduce some hypotheses on the $c-$dependence of the equation of state that are sufficient to prove our main theorem. We then apply the calculus to the EN$_{\kappa}^c$ system by proving several preliminary lemmas that are useful in the technical estimates that appear later. Roughly speaking, the lemmas describe the $c \to \infty$ asymptotics of the EN$_{\kappa}^c$ equations. 

In Section \ref{S:EnergyCurrentsc}, we introduce the energy currents that are used to control the Sobolev norms of the solutions. One of the essential features of the currents that we use is that they have a positivity property that is uniform for all large $c.$ In Section \ref{S:IVPc}, we describe a class of initial data for which our main theorem holds, and in Section \ref{S:SmoothingtheData}, we smooth the initial data for technical reasons. In Section \ref{S:UniforminTimeLocalExistence}, we recall the local existence result  \cite{jS2008a} for the EN$_{\kappa}^c$ system and prove an important precursor to our main theorem. Namely, we prove that solutions to the EN$_{\kappa}^c$ system exist on a common interval of time $[0,T]$ for all large $c.$ This proof is separated into two parts. The first part is a continuous induction argument based on some technical lemmas. The second part is the proof of these technical lemmas, which are a series of energy estimates derived with the aid of the calculus developed in Section \ref{S:cDependence}. The two basic tools we use for generating the energy estimates are energy currents and the estimate $\| f \|_{H^2} \leq C \cdot \|( \Delta  - \kappa^2)f \|_{L^2},$ for $f \in H^2.$ In Section \ref{S:NonrelativisticLimit}, we state and prove our main theorem.

\section{Remarks on the Notation} \label{S:Remarks}
    We introduce here some notation that is used throughout this
    article, some of which is non-standard. We assume that the reader is familiar with standard notation for the
    $L^p$ spaces and the Sobolev spaces $H^k.$ Unless otherwise stated, the symbols $L^p$ and $H^k$ refer to
    $L^p(\mathbb{R}^3)$ and $H^k(\mathbb{R}^3)$ respectively.

\subsection{Notation regarding differential operators}
    If $F$ is a scalar or \\
    finite-dimensional array-valued function on
    $\mathbb{R}^{1 + 3},$ then
    $D^{(a)}F$ denotes the array consisting of all $a^{th}-$order spacetime coordinate partial derivatives (including
    partial derivatives with respect to time) of every component of
    $F,$ while $\partial^{(a)} F$ denotes the array of
    consisting of all $a^{th}-$order \emph{spatial} coordinate partial derivatives of every component of $F.$ We
    write $DF$ and $\partial F$ respectively instead of $D^{(1)} F$ and $\partial^{(1)} F.$
    $\nabla$ denotes the Levi-Civita connection corresponding to the spacetime metric $g$ defined in 
   	\eqref{E:NordstromMetric}.

\subsection{Index conventions}                  \label{SS:IndexConventions}
    We adopt Einstein's convention that diagonally repeated Latin
    indices are summed from $1 \ \mbox{to} \ 3,$ while diagonally repeated Greek
    indices are summed from $0 \ \mbox{to} \ 3.$ Indices are raised an lowered using the spacetime metric
    $g,$ which is defined in \eqref{E:NordstromMetric}, or the Minkowski metric $\Mg,$ depending on context.

\subsection{Notation regarding norms and function spaces} \label{SS:NormsandFunctionSpaces}
	If $E \subset \mathbb{R}^3$ and $\bar{\mathbf{V}} \subset \mathbb{R}^n$ is a constant array, we use the notation
    \begin{align} \label{E:LpVbNormDef}
        & \|F\|_{L_{\Vb}^p(E)}
        \overset{\mbox{\tiny{def}}}{=} \|F - \Vb
        \|_{L^p(E)},
    \end{align}
    and we denote the set of all (array-valued) Lebesgue measurable functions $F$ such \\ 
    that $\|F\|_{L_{\Vb}^p(E)} < \infty$
    by $L_{\Vb}^p(E).$ We also define the $H^j_{\Vb}(E)$ norm of $F$ by
    \begin{align}     \label{E:NJVbNormDef}
    \|F\|_{H^j_{\Vb}(E)}
        \eqdef \Big( \sum_{|\vec{\alpha}| \leq j} \|\partial_{\vec{\alpha}}(F - \Vb)\|_{L^2(E)}^2 \Big)^{1/2}, 
   	\end{align}
   	where $\partial_{\vec{\alpha}}$ is a multi-indexed operator representing repeated partial differentiation
    with respect to \emph{spatial} coordinates. Unless we indicate otherwise, we assume that $E = 
   	\mathbb{R}^3$ when the set $E$ is not explicitly written. 
    
    \begin{remark} \label{R:Norm}
    		Technically speaking, the $\| \cdot \|_{H^j_{\Vb}}$ are not norms in general, since for example
       	$\| \mathbf{0} \|_{H^j_{\Vb}} = \infty$ unless $\Vb = \mathbf{0}.$ This is not a problem because in this article, we only
        study the $\| \cdot \|_{H^j_{\Vb}}$ ``norm'' of functions $F$ that by design feature 
        $\|F\|_{H^j_{\Vb}} < \infty.$
    \end{remark}

    If $F$ is a map from $[0,T]$ into the normed function space $X,$ we use the notation
    \begin{equation}
        \mid\mid\mid F \mid\mid\mid_{X,T} \ \overset{\mbox{\tiny{def}}}{=} \underset{t \in
        [0,T]}{\sup} \|F(t)\|_{X} \label{E:HNprimesupovert}.
    \end{equation}
    We also use the notation
    $C^j([0,T],X)$ to denote the set of $j$-times continuously differentiable maps from $(0,T)$ into
    $X$ that, together with their derivatives up to order $j,$ extend continuously to $[0,T].$

    If $\mathfrak{D} \subset \mathbb{R}^n,$ then
    $C^j_b(\mathfrak{D})$ denotes the set of $j-$times continuously differentiable
    functions (either scalar or array-valued, depending on context) on Int$(\mathfrak{D})$ with bounded derivatives up to
    order $j$ that extend continuously to the closure of $\mathfrak{D}.$ The norm of a function $\mathfrak{F} 
    \in C^j_b(\mathfrak{D})$ is defined by
    \begin{equation} \label{E:CbkNormDef}
        |\mathfrak{F}|_{j,\mathfrak{D}} \overset{\mbox{\tiny{def}}}{=} \sum_{|\vec{I}|\leq j} \sup_{z \in \mathfrak{D}}
        |\partial_{\vec{I}}\mathfrak{F}(z)|,
    \end{equation}
    where $\partial_{\vec{I}}$ is a multi-indexed operator representing repeated partial differentiation
    with respect to the arguments $z$ of $\mathfrak{F},$ which may be either spacetime coordinates or state-space variables
    depending on context.

\subsection{Notation for $c-$independent inequalities}
    If $A_c$ is a quantity that depends on the parameter $c,$ and $X$
    is a quantity such that $A_c \leq X$ holds for all large $c,$ then we indicate this by writing 
    \begin{align} \label{E:leqcdef}
    	A_c \leqc X. 
    \end{align}

\subsection{Notation regarding constants} \label{SS:Constants}
    We use the symbol $C$ to denote a generic constant in the estimates
    below which is free to vary from line to line. If the constant
    depends on quantities such as real numbers $N,$ subsets $\mathfrak{D}$ of
    $\mathbb{R}^n,$ functions $\mathfrak{F}$ of the state-space variables, etc., that are peripheral
    to the argument at hand, we sometimes
    indicate this dependence by writing $C(N,\mathfrak{D},\mathfrak{F}),$
    etc. We explicitly show the dependence on such quantities when it is (in
    our judgment) illuminating, but we often omit the dependence on such quantities
    when it overburdens the notation without being illuminating. Occasionally, we shall use additional symbols
    such as $\Lambda_1, Z, L_2,$ etc., to denote constants that play a distinguished role in the discussion.

\section{The Origin of the EN$_{\kappa}^c$ System} \label{S:Origin}
        In this section, we insert both the speed of light $c$ and Newton's universal
        gravitational constant $G$ into the Euler-Nordstr\"{o}m system with a cosmological constant and perform a 
        Newtonian change of variables, which brings the system into the form \eqref{E:ENkappac1} - \eqref{E:PDefcII}. A similar analysis 
        for the Vlasov-Nordstr\"{o}m system\footnote{The Vlasov-Nordstr\"{o}m (VN) model describes a particle density function $f$ on
        physical space $\times$ momentum space that evolves due to self-interaction mediated by Nordstr\"{o}m's theory
        of gravity. Various aspects of this system are studied, for example, in \cite{sC2003}, and \cite{sC2006}.} 	
        is carried out in \cite{sChL2004}.

    \subsection{Deriving the equations with $c$ as a parameter} \label{SS:ENkappacDerivation} 
		We assume that spacetime is a four-dimensional Lorentzian manifold $\mathcal{M}$ and 
    furthermore, that there is a global rectangular (inertial) coordinate system on $\mathcal{M}$. We use the notation
    \begin{align}                               \label{E:SpacetimePoint}
        x=(x^0,x^1,x^2,x^3)
    \end{align}
    to denote the components of a spacetime point $x$ in this fixed coordinate system, and for this
    preferred time-space splitting, we identify $t=x^0$ with time and \\ 
    $\sforspace=(x^1,x^2,x^3)$ with space.
    Note that we are breaking with the usual convention, which is $x^0 =ct.$ The components of the Minkowski metric and its inverse
    in the inertial coordinate system are given by 
    \begin{align}
    	\Mg_{\mu \nu} &= \mbox{diag}(-c^2,1,1,1) \\
    	\Mg^{\mu \nu} &= \mbox{diag}(-c^{-2},1,1,1)
  	\end{align}
    respectively. We adopt Nordstr\"{o}m's postulate, namely that the \emph{spacetime metric} $g$
    is related to the Minkowski metric by a conformal scaling factor:
    \begin{align}
        g_{\mu \nu}= e^{2\phi} \Mg_{\mu \nu}.                       \label{E:NordstromMetric}
    \end{align}
    In \eqref{E:NordstromMetric}, $\phi$ is the \emph{dimensionless} \emph{cosmological-Nordstr\"{o}m potential}, a scalar 
    quantity.

    We now briefly introduce the notion of a relativistic perfect fluid. Readers may consult \cite{nAgG2007} or
    \cite{dC1995} for more background. For a perfect fluid model, the components of the energy-momentum-stress density
    tensor (which is commonly called the ``energy-momentum tensor'' in the literature) of matter read
    \begin{align}               \label{E:EMTensorcDef}
        T^{\mu \nu} = c^{-2}(\rho + p) u^{\mu} u^{\nu}
        + p g^{\mu \nu} = c^{-2}(\rho + p) u^{\mu} u^{\nu}
        + e^{-2 \phi}p \Mg^{\mu \nu},
     \end{align}
     where $\rho$ is the \emph{proper energy density} of the
     fluid, $p$ is the \emph{pressure} (this ``proper'' quantity is defined in a local rest frame), and $u$ is the
     \emph{four-velocity}, which is subject to the normalization constraint
     \begin{align}                               \label{E:uNormalizedcSquared}
        g_{\mu \nu} u^{\mu} u^{\nu} = e^{2\phi} \Mg_{\mu \nu} u^{\mu} u^{\nu}= -c^2.
     \end{align}
     The Euler equations for a perfect fluid are (see e.g. \cite{dC1995})
     \begin{align}
        \nabla_{\mu} T^{\mu \nu} &= 0 \qquad (\nu=0,1,2,3) \label{E:Euler} \\
        \nabla_{\mu}(n u^{\mu}) &= 0,  \label{E:nandulaw}
     \end{align}
     where $n$ is the \emph{proper number density} and $\nabla$ denotes the covariant derivative induced by the spacetime metric $g.$

    Nordstr\"{o}m's theory\footnote{Norstr\"{o}m's theory of gravity, although shown to be physically wrong through experiment, was 
    the first metric theory of gravitation.} \cite{gN1913} provides the 
    following evolution equation\footnote{Nordstr\"{o}m considered only the case $\kappa = 0.$} for $\phi:$ we
    define an auxiliary energy-momentum-stress density tensor
    \begin{align}
        T_{\mbox{\tiny{aux}}}^{\mu \nu} \eqdef e^{6\phi}T^{\mu \nu} = c^{-2}e^{6 \phi}(\rho + p) u^{\mu}
        u^{\nu} + e^{4 \phi}p \Mg^{\mu \nu},                                                   \label{E:EMAuxDefc}
    \end{align}
    and postulate that $\phi$ is a solution to
    \begin{align}
        \square \phi - {\kappa}^2 \phi = -4 \pi
        c^{-4} Ge^{4\phi} \mbox{tr}_g T
        = -4 \pi c^{-4}G \Mg_{\mu \nu}T_{\mbox{\tiny{aux}}}^{\mu \nu} =
        4 \pi c^{-4}Ge^{4 \phi}(\rho - 3p).                                                              \label{E:phi}
    \end{align}
    Note that
    \begin{align}
        \square \phi \eqdef \Mg^{\mu \nu} \partial_{\mu} \partial_{\nu} \phi= -c^{-2} \partial^2_t \phi + \triangle \phi
    \end{align}
    is the wave operator on flat spacetime applied to $\phi.$ The virtue of the postulate equation
    \eqref{E:phi}, as we shall see, is that it provides us with continuity equations \eqref{E:ENEMContinuityc} for an 
    energy-momentum-stress density tensor $\Theta$ in Minkowski space.

    We also introduce the \emph{entropy per particle}, a thermodynamic variable
    that we denote by $\Ent,$ and we close the system by supplying an 
    equation of state, which may depend on $c.$ A ``physical''
    equation of state for a perfect fluid state satisfies the following criteria (see e.g. \cite{yGsTZ1998}):
    
    \begin{enumerate}
  		
  		\item $\rho \geq 0$ is a function of $n \geq 0$ and ${\Ent} \geq 0.$

    	\item $p \geq 0$ is defined by
        \begin{align}
            p= n \left. \frac{\partial \rho}{\partial n} \right|_{\Ent} -
            \rho,                                                                                   \label{E:pressure}
        \end{align}
        where the notation $\left. \right|_{\cdot}$ indicates partial differentiation with $\cdot$ held
        constant.
        \item A perfect fluid satisfies
        \begin{align}
        \left. \frac{\partial \rho}{\partial n} \right|_{\Ent} >0, \left. \frac{\partial p}{\partial
        n} \right|_{\Ent}>0, \left. \frac{\partial \rho}{\partial {\Ent}} \right|_n \geq 0 \
        \mbox{with} \ ``='' \ \mbox{iff} \ \Ent=0.                             \label{E:EOSAssumptions}
        \end{align}
        As a consequence, we have that $\sigma,$ the speed
        of sound in the fluid, is always real for $\Ent > 0:$
            \begin{align}
                \sigma^2 \overset{def}{=} c^2 \left.\frac{\partial p}{\partial
                \rho}\right|_{\Ent} = c^2 \frac{{\partial p / \partial n}|_{\Ent}}{{\partial \rho / \partial
                n}|_{\Ent}} > 0.                                                                         \label{E:SpeedofSoundc}
            \end{align}
        \item We also demand that the speed of sound is positive and less than the speed of light
        whenever $n > 0$ and $\Ent > 0$:
            \begin{align} \label{E:Causalityc}
                n>0 \ \mbox{and} \ \Ent > 0 \implies 0 < \sigma < c.
            \end{align}
    \end{enumerate}
    
   	Postulates $(1) - (3)$ express the laws of thermodynamics and fundamental thermodynamic assumptions, while postulate 
    $(4)$ ensures that at each $x \in \mathcal{M},$ vectors that are causal with respect to the sound cone in $T_x \mathcal{M}$ are 
    necessarily causal with respect to the gravitational null cone in $T_x \mathcal{M};$ see Section \ref{SS:JscrdotcPositiveDefinite}.

    \begin{remark}
    We note that the assumptions $\rho \geq 0, p \geq 0$ together imply that
    the energy-momentum-stress density tensor \eqref{E:EMTensorcDef} satisfies both the \emph{weak energy 
    condition} ($T_{\mu \nu} X^{\mu} X^{\nu} \geq 0$ holds whenever $X$ is timelike and future-directed
    with respect to the gravitational null cone) 
    and the \emph{strong energy condition} \\
    ($[T_{\mu \nu} - 1/2 g^{\alpha \beta}T_{\alpha \beta} g_{\mu \nu}]X^{\mu}X^{\nu} \geq 0$ holds whenever $X$ 
    is timelike and future-directed with respect to the gravitational null cone). Furthermore, if we assume that the equation of state is 
    such that $p=0$ when
    $\rho = 0,$ then \eqref{E:SpeedofSoundc} and \eqref{E:Causalityc} guarantee that $p \leq \rho.$ It is then easy to check
    that $0 \leq p \leq \rho$ implies the \emph{dominant energy condition}
    ($-T^{\mu}_{\ \nu} X^{\nu}$ is causal and future-directed whenever $X$ is causal and future-directed with respect
    to the gravitational null cone).
		\end{remark}
    
    By \eqref{E:EOSAssumptions}, we can solve for $\sigma^2$ and
    $c^{-2}\rho$ as $c-$indexed functions $\mathfrak{S}_c^2$ and $\mathfrak{R}_c$ respectively of $\Ent$ and $p:$
        \begin{align}
            \sigma^2 &\eqdef \mathfrak{S}_c^2(\Ent,p)                                                         \label{E:SigmaSquaredc}\\
            c^{-2}\rho &\eqdef \mathfrak{R}_c(\Ent,p).                                                       \label{E:EnergyDensityc}
        \end{align}
    
		\noindent We also will make use of the following identity implied by \eqref{E:SpeedofSoundc}, \eqref{E:SigmaSquaredc}, and 
        \eqref{E:EnergyDensityc}:
        \begin{align} \label{E:partialRpartialpandStoNegativeTwoRelationshipc}
            \left. \frac{\partial \mathfrak{R}_c}{\partial p}(\Ent,p)\right|_{\Ent} = \mathfrak{S}_c^{-2}(\Ent,p).
        \end{align}

   \begin{remark} \label{R:NewtonianR}
        Note that $c^{-2}\rho$ has the dimensions of mass density.
        As we will see in Section \ref{S:cDependence}, $\lim_{c \to \infty}\mathfrak{R}_c(\Ent,p)$
        will be identified with the Newtonian mass density.
    \end{remark}

    We summarize by stating that \textit{the equations \eqref{E:NordstromMetric} - \eqref{E:nandulaw},
    \eqref{E:phi}, \eqref{E:pressure}, and \eqref{E:EnergyDensityc} constitute the EN$_{\kappa}^c$
    system}.

    \subsection{A reformulation of the EN$_{\kappa}^c$ system in Newtonian variables} \label{SS:NewtonianReformulationc} 
    In this section, we reformulate the EN$_{\kappa}^c$ system as a
    fixed background theory in flat Minkowski space and introduce a Newtonian change of state-space variables.
    The resulting system \eqref{E:ENkappac1} - \eqref{E:PDefcII} is an equivalent formulation of the EN$_{\kappa}^c$
    system. We remark that \emph{for the remainder of this article, all indices are raised and lowered
    with the Minkowski metric} $\Mg,$ so that $\partial^{\lambda} \phi = \Mg^{\mu \lambda}
    \partial_{\mu} \phi.$ To begin, we use the form of the metric \eqref{E:NordstromMetric} to
    compute that in our inertial coordinate system, the continuity equation \eqref{E:Euler} for the
    energy-momentum-stress density tensor \eqref{E:EMTensorcDef} is given by 
    \begin{align}
        0 & = \nabla_{\mu} T^{\mu \nu} = \partial_{\mu} T^{\mu \nu} + 6
            T^{\mu \nu} \partial_{\mu}\phi -
            \Mg_{\alpha \beta}T^{\alpha \beta}\partial^{\nu} \phi \notag\\
        & = \partial_{\mu} T^{\mu \nu} + 6 T^{\mu \nu}
            \partial_{\mu}\phi- e^{-6 \phi}\Mg_{\alpha \beta}
            T_{\mbox{\tiny{aux}}}^{\alpha \beta} \partial^{\nu}\phi \qquad (\nu=0,1,2,3),                     \label{E:TDiv}
    \end{align}
    where $T_{\mbox{\tiny{aux}}}^{\mu \nu}$ is define in \eqref{E:EMAuxDefc}. For this calculation we made use of the
    explicit form of the Christoffel symbols of $g$ in our rectangular coordinate system:
    \begin{align}
        \Gamma_{\mu \nu}^{\alpha} = \delta^{\alpha}_{\nu}
        \partial_{\mu}\phi + \delta^{\alpha}_{\mu}
        \partial_{\nu}\phi - \Mg_{\mu \nu}\Mg^{\alpha \beta} \partial_{\beta}
        \phi.                                                                                               \label{E:Christoffelc}
    \end{align}
    Using the postulated equation \eqref{E:phi} for $\phi,$
    \eqref{E:TDiv} can be rewritten as

    \begin{align}
        0 = e^{6\phi} \nabla_{\mu} T^{\mu \nu} = \partial_{\mu} \Big[T^{\mu
        \nu}_{\mbox{\tiny{aux}}}
        + \frac{c^4}{4 \pi G} \big(\partial^{\mu} \phi \partial^{\nu} \phi -\frac{1}{2} \Mg^{\mu \nu} \partial^{\alpha} \phi
        \partial_{\alpha} \phi - \frac{1}{2} \Mg^{\mu\nu} \kappa^2 \phi^2\big) \Big].                         \label{E:Divergencec}
    \end{align}
    Let us denote the terms from \eqref{E:Divergencec} that are inside the square brackets
    as $\Theta^{\mu \nu}.$ Since the coordinate-divergence of $\Theta$ vanishes, we are provided with local conservation laws in
    Minkowski space, and we regard $\Theta$ as an energy-momentum-stress density tensor. 
    
    We also introduce the following state-space variables that
    play a mathematical role\footnote{The ``physical'' quantities are $\mathfrak{R}_c$ and $p.$} in the sequel:
    \begin{align}
        \Rc &\eqdef c^{-2}\rho e^{4\phi}= e^{4\phi}\mathfrak{R}_c(\Ent,p)  \label{E:RcDef} \\
        P &\eqdef p e^{4\phi}.       \label{E:PDefc}
    \end{align}
    After we make this change of variables, the components of $\Theta$ read
    \begin{align} 		\label{E:ENEMThetacDef}
        \Theta^{\mu \nu} \eqdef \big[\Rc &+ c^{-2}P\big] e^{2\phi}u^{\mu}u^{\nu} + P \Mg^{\mu \nu}
        + \frac{c^4}{4 \pi G}\Big(\partial^{\mu}\phi \partial^{\nu}\phi
            -\frac{1}{2} \Mg^{\mu \nu}\partial^{\alpha}\phi\partial_{\alpha} \phi
            - \frac{1}{2} \Mg^{\mu \nu} \kappa^2 \phi^2 \Big), 
    \end{align}
    and we replace \eqref{E:Euler} with the equivalent equation
    \begin{align}           \label{E:ENEMContinuityc}
        \partial_{\mu} \Theta^{\mu \nu}=0 \qquad (\nu=0,1,2,3).
    \end{align}

    We also expand the covariant differentiation from \eqref{E:nandulaw} in terms of coordinate
    derivatives and the Christoffel symbols \eqref{E:Christoffelc}, arriving at the
    equation
    \begin{align}
        \partial_{\mu}\big(n e^{4\phi} u^{\mu} \big)=0 .                                               \label{E:ENcContinuity}
    \end{align}

    Our goal is to obtain the system EN$_{\kappa}^c$ in the form \eqref{E:ENkappac1} - \eqref{E:PDefcII}
    below. To this end, we project \eqref{E:ENEMContinuityc} onto the orthogonal complement\footnote{We are referring here to the 
    orthogonal complement defined by the Minkowski metric $\Mg$.} of $u$
    and in the direction of $u.$ We therefore introduce the rank 3 tensor $\Pi,$ which has the
    following components in our inertial coordinate system:
    \begin{align}
    \Pi^{\mu \nu} \eqdef c^{-2}e^{2\phi}u^{\mu}u^{\nu} + \Mg^{\mu \nu}.   \label{E:Projectionc}
    \end{align}
    $\Pi$ is the projection onto the orthogonal complement of $u:$
    \begin{align}
        \Pi^{\mu \nu}u^{\lambda}\Mg_{\lambda \mu} = 0 \qquad (\nu=0,1,2,3).
    \end{align}

    We now introduce the following Newtonian change of state-space
    variables\footnote{As suggested by Remark \ref{R:NewtonianR}, even though $\Rc$ is not a state-space variable, 
    equation \eqref{E:RcDef} also represents a Newtonian change of variables.}
    \begin{align}
        v^j &\eqdef {u^j}/{u^0} \qquad  (j= 1,2,3) \label{E:vdef}  \\
        \Phi &\eqdef c^2 \phi \label{E:PhiCHOVdef},
    \end{align}
    where $\mathbf{v}=(v^1,v^2,v^3)$ is the Newtonian velocity and
    $\Phi$ is the \emph{cosmological-Nordstr\"{o}m potential}. Relation \eqref{E:vdef} can be inverted to give
    \begin{align}
        u^0 &= e^{-\phi} \gamma_c \label{E:u0InTermsofv} \\
        u^j &= e^{-\phi} \gamma_c v^j \label{E:ujInTermsofv},
    \end{align}
    where
    \begin{align} \label{E:gammacDefI}
        \gamma_c(\mathbf{v}) \eqdef \frac{c}{(c^2 - |\mathbf{v}|^2)^{1/2}} .
    \end{align}

    \begin{remark}
        We provide here a brief elaboration on the Newtonian change of variables.
        Equation \eqref{E:vdef} provides the standard relationship
        between the Newtonian velocity $\mathbf{v}$ and the four-velocity $u$: if $x^{\nu}(t) \ (\nu = 0,1,2,3)$ are the 
        rectangular components of a timelike curve in $\mathcal{M}$ parameterized by $x^0=t,$ and $\tau$ denotes the proper 
        time parameter, then we have that $v^j = \partial_t x^j = (\partial \tau/\partial t )\cdot u^j = 
        u^j/u^0$ ($j=1,2,3$).

        Dimensional analysis suggests the approximate identification (for large $c$) of the 
        cosmological-Nordstr\"{o}m potential $\Phi$ from \eqref{E:PhiCHOVdef} and \eqref{E:ENkappac4} with the cosmological-Newtonian 
        potential $\Phi_{\infty}$, where $\Phi_{\infty}$ is the solution\footnote{We use the symbol
        $\Phi_{\infty}$ here to denote the solution to \eqref{E:EPkappa4} in order to distinguish 
        the cosmological-Newtonian potential from the cosmological-Nordstr\"{o}m potential.}
        to the non-relativistic equation \eqref{E:EPkappa4}: 
        $\Phi_{\infty}$ has the dimensions of $c^2,$ which suggests that when considering the limit $c \to \infty,$ 
        we should rescale the dimensionless cosmological-Nordstr\"{o}m potential $\phi$ as we did in \eqref{E:PhiCHOVdef}.
				Indeed, our main result, which is Theorem \ref{T:NewtonianLimit}, shows that with an 
				appropriate formulation of the initial value problems for the EN$_{\kappa}^c$ and EP$_{\kappa}$ systems, we have that $\lim_{c 
				\to \infty} \Phi = \Phi_{\infty}.$ Dimensional analysis also suggests the formal 
        identification of $\Rinfinity$ from \eqref{E:EPkappa2} - \eqref{E:QinfinityRelationship} with $
        \lim_{c \to \infty} \Rc = \lim_{c \to 
        \infty}\mathfrak{R}_{c}(\Ent,p)$ (for now assuming that this limit exists), where 
        $\mathfrak{R}_{c}(\Ent,p)$ is defined in \eqref{E:EnergyDensityc}. 
        
        Furthermore, these changes of variables can be justified through a formal expansion $c^{-2} \Phi \eqdef \phi =
        \phi_{(0)} + c^{-2} \phi_{(1)} + \cdots,$ $\Rc = R_{(0)} + c^{-2}
        R_{(1)} + \cdots,$ in powers of $c^{-2}$ in equation \eqref{E:ENkappac4}:
        equating the coefficients of powers of $c^{-2}$ on each
        side of the equation implies the formal identifications\footnote{Upon expansion, the formal equation satisfied by 
        $\phi_{(0)}$ is $(\Delta - \kappa^2)\phi_{(0)}=0,$ and by imposing vanishing boundary conditions at infinity, we conclude 
        that $\phi_{(0)}=0.$} $\phi_{(0)}=0$ and $(\Delta - \kappa^2)\phi_{(1)}=4 \pi GR_{(0)}.$ If we also consider
        equation \eqref{E:EPkappa4}, which reads $(\Delta - \kappa^2) \Phi_{\infty} = 4 \pi G \Rinfinity,$ then
        we are lead to the formal identifications $R_{(0)} \approx \Rinfinity$ and 
        $\Phi \eqdef c^2\phi \approx \phi_{(1)} \approx \Phi_{\infty}.$ A similar analysis for the 
        Vlasov-Nordstr\"{o}m system is carried out in \cite{sChL2004}.
    \end{remark}

    Upon making the substitutions \eqref{E:vdef} -
    \eqref{E:PhiCHOVdef} and lowering an index with $\Mg,$ the
    components of $\Pi$ in our inertial coordinate system read (for $1 \leq j,k \leq 3$):
    \begin{align}
        \Pi_0^0 &= - c^{-2}\gamma_c^2 |\mathbf{v}|^2 \label{E:Projection00}\\
        \Pi_j^0 &= c^{-2}\gamma_c^2 v^j    \label{E:Projectionj0}\\
        \Pi_0^j &= -\gamma_c^2 v^j    \label{E:Projection0j}\\
        \Pi_k^j &= c^{-2}\gamma_c^2 v^j v_k + \delta_k^j.   \label{E:Projectionjk}
    \end{align}
    Furthermore, we will also make use of the relation
    \begin{align} \label{E:gammaDifferentiated}
        \partial_{\lambda}\gamma_c = c^{-2} (\gamma_c)^3 v_k \partial_{\lambda} v^k \qquad (\lambda = 0,1,2,3).
    \end{align}

    Considering first the projection of \eqref{E:ENEMContinuityc} in the
    direction of $u,$ we remark that one may use \eqref{E:nandulaw} and \eqref{E:pressure}
    to conclude that for $C^1$ solutions, $u_{\nu}\partial_{\mu} \Theta^{\mu \nu}=0$ is equivalent to
    equation \eqref{E:ENkappac1}.

    We now project \eqref{E:ENEMContinuityc} onto the
    orthogonal complement of $u,$ which, with the aid of \eqref{E:phi}, gives the three equations
    $\Pi_{\nu}^{j} \partial_{\mu} \Theta^{\mu \nu}=0,$ $j=1,2,3:$
    \begin{align}  \label{E:ENEMContinutiyProjection}
        0&=\Pi_{\nu}^{j} \partial_{\mu} \Theta^{\mu \nu} =\Pi_{\nu}^{j} \big[\Rc + c^{-2}P\big]
            (e^{\phi}u^{\mu})\partial_{\mu}(e^{\phi}u^{\nu})+
            (\Pi_{\nu}^{j}\partial^{\nu}\phi) \frac{c^4}{4 \pi G}(\square \phi - \kappa^2 \phi) \\
        &=\Pi_{\nu}^{j} \big[\Rc + c^{-2}P\big]
            (e^{\phi}u^{\mu})\partial_{\mu}(e^{\phi}u^{\nu})+
            (\Pi_{\nu}^{j}\partial^{\nu}\Phi) (\Rc - 3c^{-2}P). \notag
    \end{align}
    After making the substitutions \eqref{E:PhiCHOVdef},
    \eqref{E:u0InTermsofv}, \eqref{E:ujInTermsofv}, and \eqref{E:gammacDefI}, and using relation
    \eqref{E:gammaDifferentiated}, it follows that for $C^1$ solutions, \eqref{E:ENEMContinutiyProjection} is
    equivalent to \eqref{E:ENkappac3}.

    We also introduce the nameless quantity $\Qc$ and make use of
    \eqref{E:pressure}, \eqref{E:SpeedofSoundc}, \eqref{E:SigmaSquaredc}, 
    \eqref{E:EnergyDensityc}, \eqref{E:partialRpartialpandStoNegativeTwoRelationshipc}, \eqref{E:RcDef}, 
    \eqref{E:PDefc}, and \eqref{E:PhiCHOVdef} to express it in the following form:
    \begin{align} \label{E:QcDef}
        \Qc \eqdef n \left. \frac{\partial P}{\partial n}\right|_{\Ent,\phi} &=
        \left. \frac{\partial P}{\partial (\rho/c^2)} \right|_{\Ent,\phi} \cdot n \left. \frac{\partial (\rho/c^2)}{\partial n}
        \right|_{\Ent} =\mathfrak{Q}_c(\Ent,p,\Phi),
    \end{align}
    where
    \begin{align} \label{E:QcFunctionDef}
        \mathfrak{Q}_c(\Ent,p,\Phi) \eqdef \mathfrak{S}_c^2(\Ent,p)e^{4 \Phi / c^2}[\mathfrak{R}_c(\Ent,p) + c^{-2}p]
        =\mathfrak{S}_c^2(\Ent,p)[\Rc + c^{-2}P].
    \end{align}
    Then we use the chain rule together with \eqref{E:nandulaw},
    \eqref{E:ENkappac1}, and \eqref{E:QcDef} to derive
    \begin{align} \label{E:PressureEvolution}
        e^{\phi}u^{\mu}\partial_{\mu}P + \Qc \partial_{\mu} (e^{\phi}u^{\mu}) = (4P-3\Qc) e^{\phi}u^{\mu} \partial_{\mu}\phi,
    \end{align}
    which we may use in place of \eqref{E:nandulaw}. Upon making
    the substitutions \eqref{E:RcDef}, \eqref{E:PDefc}, \eqref{E:PhiCHOVdef},
    \eqref{E:u0InTermsofv}, and \eqref{E:ujInTermsofv}, and using the relation \eqref{E:gammaDifferentiated},
    it follows that for $C^1$ solutions, \eqref{E:PressureEvolution} is equivalent to
    \eqref{E:ENkappac2}.

\section{The Formal Limit $c \to \infty$ of the EN$_\kappa^c$ System} \label{S:FormalLimit}
	For convenience, in this section we list the final form of the EN$_\kappa^c$ system as derived in sections
    \ref{SS:ENkappacDerivation} and \ref{SS:NewtonianReformulationc}. We also take the formal limit
    $c \to \infty$ to arrive at the EP$_{\kappa}$ system and introduce the equations of variation (EOV$_{\kappa}^c$).

    \subsection{A recap of the EN$_{\kappa}^c$ system} 
		The EN$_{\kappa}^c$ system is given by
        \begin{align}
            & \partial_t \Ent + v^k \partial_k \Ent =0        \label{E:ENkappac1}       \\
            & \partial_t P + v^k \partial_k P + \Qc \partial_k v^k + c^{-2}(\gamma_c)^2 \Qc v_k \big(\partial_t
                v^k + v^a \partial_a v^k \big) \label{E:ENkappac2}    \\
            & \hspace{.72in} = (4P- 3\Qc)\big[c^{-2}\partial_t \Phi + c^{-2} v^k \partial_k\Phi \big] \notag \\
            & (\gamma_c)^2(\Rc + c^{-2}P)\big[\partial_t v^j + v^k \partial_k
                v^j + c^{-2}(\gamma_c)^2 v^j v_k (\partial_t v^k + v^a \partial_a v^k)\big] \label{E:ENkappac3}      \\
            & \hspace{.90 in} + \partial_j P + c^{-2}(\gamma_c)^2 v^j(\partial_t P + v^k \partial_k P) \notag \\
            	& \hspace{.72in} = (3c^{-2}P - \Rc)\big(\partial_j \Phi + (\gamma_c)^{-2}v^j[c^{-2}\partial_t \Phi + c^{-2}v^k \partial_k
                \Phi]\big)  \notag \\
            & -c^{-2} \partial_t^2 \Phi + \Delta \Phi - \kappa^2
                \Phi = 4\pi G(\Rc - 3c^{-2}P),                                     \label{E:ENkappac4}
    \end{align}
    where $j=1,2,3,$
    \begin{align}
      	\gamma_c &= \gamma_c(\mathbf{v}) \eqdef \frac{c}{(c^2 - |\mathbf{v}|^2)^{1/2}} \label{E:gammacDefII}\\
       	\Rc & \eqdef e^{4 \Phi / c^2}\mathfrak{R}_c(\Ent,p) \label{E:RcDefII} \\
        \Qc &\eqdef \mathfrak{Q}_c(\Ent,p,\Phi) \eqdef \mathfrak{S}_c^2(\Ent,p)e^{4 \Phi / c^2}[\mathfrak{R}_c(\Ent,p) + c^{-2}p]  
        		\label{E:QcDefII} \\
       	& = \Big(\left. \frac{\partial \mathfrak{R}_c}{\partial p}(\Ent,p)\right|_{\Ent}\Big)^{-1} e^{4 \Phi / 
        	c^2}[\mathfrak{R}_c(\Ent,p) + c^{-2}p]  \notag \\
       	P &\eqdef e^{4 \Phi / c^2}p \label{E:PDefcII},
    \end{align}
    $c$ denotes the speed of light, $\mathfrak{S}_c (\Ent,p),$ which is defined in 
    \eqref{E:partialRpartialpandStoNegativeTwoRelationshipc}, is the 
    speed of sound, and the functions $\mathfrak{R}_c$ and $\mathfrak{S}_c$ derive from a $c-$indexed equation of state
    as discussed in Section \ref{SS:ENkappacDerivation}. The
    variables $\Ent,p,\mathbf{v}=(v^1,v^2,v^3),$ and $\Phi$ denote the
    entropy per particle, pressure, (Newtonian) velocity, and cosmological-Nordstr\"{o}m potential
    respectively. Section \ref{S:cDependence} contains a detailed discussion of the
    $c$-dependence of the EN$_{\kappa}^c$ System.

 \subsection{The EP$_{\kappa}$ system as a formal limit}    \label{SS:EPkappac} 
 		Taking the formal limit $c \to \infty$ in the EN$_{\kappa}^c$ system
    gives the Euler-Poisson system with a cosmological constant:
    \begin{align}
        \partial_t {\Ent} + v^k\partial_k {\Ent} &=  0                      \label{E:EPkappa1}\\
        \partial_t p + v^k \partial_k p + Q_{\infty}\partial_k v^k &= 0 \label{E:EPkappa2}\\
        \partial_t{\Rinfinity} +  \partial_k(\Rinfinity v^k) &=  0 \tag{\ref{E:EPkappa2}'}              \\
        \Rinfinity \big(\partial_t v_j + v^k \partial_k v^j\big)
        + \partial_j p  &= - \Rinfinity \partial_j \Phi \qquad  (j=1,2,3) \label{E:EPkappa3}\\
        \Delta \Phi - \kappa^2\Phi &= 4 \pi G \Rinfinity,                                  \label{E:EPkappa4}
    \end{align}
    where
    \begin{align}
        &\Rinfinity \eqdef \mathfrak{R}_{\infty}(\Ent,p) \label{E:RinfinityDef} \\
        &\Qinfinity \eqdef \mathfrak{Q}_{\infty}(\Ent,p) \eqdef \mathfrak{S}_{\infty}^2(\Ent,p)
        \mathfrak{R}_{\infty}(\Ent,p) = \Big(\left. \frac{\partial \mathfrak{R}_{\infty}}{\partial p}(\Ent,p)\right|_{\Ent}\Big)^{-1}
        \mathfrak{R}_{\infty}(\Ent,p), \label{E:QinfinityRelationship}
    \end{align}
    $\mathfrak{R}_{\infty}(\Ent,p)$ and $\mathfrak{S}_{\infty}^2(\Ent,p)$
    are the limits as $c \to \infty$ of $\mathfrak{R}_{c}(\Ent,p)$ and $\mathfrak{S}_{c}^2(\Ent,p)$
    respectively (see \eqref{E:EOScHypothesis1}, \eqref{E:EOScHypothesis2}, and \eqref{E:SpeedofSoundcDependence}), and
    $\Rinfinity$ is the Newtonian mass density. Since equations \eqref{E:partialRpartialpandStoNegativeTwoRelationshipc}
    and \eqref{E:SpeedofSoundcDependence} imply that
    $\partial \mathfrak{R}_{\infty}(\Ent,p) / {\partial p} = \mathfrak{S}_{\infty}^{-2}(\Ent,p),$
    it then follows with the aid of the chain rule that for $C^1$ solutions, equations \eqref{E:EPkappa2} and (\ref{E:EPkappa2}') are
    equivalent. We refer to the solution variable $\Phi$ from equation \eqref{E:EPkappa4} as the \emph{cosmological-Newtonian potential}.
    
    An introduction to the EP$_{\kappa}$ system can be found in
    \cite{mK2003}. In this article, Kiessling assumes an \emph{isothermal} equation of
    state ($p=c_s^2 \cdot \Rinfinity,$ where the constant $c_s$ denotes the speed of sound), and derives the
    Jeans dispersion relation that arises from linearizing
    (\ref{E:EPkappa2}'), \eqref{E:EPkappa3}, \eqref{E:EPkappa4} about a static state in which the
    background Newtonian mass density $\bar{R}_{\infty}$ is positive, followed by taking the limit $\kappa \to 0.$

    It is a standard result that the solution to \eqref{E:EPkappa4} is given by
    \begin{align}                                                       \label{E:EPkappaPotential}
        \Phi(t,\sforspace) = \Phibar_{\infty} -  G \int_{\mathbb{R}^3} \left( \frac{e^{-\kappa |\sforspace - \sforspace'|}}{|\sforspace -
        \sforspace'|} \right) \big[\mathfrak{R}_{\infty}(\Ent(t,\sforspace'),p(t,\sforspace')) - \mathfrak{R}_{\infty}(\Entbar,\pbar) \big]\, d^d \sforspace',
    \end{align}
    where the constants $\Phibar_{\infty}, \Entbar,$ and $\pbar,$ which are the values
    of $\Phi,$ $\Ent,$ and $p$ respectively in a constant background state, are discussed in Section
    \ref{S:IVPc}. The boundary conditions leading to this solution are that $\Phi(t,\cdot) - \Phibar_{\infty}$ vanishes at $\infty,$
    and we view $\Phi(t,\sforspace)$ as a (not necessarily small) perturbation of the constant potential $\Phibar_{\infty}.$

    \begin{remark} \label{R:KernelRemark}
        Consider the kernel $\mathcal{K}(\sforspace)=-G e^{-\kappa
        |\sforspace|}/|\sforspace|$ appearing in
        \eqref{E:EPkappaPotential}. An easy computation gives that
        $\mathcal{K}(\sforspace), \partial\mathcal{K}(\sforspace) \in
        L^1(\mathbb{R}^3).$ Therefore, a basic result from harmonic
        analysis (Young's inequality) implies that the map $f
        \rightarrow \mathcal{K} * f,$ where $*$ denotes
        convolution, is a bounded linear map\footnote{Our proof breaks down at this point in the case $\kappa = 0.$} from 
        $L^2(\mathbb{R}^3)$ to
        $H^{1}(\mathbb{R}^3).$ From this fact and Remark
        \ref{R:SobolevTaylorCalculusRemark} (alternatively consult Lemma \ref{L:HjbySubtractingConstant}), it follows that \\
        $\Phi(t,\cdot) \in H_{\Phibar}^{N+1}(\mathbb{R}^3)$ whenever
        $\big(\Ent(t,\cdot),p(t,\cdot)\big) \in
        H_{\Entbar}^{N}(\mathbb{R}^3) \times H_{\pbar}^{N}(\mathbb{R}^3).$
        By then applying Lemma \ref{L:AlterateHNnormestimate}, we can
        further conclude that $\Phi(t,\cdot) \in
        H_{\Phibar}^{N+2}(\mathbb{R}^3)$ whenever
        $\big(\Ent(t,\cdot),p(t,\cdot)\big) \in
        H_{\Entbar}^{N}(\mathbb{R}^3) \times H_{\pbar}^{N}(\mathbb{R}^3).$
    \end{remark}

\section{The Equations of Variation (EOV$_{\kappa}^c$)} \label{SS:EOVc}

        The EOV$_{\kappa}^c$ are formed by linearizing the EN$_\kappa^c$
        system (EP$_{\kappa}$ system if $c = \infty$) around a background solution (BGS) $\Vboldt$ of the form 
        $\Vboldt = (\widetilde{{\Ent}},\widetilde{P},\widetilde{v}^1,
        \cdots, \widetilde{\Phi}_2, \widetilde{\Phi}_3).$ Given such a $\Vboldt$ and inhomogeneous terms
        $f,g,h^{(1)},h^{(2)},h^{(3)},l,$ we define the EOV$_{\kappa}^c$ by
    \begin{align}
             \partial_t \dot{\Ent} + \wtv^k \partial_k \dot{\Ent} &= f        \label{E:EOVc1}       \\
             \partial_t \dot{P} +\wtv^k \partial_k \dot{P} + \wtQc \partial_k \dot{v}^k \label{E:EOVc2}
                + c^{-2}(\wtg)^2 \wtQc \wtv_k\big(\partial_t \dot{v}^k  + \wtv^a \partial_a \dot{v}^k \big)  &= g   \\
             (\wtg)^2(\wtRc + c^{-2}\wtP)\big[\partial_t \dot{v}^j + \wtv^k \partial_k
                \dot{v}^j + c^{-2}(\wtg)^2\wtv^j \wtv_k (\partial_t \dot{v}^k + \wtv^a \partial_a \dot{v}^k)\big]
                         \label{E:EOVc3}      \\
             + \partial_j \dot{P} + c^{-2}(\wtg)^2 \wtv^j(\partial_t \dot{P} + \wtv^k \partial_k \dot{P})  &=
                h^{(j)}   \notag \\
             -c^{-2} \partial_t^2 \Phidot + \Delta \Phidot - \kappa^2
                \Phidot &= l,                                  \label{E:EOVcKlein-Gordon}
        \end{align}
        where $\wtg \eqdef c/(c^2 - |\wtv|^2)^{1/2}, \widetilde{R}_c \eqdef
        e^{4 \Phit / c^2}\mathfrak{R}_c(\widetilde{\Ent},\widetilde{p}),$ etc.
        The unknowns are the components of $\Wdot \eqdef (\Entdot,\Pdot, \dot{v}^1, \dot{v}^2,
        \dot{v}^3)$ and $\Phidot.$

        \begin{remark}
                We place parentheses around the superscripts of the inhomogeneous terms $h^{(j)}$ in order to emphasize that
                we are merely labeling them, and that in general, we do not associate any transformation properties to them under 
                changes of coordinates.
            \end{remark}
				
				\subsection{PDE matrix/vector notation}
				
        Let us now provide a few remarks on our notation. We find it useful to analyze both the dependent variable $p$ and the 
        dependent variable $P$ when discussing solutions to \eqref{E:ENkappac1} - \eqref{E:ENkappac4}. Therefore, we will make use of all 
        four of the following arrays:
            \begin{align}
                \Wbold &\eqdef (\Ent,P,v^1,v^2,v^3) \label{E:Wboldarray}\\
                \Vbold & \eqdef (\Ent,P,v^1,v^2,v^3, \Phi, \partial_t \Phi, \partial_1 \Phi, \partial_2 \Phi, \partial_3 \Phi) 
                		\label{E:Vboldarray}\\
                \scrW &\eqdef (\Ent,p,v^1,v^2,v^3) \label{E:scrWboldarray} \\
                \scrV & \eqdef (\Ent,p,v^1,v^2,v^3, \Phi, \partial_t \Phi, \partial_1 \Phi, \partial_2
                \Phi, \partial_3\Phi), \label{E:scrVboldarray}
            \end{align}
        where $P \eqdef e^{4 \Phi / c^2}p.$ When discussing a BGS $\Vboldt \eqdef (\widetilde{{\Ent}},\widetilde{P},\widetilde{v}^1,
        \cdots, \widetilde{\Phi}_2, \widetilde{\Phi}_3)$ that defines the coefficients of the unknowns in the 
        EOV$_{\kappa}^c,$ we also use notation similar to that used in \eqref{E:Wboldarray} - \eqref{E:scrVboldarray}, including 
        $\scrVtilde \eqdef (\widetilde{\Ent},\widetilde{p},\widetilde{v}^1,\cdots,\partial_3
        \widetilde{\Phi}),$ $\Wboldt \eqdef (\widetilde{\Ent},\widetilde{P},\widetilde{v}^1,\widetilde{v}^2,\widetilde{v}^3),$
        where $\widetilde{p} \eqdef e^{-4\Phit/c^2}\widetilde{P},$ etc. When $c=\infty,$ we may also refer to
        $\scrWtilde \eqdef (\widetilde{{\Ent}},\widetilde{p},\widetilde{v}^1,\widetilde{v}^2,\widetilde{v}^3)$ as the BGS, since in
        this case, the left-hand sides of \eqref{E:EOVc1} - \eqref{E:EOVcKlein-Gordon} do not depend on $\widetilde{\Phi},$ and 
        furthermore, $\Wboldt = \scrWtilde.$ Additionally, we may refer to the unknowns in the EOV$_{\kappa}^c$ as $\scrWdot \eqdef 
        (\Entdot,\dot{p},\dot{v}^1, \dot{v}^2,\dot{v}^3)$ when $c = \infty;$ in this article, $\Phidot$ will always vanish at
        infinity, and in the case $c=\infty,$ rather than considering $\Phidot$ to be an ``unknown,'' we assume that the 
        solution variable $\Phidot$ has been constructed via the convolution 
        $\Phidot = \mathcal{K}*l,$ where the kernel $\mathcal{K}(\sforspace)$ is defined in Remark \ref{R:KernelRemark}, and $l$ 
        is the right-hand side of \eqref{E:EOVcKlein-Gordon}.

        We frequently adopt standard PDE matrix/vector notation. For example, we may write \eqref{E:ENkappac1} - \eqref{E:ENkappac3} as
            \begin{align}
                \Ac^{\mu}(\scrW,\Phi) \partial_{\mu} \Wbold = \mathbf{b},           \label{E:MatrixDef}
            \end{align}
        where each $\Ac^{\nu}(\cdot)$ is a $5 \times 5$ matrix with entries that are functions of $\scrW$ and $\Phi$, while 
        $\mathbf{b}=(f,g,\cdots,h^{(3)})$ is
        the 5-component column array on the right-hand side of \eqref{E:ENkappac1} - \eqref{E:ENkappac3}. It is instructive to see the 
        form of the $\Ac^{\nu}(\cdot),$ $\nu=0,1,2,3,$ for we will soon concern ourselves with their large$-c$ asymptotic
        behavior. Abbreviating $\alpha_{c} \eqdef (\gammac)^2\big(\Rc + c^{-2}P\big),$ \
        $\beta_{c}^{(i)} \eqdef c^{-2} (\gammac)^2 v^i, \ \beta_{c}^{(i,j)} \eqdef c^{-2} (\gammac)^2 v^i v^j,$ we have that

 \begin{align} \label{E:A0c}
                & {_c\mathscr{A}}^0(\scrW,\Phi) = 
                    & \begin{pmatrix}
                        1 & 0 & 0 & 0 &0     \\
                        0 & 1 & Q_c \beta_{c}^{(1)}
                            &  Q_c \beta_{c}^{(2)}
                            &  Q_c \beta_{c}^{(3)} \\
                        0 & Q_c \beta_{c}^{(1)}
                            & \alpha_{c}(1 + \beta_{c}^{(1,1)})
                            & \alpha_{c} \beta_{c}^{(1,2)}
                            & \alpha_{c} \beta_{c}^{(1,3)}\\
                        0 & Q_c \beta_{c}^{(2)}
                            & \alpha_{c} Q_c \beta_{c}^{(2,1)}
                            & \alpha_{c}(1 + \beta_{c}^{(2,2)})
                            & \alpha_{c} \beta_{c}^{(2,3)} \\
                        0 & Q_c \beta_{c}^{(3)}
                            & \alpha_{c} \beta_{c}^{(3,1)}
                            & \alpha_{c} \beta_{c}^{(3,2)}
                            & \alpha_{c}(1 + \beta_{c}^{(3,3)})
                    \end{pmatrix}
            \end{align}

            \begin{align} \label{E:A0Infinity}
                & {_{\infty}\mathscr{A}}^0(\scrW) = 
                    & \begin{pmatrix}
                        1 & 0 & 0 & 0 &0     \\
                        0 & 1 & 0
                            &  0
                            &  0 \\
                        0 & 0
                            &  \Rinfinity
                            & 0
                            & 0\\
                        0 & 0
                            & 0
                            &  \Rinfinity
                            & 0 \\
                        0 & 0
                            &0
                            & 0
                            & \Rinfinity
                    \end{pmatrix}
            \end{align}
    \begin{align} \label{E:Ajc}
                & {_c\mathscr{A}}^1(\scrW,\Phi) = 
                    & \begin{pmatrix}
                        v^1 & 0 & 0 & 0 &0     \\
                        0 & v^1 & Q_c (1 + \beta_{c}^{(1,1)})
                            & Q_c \beta_{c}^{(1,2)} 
                            & Q_c \beta_{c}^{(1,3)} \\
                        0 & 1 + \beta_{c}^{(1,1)}
                            &  \alpha_{c} v^1 (1 + \beta_{c}^{(1,1)})
                            & \alpha_{c} v^1 \beta_{c}^{(1,2)}
                            & \alpha_{c} v^1 \beta_{c}^{(1,3)}                         \\
                        0 & \beta_{c}^{(2,1)}
                            & \alpha_{c} v^1 \beta_{c}^{(2,1)}
                            &  \alpha_{c} v^1 (1 + \beta_{c}^{(2,2)})
                            & \alpha_{c} v^1 \beta_{c}^{(2,3)} \\
                        0 & \beta_{c}^{(3,1)}
                            & \alpha_{c} v^1 \beta_{c}^{(3,1)}
                            & \alpha_{c} v^1 \beta_{c}^{(3,2)}
                            &\alpha_{c} v^1 (1 + \beta_{c}^{(3,3)})
                    \end{pmatrix}
            \end{align}

\begin{align} \label{E:AjInfinity}
                & {_{\infty} \mathscr{A}}^1(\scrW) = 
                    & \begin{pmatrix}
                        v^1 & 0 & 0 & 0 &0     \\
                        0 & v^1 & \Qinfinity
                            &  0
                            &  0 \\
                        0 & 1
                            &   \Rinfinity v^1
                            & 0
                            & 0                               \\
                        0 & 0
                            & 0
                            & \Rinfinity v^1
                            & 0 \\
                        0 & 0
                            & 0
                            & 0
                            & \Rinfinity v^1
                    \end{pmatrix},
\end{align}
and similarly for ${_c\mathscr{A}}^2(\scrW,\Phi), \ {_{\infty} \mathscr{A}}^2(\scrW), \ {_c\mathscr{A}}^3(\scrW,\Phi),$ and ${_{\infty} \mathscr{A}}^3(\scrW).$

\section{On the $c-$Dependence of the EN$_{\kappa}^c$ System}               \label{S:cDependence}

      	In addition to appearing directly as the term $c^{-2},$ the constant $c$ appears in equations
        \eqref{E:ENkappac1} - \eqref{E:ENkappac4} through four terms: $\mathbf{i)} \ P = e^{4 \Phi / c^2}p,$ 
        $\mathbf{ii)} \ \gamma_c = c/(c^2 - |\mathbf{v}|^2)^{1/2},$ \\ $\mathbf{iii)} \ \Rc = e^{4 \Phi / c^2}\mathfrak{R}_c(\Ent,p),$ 
        and $\mathbf{iv)} \ \Qc= \mathfrak{S}_c^2(\Ent,p)e^{4 \Phi / c^2}[\mathfrak{R}_c(\Ent,p) + c^{-2}p].$ Because we want to recover 
        the EP$_{\kappa}$ system in the large $c$ limit, the first obvious requirement we have is that the function 
        $\mathfrak{R}_c(\Ent,p)$
        has a limit $\mathfrak{R}_{\infty}(\Ent,p)$ as $c \rightarrow \infty.$ For
        mathematical reasons, we will demand convergence in the norm $|\cdot|_{N+1,\mathfrak{C}}$ (see definition \eqref{E:CbkNormDef}) 
        at a rate of order $c^{-2},$ where $\mathfrak{C}$ is a compact subset of $\mathbb{R}^{+} \times \mathbb{R}^{+}$ that depends on 
        the Newtonian initial data $\scrV_{\infty}$ defined in \eqref{E:EPkappaData}; see \eqref{E:EOScHypothesis1} and 
        \eqref{E:EOScHypothesis2}. Although a construction of 
        $\mathfrak{C}$ is described in detail in Section \ref{SS:AdmissibleStateSpacec},
        let us now provide a preliminary description that is sufficient for our current purposes: for given initial data, we will 
        prove the existence of compact sets $\bar{\mathcal{O}}_2, \bar{\mathfrak{O}}_2,$ $[-a,a]^5, K \eqdef \bar{\mathcal{O}}_2 \times 
        [-a,a]^5,$ $\mathfrak{K} \eqdef \bar{\mathfrak{O}}_2 \times [-a,a]^5,$ and a time interval $[0,T]$ so that for all large $c,$ the 
        ($c-$dependent) solutions\footnote{Recall the notation \eqref{E:Wboldarray} - \eqref{E:scrVboldarray} which defines the arrays 
        $\Wbold, \Vbold, \scrW,$ and $\scrV$ respectively.} $\Vbold$ ($\scrV$) to 
        the EN$_{\kappa}^c$ system launched by the initial data exist on $[0,T] \times \mathbb{R}^3$ and satisfy $\Wbold([0,T] \times 
        \mathbb{R}^3) \subset \bar{\mathcal{O}}_2,$ $\scrW([0,T] \times \mathbb{R}^3) \subset 
        \bar{\mathfrak{O}}_2,$ $\Vbold([0,T] \times \mathbb{R}^3) \subset K,$ and $\scrV([0,T] \times \mathbb{R}^3) \subset 
        \mathfrak{K}.$ See Section \ref{SS:AdmissibleStateSpacec} for a detailed description of 
        $\bar{\mathcal{O}}_2$ and $\bar{\mathfrak{O}}_2,$ and \eqref{E:Kdef}, \eqref{E:frakKdef} for 
        the construction of $K$ and $\mathfrak{K}.$
        
       	The set $\mathfrak{C}$ from above, then, is the projection of $\bar{\mathfrak{O}}_2$ onto the first two axes (which
       	are the $\Ent,p$ components of $\scrV$). Intuitively, we would like the aforementioned four functions of the state-space 
       	variables to converge to $p,1,\Rinfinity,$ and $\Qinfinity$ respectively when their domains are restricted to an appropriate 
       	compact subset. 
       	In this section, we will develop and then assume hypotheses on the $c-$indexed equation of state that will allow us to prove 
       	useful versions of these kinds of convergence results.

        \subsection{Functions with $c-$independent properties: the definitions} 
        The main technical difficulty that we must confront is ensuring that
        the Sobolev estimates provided by the propositions appearing in Appendix \ref{A:SobolevMoser}
        can be made independently of all large $c.$ By examining these propositions,
        one could anticipate that this amounts to analyzing the $C_b^j$ norms (see definition \eqref{E:CbkNormDef})
        of various $c-$indexed families of functions $\mathfrak{F}_c$ appearing in the family of EN$_{\kappa}^c$ systems.
        We therefore introduce here some machinery that will allow us to easily discuss uniform-in-$c$ estimates. Following this, we
        use this machinery to prove some preliminary lemmas that will be used in the proofs of
        Theorem \ref{T:UniformLocalExistenceENkappac} and Theorem \ref{T:NewtonianLimit}, which are the two main theorems
        of this article. Before proceeding, we refer the reader to the notation defined in \eqref{E:leqcdef},
        which will be used frequently in the discussion that follows.

        \begin{definition} \label{D:RingDef}
          	Let $y^1,\cdots,y^n$ denote Cartesian coordinates on $\mathbb{R}^n,$ and let \\ 
          	$\mathfrak{D} \subset \mathbb{R}^n$ be a compact convex set.
          	We define $\mathcal{R}^j(c^{k};\mathfrak{D};y^1,\cdots,y^n)$ to be the ring consisting
            of all $c-$indexed families of functions $\mathfrak{F}_c(y^1,\cdots,y^n)$ such that for all large $c,$ $\mathfrak{F}_c 
            \in C_b^j(\mathfrak{D}),$ and such that the following estimate holds:
            \begin{align} \label{E:FcDecay}
                |\mathfrak{F}_c|_{j,\mathfrak{D}} \leqc c^{k} \cdot C(\mathfrak{D}).
            \end{align}
            We emphasize that the constant $C(\mathfrak{D})$ is
            allowed to depend on the family $\mathfrak{F}_c$ and the set
            $\mathfrak{D},$ but within a given family and on a fixed
            set, $C(\mathfrak{D})$ must be independent of all large $c.$
    	\end{definition}
      
      \begin{definition}  \label{D:qRingDef}
            Let $\mathfrak{D} \subset \mathbb{R}^n$ be a compact convex set.
            Let $q_1, \cdots, q_n$ be functions such that
            $(q_1,\cdots,q_n) \in H_{\bar{q}_1}^{j}(\mathbb{R}^3) \times \cdots \times 
            H_{\bar{q}_n}^{j}(\mathbb{R}^3)$ (see definition \eqref{E:NJVbNormDef}) and such that $\lbrace 
            \big(q_1(\sforspace),q_2(\sforspace),\cdots,q_n(\sforspace) \big)
            \ | \ \sforspace \in \mathbb{R}^3 \rbrace \subset \mathfrak{D},$ 
            where $\bar{q}_1, \bar{q}_2, \cdots, \bar{q}_n$ are constants
            such that $(\bar{q}_1, \bar{q}_2, \cdots, 
            \bar{q}_n) \in \mathfrak{D}.$ We define
            $\mathcal{R}^j(c^{k};\mathfrak{D};q_1,\cdots,q_n)$ to be the ring consisting
            of all $c-$indexed expressions that can be written as the composition of an element of 
            $\mathcal{R}^j(c^{k};\mathfrak{D};y^1,\cdots,y^n)$ with $(q_1, \cdots, q_n).$ 
            
            If $\mathfrak{F}_c$ is such an expression, then we indicate this by writing
            \begin{align} \label{E:qFcnotation}
            	\mathfrak{F}_c(q_1, \cdots, q_n) & \in \mathcal{R}^j(c^{k};\mathfrak{D};q_1,\cdots,q_n) \\
            	& or \notag \\
            	\mathfrak{F}_c & \in \mathcal{R}^j(c^{k};\mathfrak{D};q_1,\cdots,q_n). \label{E:qFcnotation2}
            \end{align}
           	We remark that the notation \eqref{E:qFcnotation}, \eqref{E:qFcnotation2} also carries
           	with it the implication that the functions $(q_1, \cdots, q_n)$ have the aforementioned properties. 
        \end{definition}

        \begin{remark}
        	The notation $\mathfrak{F}_c \in 
        	\mathcal{R}^j(c^{k};\mathfrak{D};q_1,\cdots,q_n)$ represents an abuse of notation in the sense that in Definition
        	\ref{D:RingDef}, the arguments of the function $\mathfrak{F}_c(y^1,\cdots,y^n)$ are fixed, while in 
        	Definition \ref{D:qRingDef}, we are allowing ourselves the freedom to shift the point of view as 
        	to what are the arguments of the expression $\mathfrak{F}_c$ by allowing ourselves to ``shift around powers of $c.$'' 
        	At the beginning of Section \ref{SS:ApplicationtoENkappac}, we 
        	explain why this freedom can be useful. As a simple example, if $\partial_t \Phi \in H^2,$ 
        	$\|\partial_t \Phi \|_{L^{\infty}} \leq 1,$ and $\mathfrak{F}_c = c^{-2} \partial_t \Phi,$
       		then we have that $\mathfrak{F}_c \in \mathcal{R}^2(c^{-2};[-1,1];\partial_t \Phi)$
       		and also that $\mathfrak{F}_c \in \mathcal{R}^2(c^{-1};[-1,1];c^{-1} \partial_t \Phi).$
      	\end{remark}

     		\begin{definition} 						\label{D:qIDef}
     				Let $\mathfrak{D}, q_1, \cdots, q_n,$ and $\bar{q}_1, \bar{q}_2, \cdots, \bar{q}_n$ be as in
            Definition \ref{D:qRingDef}. Then we define $\mathcal{I}^j(c^{k};\mathfrak{D};q_1,\cdots,q_n)$ to be the
            sub-ring contained in $\mathcal{R}^j(c^{k};\mathfrak{D};q_1,\cdots,q_n)$ consisting of all such $c-$indexed
            expressions $\mathfrak{F}_c$ such that the following estimate holds:
            \begin{align} \label{E:IDef}
                \|\mathfrak{F}_c \|_{H^j} \leqc
                c^{k} \cdot C(\mathfrak{D};\|q_1\|_{H_{\bar{q}_1}^j},\cdots,\|q_n\|_{H_{\bar{q}_n}^j}).
            \end{align}

        		If $\mathfrak{F}_c$ is such an expression, then we indicate this by writing
            \begin{align} \label{E:Fcnotation}
            	\mathfrak{F}_c(q_1, \cdots, q_n) & \in \mathcal{I}^j(c^{k};\mathfrak{D};q_1,\cdots,q_n) \\
            	& or \notag \\
            	\mathfrak{F}_c & \in \mathcal{I}^j(c^{k};\mathfrak{D};q_1,\cdots,q_n) \notag \\
            	& or \notag \\
            	\mathfrak{F}_c & = \mathscr{O}^j(c^{k};\mathfrak{D};q_1,\cdots,q_n). \label{E:ONotation}
            \end{align} 
        \end{definition}

				\begin{remark}
					This definition is highly motivated by the inequality \eqref{E:ModifiedSobolevEstimateConstantArray} of Appendix 
						\ref{A:SobolevMoser}.
				\end{remark}
				
				\begin{remark}
            We also emphasize that in our applications below, \emph{the functions $q_i$ and constants $\bar{q}_i$ may themselves depend 
            on the parameter $c,$ even though we do not always explicitly indicate this dependence.} Typically, the $q_i$ will be 
            quantities related to solutions of the EN$_{\kappa}^c$ system, and the $\bar{q}_i$ will be equal to the components of 
            either \eqref{E:InitialConstant}, \eqref{E:scrVcConstant}, or \eqref{E:VcConstant}, perhaps scaled by a power of $c.$
        \end{remark}

        \begin{remark} \label{R:ArgumentsOmitted}
            In the notation $\mathcal{R}(\cdots),\mathcal{I}(\cdots),$ and $\mathscr{O}^j(\cdots),$
            we often omit the argument $\mathfrak{D}.$ In this case, it is understood that
            there is an implied set $\mathfrak{D}$ that is to be inferred from context; frequently
            $\mathfrak{D}$ is to be inferred from $L^{\infty}$ estimates on the $q_i$ that follow from
            Sobolev embedding. Also, we omit the argument $c^{k}$ when $k=0.$ Furthermore, we have chosen to omit
            dependence on the constants $\bar{q}_i$ since, as will be explained at the beginning of
            Section \ref{SS:ApplicationtoENkappac}, their definitions will be clear from context. We
            will occasionally omit additional arguments when the context is clear.
        \end{remark}

				\subsection{Functions with $c-$independent properties: useful lemmas}
				The following three lemmas provide the core structure for analyzing the Sobolev norms of terms
				appearing in the EN$_{\kappa}^c$ system. They are especially useful for keeping track of powers of
				$c.$ Their proofs are based on the Sobolev-Moser estimates that are stated as propositions in Appendix \ref{A:SobolevMoser}.
				We assume throughout this section that the functions $q_1,\cdots,q_n$ have the properties stated in Definition \ref{D:qRingDef}.
				
				\begin{lemma} \label{L:HjbySubtractingConstant}
            If $j \geq 2$ and $\mathfrak{F}_c(y^1,\cdots,y^n) \in \mathcal{R}^j(c^{k};\mathfrak{D};y^1,\cdots,y^n),$
            then 
           	\begin{align}
            	\mathfrak{F}_c \circ (q_1,\cdots,q_n) - \mathfrak{F}_c \circ(\bar{q}_1,\cdots,\bar{q}_n) \in 
            	\mathcal{I}^j(c^{k};\mathfrak{D};q_1,\cdots,q_n).
            \end{align}
        \end{lemma}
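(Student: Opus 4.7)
\medskip

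The plan is to reduce the claim to a standard Sobolev--Moser composition estimate by first absorbing the constant subtraction into a fundamental theorem of calculus representation, and then using the hypothesis $\mathfrak{F}_c \in \mathcal{R}^j(c^k;\mathfrak{D};y^1,\dots,y^n)$ to control the $C_b^j$ norm of the integrand by $c^k \cdot C(\mathfrak{D})$ uniformly in large $c$. More precisely, I would write
\begin{align*}
\mathfrak{F}_c(q_1,\dots,q_n) - \mathfrak{F}_c(\bar q_1,\dots,\bar q_n)
= \sum_{i=1}^n (q_i - \bar q_i)\,\mathfrak{G}_{c,i}(q_1,\dots,q_n; \bar q_1, \dots, \bar q_n),
\end{align*}
where $\mathfrak{G}_{c,i} \eqdef \int_0^1 \frac{\partial \mathfrak{F}_c}{\partial y^i}\big(\bar q + \tau(q-\bar q)\big)\, d\tau$. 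Since $\mathfrak{D}$ is convex, the argument of the integrand stays in $\mathfrak{D}$, so $\mathfrak{G}_{c,i}$ is well defined pointwise; moreover, by differentiating under the integral sign and using the membership $\mathfrak{F}_c \in \mathcal{R}^j(c^k;\mathfrak{D};y^1,\dots,y^n)$, the $C_b^{j-1}$ norm of $\partial_{y^i}\mathfrak{F}_c$ on $\mathfrak{D}$ is $\lesssim c^k \cdot C(\mathfrak{D})$.

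Next I would bound $\|\mathfrak{G}_{c,i}(q)\|_{H^j}$ by applying the Sobolev--Moser composition estimate referenced at the end of Definition \ref{D:qIDef} (i.e.\ the inequality \eqref{E:ModifiedSobolevEstimateConstantArray}) to $\partial_{y^i}\mathfrak{F}_c$ and the shifted arguments $q_\ell - \bar q_\ell$, uniformly in $\tau \in [0,1]$; this uses that $j \geq 2 > 3/2$ so that $H^j \hookrightarrow L^\infty$ and that each $q_\ell - \bar q_\ell \in H^j$ by assumption. The upshot is an estimate of the form
\begin{align*}
\|\mathfrak{G}_{c,i}(q_1,\dots,q_n)\|_{H^j_{\bar g_i}} \ \leqc \ c^k \cdot C\big(\mathfrak{D};\|q_1\|_{H^j_{\bar q_1}},\dots,\|q_n\|_{H^j_{\bar q_n}}\big),
\end{align*}
where $\bar g_i$ denotes the value of $\mathfrak{G}_{c,i}$ at the constant configuration (and in particular $\|\mathfrak{G}_{c,i}\|_{L^\infty}$ enjoys the same bound on $\mathfrak{D}$).

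Finally, I would combine this with the standard $H^j$ Moser product inequality (for $j \geq 2$) applied to the product $(q_i - \bar q_i)\cdot \mathfrak{G}_{c,i}$, exploiting that $q_i - \bar q_i \in H^j$ with norm $\|q_i\|_{H^j_{\bar q_i}}$ and that $\mathfrak{G}_{c,i}$ has the $H^j$ bound just established. Summing over $i$ produces exactly the inequality \eqref{E:IDef} required for membership in $\mathcal{I}^j(c^k;\mathfrak{D};q_1,\dots,q_n)$, with the $c^k$ factor carried through cleanly from the $C_b^{j-1}$ bound on $\partial_{y^i}\mathfrak{F}_c$.

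The main (mild) obstacle is purely bookkeeping: ensuring that the nonlinear Moser composition estimate applied to $\mathfrak{G}_{c,i}$ does not introduce additional powers of $c$ beyond the single $c^k$ coming from $|\partial_{y^i}\mathfrak{F}_c|_{j-1,\mathfrak{D}}$. This is why the convexity of $\mathfrak{D}$ and the uniform-in-$\tau$ application of the composition lemma are essential---they prevent any multiplicative powers of $c$ from being generated by compositions involving derivatives of $\mathfrak{F}_c$ of order higher than one, all of which are absorbed into the constant $C(\mathfrak{D})$ from Definition \ref{D:RingDef}.
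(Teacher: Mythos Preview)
Your approach is essentially the right idea but is unnecessarily circuitous, and as written it spends one more derivative of $\mathfrak{F}_c$ than the hypothesis supplies. To bound $\|\mathfrak{G}_{c,i}(q)\|_{H^j_{\bar g_i}}$ by applying \eqref{E:ModifiedSobolevEstimateConstantArray} to $\partial_{y^i}\mathfrak{F}_c$, you need $|\partial(\partial_{y^i}\mathfrak{F}_c)/\partial y|_{j-1,\mathfrak{D}}$ finite, i.e.\ $\mathfrak{F}_c\in C_b^{j+1}(\mathfrak{D})$; membership in $\mathcal{R}^j(c^k;\mathfrak{D};\cdots)$ only gives $C_b^j$. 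The same overcount occurs if you instead invoke \eqref{E:ModifiedSobolevEstimate} for the product $(q_i-\bar q_i)\cdot\mathfrak{G}_{c,i}(q)$: that requires $|\mathfrak{G}_{c,i}|_{j,\mathfrak{D}}$, again $j+1$ derivatives of $\mathfrak{F}_c$. The loss is structural---your FTC factorization already consumes one derivative before you begin estimating.

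The paper's proof avoids the factorization altogether. The conclusion of the lemma is precisely the inequality
\[
\|\mathfrak{F}_c\circ(q_1,\dots,q_n)-\mathfrak{F}_c(\bar q_1,\dots,\bar q_n)\|_{H^j}\ \leqc\ c^k\cdot C\big(\mathfrak{D};\|q_1\|_{H^j_{\bar q_1}},\dots,\|q_n\|_{H^j_{\bar q_n}}\big),
\]
and this is a direct instance of \eqref{E:ModifiedSobolevEstimateConstantArray} with $\mathfrak{F}=\mathfrak{F}_c$, $\mathbf V=(q_1,\dots,q_n)$, $\bar{\mathbf V}=(\bar q_1,\dots,\bar q_n)$: the right-hand side there carries only $|\partial\mathfrak{F}_c/\partial y|_{j-1,\mathfrak{D}}\leq|\mathfrak{F}_c|_{j,\mathfrak{D}}\leqc c^k\cdot C(\mathfrak{D})$. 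In other words, the sharp Moser composition estimate already performs the FTC-and-product bookkeeping internally at the correct regularity, so the proof is one line.
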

        \begin{proof}
        		We emphasize that the conclusion of Lemma \ref{L:HjbySubtractingConstant} is exactly the statement that 
        		$\|\mathfrak{F}_c \circ(q_1,\cdots,q_n) - \mathfrak{F}_c \circ(\bar{q}_1,\cdots,\bar{q}_n)\|_{H^j} \leqc
        		c^{k} \cdot C(\|q_1\|_{H_{\bar{q}_1}^j},\cdots,\|q_n\|_{H_{\bar{q}_n}^j}).$ Its proof follows from definitions 
        		\ref{D:RingDef}, \ref{D:qRingDef}, and \ref{D:qIDef}, and from \eqref{E:ModifiedSobolevEstimateConstantArray}.
        \end{proof}

        \begin{lemma} \label{L:MultiplyFcGcSobolevEstimate}
            Suppose that $\mathfrak{F}_c \in \mathcal{R}^j(c^{k_1};\mathfrak{D};q_1,\cdots,q_n),$ \\
            $\mathfrak{G}_c \in \mathcal{R}^j(c^{k_2};\mathfrak{D};q_1,\cdots,q_n),$ and $\mathfrak{H}_c \in 
             \mathcal{I}^j(c^{k_3};\mathfrak{D};q_1,\cdots,q_n).$ Then
            \begin{align}
            	\mathfrak{F}_c \cdot \mathfrak{G}_c & \in 
            		\mathcal{R}^j(c^{k_1+k_2};\mathfrak{D};q_1,\cdots,q_n) \qquad \mbox{if}  \ j \geq 0							\\
            		& \mbox{and} \notag \\
            	\mathfrak{F}_c \cdot \mathfrak{H}_c & \in 
            		\mathcal{I}^j(c^{k_1+k_3};\mathfrak{D};q_1,\cdots,q_n) \qquad \mbox{if}  \ j \geq 2.
            \end{align}
       	\end{lemma}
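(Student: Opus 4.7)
The plan is to handle the two claims separately, both relying on the principle that the $\mathcal{R}^j$ condition is a pointwise assertion about an underlying state-space function in $C_b^j(\mathfrak{D})$, which behaves well under products via the Leibniz rule, while the $\mathcal{I}^j$ condition adds a genuinely spatial $H^j$ bound that requires the Sobolev-Moser calculus.

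For the first claim, let $\mathscr{F}_c, \mathscr{G}_c \in C_b^j(\mathfrak{D})$ be the state-space functions such that $\mathfrak{F}_c = \mathscr{F}_c \circ (q_1,\ldots,q_n)$ and $\mathfrak{G}_c = \mathscr{G}_c \circ (q_1,\ldots,q_n)$. For any multi-index $\vec{I}$ with $|\vec{I}| \le j$, the Leibniz rule on $\mathfrak{D}$ expands $\partial_{\vec{I}}(\mathscr{F}_c \mathscr{G}_c)$ as a finite sum of products $\partial_{\vec{J}}\mathscr{F}_c \cdot \partial_{\vec{I} - \vec{J}}\mathscr{G}_c$. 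Taking $\sup_{z \in \mathfrak{D}} |\cdot|$ and summing over $|\vec{I}| \le j$ gives $|\mathscr{F}_c \mathscr{G}_c|_{j,\mathfrak{D}} \le C_j |\mathscr{F}_c|_{j,\mathfrak{D}} |\mathscr{G}_c|_{j,\mathfrak{D}}$, and the hypotheses of the form \eqref{E:FcDecay} on each factor then yield a bound of size $c^{k_1 + k_2}$. Composition with $(q_1,\ldots,q_n)$ preserves this pointwise bound, so $\mathfrak{F}_c \cdot \mathfrak{G}_c \in \mathcal{R}^j(c^{k_1+k_2};\mathfrak{D};q_1,\ldots,q_n)$, valid without restriction on $j \ge 0$.

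For the second claim, the inclusion $\mathcal{I}^j \subset \mathcal{R}^j$ together with the first part already shows that $\mathfrak{F}_c \cdot \mathfrak{H}_c \in \mathcal{R}^j(c^{k_1+k_3};\mathfrak{D};q_1,\ldots,q_n)$, so only the $H^j$ estimate \eqref{E:IDef} remains. The plan is to invoke a Sobolev-Moser product estimate from Appendix \ref{A:SobolevMoser} of the form $\|fg\|_{H^j} \le C \bigl(\|f\|_{L^\infty} \|g\|_{H^j} + \|g\|_{L^\infty} \|f\|_{H^j}\bigr)$, valid for $j \ge 2$ in three space dimensions. With $f = \mathfrak{F}_c$ and $g = \mathfrak{H}_c$, the bound $\|\mathfrak{F}_c\|_{L^\infty} \leqc c^{k_1} C(\mathfrak{D})$ follows directly from $|\mathscr{F}_c|_{0,\mathfrak{D}}$, and $\|\mathfrak{H}_c\|_{H^j} \leqc c^{k_3} C(\cdots)$ is immediate from $\mathfrak{H}_c \in \mathcal{I}^j$. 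The cross term $\|\mathfrak{H}_c\|_{L^\infty} \cdot \|\mathfrak{F}_c\|_{H^j}$ is controlled via Sobolev embedding applied to $\mathfrak{H}_c$ (again using $j \ge 2$) and via chain-rule estimates for the composition $\mathscr{F}_c \circ (q_1,\ldots,q_n)$, which convert the $C_b^j$-control on $\mathscr{F}_c$ of size $c^{k_1}$ into an $H^j$ bound using the Sobolev norms $\|q_i\|_{H^j_{\bar{q}_i}}$ permitted in the constant on the right-hand side of \eqref{E:IDef}.

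The main obstacle is bookkeeping the powers of $c$ through the mixed Leibniz terms $\partial^{\vec{I} - \vec{J}} \mathfrak{F}_c \cdot \partial^{\vec{J}} \mathfrak{H}_c$ with $1 \le |\vec{J}| \le |\vec{I}| - 1$, where neither factor is simultaneously in $L^\infty$ and $L^2$; these are precisely what the Gagliardo-Nirenberg interpolation underlying the Sobolev-Moser calculus is designed to handle, and the restriction $j \ge 2$ is dictated by the three-dimensional Sobolev embedding $H^j \hookrightarrow L^\infty$ needed to obtain a $c$-independent constant. Since each factor or derivative of $\mathfrak{F}_c$ contributes $c^{k_1}$ and each factor or derivative of $\mathfrak{H}_c$ contributes $c^{k_3}$, all terms scale correctly as $c^{k_1+k_3}$, completing the verification of \eqref{E:IDef}.
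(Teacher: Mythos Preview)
Your argument for the first claim (the $\mathcal{R}^j$ product) is fine and matches the paper's ``product rule for derivatives.''

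For the second claim there is a genuine gap. The bilinear Moser estimate you invoke, $\|fg\|_{H^j} \le C(\|f\|_{L^\infty}\|g\|_{H^j} + \|g\|_{L^\infty}\|f\|_{H^j})$, requires $f = \mathfrak{F}_c \in H^j$, and you then try to produce an $H^j$ bound on $\mathfrak{F}_c = \mathscr{F}_c \circ (q_1,\ldots,q_n)$ from chain-rule estimates. But membership in $\mathcal{R}^j$ does \emph{not} give $\mathfrak{F}_c \in H^j$: the chain rule only controls $\partial^{(k)}\mathfrak{F}_c$ in $L^2$ for $k \ge 1$ (each such term carries a factor $\partial q_i \in H^{j-1}$), while the zeroth-order piece $\mathfrak{F}_c$ itself is generically not in $L^2(\mathbb{R}^3)$. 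Take $\mathscr{F}_c \equiv 1$: then $\mathfrak{F}_c \equiv 1 \in \mathcal{R}^j$ but $\|\mathfrak{F}_c\|_{L^2} = \infty$, so neither your product inequality nor your claimed ``$H^j$ bound on $\mathfrak{F}_c$'' makes sense.

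The paper sidesteps this entirely by appealing to \eqref{E:ModifiedSobolevEstimate}, which bounds $\|(\mathfrak{F}\circ\mathbf{V})G\|_{H^j}$ directly in terms of the $C_b^j$ norm $|\mathfrak{F}|_{j,K}$ (not $\|\mathfrak{F}\circ\mathbf{V}\|_{H^j}$), together with $\|\mathbf{V}\|_{H^j_{\bar{\mathbf{V}}}}$ and $\|G\|_{H^j}$. With $\mathfrak{F} = \mathscr{F}_c$, $\mathbf{V} = (q_1,\ldots,q_n)$, and $G = \mathfrak{H}_c$, this immediately gives the required $c^{k_1+k_3}$ bound. Your approach could be salvaged by replacing the symmetric product estimate with an asymmetric one of the form $\|fg\|_{H^j} \le C(\|f\|_{L^\infty}\|g\|_{H^j} + \|\nabla f\|_{H^{j-1}}\|g\|_{L^\infty})$ and then bounding $\|\nabla\mathfrak{F}_c\|_{H^{j-1}}$ via the chain rule, but as written the argument does not go through.
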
    
        \begin{proof}
            Lemma \ref{L:MultiplyFcGcSobolevEstimate} follows from the product rule for derivatives and \eqref{E:ModifiedSobolevEstimate}.
        \end{proof}

        \begin{remark} \label{R:RingRemark}
            Lemma \ref{L:MultiplyFcGcSobolevEstimate} shows that for $k \leq 0,$ $\mathcal{R}^j(c^{k};\mathfrak{D};q_1,\cdots,q_n)$ is a 
            ring, i.e., it is closed under products. We frequently use this property in this article without explicitly mentioning it.
        \end{remark}

        \begin{remark} \label{R:1OverFisintheRing}
           Lemma \ref{L:MultiplyFcGcSobolevEstimate} can easily be used to show that if \\ $\mathfrak{F}_c(y^1,\cdots,y^n) \in
           \mathcal{R}^j(c^0;\mathfrak{D};y^1,\cdots,y^n)$ and if there exists a constant $\widetilde{C}(\mathfrak{D}) > 0$
           such that $\widetilde{C}(\mathfrak{D}) \leqc \inf_{(y^1,\cdots,y^n)\in \mathfrak{D}} 
           |\mathfrak{F}_c(y^1,\cdots,y^n)|,$ then \\ 
           $1/\mathfrak{F}_c\circ(q_1,\cdots,q_n) \in \mathcal{R}^j(c^0;\mathfrak{D};q_1,\cdots,q_n).$
        \end{remark}
        
        \begin{remark} \label{R:FcIsInIN} 
            Lemma \ref{L:MultiplyFcGcSobolevEstimate} shows that if $\mathfrak{F}_c(y^1,\cdots,y^n) \in
           \mathcal{R}^j(c^0;\mathfrak{D};y^1,\cdots,y^n)$ and
            $\mathfrak{F}_c \circ (\bar{q}_1,\cdots,\bar{q}_n) = 0,$ then 
            $\mathfrak{F}_c \circ (q_1,\cdots,q_n) \in \mathcal{I}^j(c^{k};\mathfrak{D};q_1,\cdots,q_n).$ In particular,
            if $\bar{q}=0,$ then any monomial $q^k$ for $k>0$ is an element of $\mathcal{I}^j(q).$
        \end{remark}

        \begin{remark} \label{R:LargecSobolevConstantRemark}
            Lemma \ref{L:MultiplyFcGcSobolevEstimate} shows in
            particular that for $k \leq 0,$ $\mathcal{I}^j(c^{k};\mathfrak{D};q_1,\cdots,q_n)$ is an ideal in
            $\mathcal{R}^j(\mathfrak{D};q_1,\cdots,q_n).$
        \end{remark}

        \begin{remark} \label{R:BoundedThroughALimit}
            If $k \leq 0$ and there exists a fixed function $\mathfrak{F}_{\infty} \in \mathcal{R}^j(\mathfrak{D};y^1,\cdots,y^n)$
            such that $\mathfrak{F}_c - \mathfrak{F}_{\infty} \in
            \mathcal{R}^j(c^{k};\mathfrak{D};y^1,\cdots,y^n),$
            then it follows that $|\mathfrak{F}_c|_{j,\mathfrak{D}} \leqc
            |\mathfrak{F}_{\infty}|_{j,\mathfrak{D}} + 1,$ so that the family of functions
            $\mathfrak{F}_c$ is uniformly bounded in the norm
            $|\cdot|_{j,\mathfrak{D}}$ for all large $c.$ A
            similar remark using the $\|\cdot\|_{H^j}$ norm applies if $\mathfrak{F}_{\infty} \in 
            \mathcal{I}^j(\mathfrak{D};q_1,\cdots,q_n)$
            and $\mathfrak{F}_c - \mathfrak{F}_{\infty} \in \mathcal{I}^j(c^{k};\mathfrak{D};q_1,\cdots,q_n).$ We often make use of these
            observations in this article without explicitly mentioning it.
        \end{remark}

        \begin{lemma} \label{L:TimeDifferentiatedSobolevEstimate}
            Suppose that $j \geq 3,$ $k_1 + k_2 = k_0,$ and that \\
            $\mathfrak{F}_c \in \mathcal{R}^j(c^{k_0};\mathfrak{D}_1;q_1,\cdots,q_n).$ Assume further that
            for $1 \leq i \leq n,$ we have that  $q_i\in \bigcap_{k=0}^{k=1} C^k([0,T],H_{\bar{q}_i}^{j-k})$ and that for 
            all large $c,$ that \\
            $c^{k_2} \big(\partial_t q_1,\cdots,\partial_t q_n\big) ([0,T] \times \mathbb{R}^3)\subset \mathfrak{D}_2.$
            Then on $[0,T],$ we have that 
            \begin{align}
            	\partial_t \big(\mathfrak{F}_c \big) \in
            	\mathcal{I}^{j-1}(c^{k_1};\mathfrak{D}_1 \times \mathfrak{D}_2;q_1,\cdots,q_n,c^{k_2}\partial_t 	
            	q_1,\cdots,c^{k_2}\partial_t q_n).
         		\end{align}
        \end{lemma}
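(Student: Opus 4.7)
The plan is to apply the chain rule in $t$ and then use the ring/ideal calculus developed above to identify each factor in the appropriate class. First I would write, using the $C^1_b$ regularity of $\mathfrak{F}_c$ on $\mathfrak{D}_1$ and the hypothesis $q_i \in C^1([0,T], H^{j-1}_{\bar{q}_i}),$
\begin{align*}
\partial_t \big[\mathfrak{F}_c \circ (q_1,\ldots,q_n)\big]
= \sum_{i=1}^n \Big(\frac{\partial \mathfrak{F}_c}{\partial y^i} \circ (q_1,\ldots,q_n)\Big) \cdot \partial_t q_i,
\end{align*}
and invoke Definition \ref{D:RingDef} to record that each $\partial \mathfrak{F}_c / \partial y^i$ lies in $\mathcal{R}^{j-1}(c^{k_0};\mathfrak{D}_1;y^1,\ldots,y^n).$

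The crucial bookkeeping step is to redistribute the powers of $c.$ I would rewrite each summand as
\begin{align*}
\Big(\frac{\partial \mathfrak{F}_c}{\partial y^i} \circ (q_1,\ldots,q_n)\Big) \cdot \partial_t q_i
= \Big(c^{-k_2} \frac{\partial \mathfrak{F}_c}{\partial y^i} \circ (q_1,\ldots,q_n)\Big) \cdot \big(c^{k_2} \partial_t q_i\big),
\end{align*}
so that the first factor sits in $\mathcal{R}^{j-1}(c^{k_1}; \mathfrak{D}_1; y^1,\ldots,y^n)$ (using $k_1 = k_0 - k_2$). Viewing this factor as trivially independent of a fresh batch of coordinates $z^1,\ldots,z^n$ on $\mathfrak{D}_2,$ I would reinterpret it as an element of $\mathcal{R}^{j-1}(c^{k_1}; \mathfrak{D}_1 \times \mathfrak{D}_2; q_1,\ldots,q_n, c^{k_2}\partial_t q_1,\ldots, c^{k_2}\partial_t q_n),$ exploiting precisely the ``shift around powers of $c$'' flexibility emphasized after Definition \ref{D:qRingDef}.

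For the other factor, the key observation is that $\partial_t q_i \in C([0,T], H^{j-1})$ with \emph{unshifted} background $0,$ because time differentiation annihilates the affine constant $\bar{q}_i.$ Then by Remark \ref{R:FcIsInIN}, viewing $z^i$ as the coordinate function on $\mathfrak{D}_1 \times \mathfrak{D}_2$ (which vanishes at $z^i = 0$), we obtain $c^{k_2}\partial_t q_i \in \mathcal{I}^{j-1}(\mathfrak{D}_1 \times \mathfrak{D}_2; q_1,\ldots,q_n, c^{k_2}\partial_t q_1,\ldots,c^{k_2}\partial_t q_n).$ The hypothesis $c^{k_2}(\partial_t q_1,\ldots,\partial_t q_n)([0,T]\times \mathbb{R}^3) \subset \mathfrak{D}_2$ is precisely what makes this identification legal.

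Finally, because $j - 1 \geq 2$ (this is where $j \geq 3$ enters), the second clause of Lemma \ref{L:MultiplyFcGcSobolevEstimate} converts the product of a ring element of order $c^{k_1}$ with an ideal element of order $c^0$ into an element of $\mathcal{I}^{j-1}(c^{k_1};\mathfrak{D}_1\times\mathfrak{D}_2;q_1,\ldots,q_n,c^{k_2}\partial_t q_1,\ldots,c^{k_2}\partial_t q_n),$ and summing over $i$ closes the argument. The only real obstacle is conceptual rather than technical: one must be comfortable enlarging the domain of definition of $\mathfrak{F}_c$ to include the $c^{k_2}\partial_t q_i$ as formal extra arguments, so that the time-derivative falls into the ideal with the desired power of $c.$ Once this reframing is accepted, the proof reduces to one application of the product rule and one application of Lemma \ref{L:MultiplyFcGcSobolevEstimate}.
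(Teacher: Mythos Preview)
Your proposal is correct and follows exactly the approach the paper takes: chain rule, then Remark \ref{R:FcIsInIN} (using that the constant term associated to $c^{k_2}\partial_t q_i$ is $0$), then the product estimate of Lemma \ref{L:MultiplyFcGcSobolevEstimate}. The paper's proof is a one-line citation of these three ingredients together with the same emphasis on the vanishing background constant for $c^{k_2}\partial_t q_i$; you have simply spelled out the details.
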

            \begin{proof}
             Lemma \ref{L:TimeDifferentiatedSobolevEstimate} follows from the chain rule,
             Lemma \ref{L:MultiplyFcGcSobolevEstimate}, and Remark \ref{R:FcIsInIN}.
             We emphasize that the constant term associated to $c^{k_2}\partial_t q_i$ is $0,$ so that 
             on the right-hand side of the definition \eqref{E:IDef} of $\mathcal{I}^{j-1}(\cdots),$ we are measuring $c^{k_2}\partial_t 
             q_i$ in the $H^{j-1}$ norm. 
            \end{proof}

        \begin{corollary} \label{C:TimeDifferentiatedSobolevEstimate}
            Let $\partial_a$ be a first-order spatial coordinate derivative operator. Suppose that $j \geq 3,$ $k_1 + k_2 = k_0,$ and that
            $\mathfrak{F}_c \in \mathcal{R}^j(c^{k_0};\mathfrak{D}_1;q_1,\cdots,q_n).$
            Assume that for all large $c,$ we have that $c^{k_2}\big(\partial_a q_1,\cdots,\partial_a q_n\big) ([0,T] \times 
            \mathbb{R}^3)\subset \mathfrak{D}_2.$ Then on $[0,T],$ we have that 
            \begin{align}
            	\partial_a \big(\mathfrak{F}_c \big)
            	\in \mathcal{I}^{j-1}(c^{k_1};\mathfrak{D}_1 \times \mathfrak{D}_2;q_1,\cdots,q_n,c^{k_2}\partial_a 
            	q_1,\cdots,c^{k_2}\partial_a q_n).
           	\end{align}
        \end{corollary}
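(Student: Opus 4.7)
The plan is to mirror the proof of Lemma \ref{L:TimeDifferentiatedSobolevEstimate} verbatim, with $\partial_a$ replacing $\partial_t$; the argument is purely formal and does not distinguish between spatial and temporal derivatives, so I expect no new obstacle beyond correctly tracking powers of $c$.

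First I would apply the chain rule to expand
\begin{align}
    \partial_a \bigl( \mathfrak{F}_c \circ (q_1,\cdots,q_n) \bigr)
    = \sum_{i=1}^{n} \frac{\partial \mathfrak{F}_c}{\partial y^i}(q_1,\cdots,q_n) \cdot \partial_a q_i
    = \sum_{i=1}^{n} c^{-k_2} \, \frac{\partial \mathfrak{F}_c}{\partial y^i}(q_1,\cdots,q_n) \cdot \bigl( c^{k_2} \partial_a q_i \bigr). \notag
\end{align}
Since $\mathfrak{F}_c \in \mathcal{R}^j(c^{k_0};\mathfrak{D}_1;y^1,\cdots,y^n)$, each $\partial \mathfrak{F}_c / \partial y^i$ lies in $\mathcal{R}^{j-1}(c^{k_0};\mathfrak{D}_1;y^1,\cdots,y^n)$ by Definition \ref{D:RingDef}. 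Composing with $(q_1,\cdots,q_n)$ and multiplying by the factor $c^{-k_2}$ places the prefactor in $\mathcal{R}^{j-1}(c^{k_0 - k_2};\mathfrak{D}_1;q_1,\cdots,q_n) = \mathcal{R}^{j-1}(c^{k_1};\mathfrak{D}_1;q_1,\cdots,q_n)$.

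Next I would handle the factors $c^{k_2} \partial_a q_i$. The hypothesis that $c^{k_2}\bigl(\partial_a q_1,\cdots,\partial_a q_n\bigr)([0,T]\times \mathbb{R}^3) \subset \mathfrak{D}_2$ ensures that these spatial derivatives take values in the compact convex set $\mathfrak{D}_2$, so Definition \ref{D:qRingDef} applies with the crucial constant term equal to $0$ (since $\partial_a$ annihilates constants). By Remark \ref{R:FcIsInIN}, each $c^{k_2}\partial_a q_i$ is therefore an element of $\mathcal{I}^{j-1}(c^0;\mathfrak{D}_2;c^{k_2}\partial_a q_i)$, and its $H^{j-1}$ norm is bounded in terms of $\|c^{k_2}\partial_a q_i\|_{H^{j-1}} \leq \|c^{k_2} q_i\|_{H_{c^{k_2}\bar{q}_i}^j}$, which is absorbed into the constant on the right-hand side of \eqref{E:IDef}.

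Finally, I would invoke Lemma \ref{L:MultiplyFcGcSobolevEstimate} (valid since $j - 1 \geq 2$) to conclude that the product of an element of $\mathcal{R}^{j-1}(c^{k_1};\cdots)$ with an element of $\mathcal{I}^{j-1}(c^0;\cdots)$ lies in $\mathcal{I}^{j-1}(c^{k_1};\mathfrak{D}_1 \times \mathfrak{D}_2;q_1,\cdots,q_n,c^{k_2}\partial_a q_1,\cdots,c^{k_2}\partial_a q_n)$. Summing over $i$ and using closure of $\mathcal{I}^{j-1}$ under addition yields the claim. The main (mild) bookkeeping point is to ensure that the split $k_0 = k_1 + k_2$ is distributed correctly between the differentiated function and the rescaled derivative $c^{k_2}\partial_a q_i$; no analytic difficulty arises.
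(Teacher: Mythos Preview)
Your proposal is correct and follows essentially the same approach as the paper: the paper simply states that the proof is ``virtually identical to the proof of Lemma \ref{L:TimeDifferentiatedSobolevEstimate},'' which in turn is summarized as ``the chain rule, Lemma \ref{L:MultiplyFcGcSobolevEstimate}, and Remark \ref{R:FcIsInIN}.'' You have unpacked exactly these three ingredients, correctly tracking the split $k_0 = k_1 + k_2$ and the fact that the constant term associated to $c^{k_2}\partial_a q_i$ is $0$.
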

            \begin{proof}
                The proof of Corollary \ref{C:TimeDifferentiatedSobolevEstimate} is virtually identical to the proof of
                Lemma \ref{L:TimeDifferentiatedSobolevEstimate}. \\
            \end{proof}

        \subsection{Applications to the EN$_{\kappa}^c$ system} \label{SS:ApplicationtoENkappac} 
				We will now apply these lemmas to the EN$_{\kappa}^c$ system. Let us first make a few remarks about our use of the  
        norms $\| \cdot \|_{H^j,\bar{q}_i}$ that appear on the right-hand side 
        of \eqref{E:IDef} and the constant term $\bar{q}_i$ associated to $q_i.$ For the remainder of this 
        article, it is to be understood that the constant term associated to $c^{k} \Vbold$ is $c^{k} \Vb_c,$ that the constant term
        associated to $c^{k} \scrV$ is $c^{k} \scrVb_c,$ and the constant term associated to both $D \Vbold$ and $D \scrV$ is 
        $\mathbf{0},$ where $\scrVb_c$ and $\Vb_c$ are defined in \eqref{E:scrVcConstant} and \eqref{E:VcConstant} 
        respectively. In other words, when estimating $c^{k} \Vbold$ using a $j-$th order Sobolev norm, it is understood that we are using
        the norm $\|\cdot\|_{H_{c^{k}\Vb_c}^j},$ and similarly for the other state-space arrays. The relationship between the arrays 
        $\Vbold$ and $\scrV$ is always understood to be the one implied by \eqref{E:Vboldarray} and 
        \eqref{E:scrVboldarray}. We furthermore emphasize that $\Vbold$ (or $\scrV$) will represent a solution array to the 
        EN$_{\kappa}^c$ system, and therefore will implicitly depend on $c$ through the $c-$dependent initial data $\VID_c$ (see 
        \eqref{E:ENkappacVID}) and through the $c$ dependence 
        of the EN$_{\kappa}^c$ system itself. The fact that the constant arrays $\scrVb_c$ and $\Vb_c$ depend on the parameter $c$ does 
        not pose any difficulty. For as we shall see, $\scrVb_c$ is contained in the fixed compact set $\mathfrak{K}$ for all large $c,$ 
        and $\Vb_c$ is contained in the fixed compact set $K$ for all large $c,$ where the sets $\mathfrak{K}$ and 
        $K$ were introduced at the beginning of Section \ref{S:cDependence}. Therefore, the $L^{\infty}$ estimates 
        of the constants $\scrVb_c$ and $\Vb_c$ that we will need can be made independently of all large $c.$  
        
        In addition to the above remarks, we add that we will have available a-priori estimates
        that guarantee that $\Vbold \in \bigcap_{k=0}^{k=2}C^k([0,T],H_{\Vb_c}^{N-k})$
        for a fixed integer\footnote{The relevance of $N \geq 4$ is described in Section \ref{S:IVPc}.} \\
        $N \geq 4$ on our time interval $[0,T]$ of interest, which are hypotheses that are relevant for Lemma 
        \ref{L:TimeDifferentiatedSobolevEstimate} and Corollary 
        \ref{C:TimeDifferentiatedSobolevEstimate}. Our a-priori estimates will also ensure that
        all of the relevant quantities are contained in an appropriate fixed compact convex set, so that  
        the ``hypotheses on the $q_i$'' described in Definition \ref{D:qRingDef} will always be satisfied. Consequently, we will often 
        omit the dependence of the running constants $C(\cdots)$ on such sets. The relevant a-priori 
        estimates (``Induction Hypotheses'') are described in detail in Section \ref{SS:InductionHypotheses}. 
 
 				Let us now provide a clarifying example and also elaborate upon the idea that it is sometimes useful to shift the 
 				point of view as to what are the arguments of a family $\mathfrak{F}_c(\cdots).$ For example, consider the expression 
 				$\mathfrak{F}_c \eqdef c^{-2}\partial_t \Phi,$ where $\Phi$ is a solution variable in the EN$_{\kappa}^c$ system
        depending on $c$ through the initial data $\VID_c$ and through the $c-$dependence of the
        system itself. If it is known that $c^{-1}\|\partial_t\Phi\|_{H^3}$ is uniformly bounded by $L$ for all large $c,$ 
        then we have that $\mathfrak{F}_c \in \mathcal{I}^3(c^{-1};c^{-1}\partial_t \Phi)$ since $c^{-1}\|c^{-1}\partial_t\Phi\|_{H^3} 
        \leqc c^{-1}L.$ If it also turns out that $\|\partial_t \Phi\|_{H^3}$ is uniformly bounded for all large $c,$ then have that 
        $\mathfrak{F}_c \in \mathcal{I}^3(c^{-2};\partial_t \Phi).$ If both estimates are true, then we indicate this by writing 
        $\mathfrak{F}_c \in 
        \mathcal{I}^3(c^{-1};c^{-1}\partial_t \Phi) \cap \mathcal{I}^3(c^{-2};\partial_t \Phi)$  or $\mathfrak{F}_c = 
        \mathscr{O}^3(c^{-1};c^{-1}\partial_t \Phi) \cap \mathscr{O}^3(c^{-2};\partial_t \Phi).$ These kinds of estimates will enter into
        our continuous induction argument in Section \ref{SS:UniforminTimeExistence}, in which we will first prove a bound for 
        $c^{-1}\partial_t \Phi,$ and then use it to obtain a bound for $\partial_t \Phi;$ see \eqref{E:UniformtimeApriori5} and 
        \eqref{E:UniformtimeApriori7}.
 
				\begin{remark} \label{R:NotAlwaysOptimal}
            For simplicity, we are not always optimal in our estimates. 
        \end{remark}

        The following four lemmas, which provide an analysis of the $c-$dependence of the
        terms appearing in the EN$_{\kappa}^c$ system, will be used heavily in 
        Section \ref{SS:TechnicalLemmas}, which contains most of our technical estimates.
        Before providing the lemmas, we first restate our hypotheses on the equation of state using our new notation \\
        
        \noindent \textbf{Hypotheses on the $c-$Dependence of the Equation of State}
            \begin{align} \label{E:EOScHypothesis1}
                \mathfrak{R}_c(\Ent,p), \ \mathfrak{R}_{\infty}(\Ent,p) & \in \mathcal{R}^{N+1}(\mathfrak{C};\Ent,p) \\
                \mathfrak{R}_c(\Ent,p) - \mathfrak{R}_{\infty}(\Ent,p) &\in
                \mathcal{R}^{N+1}(c^{-2};\mathfrak{C};\Ent,p).  \label{E:EOScHypothesis2}
            \end{align}

        \noindent Recall that the set $\mathfrak{C}$ was introduced at the beginning of Section \ref{S:cDependence} and
        is described in detail in Section \ref{SS:AdmissibleStateSpacec}. We also assume that $\mathfrak{R}_{\infty}(\Ent,p)$ and  
        $\mathfrak{S}_{\infty}^2(\Ent,p)$ are ``physical'' as defined in Section \ref{SS:ENkappacDerivation}, and in particular
        that whenever $\Ent,p >0,$ we have that \\
        $0 < \mathfrak{R}_{\infty}(\Ent,p)$ and $0 < \mathfrak{S}_{\infty}^2(\Ent,p).$
     		Additionally, we note the following simple consequence of 
        \eqref{E:partialRpartialpandStoNegativeTwoRelationshipc}, \eqref{E:EOScHypothesis1}, and \eqref{E:EOScHypothesis2}:
        \begin{align} \label{E:SpeedofSoundcDependence}
            \mathfrak{S}_c^2(\Ent,p) - \mathfrak{S}_{\infty}^2(\Ent,p) &\in
                \mathcal{R}^N(c^{-2};\mathfrak{C};\Ent,p).
        \end{align}

				\begin{remark}
					At the end of this section, we provide an example of a well-known family of equations of state,
        	namely the polytropic equations of state, that satisfy the above hypotheses.
				\end{remark}

				Hypothesis \eqref{E:EOScHypothesis1} ensures that the terms appearing in the EN$_{\kappa}^c$ and EP$_{\kappa}$
				systems are sufficiently differentiable functions of $\scrV,$ thus enabling us to apply the Sobolev-Moser type inequalities
				appearing in Appendix \ref{A:SobolevMoser}. It is strong enough to imply Theorem \ref{T:LocalExistencec} and 
				Theorem \ref{T:EPkappaLocalExistence}. Hypothesis \eqref{E:EOScHypothesis2} is used in our proof of Theorem
				\ref{T:UniformLocalExistenceENkappac} and Theorem \ref{T:NewtonianLimit}. Although a weakened version 
				of Hypothesis \eqref{E:EOScHypothesis2} is sufficient to prove a convergence theorem, we do not pursue this 
				matter here since we are not striving for optimal results.

       \begin{lemma}   \label{L:OrderctoNegativeTwoEstimates}
        		Let $\gamma_c, \Rc, \Rinfinity, \Qc, \Qinfinity, \Wbold,$ and $\scrW$ be the quantities defined in
        		\eqref{E:gammacDefII}, \eqref{E:RcDefII}, \eqref{E:RinfinityDef}, \eqref{E:QcDefII}, \eqref{E:QinfinityRelationship}, 
        		\eqref{E:Wboldarray}, and \eqref{E:scrWboldarray} respectively. Then
        		for $m=0,1,2$ and $\nu = t,1,2,3$ we have the following
        		estimates for all large $c,$ including $c=\infty:$  
            \begin{align}
                (\gamma_c)^2 - 1 &\in \mathcal{R}^{N+1}(c^{-2};\mathbf{v}) \label{E:gammacOrderctoNegativeTwoEstimate}\\
                e^{ \stackrel{+}{_-} 4\Phi/c^2} - 1 &\in \mathcal{R}^{N+1}(c^{m-2};c^{-m}\Phi)
                     \label{E:expPhiOvercSquaredOrderctoNegativeTwoEstimate}\\
                \Rc - \Rinfinity = e^{4\Phi/c^2}\mathfrak{R}_c(\Ent,p) - \mathfrak{R}_{\infty}(\Ent,p)
                    &\in \mathcal{R}^{N+1}(c^{m-2};\Ent,p,c^{-m}\Phi) \label{E:RcOrderctoNegativeTwoEstimate} \\
                \Qc - \Qinfinity = \mathfrak{Q}_c(\Ent,p,\Phi) - \mathfrak{Q}_{\infty}(\Ent,p) &\in 
                \mathcal{R}^N(c^{m-2};\Ent,p,c^{-m}\Phi)
                    \label{E:QcOrderctoNegativeTwoEstimate} \\
                \scrW - \Wbold &\in \mathcal{R}^N(c^{m-2};P,c^{-m}\Phi) \label{E:WscrWOrderctoNegativeTwoEstimate} \\
                \scrW &\in \mathcal{R}^N(\Wbold,c^{-m}\Phi) \label{E:ScrWInTermsofWEstimate} \\
                \partial_{\nu} \scrW - \partial_{\nu} \Wbold & \in \mathcal{I}^{N-1}(c^{m-2};P,\partial_{\nu} P, c^{-m}\Phi, 
                	c^{-m}\partial_{\nu} \Phi) \label{E:SpatialDerivativesWscrWOrderctoNegativeTwoEstimate} \\
                \partial_{\nu} \scrW & \in \mathcal{I}^{N-1}(\Wbold,\partial\Wbold, c^{-m}\Phi, 
                	c^{-m}\partial_{\nu} \Phi). \label{E:DerivativesWscrWOrderctoZeroEstimate}
       		\end{align}
        \end{lemma}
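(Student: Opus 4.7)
The plan is to verify each of the eight containments by combining Taylor expansion to isolate the claimed power of $c$ with the preceding lemmas (in particular Lemma \ref{L:HjbySubtractingConstant} and Lemma \ref{L:MultiplyFcGcSobolevEstimate}) and the equation of state hypotheses \eqref{E:EOScHypothesis1}--\eqref{E:SpeedofSoundcDependence}. Throughout, the $q_i$-arguments appearing on the right-hand side of each ring/ideal containment supply bounded ranges via Sobolev embedding and the a-priori estimates, so that the relevant compact sets $\mathfrak{D}$ may be suppressed in line with Remark \ref{R:ArgumentsOmitted}.

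First I would establish \eqref{E:gammacOrderctoNegativeTwoEstimate} using the algebraic identity $(\gamma_c)^2 - 1 = c^{-2}|\mathbf{v}|^2 /(1 - c^{-2}|\mathbf{v}|^2)$. On any fixed ball $\{|\mathbf{v}| \leq a\}$ with $a$ independent of $c$, for $c$ large the denominator is bounded away from zero, and every spatial-derivative expression of the right-hand side (up to order $N+1$) manifestly carries an overall factor of $c^{-2}$. For \eqref{E:expPhiOvercSquaredOrderctoNegativeTwoEstimate}, I would introduce $y \eqdef c^{-m}\Phi$ and consider $f_c(y) \eqdef e^{\pm 4 c^{m-2} y} - 1$. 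Since $m \in \{0,1,2\}$, the scalar $c^{m-2} \leq 1$ for $c$ large, so that each derivative $f_c^{(k)}(y) = (\pm 4)^k c^{(m-2)k} e^{\pm 4 c^{m-2} y}$ for $k \geq 1$ is bounded by $4^k c^{m-2}$ uniformly on any fixed compact interval, while $f_c(0) = 0$ bounds $f_c$ itself by $c^{m-2}$ via the mean value theorem.

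Next I would handle \eqref{E:RcOrderctoNegativeTwoEstimate} and \eqref{E:QcOrderctoNegativeTwoEstimate} through the decomposition
\begin{equation*}
  \Rc - \Rinfinity = \bigl(e^{4\Phi/c^2} - 1\bigr)\,\mathfrak{R}_c(\Ent,p) + \bigl(\mathfrak{R}_c(\Ent,p) - \mathfrak{R}_\infty(\Ent,p)\bigr).
\end{equation*}
The first summand lies in $\mathcal{R}^{N+1}(c^{m-2};\Ent,p,c^{-m}\Phi)$ by \eqref{E:expPhiOvercSquaredOrderctoNegativeTwoEstimate}, hypothesis \eqref{E:EOScHypothesis1}, and Lemma \ref{L:MultiplyFcGcSobolevEstimate}; the second lies in $\mathcal{R}^{N+1}(c^{-2};\Ent,p) \subset \mathcal{R}^{N+1}(c^{m-2};\ldots)$ since $m \geq 0$ (a function of $(\Ent,p)$ is trivially viewed as a function of $(\Ent,p,c^{-m}\Phi)$). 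An identical decomposition, with \eqref{E:SpeedofSoundcDependence} replacing \eqref{E:EOScHypothesis2}, yields \eqref{E:QcOrderctoNegativeTwoEstimate} (the loss of one derivative being reflected in the superscript $N$ instead of $N+1$). For \eqref{E:WscrWOrderctoNegativeTwoEstimate}, I observe that $\scrW - \Wbold = (0, p - P, 0, 0, 0) = (0, P(e^{-4\Phi/c^2} - 1), 0, 0, 0)$ and apply \eqref{E:expPhiOvercSquaredOrderctoNegativeTwoEstimate} together with Lemma \ref{L:MultiplyFcGcSobolevEstimate} with $P$ as one of the $q_i$. Estimate \eqref{E:ScrWInTermsofWEstimate} follows immediately by writing $\scrW = \Wbold + (\scrW - \Wbold)$.

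Finally, for the derivative estimates \eqref{E:SpatialDerivativesWscrWOrderctoNegativeTwoEstimate} and \eqref{E:DerivativesWscrWOrderctoZeroEstimate}, I would differentiate directly, e.g.
\begin{equation*}
  \partial_\nu(p-P) = \partial_\nu P \cdot \bigl(e^{-4\Phi/c^2} - 1\bigr) - 4 c^{m-2} \cdot P\, e^{-4\Phi/c^2} \cdot \bigl(c^{-m}\partial_\nu \Phi\bigr),
\end{equation*}
and bound each piece via Lemma \ref{L:MultiplyFcGcSobolevEstimate} together with the already-established containments; the scalar $c^{m-2}$ is absorbed into the coefficient, while $\partial_\nu P$ and $c^{-m}\partial_\nu \Phi$ contribute $H^{N-1}$-controlled (and hence $\mathcal{I}$-type) factors. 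The main bookkeeping obstacle is precisely this last split: distinguishing pieces contributing $C^j_b$-bounded factors (elements of $\mathcal{R}$) from those contributing $H^j$-bounded factors (elements of $\mathcal{I}$), and consistently absorbing the $c^{m-2}$ powers via Lemma \ref{L:MultiplyFcGcSobolevEstimate} so that they land on the $\mathcal{R}$ factor rather than the $\mathcal{I}$ factor. As acknowledged in Remark \ref{R:NotAlwaysOptimal}, once this ledger is kept honestly none of the individual verifications is subtle; the work is entirely organizational.
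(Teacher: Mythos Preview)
Your proposal is correct and follows essentially the same approach as the paper: Taylor estimates for \eqref{E:gammacOrderctoNegativeTwoEstimate} and \eqref{E:expPhiOvercSquaredOrderctoNegativeTwoEstimate}, the same add-and-subtract decomposition for \eqref{E:RcOrderctoNegativeTwoEstimate}--\eqref{E:QcOrderctoNegativeTwoEstimate}, the identity $p-P=(e^{-4\Phi/c^2}-1)P$ for \eqref{E:WscrWOrderctoNegativeTwoEstimate}--\eqref{E:ScrWInTermsofWEstimate}, and differentiation for the last two. The only cosmetic difference is that the paper packages your hand-differentiation of $p-P$ as an appeal to Lemma~\ref{L:TimeDifferentiatedSobolevEstimate} and Corollary~\ref{C:TimeDifferentiatedSobolevEstimate}, and for \eqref{E:QcOrderctoNegativeTwoEstimate} it additionally invokes the explicit formula \eqref{E:QcFunctionDef} and the already-established \eqref{E:RcOrderctoNegativeTwoEstimate} (since $\Qc$ has three factors rather than two, ``identical decomposition'' is a slight understatement, but the idea is the same).
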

        \begin{proof}
            \eqref{E:gammacOrderctoNegativeTwoEstimate}, and
            \eqref{E:expPhiOvercSquaredOrderctoNegativeTwoEstimate} are easy
            Taylor estimates. \eqref{E:RcOrderctoNegativeTwoEstimate} follows
            from Lemma \ref{L:MultiplyFcGcSobolevEstimate}, \eqref{E:EOScHypothesis1},
            \eqref{E:EOScHypothesis2}, and \eqref{E:expPhiOvercSquaredOrderctoNegativeTwoEstimate}.
            \eqref{E:QcOrderctoNegativeTwoEstimate} then follows from \eqref{E:partialRpartialpandStoNegativeTwoRelationshipc}, 
            \eqref{E:QcFunctionDef}, \eqref{E:QinfinityRelationship}, Lemma \ref{L:MultiplyFcGcSobolevEstimate}, 
            \eqref{E:SpeedofSoundcDependence}, and
            \eqref{E:RcOrderctoNegativeTwoEstimate}. Since
            $P - p =(1 - e^{-4\Phi/c^2})P,$ \eqref{E:WscrWOrderctoNegativeTwoEstimate} follows
            from \eqref{E:expPhiOvercSquaredOrderctoNegativeTwoEstimate},
            Lemma \ref{L:MultiplyFcGcSobolevEstimate}, and that
            the fact that $\scrW$ and $\Wbold$ differ only in that
            the second component of $\scrW$ is $p,$ while the
            second component of $\Wbold$ is $P.$ \eqref{E:ScrWInTermsofWEstimate} 
            is a simple consequence of \eqref{E:WscrWOrderctoNegativeTwoEstimate}. 
            \eqref{E:SpatialDerivativesWscrWOrderctoNegativeTwoEstimate}
            follows from \eqref{E:WscrWOrderctoNegativeTwoEstimate}, Lemma \ref{L:TimeDifferentiatedSobolevEstimate}, and Corollary
            \ref{C:TimeDifferentiatedSobolevEstimate}. \eqref{E:DerivativesWscrWOrderctoZeroEstimate} then follows
            easily from \eqref{E:SpatialDerivativesWscrWOrderctoNegativeTwoEstimate}.
        \end{proof}
        
        The next lemma connects the $c-$asymptotic behavior of an expression written in terms of
        the state-space array $\Wbold$ to the $c-$asymptotic behavior of the same expression written in terms of 
        the state-space array $\scrW.$
        \begin{lemma} \label{L:ScrWInTermsofWEstimate}
        	If $0 \leq j \leq N$ and $\mathfrak{F}_c \in \mathcal{R}^j(c^{k};\scrW),$ then for
        	$m=0,1,2,$ we have that
        	\begin{align}
        	\mathfrak{F}_c \in \mathcal{R}^j(c^{k};\Wbold,c^{-m}\Phi).
        	\end{align}
        \end{lemma}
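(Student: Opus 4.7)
The plan is to use the one place where $\scrW = (\Ent, p, v^1, v^2, v^3)$ and $\Wbold = (\Ent, P, v^1, v^2, v^3)$ differ, namely the second slot, together with the defining relation $p = e^{-4\Phi/c^2} P$, to convert any nice function of $\scrW$ into a nice function of $(\Wbold, c^{-m}\Phi)$. Unpacking Definition \ref{D:qRingDef}, the hypothesis $\mathfrak{F}_c \in \mathcal{R}^j(c^k; \scrW)$ produces a compact convex set $\mathfrak{D} \subset \mathbb{R}^5$ containing the range of $\scrW$ and a family $\mathfrak{G}_c(y^1,\ldots,y^5) \in \mathcal{R}^j(c^k; \mathfrak{D}; y^1,\ldots, y^5)$ with $\mathfrak{F}_c = \mathfrak{G}_c \circ \scrW$. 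The goal is to manufacture an analogous pair $(\mathfrak{D}', \mathfrak{H}_c)$ in six variables.

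I would define $\Psi_c : \mathbb{R}^6 \to \mathbb{R}^5$ by $\Psi_c(z^1,\ldots,z^6) = (z^1,\, e^{-4 c^{m-2} z^6} z^2,\, z^3, z^4, z^5)$, and set $\mathfrak{H}_c \eqdef \mathfrak{G}_c \circ \Psi_c$. Choose a compact convex $\mathfrak{D}' \subset \mathbb{R}^6$ containing the range of $(\Wbold, c^{-m}\Phi)$ and arranged so that $\Psi_c(\mathfrak{D}') \subset \mathfrak{D}$ for all large $c$; such a choice is possible because by construction $\Psi_c \circ (\Wbold, c^{-m}\Phi) = \scrW$, whose range already lies in $\mathfrak{D}$. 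Then $\mathfrak{F}_c = \mathfrak{H}_c \circ (\Wbold, c^{-m}\Phi)$, and it remains to show $\mathfrak{H}_c \in \mathcal{R}^j(c^k; \mathfrak{D}'; z^1,\ldots, z^6)$.

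To this end, I would expand the partial derivatives of $\mathfrak{H}_c = \mathfrak{G}_c \circ \Psi_c$ of order up to $j$ via the chain rule (Fa\`a di Bruno) and the product rule. All components of $\Psi_c$ except the second are coordinate projections, whose higher derivatives vanish. The crux is controlling $e^{-4c^{m-2}z^6} z^2$: for $m \in \{0,1,2\}$ and $c \geq 1$ one has $|c^{m-2}| \leq 1$, so any $\alpha$-th order partial derivative of $e^{-4c^{m-2}z^6}$ has the form $(-4c^{m-2})^\alpha e^{-4c^{m-2}z^6}$, which is bounded on $\mathfrak{D}'$ by $4^\alpha e^{4 \sup_{\mathfrak{D}'}|z^6|}$ uniformly in all large $c$. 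Combining this with $|\mathfrak{G}_c|_{j,\mathfrak{D}} \leqc c^k \cdot C(\mathfrak{D})$ yields $|\mathfrak{H}_c|_{j,\mathfrak{D}'} \leqc c^k \cdot C(\mathfrak{D},\mathfrak{D}')$, which is the desired estimate.

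There is no genuine obstacle; the lemma is a bookkeeping device exploiting the fact that the coefficient $c^{m-2}$ appearing in the exponential stays bounded (indeed nonincreasing) as $c \to \infty$ whenever $m \leq 2$. The only mildly delicate step is the compatible choice of the compact sets $\mathfrak{D}$ and $\mathfrak{D}'$, which is governed by the conventions of Remark \ref{R:ArgumentsOmitted}.
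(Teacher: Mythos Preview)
Your proof is correct and follows essentially the same approach as the paper: the paper's one-line proof says to express $\scrW$ in terms of $\Wbold$ and $c^{-m}\Phi$ via the already-established fact \eqref{E:ScrWInTermsofWEstimate} and apply the chain rule, and you have simply carried this out explicitly by writing down the change-of-variables map $\Psi_c$ and verifying the $C^j_b$ bounds directly. The only difference is that the paper packages the key observation (that $p = e^{-4c^{m-2}(c^{-m}\Phi)}P$ has uniformly bounded derivatives for $m\le 2$) inside the earlier Lemma~\ref{L:OrderctoNegativeTwoEstimates}, whereas you redo it inline.
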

        
        \begin{proof}
        	Lemma \ref{L:ScrWInTermsofWEstimate} follows easily from expressing $\scrW$ in terms of $\Wbold$ and $c^{-m} \Phi$ via
        	\eqref{E:ScrWInTermsofWEstimate} and applying the chain rule. 
        \end{proof}
        
        \begin{lemma} \label{L:MatricesLargecNormEstimates}
        Let $\Ac^{\nu}(\scrW,\Phi), \ \nu=0,1,2,3,$ denote the matrix-valued functions of $\scrW$ and $\Phi$ introduced in Section
        \ref{SS:EOVc}. Let the $c-$dependent relationship between $\scrW$ and $\Wbold,\Phi$ be defined by
        \eqref{E:Wboldarray} and \eqref{E:scrWboldarray}. Then for all large $c$ including $c=\infty,$ and for $m=0,1,2,$
       	we have that
            \begin{align}
               	\Ainfinity^{\nu}(\scrW), \ \big(\Ainfinity^{0}(\scrW)\big)^{-1} 
                	& \in \mathcal{R}^{N}(\scrW) \cap \mathcal{R}^{N}(\Wbold,c^{-m}\Phi) 
                	\label{E:MatricescIsInfinityNormEstimates} \\
               	\Ac^{\nu}(\scrW,\Phi), \ \big(\Ac^{0}(\scrW,\Phi)\big)^{-1} 
                	& \in \mathcal{R}^{N}(\scrW,c^{-m}\Phi) \cap \mathcal{R}^{N}(\Wbold,c^{-m}\Phi) 
                	\label{E:MatricescIsFiniteNormEstimates} \\
                \Ac^{\nu}(\scrW,\Phi) - \Ainfinity^{\nu}(\scrW) & \in \mathcal{R}^{N}(c^{m-2};\scrW,c^{-m}\Phi)
                	\cap \mathcal{R}^{N}(c^{m-2};\Wbold,c^{-m}\Phi) \label{E:MatricesLargecNormEstimates}\\
                \big(\Ac^{0}(\scrW,\Phi)\big)^{-1} - \big(\Ainfinity^{0}(\scrW)\big)^{-1} & \in
                    \mathcal{R}^{N}(c^{m-2};\scrW,c^{-m}\Phi) \cap \mathcal{R}^{N}(c^{m-2};\Wbold,c^{-m}\Phi). 
                    \label{E:InverseMatricesLargecNormEstimates}
            \end{align}
        \end{lemma}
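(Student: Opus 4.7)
The plan is to handle the four claims in order, working through the explicit matrix entries and exploiting the ring/ideal structure developed in Lemmas \ref{L:HjbySubtractingConstant}--\ref{L:TimeDifferentiatedSobolevEstimate} together with the asymptotic estimates of Lemma \ref{L:OrderctoNegativeTwoEstimates} and the change-of-variables Lemma \ref{L:ScrWInTermsofWEstimate}.

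First I would verify \eqref{E:MatricescIsInfinityNormEstimates}. Every entry of $\Ainfinity^{\nu}(\scrW)$ is a polynomial in $\Rinfinity$, $\Qinfinity$, and $v^1,v^2,v^3$. By Hypothesis \eqref{E:EOScHypothesis1} and formula \eqref{E:QinfinityRelationship}, $\Rinfinity$ and $\Qinfinity$ belong to $\mathcal{R}^{N+1}(\mathfrak{C};\Ent,p)\subset\mathcal{R}^N(\scrW)$; since $v^j\in\mathcal{R}^N(\scrW)$ trivially, the product/ring closure from Lemma \ref{L:MultiplyFcGcSobolevEstimate} (see Remark \ref{R:RingRemark}) gives entries in $\mathcal{R}^N(\scrW)$. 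For $\big(\Ainfinity^{0}(\scrW)\big)^{-1}$, observe from \eqref{E:A0Infinity} that $\Ainfinity^{0}$ is block-diagonal: a $2\times 2$ identity block, and a $3\times 3$ block $\Rinfinity\cdot I$. The a-priori bounds guaranteed for all relevant $\scrV$ ensure that $\Rinfinity$ is bounded below by a positive constant on the compact convex set $\mathfrak{C}$, so Remark \ref{R:1OverFisintheRing} applies to $1/\Rinfinity$ and the inverse is in $\mathcal{R}^N(\scrW)$. Finally, the $\mathcal{R}^N(\Wbold,c^{-m}\Phi)$ versions follow immediately by Lemma \ref{L:ScrWInTermsofWEstimate}.

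Next, for \eqref{E:MatricescIsFiniteNormEstimates}, each entry of $\Ac^{\nu}(\scrW,\Phi)$ is a polynomial in the quantities $\gamma_c$, $v^j$, $\Rc$, $\Qc$, $P$, and $c^{-2}$. Using \eqref{E:gammacOrderctoNegativeTwoEstimate}, \eqref{E:RcOrderctoNegativeTwoEstimate}, \eqref{E:QcOrderctoNegativeTwoEstimate}, and Lemma \ref{L:ScrWInTermsofWEstimate}, each of these belongs to $\mathcal{R}^N(\scrW,c^{-m}\Phi)\cap\mathcal{R}^N(\Wbold,c^{-m}\Phi)$, so the ring property of Lemma \ref{L:MultiplyFcGcSobolevEstimate} gives the claim for $\Ac^{\nu}$. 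For $(\Ac^0)^{-1}$, one can either write $\Ac^0=\Ainfinity^0+(\Ac^0-\Ainfinity^0)$ with the second piece small (see the next paragraph), or invert directly via the cofactor formula: the $c^{-2}$-corrections preserve block structure up to small corrections, so $\det \Ac^0 = \Rinfinity^3 + \mathscr{O}(c^{-2})$ and in particular is bounded below by a positive constant for all large $c$. Remark \ref{R:1OverFisintheRing} applied to $\det \Ac^0$, combined with the fact that the adjugate entries inherit the ring membership from the entries of $\Ac^0$, yields the claim.

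For the difference estimates \eqref{E:MatricesLargecNormEstimates}, I would subtract the two matrices entry by entry. Every entry of $\Ac^{\nu}-\Ainfinity^{\nu}$ is a polynomial in the quantities listed above, where each monomial either contains at least one factor of the form $(\gamma_c^2-1)$, $(\Rc-\Rinfinity)$, $(\Qc-\Qinfinity)$, or $c^{-2}$, $\beta_c^{(i)}$, $\beta_c^{(i,j)}$, or $c^{-2}P$, all of which belong to $\mathcal{R}^N(c^{m-2};\ldots)$ by Lemma \ref{L:OrderctoNegativeTwoEstimates}. One then pulls this $c^{m-2}$-small factor out using the ideal property noted in Remark \ref{R:LargecSobolevConstantRemark}: the remaining factor lies in $\mathcal{R}^N$ (from \eqref{E:MatricescIsFiniteNormEstimates}), and multiplying by the small factor yields membership in $\mathcal{R}^N(c^{m-2};\scrW,c^{-m}\Phi)\cap\mathcal{R}^N(c^{m-2};\Wbold,c^{-m}\Phi)$. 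Finally, for \eqref{E:InverseMatricesLargecNormEstimates}, I apply the algebraic identity
\begin{equation*}
\big(\Ac^{0}\big)^{-1}-\big(\Ainfinity^{0}\big)^{-1}=\big(\Ac^{0}\big)^{-1}\big(\Ainfinity^{0}-\Ac^{0}\big)\big(\Ainfinity^{0}\big)^{-1},
\end{equation*}
then combine \eqref{E:MatricescIsInfinityNormEstimates}, \eqref{E:MatricescIsFiniteNormEstimates}, \eqref{E:MatricesLargecNormEstimates}, and the ring structure once more.

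The main obstacle is the bookkeeping: one must be careful that in each monomial contributing to $\Ac^\nu-\Ainfinity^\nu$, the ``small factor'' is identified unambiguously so that applying Lemma \ref{L:MultiplyFcGcSobolevEstimate} (Remark \ref{R:LargecSobolevConstantRemark}) produces exactly the claimed power $c^{m-2}$, and that the bound on $(\Ac^0)^{-1}$ is uniform for large $c$ — which reduces to showing $\det\Ac^0$ is uniformly bounded below, a consequence of the asymptotic $\det\Ac^0\to\det\Ainfinity^0=\Rinfinity^3$ together with the a priori lower bound on $\Rinfinity$ on the compact set $\mathfrak{C}$.
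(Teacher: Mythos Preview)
Your proposal is correct and follows essentially the same route as the paper's proof: both rely on the explicit matrix entries \eqref{E:A0c}--\eqref{E:AjInfinity}, the ring/product structure of Lemma \ref{L:MultiplyFcGcSobolevEstimate}, the asymptotic estimates of Lemma \ref{L:OrderctoNegativeTwoEstimates}, the change-of-variables Lemma \ref{L:ScrWInTermsofWEstimate}, Remark \ref{R:1OverFisintheRing} for the reciprocal of the determinant, and the hypotheses \eqref{E:EOScHypothesis1}--\eqref{E:EOScHypothesis2}. The only cosmetic difference is that for \eqref{E:InverseMatricesLargecNormEstimates} you use the resolvent identity $A^{-1}-B^{-1}=A^{-1}(B-A)B^{-1}$, whereas the paper cites the determinant--adjugate formula directly; both lead to the same conclusion.
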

            \begin{proof}
                \eqref{E:MatricescIsInfinityNormEstimates} - \eqref{E:InverseMatricesLargecNormEstimates}
                follow from \eqref{E:A0c} - \eqref{E:AjInfinity}, Remark \ref{R:1OverFisintheRing}, Lemma 
                \ref{L:MultiplyFcGcSobolevEstimate}, Lemma \ref{L:OrderctoNegativeTwoEstimates}, Lemma \ref{L:ScrWInTermsofWEstimate}, the 
                determinant-adjoint formula for the inverse of a matrix, and the hypotheses \eqref{E:EOScHypothesis1},
                \eqref{E:EOScHypothesis2} on the equation of state.
        		\end{proof}

        \begin{lemma} \label{L:InhomogeneousTermsLargecSobolevEstimates} 
            Let $\mathfrak{B}_{\infty}(\scrW,\partial\Phi) \eqdef \Big(0,0,-\mathfrak{R}_{\infty}(\Ent,p)\partial_1
            \Phi, -\mathfrak{R}_{\infty}(\Ent,p)\partial_2 \Phi,-\mathfrak{R}_{\infty}(\Ent,p)\partial_3 \Phi)\Big)$ denote the
            right-hand side of the EP$_{\kappa}$ equations \eqref{E:EPkappa1}, \eqref{E:EPkappa2}, \eqref{E:EPkappa3}, and let \\
            $\mathfrak{B}_{c}(\scrW,\Phi,D\Phi)$ denote the right-hand side of the EN$_{\kappa}^c$ equations \eqref{E:ENkappac1} -
            \eqref{E:ENkappac3}. Let the $c-$dependent relationship between $\scrW$ and $\Wbold,\Phi$ be defined by
        		\eqref{E:Wboldarray} and \eqref{E:scrWboldarray}. Then for all large $c$ including $c= \infty,$ 
        		and for $m=0,1,2$ and $n=0,1,$ we have that
            \begin{align} 
            	\mathfrak{B}_{\infty}(\scrW,\partial\Phi) &\in 
            	\mathcal{I}^{N}(c^{n};\scrW,c^{-n}\partial\Phi) \cap \mathcal{I}^{N}(c^{n};\Wbold,c^{-m}\Phi,c^{-n}\partial\Phi)
            		\label{E:BInfinityIsInIN} \\
            	\mathfrak{B}_c(\scrW,\Phi,D\Phi) &\in \mathcal{I}^{N}(\scrW,c^{-m}\Phi,\partial\Phi,c^{-m}\partial_t\Phi) \cap 
            		\mathcal{I}^{N}(\Wbold,c^{-m}\Phi,\partial\Phi,c^{-m}\partial_t\Phi) \label{E:BcIsInIN} \\
            	& and \notag \\
        			\mathfrak{B}_c(\scrW,\Phi,D\Phi) &= \mathfrak{B}_{\infty}(\scrW,\partial\Phi) + 
                \mathcal{O}^{N}(c^{m-2};\Wbold,c^{-m}\Phi,c^{-m} D \Phi). \label{E:BcBinfinityDifferenceEstimate}
            \end{align}
          
          \end{lemma}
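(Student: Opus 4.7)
My plan is to decompose each component of $\mathfrak{B}_\infty$ and $\mathfrak{B}_c$ as a sum of products, each consisting of a ``coefficient'' factor built out of $\scrW$ and $\Phi$ together with a ``derivative-of-$\Phi$'' factor. Lemma \ref{L:OrderctoNegativeTwoEstimates} and Lemma \ref{L:ScrWInTermsofWEstimate} place the coefficient factors in the appropriate $\mathcal{R}^N$ rings, while each $\partial_\nu\Phi$ has vanishing constant background and therefore lies in an $\mathcal{I}^N$ ideal by Remark \ref{R:FcIsInIN}. Multiplication via Lemma \ref{L:MultiplyFcGcSobolevEstimate}, combined with the fact (Remark \ref{R:LargecSobolevConstantRemark}) that $\mathcal{I}^N$ is an ideal, converts the products into elements of $\mathcal{I}^N$ and yields \eqref{E:BInfinityIsInIN} and \eqref{E:BcIsInIN}.

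Concretely, the nontrivial entries of $\mathfrak{B}_\infty$ are $-\mathfrak{R}_\infty(\Ent,p)\partial_j\Phi$. The factor $\mathfrak{R}_\infty \in \mathcal{R}^{N+1}(\mathfrak{C};\Ent,p)$ by hypothesis \eqref{E:EOScHypothesis1}, and rewriting $\partial_j\Phi = c^n \cdot (c^{-n}\partial_j\Phi)$ puts the derivative in $\mathcal{I}^N(c^{-n}\partial_j\Phi)$; Lemma \ref{L:MultiplyFcGcSobolevEstimate} then gives the first containment of \eqref{E:BInfinityIsInIN}, and Lemma \ref{L:ScrWInTermsofWEstimate} (applied to re-express $\scrW$ in terms of $\Wbold$ and $c^{-m}\Phi$) gives the second. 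For \eqref{E:BcIsInIN}, the components of $\mathfrak{B}_c$ read $0$, $(4P-3\Qc)c^{-2}(\partial_t\Phi + v^k\partial_k\Phi)$, and $(3c^{-2}P-\Rc)\bigl(\partial_j\Phi + c^{-2}(\gammac)^{-2}v^j[\partial_t\Phi + v^k\partial_k\Phi]\bigr)$. The coefficient pieces $P,\Qc,\Rc,v^k,(\gammac)^{-2}$ lie in $\mathcal{R}^N(\scrW,c^{-m}\Phi)\cap\mathcal{R}^N(\Wbold,c^{-m}\Phi)$ by Lemmas \ref{L:OrderctoNegativeTwoEstimates} and \ref{L:ScrWInTermsofWEstimate}; the derivative factors $\partial_\nu\Phi$ and $c^{-m}\partial_t\Phi$ sit in the advertised $\mathcal{I}^N$ ideals; multiplication via Lemma \ref{L:MultiplyFcGcSobolevEstimate} then produces \eqref{E:BcIsInIN}.

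For the difference estimate \eqref{E:BcBinfinityDifferenceEstimate} I would subtract componentwise. In the pressure component $\mathfrak{B}_\infty$ vanishes while the corresponding entry of $\mathfrak{B}_c$ already carries an explicit $c^{-2}$, which I rewrite as $c^{m-2}\cdot c^{-m}$ to conform to the advertised form. In each velocity component the key rearrangement is
\[
(3c^{-2}P - \Rc)\partial_j\Phi - (-\Rinfinity\partial_j\Phi) \;=\; \bigl(3c^{-2}P - (\Rc - \Rinfinity)\bigr)\partial_j\Phi,
\]
where \eqref{E:RcOrderctoNegativeTwoEstimate} together with the explicit $c^{-2}$ on the $P$ term places the coefficient in $\mathcal{R}^N(c^{m-2};\Wbold,c^{-m}\Phi)$. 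The remaining piece $(3c^{-2}P - \Rc)c^{-2}(\gammac)^{-2}v^j(\partial_t\Phi + v^k\partial_k\Phi)$ again carries an explicit $c^{-2}$, and Lemma \ref{L:MultiplyFcGcSobolevEstimate} absorbs it.

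The main difficulty is not analytic but notational: for each explicit $c^{-2}$ and each derivative of $\Phi$ one must decide whether to absorb the scaling into the coefficient factor (keeping it in $\mathcal{R}^N$) or into the derivative (shifting it to $c^{-m}D\Phi$), exactly the ``shifting of arguments'' flexibility emphasized in the remark preceding this lemma, so that the exponents $c^n,c^{-m},c^{m-2}$ in the three conclusions come out with the intended values of $m$ and $n$. Remark \ref{R:NotAlwaysOptimal} frees me from chasing the sharpest exponents, which will make this bookkeeping manageable.
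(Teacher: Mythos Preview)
Your proposal is correct and follows essentially the same approach as the paper. The paper's own proof is a one-sentence appeal to the vanishing of $\mathfrak{B}_{\infty}(\scrWb_{\infty},\mathbf{0})$ and $\mathfrak{B}_{c}(\scrWb_{c},\Phibar_{c},\mathbf{0})$ combined with Remark \ref{R:FcIsInIN}, Lemma \ref{L:MultiplyFcGcSobolevEstimate}, Lemma \ref{L:OrderctoNegativeTwoEstimates}, and Lemma \ref{L:ScrWInTermsofWEstimate}; your explicit coefficient-times-derivative decomposition is simply a componentwise unpacking of that same argument, invoking the same lemmas in the same roles.
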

            \begin{proof}
                \eqref{E:BInfinityIsInIN} - \eqref{E:BcBinfinityDifferenceEstimate} all follow from combining the facts  
                \\ $\mathfrak{B}_{\infty}(\scrWb_{\infty}, \mathbf{0})=0$ and
                $\mathfrak{B}_c(\scrWb_c,\Phibar_c, \mathbf{0})=0$ with Remark \ref{R:FcIsInIN}, 
                Lemma \ref{L:MultiplyFcGcSobolevEstimate}, Lemma \ref{L:OrderctoNegativeTwoEstimates}, and Lemma 
                \ref{L:ScrWInTermsofWEstimate}. 
         		\end{proof}
         		
         		\begin{remark} \label{R:WorstTerm}
         			The fact that
         			$\mathfrak{B}_{\infty}(\scrW,\partial\Phi) \in \mathcal{I}^{N}(c^{1};\Wbold,c^{-m}\Phi,c^{-1}\partial\Phi)$
         			plays a distinguished role in the proof of Lemma \ref{L:UniformtimeApriori2}; $\mathfrak{B}_{\infty}(\scrW,\partial\Phi)$ 
         			will be one of the factors in the ``worst 
         			error term'' because it can grow like $c^1$ if we only have control over the size of
         			$c^{-1}\partial\Phi.$
         		\end{remark}
        
        \begin{remark}
        	Many of the above lemmas are valid for other values of $m$ and $n;$ we stated the lemmas
        	for the values of $m$ and $n$ that we plan to use later.
        \end{remark}

        \begin{example}
            As an enlightening example, we discuss the non-relativistic limit of \emph{polytropic}
            equations of state, that is, equations of state of the form \\ $\rho = m_0c^2n + \frac{A_c({\Ent})}{\gamma 
            -1}n^{\gamma},$ where $m_0$ denotes the rest mass of a fluid element, $n$ denotes the proper number density, and 
            $\gamma > 1.$ Let us assume that \\
            $A_c, A_{\infty} 
            \in \mathfrak{R}^{N+1}(\Pi_{1}(\mathfrak{C});\Ent),$ 
            that $A_{\infty}>0$ on $\Pi_{1}(\mathfrak{C}),$ and that \\
            $A_c - A_{\infty} \in 
            \mathcal{R}^{N+1}(c^{-2};\Pi_{1}(\mathfrak{C});\Ent),$ where $\Pi_{1}(\mathfrak{C})$ is the projection of the set
            $\mathfrak{C}$ introduced at the beginning of Section \ref{S:cDependence} onto the first axis. Some 	
            omitted calculations show that Hypotheses \ref{E:EOScHypothesis1} and \ref{E:EOScHypothesis2} then hold, and that
                \begin{align}
                    R_c &=e^{4 \Phi/c^2}\mathfrak{R}_c(\Ent,p)=\frac{m_0P^{1/\gamma}e^{4\Phi/c^2(1-1/\gamma)}}{A^{1/\gamma}_c(\Ent)}
                    + \frac{P}{c^2(\gamma-1)} \\
                    Q_c &=\mathfrak{Q}_c(\Ent,p,\Phi)= \gamma P \\
                    R_{\infty}&=\mathfrak{R}_{\infty}(\Ent,p)=\frac{m_0p^{1/\gamma}}{A^{1/\gamma}_{\infty}(\Ent)} 
                    	\label{E:NewtonianPolytropic}\\
                    Q_{\infty} &=\mathfrak{Q}_{\infty}(\Ent,p)= \gamma p.
                \end{align}
        \end{example}
        In the isentropic case $\Ent(t,\sforspace) \equiv \Entbar,$ \eqref{E:NewtonianPolytropic} can be rewritten in the familiar 
        form $p = C \cdot (R_{\infty})^{\gamma},$ where $C$ is a constant.

 \section{Energy Currents} \label{S:EnergyCurrentsc}
    In this section we provide energy currents and discuss two key properties: $\mathbf{i})$ for a fixed $c,$ they are positive 
    definite in the variations $\Wdot$ when contracted against certain covectors, and $\mathbf{ii})$ their divergence is lower order in 
    the variations. In Section \ref{S:JdotUniformPositiveDefinitec}, we will see
    that the positivity property is uniform for all large $c.$ A general framework 
    for the construction of energy currents for hyperbolic systems derivable from a Lagrangian
    is developed in \cite{dC2000}. The role of energy currents is to replace the energy principle
    available for symmetric hyperbolic systems by providing integral identities, or more generally,
    integral inequalities, that enable one to control Sobolev norms of solutions\footnote{As we shall see, the energy currents 
   	$\Jscrdotc$ do not control the variations $\Phidot$ or $D
    \Phidot;$ these terms are controlled through a separate argument based on the lemmas and propositions of Appendix 
    \ref{A:Klein-GordonEstimates}.} to the EOV$_{\kappa}^c.$ This technique will be used in our proofs of Lemma 
    \ref{L:UniformtimeApriori1} and Theorem \ref{T:NewtonianLimit}.

    \subsection{The definition of an energy current} \label{SS:Re-scaledEnergyCurrent} 
    Given a variation $\Wdot: \mathcal{M} \rightarrow \mathbb{R}^{5}$
    and a BGS\footnote{Recall that we also refer to $\scrWtilde$ as the BGS when $c=\infty.$} $\Vboldt: \mathcal{M} \rightarrow 
    \mathbb{R}^{10}$ as defined in Section \ref{SS:EOVc}, we define the energy current to be the vectorfield $\Jscrdotc$ with
    components $\Jscrdotc^0,$ $\Jscrdotc^j,$ $j=1,2,3,$ in the global rectangular coordinate system given by
    \begin{align}  \label{E:RescaledEnergyCurrent}
            \Jscrdotc^0 &\eqdef \dot{\Ent}^2  + \frac{\dot{P}^2}{\wtQc} +
            2c^{-2}(\wtg)^2 (\wtv_k \dot{v}^k) \dot{P} \\
            & \hspace{1in} + (\wtg)^2 \left[\wtRc + c^{-2}\wtP \right] \cdot \left[\dot{v}_k \dot{v}^k +
            c^{-2}(\wtg)^2(\wtv_k \dot{v}^k)^2 \right] \notag \\ 
            \Jscrdotc^j &\eqdef \wtv^j \dot{\Ent}^2  + \frac{\wtv^j}{\wtQc} \dot{P}^2 + 2 \left[\dot{v}^j
                + c^{-2}(\wtg)^2 \wtv^j \wtv_k \dot{v}^k \right] \cdot \dot{P} \notag \\
            & \hspace{1in} + (\wtg)^2\wtv^j \left[\wtRc + c^{-2}\wtP\right] \cdot \left[\dot{v}_k \dot{v}^k + c^{-2}(\wtg)^2(\wtv_k 
            \dot{v}^k)^2 \right].   \notag 
    \end{align}
    In the case $c=\infty,$ we define for $j=1,2,3:$
    \begin{align} \label{E:RescaledEnergyCurrentInfinity}
        \Jscrdotinfinity^0 &\eqdef \dot{\Ent}^2  + \frac{\dot{p}^2}{\widetilde{Q}_{\infty}}
            + \widetilde{R}_{\infty} \dot{v}_k \dot{v}^k \\
        \Jscrdotinfinity^j &\eqdef \wtv^j \dot{\Ent}^2  + \frac{\wtv^j}{\widetilde{Q}_{\infty}} \dot{p}^2 + 2 \dot{v}^j \dot{p}
            + \widetilde{R}_{\infty} \wtv^j \dot{v}_k \dot{v}^k. \notag
        \end{align}
        We note that formally, $\lim_{c \to \infty} \Jscrdotc = \Jscrdotinfinity,$ a fact that will be rigorously justified in Section 
        \ref{S:JdotUniformPositiveDefinitec}.

        The energy current \eqref{E:RescaledEnergyCurrent} is very closely related to the energy current $\dot{J}$
        introduced in \cite{jS2008a}, where the following changes have been made. First, we have dropped the terms from $\dot{J}$ 
        corresponding to the variations of the potential $\Phidot$ and its derivatives, for we will bound these terms in a Sobolev norm 
        using a separate argument. Second, the expression for $\Jscrdotc$ is constructed using the velocity
        state-space variable $\mathbf{v}$ \eqref{E:vdef} and variations $\dot{\mathbf{v}},$
        as opposed to the variables $U^j \eqdef e^{\phi} u^j$ and variations $\dot{U}^j$ that appear in the expression
        for $\dot{J}.$ Finally, we emphasize that the formula for $\Jscrdotc^{\nu}$
        applies in a rectangular coordinate system with $x^0 = t,$ whereas in the formula
        for $\dot{J}^{\nu}$ provided in \cite{jS2008a}, the rectangular coordinate system is such that $x^0 = ct,$ even though $c$ was 
        set equal to unity in \cite{jS2008a}.
        
       \begin{remark}
        	A similar current was used by Christodoulou in \cite{dC2007} to analyze the motion of a relativistic fluid 
        	evolving in Minkowski space.
       \end{remark}

      	\subsection{The positive definiteness of $\xi_{\mu} \Jscrdotc^{\mu}$ for $\xi \in \mathcal{I}_x^{s*+}$}
        \label{SS:JscrdotcPositiveDefinite} 
				 As discussed in detail in \cite{jS2008a}, for $\xi$ belonging to a certain subset of the cotangent space at $x,$ which
				we denote by $T_x^* \mathcal{M},$ the quadratic 
        form\footnote{We write ``$\xi_{\mu} \Jscrdotc^{\mu}(\Wdot,\Wdot)$'' to emphasize the point of view that $\xi_{\mu} 
        \Jscrdotc^{\mu}$ is a quadratic form in $\Wdot.$} $\xi_{\mu} \Jscrdotc^{\mu}(\Wdot,\Wdot)$ is positive
        definite in $\Wdot$ if $\wtP >0.$ To elaborate upon this, we follow
        Christodoulou \cite{dC2007} and introduce the \emph{reciprocal acoustical metric} $\widetilde{h}^{-1},$  
				a quadratic form on $T_x^* \mathcal{M}$ 
				with components that read (for $j,k = 1,2,3$)
        \begin{align}
            (\widetilde{h}^{-1})^{00} &\eqdef -c^{-2} - (\widetilde{\gamma}_c)^2\big[\mathfrak{S}_c^{-2}(\widetilde{\Ent},\widetilde{p}) - 	
            	c^{-2}\big]\\
            (\widetilde{h}^{-1})^{0j} &\eqdef (\widetilde{h}^{-1})^{j0} = - 
            	(\widetilde{\gamma}_c)^2\big[\mathfrak{S}_c^{-2}(\widetilde{\Ent},\widetilde{p}) 							- c^{-2}\big]\widetilde{v}^j\\
            (\widetilde{h}^{-1})^{jk} &\eqdef \delta^{jk} - (\widetilde{\gamma}_c)^2\big[\mathfrak{S}_c^{-2}(\widetilde{\Ent},\widetilde{p}) - 								c^{-2}\big]\widetilde{v}^j \widetilde{v}^k
        \end{align}
        in the global rectangular coordinate system. Recall that
        the function $\mathfrak{S}_c$ is defined in \eqref{E:SigmaSquaredc}. 
        
        Recall that for a hyperbolic system of PDEs, the characteristic subset\footnote{\cite{jS2008a} contains
        a detailed discussion of the notion of the characteristic subset of $T^*_x 
        \mathcal{M}$ in the context of the EN$_{\kappa}^{c=1}$ system.}of $T^*_x 
        \mathcal{M}$ is the union of several sheets. If we restrict our attention to 
        the truncated\footnote{By ``truncated EOV$_{\kappa}^c$'' we mean the system that results upon deleting the variable $\Phidot$
        and the equation \eqref{E:EOVcKlein-Gordon} that it satisfies.} EOV$_{\kappa}^c$ \eqref{E:EOVc1} - \eqref{E:EOVc3}, then
        omitted calculations imply that the inner sheet is the sound cone at $x,$ which can be described in coordinates as 
        $\{ \zeta \in T^*_x \mathcal{M} \ | \ (\widetilde{h}^{-1})^{\mu \nu} \zeta_{\mu} \zeta_{\nu} = 0 \}.$
        The interior of the positive component of the sound cone, which we 
        denote by $\mathcal{I}_x^{s*+},$ can be described in coordinates as
        \begin{align}
        \mathcal{I}_x^{s*+} \eqdef
                \{ \zeta \in T^*_x \mathcal{M} \ | \ (\widetilde{h}^{-1})^{\mu \nu} \zeta_{\mu} \zeta_{\nu} < 0 \ \mbox{and} \ \zeta_0 > 
                0 \}. \label{E:SoundConec}
      	\end{align} 
      	\noindent We remark that the characteristic subsets of the $T^*_x \mathcal{M}$ in the complete EOV$_{\kappa}^c$ system 
      	\eqref{E:EOVc1} - \eqref{E:EOVcKlein-Gordon} feature an additional sheet: the light cone (also known as the ``gravitational 
      	null cone''), \emph{which is contained inside the sound cone}.\footnote{As discussed in \cite{jS2008a}, one can also 
        define the sound cone and light cone subsets of the tangent space at $x,$ which we denote by $T_x \mathcal{M},$ by introducing 
        the notion of the dual to a sheet of the 
        characteristic subset of $T_x^* \mathcal{M}.$ The duality reverses the aforementioned containment so that in $T_x \mathcal{M},$ 
        the sound cone is contained inside of the light cone. 
        This is perhaps the more familiar picture, for it corresponds to our intuitive notion of sound traveling
        more slowly than light.}
        
        It follows from the general construction of energy currents as presented in \cite{dC2000} that
        $\xi_{\mu} \Jscrdotc^{\mu}(\Wdot,\Wdot)$ is positive definite whenever $\wtP>0$ and $\xi$ belongs to the interior of the 
        positive component of the sound cone in $T_x^* \mathcal{M}:$
        \begin{align}           \label{E:PositiveDefinitec}
            \xi_{\mu} \Jscrdotc^{\mu}(\Wdot,\Wdot) &> 0
            \ \mbox{if} \ |\Wdot|>0, \ \wtP>0, \ \mbox{and} \ \xi \in \mathcal{I}_x^{s*+}.
        \end{align}
        \noindent The inequality \eqref{E:PositiveDefinitec} allows us to use the quadratic form $\xi_{\mu} \Jscrdotc^{\mu}(\Wdot,\Wdot)$ 
        to estimate the $L^2$ norms of the variations $\Wdot,$ provided that we estimate the BGS $\Vboldt.$ We will 
        discuss this issue further in Section \ref{S:JdotUniformPositiveDefinitec}.
				
				In contrast, the energy current $\dot{J}$ from \cite{jS2008a} has the property that $\xi_{\mu} 
        \dot{J}^{\mu}$ is a positive definite quadratic form in $\Vdot$ only for $\xi$ belonging to the interior of the positive 
        component of the light cone in $T_x^* \mathcal{M};$  $\xi_{\mu} \Jscrdotc^{\mu}(\Wdot,\Wdot)$ is positive definite for a larger 
        set of $\xi$ than is $\xi_{\mu} \dot{J}^{\mu}(\Vdot,\Vdot)$ because $\Jscrdotc$ does not contain terms involving the variations 
        of the potential $\Phidot,$ and therefore does not control the propagation of gravitational waves.
        
       \begin{remark}
            Because $\lim_{c \to \infty} \mathfrak{S}_c^{-2}(\widetilde{\Ent},\widetilde{p}) = 
            \mathfrak{S}_{\infty}^{-2}(\widetilde{\Ent},\widetilde{p}) > 0,$ it follows that for all large $c,$ the covector 
            with coordinates $(1,0,0,0)$ is an element of $\mathcal{I}_x^{s*+}.$ Therefore, $\Jscrdotc^0(\Wdot,\Wdot)$ is positive
            definite for all large $c.$ We also observe that $\Jscrdotinfinity^0(\Wdot,\Wdot),$ which is 
            defined in \eqref{E:RescaledEnergyCurrentInfinity}, is 
            manifestly positive
        definite in the variations if $\widetilde{p} > 0,$ for by our fundamental assumptions on the equation of state,
       	$\widetilde{p} > 0 \implies \widetilde{R}_{\infty} > 0$ and $\widetilde{Q}_{\infty} > 0.$
        \end{remark}

        \subsection{The divergence of the energy current} 
          As described in \cite{jS2008a}, if the variations $\Wdot$ are solutions of the EOV$_{\kappa}^c$ \eqref{E:EOVc1}
         - \eqref{E:EOVc3}, then we can compute $\partial_{\mu} 
         \Big(\Jscrdotc^{\mu}\Big)$ and use the equations \eqref{E:EOVc1}
         - \eqref{E:EOVc3} for substitution to eliminate the terms\footnote{Showing this via a calculation is an arduous task. The 
         lower-order divergence property is a generic feature of energy currents constructed in the manner described in 
         \cite{dC2000}, but we require its explicit form in order to analyze its $c-$ dependence.} containing the derivatives of 
         $\Wdot:$

        \begin{align}
            \begin{split}   \label{E:RescaledEnergyCurrentDivergence}
                 & \partial_{\mu} \Big(\Jscrdotc^{\mu}\Big)  =
                        \Big \lbrace \partial_t\left(\frac{1}{\wtQc}\right) + \partial_j\left(\frac{\wtv^j}{\wtQc}\right) \Big \rbrace 
                        \cdot \dot{P}^2 											\\
                   & +  2 c^{-2} (\widetilde{\gamma}_c)^2 \Big \lbrace \partial_t \wtv_k + \wtv_k \partial_j \wtv^j + 
                   \wtv^j \partial_j \wtv_k + 2 c^{-2} (\widetilde{\gamma}_c)^2 \wtv_k \left(\wtv_j \partial_t \wtv^j + \wtv^j \wtv_a 
                   \partial_j \wtv^a \right) \Big \rbrace \cdot \dot{P} \dot{v}^k  \\
                &  + \Big\lbrace \partial_t \Big[(\widetilde{\gamma}_c)^2\big(\wtRc + c^{-2} \wtP \big) \Big]
                    + \partial_j \Big[(\widetilde{\gamma}_c)^2\big(\wtRc + c^{-2}\wtP \big) \wtv^j \Big] \Big \rbrace \cdot
                    \Big \lbrace \dot{v}_k \dot{v}^k + c^{-2} (\widetilde{\gamma}_c)^2(\wtv_k \dot{v}^k)^2 \Big \rbrace        \\
                & + 2 c^{-2} (\widetilde{\gamma}_c)^4 \Big \lbrace \wtRc + c^{-2}\wtP \Big \rbrace  \\ 
                	& \qquad \cdot \Big \lbrace \wtv_k \dot{v}^k \dot{v}^j \partial_t \wtv_j
                    + \wtv_k \dot{v}^k \dot{v}^a \wtv^j \partial_j \wtv_a + c^{-2} (\widetilde{\gamma}_c)^2(\wtv_k \dot{v}^k)^2 \big( 
                    \wtv_j \partial_t \wtv^j + \wtv_a \wtv^j \partial_j \wtv^a \big) \Big \rbrace \\
                & + 2 \dot{\Ent}\mathit{f} + 2 \frac{\dot{P}\mathit{g}}{\wtQc} + 2 \dot{v}_j \mathit{h}^{(j)}.
            \end{split}
        \end{align}
	
			We observe here that in the case $c=\infty,$ \eqref{E:RescaledEnergyCurrentDivergence} reduces to the more palatable expression
  			\begin{align}
            \begin{split}   \label{E:RescaledEnergyCurrentDivergenceInfinity}
                 	\partial_{\mu} \Big(\Jscrdotinfinity^{\mu}\Big) &= \Big \lbrace \partial_t 
                 	\left(\frac{1}{\widetilde{Q}_{\infty}}\right) + 
                        \partial_j \left(\frac{\wtv^j}{\widetilde{Q}_{\infty}}\right) \Big \rbrace \cdot \dot{p}^2 
                        + \Big\lbrace \partial_t \big(\widetilde{R}_{\infty}\big) 
                    + \partial_j \big(\widetilde{R}_{\infty} \wtv^j \big) \Big\rbrace \cdot \dot{v}_k \dot{v}^k   \\
         					& + 2 \dot{\Ent}\mathit{f} + 2\frac{\dot{P} \mathit{g}}{\widetilde{Q}_{\infty}} + 2 \dot{v}_j 
         					\mathit{h}^{(j)}.
            \end{split}
        \end{align}

	\section{The Initial Data and the Uniform-in-$c$ Positivity of the Energy Currents}             \label{S:IVPc}
        In this section we describe a class of initial data for which our energy methods allow us to
        rigorously take the limit $c \to \infty$ in the EN$_{\kappa}^c$ system.
        The Cauchy surface we consider is $\{(t,\sforspace) \in \mathcal{M} \ | \ t=0\}.$

        \subsection{An $H^N$ perturbation of a uniform quiet fluid} 
        Initial data for the EP$_{\kappa}$ system are denoted by
        \begin{align} \label{E:EPkappaData}
            \scrVID_{\infty}(\sforspace) \eqdef
            (\EntID,\pID,\vID^1,\vID^2,\vID^3, \PhiID_{\infty},\PsiID_0,\PsiID_1,\PsiID_2,\PsiID_3),
        \end{align}
        where $\PsiID_0(\sforspace) \eqdef \partial_t
        \Phi(0,\sforspace)$ and $\PsiID_{j} \eqdef \partial_j
        \PhiID_{\infty}(\sforspace).$ We assume that $\scrVID_{\infty}$ is an $H^N$ perturbation
        of the constant state $\scrVb_{\infty}$, where
        \begin{equation}                                                                                \label{E:InitialConstant}
            \scrVb_{\infty} \eqdef (\Entbar,\pbar,0,0,0,\Phibar_{\infty},0,0,0,0),
        \end{equation}
        $\Entbar, \pbar$ are positive constants, and
        the constant $\Phibar_{\infty}$ is the unique solution to
        \begin{align}                                                                               \label{E:PhibarInfinityDef}
            {\kappa}^2 \Phibar_{\infty} + 4\pi G \mathfrak{R}_{\infty}(\bar{{\Ent}},\bar{p}) =
            0.
        \end{align}
        The constraint \eqref{E:PhibarInfinityDef} must be satisfied in order for equation
        \eqref{E:EPkappa4} to be satisfied by $\scrVb_{\infty}.$ By an $H^N$ perturbation, we mean that
        \begin{align}
            \|\scrWID_{\infty}\|_{H_{\scrWb_{\infty}}^N} < \infty,
        \end{align}
        where we use the notation $\scrWID_{\infty}$ and
        $\scrWb_{\infty}$ to refer to the first 5 components of
        $\scrVID_{\infty}$ and $\scrVb_{\infty}$ respectively. \emph{We emphasize that a further positivity restriction on the 
        initial data $\pID$ and $\EntID$ is introduced in Section \ref{SS:AdmissibleStateSpacec}, and that throughout this 
        article,} $N$ \emph{is a fixed integer satisfying}
        \begin{align} \label{E:NDef}
            N \geq 4.
        \end{align}

        \begin{remark}
        We require $N \geq 4$ so that Corollary \ref{C:SobolevCorollary} and Remark \ref{R:SobolevCalculusRemark} can be applied to 
        conclude that $l \in \bigcap_{k=0}^{k=2} C^k([0,T], H^{N-k}),$ where $l$ is defined in \eqref{E:lUniformTimeDef}; 
        this is a necessary hypothesis for Proposition \ref{P:Klein-GordonNocDependence}, which we use in
        our proof of Theorem \ref{T:UniformLocalExistenceENkappac}.
        \end{remark}

        Although we refer to $\PhiID_{\infty}$ and
        $\PsiID_{\nu}, \ \nu=0,1,2,3,$ as ``data,'' in the EP$_{\kappa}$ system, these 5 quantities are determined by $\EntID,\pID,\vID^1,\vID^2,\vID^3$
        through the equations (\ref{E:EPkappa2}'), \eqref{E:EPkappa4}, and \eqref{E:PhibarInfinityDef}, together with
        vanishing conditions at infinity on $\PhiID_{\infty} - \Phibar_{\infty}$ and
        $\PsiID_0:$
        \begin{align}
            \Delta \PhiID_{\infty} - \kappa^2 (\PhiID_{\infty} - \Phibar_{\infty}) & = 4 \pi G
            \big[\mathfrak{R}_{\infty}(\EntID,\pID) - \mathfrak{R}_{\infty}(\Entbar,\pbar) \big] \label{E:PhiIDinfinityIdentity}\\
            \Delta \PsiID_0 - \kappa^2 \PsiID_0 =
            - 4 \pi G \partial_t|_{t=0} \big(\mathfrak{R}_{\infty}(\Ent,p)\big)
            &= - 4 \pi G \partial_k \big(\mathfrak{R}_{\infty}(\EntID,\pID) \vID^k \big), \label{E:Psi0Identity}
        \end{align}
        where the integral kernel from \eqref{E:EPkappaPotential} can be used to compute $\PhiID_{\infty} - \Phibar_{\infty}$ and
        $\PsiID_0.$ We will nevertheless refer to the array $\scrVID_{\infty}$ as the ``data'' for the EP$_{\kappa}$ system.

        \begin{remark} \label{R:PhiIDinHNPlusOne}
            Remark \ref{R:KernelRemark} implies that $\PhiID_{\infty} \in H_{\Phibar_{\infty}}^{N+2}$ and 
            $\PsiID_{\nu} \in H^{N+1}$ for \\ 
            $\nu = 0,1,2,3.$
        \end{remark}

        We now construct data for the EN$_{\kappa}^c$ system from
        $\scrVID_{\infty}.$ Depending on which set of state-space variables we are working with,
        we denote the data for the EN$_{\kappa}^c$ system by
        \begin{align}
             \scrVID_c &\eqdef(\EntID,\pID,\vID^1,\vID^2,\vID^3,\PhiID_c,\mathring{\Psi}_0,\PsiID_1,\PsiID_2,\PsiID_3)
                 \label{E:ENkappacscrVID} \\
            \mbox{or} \ \VID_c 
            &\eqdef(\EntID,e^{4\PhiID_c/c^2}\pID,\vID^1,\vID^2,\vID^3,\PhiID_c,\mathring{\Psi}_0,\PsiID_1,\PsiID_2,\PsiID_3),
                \label{E:ENkappacVID}
        \end{align}
        where unlike in the EP$_{\kappa}$ case,
        $\PhiID_c,\mathring{\Psi}_0,\PsiID_1,\PsiID_2,$ and
        $\PsiID_3$ are data in the sense that the EN$_{\kappa}^c$ system is under-determined if
        they are not prescribed. We have chosen the data 
        $\EntID,\pID,\vID^1,\vID^2,\vID^3,\PsiID_0,\PsiID_1,\PsiID_2,\PsiID_3$ for the EN$_{\kappa}^c$ system
        to be the same as the data for the EP$_{\kappa}$ system,
        but for technical reasons described below and indicated in \eqref{E:PhiBarc} and \eqref{E:PhiIDcDef}, our requirement that
        there exists a constant background state typically constrains the datum $\PhiID_c$ so that it differs from $\PhiID_{\infty}$
        by a small constant that vanishes as $c \to \infty.$

        As in the EP$_{\kappa}$ system, we assume that $\scrVID_c$ is an
        $H^N$ perturbation of
        the constant state of the form (depending on which collection of state-space variables we are
        working with)
        \begin{align}
            \scrVb_c &\eqdef(\Entbar,\pbar,0,0,0,\Phibar_c,0,0,0,0) \label{E:scrVcConstant}\\
            \mbox{or} \ \Vb_c &\eqdef(\Entbar,\Pbar_c,0,0,0,\Phibar_c,0,0,0,0), \label{E:VcConstant}
        \end{align}
        where $\Entbar$ and $\pbar$ are the same constants appearing in
        $\scrVb_{\infty},$  $\Phibar_c$ is the unique solution to
        \begin{align}                                                                             \label{E:PhiBarc}
            {\kappa}^2 \Phibar_c + 4 \pi G e^{4 \Phibar_c / c^2}\big[\mathfrak{R}_{c}(\bar{{\Ent}},\bar{p}) - 3c^{-2} \bar{p} \big] = 
            0,
        \end{align}
        and $\Pbar_c \eqdef e^{4 \Phibar_c / c^2}\pbar.$ The
        constraint \eqref{E:PhiBarc} must be satisfied in order for equation \eqref{E:ENkappac4}
        to be satisfied by $\pbar, \Entbar,$ and $\Phibar_c.$
        Although the background potential $\Phibar_c$ for the EN$_{\kappa}^c$ system is
        not in general equal to the background potential $\Phibar_{\infty}$ for the
        EP$_{\kappa}$ system, it follows
        from the hypotheses \eqref{E:EOScHypothesis1} and \eqref{E:EOScHypothesis2} on the $c$-dependence of $\mathfrak{R}_c$
        that
        \begin{align} \label{E:PhibarcConvergence}
            \lim_{c \to \infty} \Phibar_{c} = \Phibar_{\infty}.
        \end{align}

        We now define the initial datum $\PhiID_c$ appearing in the arrays
        \eqref{E:ENkappacscrVID} and \eqref{E:ENkappacVID} by
        \begin{align} \label{E:PhiIDcDef}
            \PhiID_c &\eqdef \PhiID_{\infty} - \Phibar_{\infty} + \Phibar_{c}, 
        \end{align}
        which ensures that the deviation of $\PhiID_c$ from the background potential $\Phibar_c$ matches the
        deviation of $\PhiID_{\infty}$ from the background potential $\Phibar_{\infty}.$ We denote the first 5 components of
        $\scrVID_c,$ $\VID_c,$ $ \scrVb_c,$ and $\Vb_c$ by $\scrWID_c, \WID_c, \scrWb_c,$ and
        $\Wb_c$ respectively.

        \begin{remark}
            We could weaken the hypotheses by allowing the
            initial data for the EN$_{\kappa}^c$ system to deviate
            from the initial data for the EP$_{\kappa}$ system by an $H^N$
            perturbation that decays to $0$ rapidly enough as $c \to \infty.$ For
            simplicity, we will not pursue this analysis here.
        \end{remark}

        \subsection{The sets $\mathcal{O}, \mathcal{O}_2, \mathfrak{O}_2, \mathfrak{C}, K,$ and $\mathfrak{K}$} 
        \label{SS:AdmissibleStateSpacec} 
           In order to avoid studying the free boundary problem, and in order
            to avoid singularities in the energy currents \eqref{E:RescaledEnergyCurrent} and 
            \eqref{E:RescaledEnergyCurrentInfinity}, we assume that
            the initial pressure, energy density, and speed of sound are
            uniformly bounded from below by a positive constant. According
            to our assumptions \eqref{E:EOSAssumptions} on the equation of state,
            to achieve this uniform bound, it is sufficient to make the following further
            assumption on the initial
            data: that $\scrWID_{\infty}(\mathbb{R}^3)$ is contained in a compact subset of the
            following open subset $\mathcal{O}$ of the state-space $\mathbb{R}^{5},$
            the \emph{admissible subset of truncated state-space}, defined by

            \begin{equation}                                                                \label{E:AdmissibleSubset}
                \mathcal{O} \eqdef \big\{ \scrW =(\Ent,p,v^1,v^2,v^3) \in \mathbb{R}^{5} \ | \ \Ent > 0, p > 0 \big\}.
            \end{equation}

            Therefore, we assume that $\scrWID_{\infty}(\mathbb{R}^3) \subset \mathcal{O}_1$ and
            $\scrWb_{\infty} \in \mathcal{O}_1,$ where
            $\mathcal{O}_1$ is a precompact open set with $\mathcal{O}_1
            \Subset \mathcal{O},$ and $``\Subset''$ means that ``the closure is compact and contained in the interior of.'' We then fix 
            convex precompact open subsets
            $\mathcal{O}_2$ and $\mathfrak{O}_2 $ with $\mathcal{O}_1 \Subset \mathcal{O}_2 \Subset 
            \mathfrak{O}_2 \Subset \mathcal{O},$ and define $\mathfrak{C}$ to be the projection of $\bar{\mathfrak{O}}_2$ onto the first
            two axes, where $\bar{\mathfrak{O}}_2$ denotes the closure of $\mathfrak{O}_2.$ We assume that with this 
            definition of $\mathfrak{C},$ hypotheses \eqref{E:EOScHypothesis1} and \eqref{E:EOScHypothesis2} are 
            satisfied by the equation of state. Consequently, property \eqref{E:PhibarcConvergence} shows that for all large $c$
            including $c = \infty,$ $\WID_c (\mathbb{R}^3) \Subset \mathcal{O}_2$ and $\Wb_c \in \mathcal{O}_2;$ also note that 
            for all $c,$ $\scrWID_c = \scrWID_{\infty} = \WID_{\infty}.$ 

            We now address the variables $\big(\Phi,\partial_t \Phi, \partial_1 \Phi, \partial_2 \Phi,\partial_3
            \Phi\big).$ In Section \ref{S:UniforminTimeLocalExistence}, we will use energy estimates to prove the existence of 
            an interval $[0,T]$ and a cube of the form $[-a,a]^5$ such that for all large $c$ including 
            $c=\infty,$ we have \\
            $\big(\Phi,\partial_t \Phi, \partial_1 \Phi, \partial_2 \Phi,\partial_3
            \Phi\big)([0,T] \times\mathbb{R}^3) \subset [-a,a]^5.$ Furthermore, 
            it will follow from the discussion in Section \ref{S:UniforminTimeLocalExistence} that for all large $c$ including 
            $c=\infty,$ we have \\
            $\big(\PhiID_c,\PsiID_0, \PsiID_1, \PsiID_2,\PsiID_3\big)(\mathbb{R}^3) \Subset 
            \mbox{Int}([-a,a]^5).$ The compact convex set $K,$ then, as given in \eqref{E:Kdef} below, will be defined to be
            $\bar{\mathcal{O}}_2 \times [-a,a]^5.$ It follows from the above discussion that for all large $c$ including
            $c=\infty,$ we have $\VID_c(\mathbb{R}^3) \Subset \mbox{Int}(K)$ and $\Vb_c \in \mbox{Int}(K).$ Our goal will be to show that 
            the solution $\Vbold_c$ to \eqref{E:ENkappac1} - \eqref{E:PDefcII} launched by the initial data $\VID_c$ exists on a time 
            interval $[0,T]$ that is independent of (all large) $c$ and remains in $K.$

            We now discuss the simple construction of $\mathfrak{K}:$ based on the above construction, it 
            follows from definitions \eqref{E:Wboldarray} - \eqref{E:scrVboldarray} that for all large $c$ including $c=\infty,$ we have 
            $\Vbold \in K \implies \scrW \in \bar{\mathfrak{O}}_2.$ As given in \eqref{E:frakKdef}, we will then
            define the compact convex\footnote{Proposition \ref{P:SobolevTaylor}
            requires the convexity of $K$ and  $\mathfrak{K},$ and the estimate
            \eqref{E:ModifiedSobolevEstimateConstantArray} also requires that $\Vb_c \in K$ and $\scrVb_c \in \mathfrak{K}.$ In practice, 
            $K$ and $\mathfrak{K}$ can be chosen to be cubes.} set $\mathfrak{K}$ by $\mathfrak{K} \eqdef \bar{\mathfrak{O}}_2 \times 
            [-a,a]^5,$ so that for all large $c$ including $c=\infty,$ we also have that $\Vbold \in K \implies \scrV \in \mathfrak{K}.$ 
            As in 
            the previous discussion, it follows that for all large $c$ including $c=\infty,$ we have $\scrVID_c(\mathbb{R}^3) 
            \Subset \mbox{Int}(\mathfrak{K})$ and $\scrVb_c \in \mbox{Int}(\mathfrak{K}).$ 
       			
       			\subsection{The uniform-in-$c$ positive definiteness of $\Jscrdotc^0$} \label{S:JdotUniformPositiveDefinitec}
            As mentioned at the beginning of Section \ref{S:EnergyCurrentsc}, we
            will use the quantity $\|\Jscrdotc^0(t)\|_{L^1}$ to control $\|\Wdot(t)\|_{L^2}^2,$ where $\Jscrdotc$ is 
            an energy current for the variation $\Wdot$ with coefficients defined by a BGS $\Vboldt.$
            Since we seek estimates that are uniform in $c,$ we will
            show that under some simple assumptions on the BGS $\Vboldt,$ it
            follows that $\Jscrdotc^0$ is uniformly positive definite in $\Wdot$ for all large $c.$ Let us now formulate this 
            precisely as a lemma.
            
            \begin{lemma} \label{L:UniformPositivitywithc}
            Let $\Jscrdotc$ be the energy current \eqref{E:RescaledEnergyCurrent} for the variation $\Wdot$ defined by the BGS $\Vboldt.$  
            Assume that $\Wboldt(t,\sforspace) \in \bar{\mathcal{O}}_2$ and that $|\Phit(t,\sforspace)| \leq Z,$
            where $\Wboldt$ denotes the first 5 components of $\Vboldt,$ and $\bar{\mathcal{O}}_2$ is defined in Section 
            \ref{SS:AdmissibleStateSpacec}. Then there exists a constant 
            $C_{\bar{\mathcal{O}}_2,Z}$ with $0 < C_{\bar{\mathcal{O}}_2,Z} < 1$ such that
            \begin{equation}                                                                        \label{E:UniformPositivitywithc}
                C_{\bar{\mathcal{O}}_2,Z}|\Wdot|^2 \leq
                \Jscrdotc^0(\Wdot,\Wdot) \leq C_{\bar{\mathcal{O}}_2,Z}^{-1} |\Wdot|^2
            \end{equation}
            holds for all large $c$ including $c={\infty}.$ 
            \end{lemma}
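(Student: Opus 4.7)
The plan is to write $\Jscrdotc^0(\Wdot,\Wdot)$ as the sum of a manifestly positive definite diagonal quadratic form in $\Wdot$ plus a remainder of size $O(c^{-2})|\Wdot|^2$, and then absorb the remainder by choosing $c$ large. First, I would extract uniform pointwise bounds on the coefficients of $\Jscrdotc^0$. Since $\bar{\mathcal{O}}_2$ is a compact subset of the admissible set $\mathcal{O}$ (on which $\Ent,p>0$), the continuous functions $\widetilde{R}_\infty(\widetilde{\Ent},\widetilde{p})$ and $\widetilde{Q}_\infty(\widetilde{\Ent},\widetilde{p})$ attain positive minima and finite maxima there. Combining this with Lemma \ref{L:OrderctoNegativeTwoEstimates}, which gives $\wtRc - \widetilde{R}_\infty, \wtQc - \widetilde{Q}_\infty, (\wtg)^2-1 \in \mathcal{R}^{N+1}(c^{-2};\cdots)$, I obtain constants depending only on $\bar{\mathcal{O}}_2$ and $Z$ (and not on $c$) satisfying $0<c_1\leq \wtRc,\wtQc \leq C_1$, $|(\wtg)^2-1|\leq C c^{-2}$, $|\wtv|\leq C$, and $|\wtP|\leq C$ for all large $c$ including $c=\infty$.

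Second, I would decompose
\[
\Jscrdotc^0(\Wdot,\Wdot) \;=\; \dot{\Ent}^2 + \wtQc^{-1}\dot{P}^2 + \wtRc\,\dot{v}_k\dot{v}^k \;+\; \mathcal{E}_c(\Wdot,\Wdot),
\]
where $\mathcal{E}_c$ collects the remaining contributions from \eqref{E:RescaledEnergyCurrent}, namely the cross term $2c^{-2}(\wtg)^2(\wtv_k\dot{v}^k)\dot{P}$, the correction $[(\wtg)^2-1]\wtRc\,\dot{v}_k\dot{v}^k$, the pressure correction $c^{-2}(\wtg)^2\wtP\,\dot{v}_k\dot{v}^k$, and the quartic-in-$\wtv$ remainder $c^{-2}(\wtg)^4[\wtRc+c^{-2}\wtP](\wtv_k\dot{v}^k)^2$. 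Using the bounds of the previous step together with the elementary inequality $2|ab|\leq a^2+b^2$ applied to the indefinite cross term, each summand in $\mathcal{E}_c$ is bounded pointwise by $Cc^{-2}|\Wdot|^2$.

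Third, the leading three-term quadratic form is uniformly positive definite: there exist $0<c_2\leq C_2$ depending only on $\bar{\mathcal{O}}_2$ with
\[
c_2|\Wdot|^2 \;\leq\; \dot{\Ent}^2 + \wtQc^{-1}\dot{P}^2 + \wtRc\,\dot{v}_k\dot{v}^k \;\leq\; C_2|\Wdot|^2.
\]
Choosing $c$ large enough that $Cc^{-2}\leq c_2/2$, I conclude $(c_2/2)|\Wdot|^2\leq \Jscrdotc^0(\Wdot,\Wdot)\leq (C_2+c_2/2)|\Wdot|^2$, which is \eqref{E:UniformPositivitywithc} after choosing $C_{\bar{\mathcal{O}}_2,Z}$ smaller than both $c_2/2$ and $(C_2+c_2/2)^{-1}$. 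The case $c=\infty$ follows immediately from the explicit form of $\Jscrdotinfinity^0$ given in \eqref{E:RescaledEnergyCurrentInfinity}, where $\mathcal{E}_\infty\equiv 0$. The only mild obstacle is the indefinite cross term coupling $\dot{P}$ and $\dot{\mathbf{v}}$, but its explicit $c^{-2}$ prefactor renders the absorption routine once the coefficient bounds of the first step are in hand.
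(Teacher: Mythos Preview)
Your proposal is correct and follows essentially the same perturbative strategy as the paper: both arguments write $\Jscrdotc^0$ as a uniformly positive definite quadratic form plus an $O(c^{-2})|\Wdot|^2$ remainder and then absorb the remainder for $c$ large. The only cosmetic differences are that the paper compares directly to $\Jscrdotinfinity^0$ (with coefficients $\widetilde{R}_\infty,\widetilde{Q}_\infty$) and invokes the $\mathcal{R}^N(c^{-2};\cdots)$ machinery in one stroke, whereas you keep $\wtRc,\wtQc$ in the leading part and bound each remainder term explicitly; note also that your constants $c_2,C_2$ for the leading form should depend on $Z$ as well as $\bar{\mathcal{O}}_2$, since $\wtRc,\wtQc$ involve $e^{4\widetilde{\Phi}/c^2}$.
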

            \begin{proof}
            It is sufficient prove inequality \eqref{E:UniformPositivitywithc}
            when $|\Wdot| = 1$ since it is invariant under any rescaling of
            $\Wdot.$ Let $\scrWtilde, \scrVtilde$ be the arrays related to the arrays
            $\Wboldt, \Vboldt$ as defined in \eqref{E:Wboldarray} - \eqref{E:scrVboldarray}. Our assumptions imply the existence of a 
            compact set $\mathfrak{D}$ depending only on $\bar{\mathcal{O}}_2$ and $Z$ such that for all large $c,$
            $\scrVtilde(t,\sforspace) \in \mathfrak{D}.$
            
            Recall that $\Jscrdotinfinity$ is defined in \eqref{E:RescaledEnergyCurrentInfinity} and that
            $\Jscrdotinfinity^0$ is manifestly positive definite in the variations\footnote{To be consistent the notation used in 
            formula \eqref{E:RescaledEnergyCurrentInfinity},
            it would be ``more correct'' to use the symbol $\scrWdot$ to denote the variations appearing as arguments in 
            $\Jscrdotinfinity(\cdot,\cdot).$ However, for the purposes of this proof, there is no harm in identifying $\Wdot = \scrWdot$ 
            since in this context, these placeholder variables merely represent the arguments of $\Jscrdotinfinity$ when viewed as a 
            quadratic form in the variations.} $\Wdot$ if $\widetilde{p} > 0.$ If we view $\Jscrdotinfinity^0$ as a function of
            $(\Wdot,\scrWtilde),$ then by uniform continuity, there is a constant $0<C(\mathfrak{D})<1$
            such that \\
            $C(\mathfrak{D})|\Wdot|^2 \leq \Jscrdotinfinity^0 \leq C(\mathfrak{D})^{-1}
            |\Wdot|^2$ holds on the compact set $\lbrace |\Wdot| = 1 \rbrace
            \times \Pi_5(\mathfrak{D}),$ where $\Pi_5(\mathfrak{D})$ is the projection of $\mathfrak{D}$ onto
            the first five axes. Furthermore, if we also view $\Jscrdotc^0$ as a function of
            $(\Wdot,\scrVtilde),$ then by Lemma \ref{L:MultiplyFcGcSobolevEstimate}, Lemma \ref{L:OrderctoNegativeTwoEstimates}, 
            \eqref{E:RescaledEnergyCurrent}, and \eqref{E:RescaledEnergyCurrentInfinity} we have that $\Jscrdotc^0 = \Jscrdotinfinity^0 + 
            \mathfrak{F}_c \cdot |\Wdot|^2,$ where $\mathfrak{F}_c 
            \in \mathcal{R}^{N}(c^{-2};\mathfrak{D};\scrVtilde).$ \eqref{E:UniformPositivitywithc} now easily follows:
            $C_{\bar{\mathcal{O}}_2,Z}$ can be any positive number that is strictly smaller than 
            $C(\mathfrak{D}).$ 
            \end{proof}
            
            \begin{remark} \label{R:NoZdependence}
            	If $c=\infty,$ then the coefficients of the quadratic form $\Jscrdotinfinity^0$ are independent of
            	$\Phi.$ It follows that in this case, the constant $C_{\bar{\mathcal{O}}_2}$ from
            	\eqref{E:UniformPositivitywithc} is independent of $Z.$
            \end{remark}

    \section{Smoothing the Initial Data} \label{S:SmoothingtheData}
    
    		For technical reasons, we need to smooth the initial data. Without smoothing,
    		the terms on the right-hand sides of \eqref{E:IterateInhomogeneouscf} - \eqref{E:IterateInhomogeneouschj}
    		involving the derivatives of the initial data could be unbounded in the $H^N$ norm.
   			To begin, we fix a Friedrich's mollifier $\chi(\sforspace);$ i.e., $\chi \in
        C_c^\infty(\mathbb{R}^3), \ \mbox{supp}(\chi) \subset \{\sforspace \ | \
        |\sforspace| \leq 1 \}, \chi \geq 0,$ and $\int \chi \ d^3\sforspace = 1.$  For $\epsilon > 0,$
        we set $\chi_\epsilon(\sforspace) \overset{\mbox{\tiny{def}}}{=}
        {\epsilon}^{-3} \chi(\frac{\sforspace}{\epsilon}).$ We smooth the first 5 components $\scrWID_{\infty}$ of the data
        $\scrVID_{\infty}$ defined in \eqref{E:EPkappaData} with $\chi_\epsilon,$ defining
        $\chi_\epsilon \scrWID_{\infty} \in C^{\infty}$ by
        \begin{equation}
            \chi_\epsilon \scrWID_{\infty}(\sforspace) \eqdef \int_{\mathbb{R}^3}
            \chi_\epsilon(\sforspace- \sforspace')\scrWID_{\infty}(\sforspace') \,
            d^3 \sforspace'.
        \end{equation}
    		Note that we do not smooth the data $(\PhiID_c, \PsiID_0) \in H_{\Phibar_{c}}^{N+2} \times H^{N+1}$ because 
    		by Remark \ref{R:PhiIDinHNPlusOne} and definition \eqref{E:PhiIDcDef}, they already have sufficient regularity.

        The following property of such a mollification is well known:
        \begin{align}
            \underset{\epsilon \rightarrow 0^+}{\lim}
                \|\chi_\epsilon \scrWID_{\infty} - \scrWID_{\infty} \|_{H^N} &=0
                \label{E:Mollificationc1}.
        \end{align}
                We will choose below an $\epsilon_0>0.$ Once chosen, we define
        
                \begin{align}
            {^{(0)}\scrWID} &\eqdef \big(\EntIDSmoothed,\pIDSmoothed,{^{(0)}\mathring{\mathbf{v}}}\big)
                \eqdef  \chi_{\epsilon_0} \scrWID_{\infty}    \label{E:IterateInitialValuec1}\\
            {^{(0)}\WID_c}    &\eqdef \big(\EntIDSmoothed,e^{4 \PhiID_c/c^2}\cdot\pIDSmoothed,{^{(0)}\mathring{\mathbf{v}}}\big),                                  \label{E:IterateInitialValuec2}
        \end{align}
        where $\PhiID_c$ is defined in \eqref{E:PhiIDcDef}. By Sobolev embedding, the assumptions on the initial data
            $\WID_c$, which are the first 5 components of the data $\VID_c$ defined in \eqref{E:ENkappacVID},
            by Lemma \ref{L:MultiplyFcGcSobolevEstimate}, by \eqref{E:expPhiOvercSquaredOrderctoNegativeTwoEstimate}, and
            by the mollification property \eqref{E:Mollificationc1}, $\exists \lbrace \Lambda_1>0 \land \epsilon_0 > 0 \rbrace$ such 
            that

        \begin{align}
            \mbox{for all large $c$}, \ \|\Wbold - \WIDSmoothed_c \|_{H^N} &\leq \Lambda_1 \Rightarrow
                \Wbold \in \bar{\mathcal{O}}_2                               \label{E:WellDefinedc}\\
            \|\WIDSmoothed_c - \WID_c\|_{H^N} &\leqc
                C_{\bar{\mathcal{O}}_2,Z} \cdot \frac{\Lambda_1}{2},                                  \label{E:IntialValueEstimatec}
        \end{align}
        where $\bar{\mathcal{O}}_2$ is defined in Section \ref{SS:AdmissibleStateSpacec},
        and $C_{\bar{\mathcal{O}}_2,Z}$ is the constant from \eqref{E:UniformPositivitywithc}. Here, $Z$ is a fixed constant that will 
        serve as an upper bound for $\| \Phi(t) \|_{L^{\infty}}$ on a certain time interval, where $\Phi$ will
        be a solution variable in the EN$_{\kappa}^c$ system. We explain this fixed value of $Z,$ 
        given in expression \eqref{E:Zdef} below, in detail in Section \ref{SS:TechnicalLemmas}. Note that 
        according to this reasoning, $\Lambda_1 = \Lambda_1(\bar{\mathcal{O}}_2;Z).$

        \begin{remark}
        \label{R:SmoothedDataLargecNPlusOneEstimate}
            Because $\|\VID_c\|_{H_{\Vb_c}^{N}}, \|\VID_c \|_{L^{\infty}}, \|\VIDSmoothed_c \|_{H_{\Vb_c}^{N+1}}$ and 
            $\|\VIDSmoothed_c \|_{L^{\infty}}$ enter into our 
            Sobolev estimates below, it is an important fact that these quantities are uniformly
            bounded for all large $c.$ By \eqref{E:PhibarcConvergence}, \eqref{E:PhiIDcDef},
            definition \eqref{E:IterateInitialValuec2}, and Sobolev embedding, to obtain uniform bounds for $\VIDSmoothed_c,$
            we only need to show that \\
            $\mid \mid e^{4 \PhiID_c/c^2}\cdot\pIDSmoothed \mid \mid_{H_{e^{4 \Phibar_c/c^2}\cdot\pbar}^{N+1}}$
            is uniformly bounded for all large $c.$ This fact follows from Lemma \ref{L:HjbySubtractingConstant},
            Lemma \ref{L:MultiplyFcGcSobolevEstimate}, and \eqref{E:expPhiOvercSquaredOrderctoNegativeTwoEstimate}. 
        		Such a uniform bound is used, for example, in the estimate \eqref{E:InitialDataInhomogeneousLargecSobolevEstimateB}.
        		We can similarly obtain the uniform bounds for $\VID_c;$ we use such a bound, for example, in the proof of
        		\eqref{E:PartialtlofZerobound}.   
        \end{remark}

\section{Uniform-in-Time Local Existence for EN$_{\kappa}^c$} \label{S:UniforminTimeLocalExistence}

    In this section we prove our first important theorem, namely that there is a uniform time interval $[0,T]$
    on which solutions to the EN$_{\kappa}^c$ system having the initial data $\VID_c$ exist, as long as $c$ is large enough.

    \subsection{Local existence and uniqueness for EN$_{\kappa}^c$ revisited} 
		Let us first recall the following local existence result proved in \cite{jS2008a}, in which it was not yet shown that
    the time interval of existence can be chosen independently of all large $c.$

    \begin{theorem} \label{T:LocalExistencec} {\bf(EN$_{\kappa}^c$ Local Existence Revisited)}                         		 
    		Let $\VID_c(\sforspace)$ be initial data \eqref{E:ENkappacscrVID} for the EN$_{\kappa}^c$ system
        \eqref{E:ENkappac1} - \eqref{E:PDefcII}
        that are subject to the conditions described in Section \ref{S:IVPc}. Assume that the 
        equation of state is ``physical'' as described in Section \ref{SS:ENkappacDerivation}. Then
        for all large (finite) $c,$ there exists a
        $T_c > 0$ such that \eqref{E:ENkappac1} - \eqref{E:PDefcII}
        has a unique classical solution $\Vbold \in C_b^2([0,T_c] \times \mathbb{R}^3)$ of the form \\ 
        $\Vbold =({\Ent},P,v^1,v^2,v^3,\Phi,\partial_t \Phi,\partial_1
        \Phi,\partial_2 \Phi,\partial_3 \Phi)$ with $\Vbold(0,\sforspace)=\VID_c(\sforspace).$
        The solution satisfies $\Vbold([0,T_c] \times \mathbb{R}^3) \subset K,$
        where the ($c-$independent) compact convex set $K$ is defined 
        in \eqref{E:Kdef}. Furthermore, $\Vbold \in \bigcap_{k=0}^{k=2}C^k([0,T_c],H_{\Vb_c}^{N-k})$ and \\ 
        $\Phi \in C_b^3([0,T_c] \times \mathbb{R}^3) \cap \bigcap_{k=0}^{k=3}C^k([0,T_c],H_{\Phibar_c}^{N+1-k}),$
        where the constants $\Vb_c$ and $\Phibar_c$ are defined by \eqref{E:VcConstant} and  \eqref{E:PhiBarc} respectively.
    \end{theorem}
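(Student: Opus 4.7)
The plan is to reduce the statement to the local existence theorem of \cite{jS2008a} by verifying that the current formulation---Newtonian state-space variables, cosmological constant $\kappa^2 > 0$, and explicit $c$-dependence---does not disrupt the structure exploited there. First I would observe that equations \eqref{E:ENkappac1}--\eqref{E:ENkappac3} are of the form \eqref{E:MatrixDef}, with the matrices $\Ac^\nu(\scrW,\Phi)$ displayed in \eqref{E:A0c}, \eqref{E:Ajc} symmetric (after a standard symmetrization) and $\Ac^0(\scrW,\Phi)$ uniformly positive definite on any compact subset of the admissible state-space $\mathcal{O}$; hence the fluid subsystem is symmetric hyperbolic. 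The cosmological constant enters \eqref{E:ENkappac4} only as a lower-order term $-\kappa^2 \Phi$ added to the wave operator $-c^{-2}\partial_t^2 + \Delta$, so the principal part, and therefore the standard wave-equation energy estimates of \cite{jS2008a}, are unaffected; in fact $\kappa^2 > 0$ improves matters, as it yields the elliptic gain discussed in Remark \ref{R:KernelRemark}.

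Given this verification, the bulk of the argument would copy the Picard iteration from \cite{jS2008a}: set $\Vbold^{(0)} \equiv \VID_c$, and define $\Vbold^{(m+1)}$ as the solution of the linearization of the EN$_{\kappa}^c$ system obtained by freezing the coefficients at $\Vbold^{(m)}$. The fluid iterates $\Wbold^{(m+1)}$ are controlled using energy currents analogous to \eqref{E:RescaledEnergyCurrent} but retaining the $\Phidot$ contributions, while $\Phi^{(m+1)}$ is produced by solving the linear Klein--Gordon equation with source $4\pi G(\Rc^{(m)} - 3c^{-2}P^{(m)})$ via standard $H^{N+1}$-based wave estimates. The Sobolev--Moser calculus of Section \ref{S:cDependence} supplies the composition bounds needed to realize the nonlinear coefficients in $H^N$. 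For fixed $c$, uniform $H^N$ boundedness of the iterates on a short interval $[0,T_c]$ together with contraction in a lower-order norm yields a classical solution, with uniqueness following from an energy estimate on the difference of two solutions sharing the same initial data.

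For the containment $\Vbold([0,T_c] \times \mathbb{R}^3) \subset K$, I would use continuity: since $N \geq 4$ the Sobolev embedding $H^N \hookrightarrow C_b^2$ shows that $t \mapsto \Vbold(t,\cdot) - \Vb_c$ is continuous into $L^\infty$, and by Section \ref{SS:AdmissibleStateSpacec} the initial datum satisfies $\VID_c(\mathbb{R}^3) \Subset \mathrm{Int}(K)$, so shrinking $T_c$ preserves the inclusion. The temporal regularity $\Vbold \in \bigcap_{k=0}^{k=2} C^k([0,T_c], H_{\Vb_c}^{N-k})$ comes from algebraically solving \eqref{E:ENkappac1}--\eqref{E:ENkappac3} for $\partial_t \Wbold = (\Ac^0)^{-1}(\mathbf{b} - \Ac^j \partial_j \Wbold)$, iterating once more for $\partial_t^2 \Wbold$, and invoking the Moser-type estimates. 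The extra gain $\Phi \in C^k([0,T_c], H_{\Phibar_c}^{N+1-k})$ for $0 \leq k \leq 3$ is obtained by rewriting \eqref{E:ENkappac4} as $(\Delta - \kappa^2)\Phi = c^{-2}\partial_t^2\Phi + 4\pi G(\Rc - 3c^{-2}P)$ and exploiting the elliptic gain $(\Delta - \kappa^2)^{-1} \colon H^j \to H^{j+2}$ from Remark \ref{R:KernelRemark}, seeded by the initial-data regularity of Remark \ref{R:PhiIDinHNPlusOne}. The main obstacle is bookkeeping rather than conceptual: one must ensure the change of variables \eqref{E:vdef}--\eqref{E:PhiCHOVdef} keeps the coefficient matrices $\Ac^\nu$ away from the singularity at $|\mathbf{v}| = c$ throughout the iteration, which is exactly what the restriction to $\bar{\mathcal{O}}_2 \Subset \mathcal{O}$ together with an $L^\infty$ bound on $\Phi$ secures on a sufficiently short time interval.
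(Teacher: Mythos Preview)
The paper does not supply a proof of this theorem; it simply recalls the result from \cite{jS2008a} and appends remarks explaining the small adaptations needed here (the additional $C^2_b$ and $C^2([0,T_c],H^{N-2}_{\Vb_c})$ regularity coming from $N\ge 4$, the containment in $K$ via continuity and Sobolev embedding, and the transfer of Sobolev regularity from the relativistic variables $U^\nu=e^\phi u^\nu$ used in \cite{jS2008a} to the Newtonian variables $v^j$ via Corollary \ref{C:SobolevCorollary}). Your proposal is therefore not being compared against a proof in the present paper but against the strategy of the cited work, and in that sense your outline is essentially correct: Picard iteration, control of the fluid iterates via energy currents of the type \eqref{E:RescaledEnergyCurrent} (in \cite{jS2008a} with the $\Phidot$ terms retained), linear Klein--Gordon estimates for $\Phi$, and Moser-type composition bounds. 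Your treatment of the containment in $K$ and of the temporal regularity also matches the paper's remarks.

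One point deserves tightening. You describe the fluid subsystem as ``symmetric (after a standard symmetrization),'' but the matrices \eqref{E:A0c}, \eqref{E:Ajc} are not symmetric, and the relativistic Euler system in these variables is not put into symmetric hyperbolic form in \cite{jS2008a}. The whole point of Christodoulou's energy-current framework, which you correctly invoke a few lines later, is that it furnishes the needed $L^2$ energy identity \eqref{E:RescaledEnergyCurrentDivergence} \emph{without} a symmetrizer; the positivity input is \eqref{E:PositiveDefinitec} rather than positive-definiteness of $\Ac^0$. So drop the symmetrization remark and lean entirely on the energy currents. A second, smaller point: since the paper's route is to prove existence in the relativistic variables and then transfer regularity via the diffeomorphic change of variables \eqref{E:vdef}--\eqref{E:PhiCHOVdef}, whereas you propose to work directly in Newtonian variables, you should note that the energy current \eqref{E:RescaledEnergyCurrent} you intend to use is already expressed in the Newtonian variables and inherits its positive-definiteness \eqref{E:PositiveDefinitec} from the construction in \cite{dC2000}; this is what makes your direct route viable.
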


    \begin{remark} \label{R:ExtraBanachSpaceDifferentiability}
				Although they are not explicitly proved in \cite{jS2008a}, the facts that \\ 
				$\Vbold \in C_b^2([0,T_c] \times \mathbb{R}^3)$ and that 
				$\Vbold$ is twice differentiable in $t$ as a
        map from $[0,T_c]$ to $H_{\Vb_c}^{N-2}$ follow from our assumption that $N \geq
        4$ \big(i.e., for $N \geq 4,$ it can be shown that 
        $\Vbold \in C_b^{N-2}([0,T_c] \times \mathbb{R}^3)\cap \bigcap_{k=0}^{k=N-2}
        C^k([0,T_c],H_{\Vb_c}^{N-k}$)\big). Also, by Corollary \ref{C:SobolevCorollary}, we 
        have that 
         $p \in \bigcap_{k=0}^{k=2}C^k([0,T_c],H_{\pbar}^{N-k}),$ since $p = P e^{-4\Phi/c^2}.$
        
        The proof of the claim that $T_c$ can be chosen such that $\Vbold([0,T_c] \times \mathbb{R}^3) \subset K$ is based on the fact
        $\VID_c(\mathbb{R}^3) \Subset \mbox{Int}(K)$ (see Section \ref{SS:AdmissibleStateSpacec}),
        together with the continuity result from the theorem and Sobolev embedding. 
     		
    \end{remark}

    \begin{remark}
        The case $c=\infty$ is discussed separately in Theorem \ref{T:EPkappaLocalExistence}.
    \end{remark}

    \begin{remark}
        The local existence theorem in \cite{jS2008a} was proved using the relativistic state-space variables 
        $U^{\nu} \eqdef e^{\phi}u^{\nu}.$ However, the form of the Newtonian change of variables made in sections 
        \ref{SS:ENkappacDerivation} and \ref{SS:NewtonianReformulationc}, together with Corollary \ref{C:SobolevCorollary}, allows us
        to conclude Sobolev regularity in one set of state-space variables if the same regularity is known in the other set of variables.
    \end{remark}
		
		The following corollary, which slightly extends the lifespan of the solution and
		also allows us to conclude stronger regularity properties from weaker regularity assumptions, will soon be used in our proof of 
		Proposition \ref{P:ContinuationPrinciple}.
    \begin{corollary} \label{C:ContinuationPrinciple}
        Let $\Vbold(t,\sforspace)$ be a solution to the EN$_{\kappa}^c$ system \eqref{E:ENkappac1} - \eqref{E:PDefcII}
        that has the regularity properties $\Vbold \in C_b^1([0,T] \times \mathbb{R}^3) \cap L^{\infty}([0,T],H_{\Vb_c}^{N}).$
        Let $\mathcal{O}$ be the admissible subset of truncated state-space
        defined in \eqref{E:AdmissibleSubset}, and let $\Pi_5: \mathbb{R}^{10} \rightarrow \mathbb{R}^5$ denote projection
        onto the first $5$ axes. Assume that $\Vbold([0,T] \times \mathbb{R}^3) \subset K$ and that $\Vb_c \in \mbox{Int}(K),$
        where $K \subset \mathbb{R}^{10}$ is a compact convex set such that $\Pi_5 (K) \Subset \mathcal{O}.$
      	Then there exists an $\epsilon > 0$ such that 
      	
      	\begin{align} \label{E:ContinuationPrinciple}
            \Vbold \in C_b^2([0,T + \epsilon] \times \mathbb{R}^3) \cap \bigcap_{k=0}^{k=2}C^k ([0,T + \epsilon],H_{\Vb_c}^{N-k}).
        \end{align}
        
    \end{corollary}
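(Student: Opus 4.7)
The plan is to apply Theorem \ref{T:LocalExistencec} at a time $t_0 \in [0,T]$ sufficiently close to $T$, thereby producing a high-regularity solution on an interval that extends past $T$, and to identify this new solution with the original $\Vbold$ by uniqueness. Because the conclusion \eqref{E:ContinuationPrinciple} is local in time, applying the same argument at finitely many times covering $[0,T]$ and gluing will deliver the upgraded regularity on the whole of $[0,T+\epsilon]$.

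The first step is to verify the hypotheses of Theorem \ref{T:LocalExistencec} at the chosen $t_0$. For the range requirement, one slightly enlarges $K$ to a compact convex $K'$ with $K \Subset \mbox{Int}(K')$ and $\Pi_5(K') \Subset \mathcal{O}$; since $\Vb_c \in \mbox{Int}(K) \subset \mbox{Int}(K')$, the range $\Vbold(t_0,\mathbb{R}^3)$ sits in a compact subset of $\mbox{Int}(K')$, as needed. The Sobolev requirement $\Vbold(t_0,\cdot) \in H^N_{\Vb_c}$ at every $t_0$ does not follow tautologically from the $L^{\infty}([0,T],H^N_{\Vb_c})$ hypothesis, but it does follow from the $C_b^1$ regularity by a standard weak-$*$ compactness argument: given $t_n \to t_0$ with $\|\Vbold(t_n,\cdot)\|_{H^N_{\Vb_c}} \leq M$, a weak-$*$ limit in $H^N_{\Vb_c}$ exists and must coincide with the $C^0_b$-limit $\Vbold(t_0,\cdot)$, so $\Vbold(t_0,\cdot) \in H^N_{\Vb_c}$ with a uniform bound $M$ that is independent of $t_0$.

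The main technical input is then that the lifespan produced by the local existence proof in \cite{jS2008a} depends only on the uniform $H^N$-size $M$ and on the fixed compact convex set $K'$; this yields a single lower bound $\tau > 0$ on the existence time that applies at every $t_0 \in [0,T]$. Choosing $t_0 = T - \tau/2$, Theorem \ref{T:LocalExistencec} produces a solution on $[T - \tau/2, T + \tau/2]$ with the full regularity asserted in \eqref{E:ContinuationPrinciple}, and uniqueness within the $C_b^1 \cap L^{\infty}H^N_{\Vb_c}$ class (established by a standard energy estimate on the difference of two solutions, the hypothesis $N \geq 4$ and the Sobolev embedding $H^N \hookrightarrow C^2_b$ justifying the calculations) identifies this solution with $\Vbold$ on the overlap $[T - \tau/2, T]$. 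Iterating the same construction at a finite sequence $0 = s_0 < s_1 < \cdots < s_m = T - \tau/2$ with each consecutive gap less than $\tau$, and patching via uniqueness on each overlap, gives \eqref{E:ContinuationPrinciple} with $\epsilon = \tau/2$. The main obstacle I expect is purely technical: extracting the uniform-in-$t_0$ lifespan from the local existence argument in \cite{jS2008a}, together with justifying $\Vbold(t_0,\cdot) \in H^N_{\Vb_c}$ at each patch point; neither step is deep, but both require care.
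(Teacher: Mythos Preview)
Your proposal is correct and follows essentially the same strategy as the paper: apply the local existence theorem at each $T' \in [0,T]$, use uniqueness in the $C^1$ class to identify the resulting high-regularity solution with $\Vbold$, and patch over finitely many intervals. The paper's proof is terser and slightly simpler on one point: rather than extracting a \emph{uniform} lifespan $\tau$ from the proof in \cite{jS2008a}, it allows the local existence interval $[T'-\epsilon, T'+\epsilon]$ to depend on $T'$ and relies on compactness of $[0,T]$ to pass to a finite cover, which avoids having to inspect the dependence of the lifespan on the data. Your weak-$*$ argument for $\Vbold(t_0,\cdot) \in H^N_{\Vb_c}$ is a correct and careful way to handle a point the paper leaves implicit.
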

      \begin{proof}
        We apply Theorem \ref{T:LocalExistencec} to
        conclude\footnote{Theorem \ref{T:LocalExistencec} can be easily modified to obtain a solution
        that exists both ``forward'' and ``backward'' in time.} that for each $T' \in [0,T],$ there exists an $\epsilon > 0,$ depending
        on $T',$ and a solution $\Vboldt$ to the EN$_{\kappa}^c$ system such that \\ 
        $\Vboldt \in C_b^2([T' - \epsilon,T' + \epsilon]\times 
        \mathbb{R}^3) \cap \bigcap_{k=0}^{k=2}C^k([T'-\epsilon,T'+\epsilon],H_{\Vb_c}^{N-k})$ and such that 
        $\Vboldt(T')=\Vbold(T').$ Furthermore, the uniqueness argument from \cite{jS2008a}, which is based on local energy estimates, can 
        be easily modified to show that solutions to the EN$_{\kappa}^c$ system are unique in the class 
        $C^1([T'-\epsilon,T'+\epsilon] \times \mathbb{R}^3).$ Therefore $\Vbold \equiv \Vboldt$ on their common slab of spacetime 
        existence. Corollary \ref{C:ContinuationPrinciple} thus follows. 
      \end{proof}

    In addition to Theorem \ref{T:LocalExistencec}, our proof of Theorem 
    \ref{T:UniformLocalExistenceENkappac} also requires an additional key ingredient, namely the following continuation principle for 
    Sobolev norm-bounded solutions:

    \begin{proposition} \textbf{(Continuation Principle)}                                 \label{P:ContinuationPrinciple}
        Let $\VID_c(\sforspace)$ be initial data \eqref{E:ENkappacscrVID} for the EN$_{\kappa}^c$ system
        \eqref{E:ENkappac1} - \eqref{E:PDefcII} that are subject to the conditions described in Section
        \ref{S:IVPc}, and let $T>0.$ Assume that $\Vbold \in C^1([0,T)\times \mathbb{R}^3)\cap 
        \bigcap_{k=0}^{k=1}C^k([0,T),H_{\Vb_c}^{N-k})$ is the unique classical solution existing on $[0,T)$ launched by
        $\VID_c(\sforspace).$ Let $\mathcal{O}$ be the admissible subset of truncated state-space
        defined in \eqref{E:AdmissibleSubset}, and let $\Pi_5: \mathbb{R}^{10} \rightarrow \mathbb{R}^5$ denote projection
        onto the first $5$ axes. Assume that there are constants $M_1, M_2 >0,$ a compact set $K \subset \mathbb{R}^{10}$ 
        with $\Pi_5 (K) \Subset \mathcal{O},$ and a set $U \Subset \mbox{Int}(K)$ such that the following three estimates hold for any 
        $T' \in [0,T):$ 
        \begin{enumerate}
            \item $\mid\mid\mid \Vbold \mid\mid\mid_{H_{\Vb_c}^N,T'} \leq M_1$
            \item $\mid\mid\mid \partial_t \Vbold \mid\mid\mid_{H^{N-1},T'} \leq M_2$
            \item $\Vbold([0,T'] \times \mathbb{R}^3) \subset U.$
        \end{enumerate}
        Then there exists an $\epsilon > 0$ such that 
        \begin{align}
            \Vbold \in C_b^2([0,T + \epsilon] \times \mathbb{R}^3) &\cap \bigcap_{k=0}^{k=2}C^k([0,T + \epsilon],H_{\Vb_c}^{N-k}) 
            \label{E:ExtraRegularity} \\ 
            &\mbox{and} \  \ \Vbold([0,T + \epsilon] \times \mathbb{R}^3) \subset K. \notag
        \end{align}
    		\end{proposition}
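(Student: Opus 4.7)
The plan is to extend $\Vbold$ continuously up to the closed endpoint $t = T$ in a way that preserves the regularity and state-space constraints, and then invoke Corollary~\ref{C:ContinuationPrinciple} on the extended solution to obtain the claimed $C_b^2$ and higher Sobolev regularity on $[0, T + \epsilon]$.

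First, I would exploit hypothesis (2) together with the Sobolev embedding $H^{N-1} \hookrightarrow L^\infty$ (which holds for $N \geq 4$) to conclude that $t \mapsto \Vbold(t, \cdot)$ is uniformly Lipschitz from $[0, T)$ into $H_{\Vb_c}^{N-1}$ and into $L^\infty$. Hence there exists a unique continuous extension $\Vbold(T, \cdot)$ in $H_{\Vb_c}^{N-1}$, and also pointwise. The uniform bound in hypothesis (1) plus weak-$*$ compactness in $H_{\Vb_c}^N$ then places $\Vbold(T, \cdot)$ in $H_{\Vb_c}^N$ with norm at most $M_1$, by lower semicontinuity. Pointwise continuity, together with the inclusion $\Vbold([0, T) \times \mathbb{R}^3) \subset U \Subset \mathrm{Int}(K)$, forces $\Vbold(T, \sforspace) \in \bar{U} \subset \mathrm{Int}(K)$ for every $\sforspace$, and hence $\Vbold([0, T] \times \mathbb{R}^3) \subset K$.

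Next, I would verify that the extended $\Vbold$ satisfies the EN$_{\kappa}^c$ system classically on the closed slab $[0, T] \times \mathbb{R}^3$: the spatial derivatives extend continuously via $H^N \hookrightarrow C^1_b$, and the time derivatives extend either by direct limits of $\partial_t \Vbold(t)$ along $t \nearrow T$ (using the $H^{N-1}$ Lipschitz continuity), or—what amounts to the same thing—by reading them off algebraically from the equations themselves in terms of spatial derivatives, with the wave component $\Phi$ handled by the Klein--Gordon structure of~\eqref{E:ENkappac4}. This yields $\Vbold \in C_b^1([0,T] \times \mathbb{R}^3) \cap L^\infty([0, T], H_{\Vb_c}^N)$ with values in $K$, which are precisely the hypotheses of Corollary~\ref{C:ContinuationPrinciple}. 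Applying that corollary produces the desired $\epsilon > 0$ and the regularity~\eqref{E:ExtraRegularity}; that $\Vbold([0, T + \epsilon] \times \mathbb{R}^3) \subset K$ follows after possibly shrinking $\epsilon$, using continuity of the extension and the openness of $\mathrm{Int}(K)$ together with $\bar U \subset \mathrm{Int}(K)$.

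The main obstacle is the passage from the open interval $[0,T)$ to the closed interval $[0,T]$ without losing Sobolev regularity. Strong convergence of $\Vbold(t, \cdot)$ to a limit in $H_{\Vb_c}^N$ is not available from the hypotheses alone, so one must settle for strong convergence in $H_{\Vb_c}^{N-1}$ (obtained from the Lipschitz bound) and only weak convergence in $H_{\Vb_c}^N$ (via weak-$*$ compactness). This is precisely the reason Corollary~\ref{C:ContinuationPrinciple} is formulated to accept $L^\infty([0,T], H_{\Vb_c}^N)$ regularity rather than $C^0([0,T], H_{\Vb_c}^N)$, and it is what makes the two-stage reduction—first extend to $[0,T]$, then upgrade on $[0, T + \epsilon]$—go through cleanly.
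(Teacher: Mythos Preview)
Your proposal is correct and follows essentially the same route as the paper: extend $\Vbold$ to $t=T$ strongly in $H^{N-1}$ via the Lipschitz bound from hypothesis~(2), upgrade the limit to $H^N$ by weak compactness using hypothesis~(1), verify $C_b^1$ regularity on the closed slab by reading $\partial_t\Vbold$ off the equations, and then invoke Corollary~\ref{C:ContinuationPrinciple}. One small imprecision: when you write ``spatial derivatives extend continuously via $H^N \hookrightarrow C^1_b$,'' you should instead invoke $H^{N-1} \hookrightarrow C^1_b$ (valid since $N-1>5/2$ for $N\geq 4$), because you only have \emph{strong} convergence in $H^{N-1}$, not in $H^N$; the paper handles this via an interpolation step to an intermediate $H^{N'}$ with $5/2<N'<N$, which is equivalent in effect.
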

    		
    		\begin{remark}
    			Hypothesis $(2)$ is redundant; it can be deduced from hypothesis $(1)$ by using the equations
    			to solve for $\partial_t \Vbold$ and then applying \eqref{E:ModifiedSobolevEstimate}.
    		\end{remark}    
        
        \begin{proof}
        We will first show that there
        exists a $\Vbold^* \in H_{\Vb_c}^N$
        such that
        \begin{align}
            \lim_{n \to \infty} \|\Vbold(T_n) - \Vbold^*\|_{H^{N-1}}=0    \label{E:ConvergenceinHNMinusOne}
        \end{align}
        holds for any sequence $\lbrace T_n \rbrace$ of time values
        converging to $T$ from below.

        If $\lbrace T_n \rbrace$ is such a sequence, then hypothesis $(2)$ implies that \\
        $\|\Vbold(T_j) - \Vbold(T_k)\|_{H^{N-1}} \leq M_2 |T_j -
        T_k|.$ By the completeness of $H^{N-1},$ there exists a $\Vbold^* \in H_{\Vb_c}^{N-1}$ such
        that \eqref{E:ConvergenceinHNMinusOne} holds, and it is easy to check that
        $\Vbold^*$ does not depend on the sequence $\lbrace T_n \rbrace.$
        By hypothesis $(1),$ we also have that $\lbrace\Vbold(T_n)\rbrace$ converges weakly in
        $H_{\Vb_c}^N$ to $\Vbold^*$ as $n \to \infty$ and that $\|\Vbold^*\|_{H_{\Vb_c}^N} \leq M_1.$ We now fix a number $N'$ with
        $5/2 < N' < N.$ By Proposition \ref{P:SobInterpolation}, we have that $\lim_{n \to \infty} \|\Vbold(T_n) -
        \Vbold^*\|_{H^{N'}}=0.$ Consequently, if we define $\Vbold(T)\eqdef \Vbold^*,$ it follows 
        that $\Vbold \in C^0([0,T],H_{\Vb_c}^{N'}) \cap L^{\infty}([0,T],H_{\Vb_c}^{N}).$ Using the fact that
        $N' > 5/2,$ together with the embedding of $H^{N'}(\mathbb{R}^3)$ into appropriate H\"{o}lder spaces, it
        can be shown that \\
        $\Vbold \in C^0([0,T],H_{\Vb_c}^{N'})
        \implies \Vbold, \partial \Vbold \in C_b^0([0,T] \times \mathbb{R}^3);$ i.e., we can
        continuously extend $\Vbold, \partial \Vbold$ to the slab $[0,T] \times\mathbb{R}^3.$ 
        
        To conclude that
        $\Vbold \in C_b^1([0,T] \times \mathbb{R}^3),$ we will show that $\partial_t \Vbold$ extends continuously 
        to $[0,T] \times \mathbb{R}^3.$ To this end, we use the EN$_{\kappa}^c$ equations to solve for $\partial_t \Vbold:$ 
        \begin{align} \label{E:PartialtVIsolated}
        	\partial_t \Vbold = \mathfrak{F}(\Vbold,\partial \Vbold),
        \end{align}
        where $\mathfrak{F} \in C^N.$ Since $\Vbold, \partial \Vbold \in C_b^0([0,T] \times \mathbb{R}^3),$ the right-hand
        side of \eqref{E:PartialtVIsolated} has been shown to extend continuously so that it is an element of $C_b^0([0,T] \times 
        \mathbb{R}^3).$ Furthermore, since $\Vbold \in C^1([0,T) \times \mathbb{R}^3)$ by assumption, it follows from the previous
        conclusions and elementary analysis that 
        $\partial_t \Vbold$ exists classically on $[0,T] \times \mathbb{R}^3$ and that 
        $\partial_t \Vbold \in C_b^0([0,T] \times 
        \mathbb{R}^3),$ thus implying that $\Vbold \in C_b^1([0,T] \times \mathbb{R}^3).$ The additional conclusions in 
        \eqref{E:ExtraRegularity} now follow from Corollary \ref{C:ContinuationPrinciple} and continuity.
        \end{proof}
        
        \begin{remark}
        	Proposition \ref{P:ContinuationPrinciple} shows that if the solution $\Vbold$ blows up at time $T,$ then either 
        	$\lim_{T' \uparrow T} \mid\mid\mid \Vbold \mid\mid\mid_{H_{\Vb_c}^N,T'} = \infty, \
        	\lim_{T' \uparrow T} \mid\mid\mid \partial_t \Vbold \mid\mid\mid_{H^{N-1},T'} =\infty,$
        	or $\Vbold(T',\mathbb{R}^3)$ escapes\footnote{We are assuming here
        	that on the set $\lbrace (\Ent,p) \ | \ \Ent>0, \ p>0 \rbrace,$ 
        	the function $\mathfrak{R}_c$ is ``physical'' as described in Section \ref{SS:ApplicationtoENkappac}
        	and is and sufficiently regular.} every compact subset of $\mathcal{O} \times 
        	\mathbb{R}^5$ as $T' \uparrow T,$ where $\mathcal{O}$ is defined in \eqref{E:AdmissibleSubset}.
        \end{remark}

    \begin{remark}
    	Although the main theorems in this article require that $N \geq 4,$ Corollary \ref{C:ContinuationPrinciple} and Proposition 
    	\ref{P:ContinuationPrinciple} are also valid for $N =3,$ except that the conclusion $\Vbold \in C_b^2([0,T + \epsilon] \times 
    	\mathbb{R}^3)$ must be replaced with $\Vbold \in C_b^1([0,T + \epsilon] \times \mathbb{R}^3),$ and the conclusion
    	$\Vbold \in C^2([0,T + \epsilon],H_{\Vb_c}^{N-2})$ does not hold. 
    \end{remark}

    \subsection{The uniform-in-time local existence theorem} 							\label{SS:UniforminTimeExistence} 
        We now state and prove the uniform time of existence theorem.

        \begin{theorem}    \label{T:UniformLocalExistenceENkappac} {\bf (Uniform Time of Existence)}
        Let $\scrVID_{\infty}$ denote initial data \eqref{E:EPkappaData} for
       	the EP$_{\kappa}$ system \eqref{E:EPkappa1} - \eqref{E:QinfinityRelationship}
       	that are subject to the conditions described in Section
       	\ref{S:IVPc}. Let $\VID_c$ denote the corresponding initial data \eqref{E:ENkappacVID} for
       	the EN$_{\kappa}^c$ system \eqref{E:ENkappac1} - \eqref{E:PDefcII}
       	constructed from $\scrVID_{\infty}$ as described in Section \ref{S:IVPc}, and 
       	let ${^{(0)}\WID_c}$ denote the smoothing \eqref{E:IterateInitialValuec2} of the first $5$ components of
       	$\VID_c$ as described in Section \ref{S:SmoothingtheData}. Assume that the 
       	$c-$indexed equation of state satisfies the hypotheses \eqref{E:EOScHypothesis1} and \eqref{E:EOScHypothesis2}
       	and is ``physical'' as described in sections \ref{SS:ENkappacDerivation} and \ref{SS:ApplicationtoENkappac},
        and let $K$ be the fixed compact subset of $\mathbb{R}^{10}$ defined in \eqref{E:Kdef}. Then there exist $c_0 > 0$ and $T > 0,$
        with $T$ not depending on $c,$ such that for $c \geq c_0,$ $\VID_c$ launches a unique classical solution $\Vbold$ to 
        \eqref{E:ENkappac1} -  \eqref{E:PDefcII}
        that exists on the slab $[0,T] \times \mathbb{R}^3$ and that has the properties
        $\Vbold(0,\sforspace)=\VID_c(\sforspace)$ and $\Vbold([0,T] \times \mathbb{R}^3) \subset K.$ 
        The solution is of the form $\Vbold =(\Ent,P,v^1,v^2,v^3,\Phi,\partial_t \Phi, \partial_1
        \Phi,\partial_2 \Phi,\partial_3 \Phi)$ and has the regularity properties \\
        $\Vbold \in C_b^2([0,T] \times \mathbb{R}^3)\cap 
        \bigcap_{k=0}^{k=2}C^k([0,T],H_{\Vb_c}^{N-k})$ and \\
        $\Phi \in C_b^3([0,T] \times \mathbb{R}^3) \cap \bigcap_{k=0}^{k=3}C^k([0,T],H_{\Phibar_c}^{N+1-k}),$
        where the constants $\Vb_c$ and $\Phibar_c$ are defined by \eqref{E:VcConstant} and \eqref{E:PhiBarc} respectively. 
        Furthermore, with $p \eqdef P e^{-4 \Phi/c^2},$ there exist constants $\Lambda_1,\Lambda_2,L_1,L_2,L_3,L_4 >0$ such that

        \begin{subequations}
            \begin{align}
            \mid\mid\mid \Wbold - \WIDSmoothed_c \mid\mid\mid_{H^N,T} \leqc \Lambda_1
                    \label{E:UniformTime1a}\\
            \mid\mid\mid \Phi - \PhiID_c \mid\mid\mid_{H^{N+1},T} \leqc \Lambda_2 \label{E:UniformTime1b}\\
            \mid\mid\mid \partial_t \Wbold \mid\mid\mid_{H^{N-1},T} \leqc L_1         \label{E:UniformTime1c}\\
            \mid\mid\mid \partial_t \Phi \mid\mid\mid_{H^{N},T} \leqc L_2 \label{E:UniformTime1d}\\
            \mid\mid\mid \partial_t^2 \Ent \mid\mid\mid_{H^{N-2},T}, \ \mid\mid\mid \partial_t^2 p \mid\mid\mid_{H^{N-2},T}
                \leqc L_3  \label{E:UniformTime1e}\\
            c^{-1} \mid\mid\mid \partial_t^2 \Phi \mid\mid\mid_{H^{N-1},T} \leqc L_4. \label{E:UniformTime1f}
            \end{align}
        \end{subequations}
    \end{theorem}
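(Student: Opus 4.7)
The plan is to run a continuous induction (bootstrap) argument. By Theorem \ref{T:LocalExistencec}, for each large $c$ we already have a classical solution $\Vbold$ on some (possibly $c$-dependent) interval $[0, T_c]$ with the stated regularity, satisfying $\Vbold([0,T_c] \times \mathbb{R}^3) \subset K$. The goal is to show that the maximal such $T_c$ is bounded below by a $c$-independent constant $T$, and that the bounds \eqref{E:UniformTime1a}--\eqref{E:UniformTime1f} hold on $[0,T]$ for all $c \geq c_0$. To this end, I would fix enlarged constants (say $2\Lambda_1, 2\Lambda_2, 2L_1, \ldots, 2L_4$) and define $T^\ast$ as the supremum of $T' \in [0, T_c]$ on which the analogues of \eqref{E:UniformTime1a}--\eqref{E:UniformTime1f} hold with these enlarged constants. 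By continuity and smallness of the data (the smoothing in Section \ref{S:SmoothingtheData} forces $\Wbold(0) - \WIDSmoothed_c$ small in $H^N$), $T^\ast > 0$. The aim is to show that if one chooses $T$ sufficiently small (independently of $c$), then on $[0, \min(T, T^\ast)]$ the bounds can actually be improved back to the original unenlarged constants, which by Proposition \ref{P:ContinuationPrinciple} forces $T^\ast \geq T$.

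The improvement is carried out by two coupled families of a priori energy estimates. For the fluid variables $\Wbold$, I would apply the spatial derivative operator $\partial_{\vec{\alpha}}$ ($|\vec{\alpha}| \leq N$) to \eqref{E:ENkappac1}--\eqref{E:ENkappac3}, viewing the resulting system as an EOV$_{\kappa}^c$ of the form \eqref{E:EOVc1}--\eqref{E:EOVc3} for the variation $\Wdot \eqdef \partial_{\vec{\alpha}} \Wbold$ with BGS $\Vboldt = \Vbold$ and inhomogeneities built from commutators. Applying the divergence theorem to the energy current $\Jscrdotc$ of \eqref{E:RescaledEnergyCurrent} over the slab $[0,t] \times \mathbb{R}^3$ and using the divergence identity \eqref{E:RescaledEnergyCurrentDivergence}, together with the uniform-in-$c$ positivity \eqref{E:UniformPositivitywithc} provided by Lemma \ref{L:UniformPositivitywithc}, converts $\int_{\mathbb{R}^3} \Jscrdotc^0 \, d^3\sforspace$ into a quantity comparable to $\|\partial_{\vec{\alpha}} \Wbold\|_{L^2}^2$. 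The bulk divergence terms and the commutator inhomogeneities are estimated in $L^2$ using the Sobolev--Moser machinery of Section \ref{S:cDependence}, specifically lemmas \ref{L:OrderctoNegativeTwoEstimates}, \ref{L:MatricesLargecNormEstimates}, and \ref{L:InhomogeneousTermsLargecSobolevEstimates}, so that every coefficient depending on $\scrV$ is bounded uniformly in $c$ on the induction region. This yields a Gronwall-type inequality whose constants do not depend on $c$, giving \eqref{E:UniformTime1a} for short time.

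For the potential $\Phi$, I would use the Klein--Gordon equation \eqref{E:ENkappac4} treated elliptically. Writing $(\Delta - \kappa^2)\Phi = 4\pi G(\Rc - 3c^{-2} P) + c^{-2} \partial_t^2 \Phi$ and using the estimate $\|f\|_{H^2} \leq C\|(\Delta - \kappa^2)f\|_{L^2}$ (valid since $\kappa^2 > 0$; this is precisely where the hypothesis $\kappa^2 > 0$ is crucial, cf.\ Remark \ref{R:KernelRemark}), I can bootstrap $H^{N+1}$ control on $\Phi - \PhiID_c$ from $H^{N-1}$ control on the right-hand side plus $H^{N-1}$ control on $c^{-2}\partial_t^2 \Phi$. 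The latter is the subtle point: one only gets a natural bound on $c^{-1}\partial_t^2 \Phi$ from the hyperbolic equation, so I would first prove \eqref{E:UniformTime1f} (the $c^{-1}$-weighted bound) by multiplying \eqref{E:ENkappac4} by suitable derivatives and integrating, then feed the resulting bound back to upgrade the bound on $\partial_t \Phi$ (giving \eqref{E:UniformTime1d}) and on $\Phi$ itself (giving \eqref{E:UniformTime1b}). The time-derivative bounds \eqref{E:UniformTime1c} and \eqref{E:UniformTime1e} are then obtained by solving for $\partial_t \Wbold$ and $\partial_t^2(\Ent, p)$ algebraically from the evolution equations and applying \eqref{E:ModifiedSobolevEstimate}, using \eqref{E:UniformTime1a} and \eqref{E:UniformTime1b} to control the right-hand sides.

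The main obstacle is that the fluid energy estimate and the Klein--Gordon estimate are coupled: the matrices $\Ac^{\nu}$ depend on $\Phi$, so the fluid estimate requires $L^\infty$ and Sobolev control on $\Phi$ and $\partial \Phi$ via Sobolev embedding, and conversely the right-hand side of the Klein--Gordon equation depends on $\Rc$, which involves the fluid variables and $\Phi/c^2$. One must therefore close the estimates simultaneously, keeping meticulous track of powers of $c$ using the notation $\leqc$ and the $\mathscr{O}^j$ formalism of Section \ref{S:cDependence}, and exploiting the fact that all dangerous factors like $(\gamma_c)^2 - 1$, $e^{4\Phi/c^2} - 1$, and $\Rc - \Rinfinity$ carry a gain of $c^{-2}$ (Lemma \ref{L:OrderctoNegativeTwoEstimates}). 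After the improved bounds are established, Proposition \ref{P:ContinuationPrinciple} (with the hypotheses met by the induction bounds and the fact that $\Vbold$ remains in a compact set $U \Subset \mathrm{Int}(K)$) extends the solution past $T^\ast$, contradicting maximality unless $T^\ast \geq T$ for some $c$-independent $T$.
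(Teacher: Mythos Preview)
Your proposal is correct and matches the paper's approach: continuous induction using the energy currents $\Jscrdotc$ for the fluid variables and Klein--Gordon estimates (both the hyperbolic energy estimates of Proposition \ref{P:Klein-GordonPhiBounds} and the elliptic estimate of Lemma \ref{L:AlterateHNnormestimate}) for $\Phi$, closed via Proposition \ref{P:ContinuationPrinciple}. The only notable technical differences are that the paper takes the variation to be $\Wdot = \Wbold - \WIDSmoothed_c$ rather than $\partial_{\vec\alpha}\Wbold$ directly (this is precisely why the smoothing of Section \ref{S:SmoothingtheData} is needed, so that the inhomogeneities \eqref{E:IterateInhomogeneouscf}--\eqref{E:IterateInhomogeneouschj} involving $\partial\WIDSmoothed_c$ lie in $H^N$), and that the paper's bootstrap carries an auxiliary weaker bound (\ref{E:UniformTime1d}$'$) on $c^{-1}\partial_t\Phi$ among the induction hypotheses before upgrading it to \eqref{E:UniformTime1d}.
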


        \subsubsection{Outline of the structure of the proof of Theorem \ref{T:UniformLocalExistenceENkappac}}
            We prove Theorem \ref{T:UniformLocalExistenceENkappac} via the method of continuous induction (``bootstrapping''). After
            defining the constants $\Lambda_1, \Lambda_2, L_2',$ and $L_4,$ we make
            the assumptions \eqref{E:UniformTime1aAgain} - \eqref{E:UniformTime1fPrimeAgain}. These
            assumptions hold at $\tau =0$ and therefore, by Theorem \ref{T:LocalExistencec}, there exists a maximal
            interval $\tau \in [0,T_c^{max})$ on which the solution exists and on which the assumptions hold. Based on these estimates, we 
            use a collection of technical lemmas derived from energy estimates to conclude that the bounds 
            \eqref{E:UniformtimeApriori1} - \eqref{E:UniformtimeApriori7} hold for \\
            $\tau \in [0,T_c^{max}).$ It is important
            that the constants appearing on the right-hand sides of \eqref{E:UniformtimeApriori1} - \eqref{E:UniformtimeApriori7}
            do not depend on $c,$ if $c$ is large enough. We can therefore apply Proposition 
            \ref{P:ContinuationPrinciple} to conclude that for all large $c,$ the solution can be extended to a
            uniform interval $[0,T].$ The closing of the induction argument is largely due to the fact that the source
            term for the Klein-Gordon equation satisfied by $\Phi,$ which is the right-hand side of \eqref{E:ENkappac4},
            ``depends on $\Phi$ only through $c^{-2} \Phi.$''
            
        \subsubsection{Proof of Theorem \ref{T:UniformLocalExistenceENkappac}} \label{SSS:ProofofUniformTime} 
				To begin, we remark that for the remainder of this article, we indicate dependence of the running constants on  
				$\|\WID_c\|_{H_{\Wb_c}^{N}},$ $\|\WIDSmoothed_c\|_{H_{\Wb_c}^{N+1}},$ $\|\PhiID_c\|_{H_{\Phibar_c}^{N+1}},$ and 
				$\|\PsiID_0\|_{H^N}$ by writing $C(id).$ By 
				Remark \ref{R:SmoothedDataLargecNPlusOneEstimate}, any constant $C(id)$ can be chosen
    		to be independent of all large $c.$ 	
        
        We now introduce some notation that will be used throughout the proof, and also in the
        following section, where we have placed the proofs of the technical lemmas. Let $\Vbold$ denote the local in time solution
        to the EN$_{\kappa}^c$ system \eqref{E:ENkappac1} -
        \eqref{E:PDefcII} launched by the initial data $\VID_c$
        as furnished by Theorem \ref{T:LocalExistencec}. With $\Wbold$ denoting the first $5$ components of
        $\Vbold,$ we suggestively define
        \begin{align}
            \Wdot(t,\sforspace) &\eqdef \Wbold(t,\sforspace) - \WIDSmoothed_c(\sforspace) \label{E:DotWcUniformTime}\\
            \Phidot &\eqdef \Phi - \PhiID_c \label{E:PhidotcUniformTime},
        \end{align}
        where $\PhiID_c$ is defined in \eqref{E:PhiIDcDef} and
        $\WIDSmoothed_c(\sforspace)$ is defined in \eqref{E:IterateInitialValuec2} with the help of
        \eqref{E:Zdef}. We remark that this choice of $\WIDSmoothed_c(\sforspace)$ is explained in more detail below.
        
        It follows from the fact that $\Wbold$ is a solution to \eqref{E:ENkappac1} - \eqref{E:ENkappac3}
        that $\Wdot$ is a solution to the EOV$_{\kappa}^c$ \eqref{E:EOVc1} - \eqref{E:EOVc3}
        defined by the BGS $\Vbold$ with initial data \\
        $\Wdot(0,\sforspace) = \WID_c(\sforspace) -
        \WIDSmoothed_c(\sforspace).$ The inhomogeneous terms in the
        EOV$_{\kappa}^c$ satisfied by $\Wdot$ are given by $\mathbf{b}=(f,g,\cdots,h^{(3)}),$ where for $j=1,2,3$
            \begin{align}
                f &= - \vsubc^k \partial_k  [{^{(0)}\mathring{\Ent}}] \label{E:IterateInhomogeneouscf}\\
                g &=  (4P-3\Qc)[\partial_t (c^{-2}\Phi) + \vsubc^k \partial_k (c^{-2}\Phi)] - \vsubc^k
                    \partial_k  [e^{4\PhiID_c/c^2}\cdot\pIDSmoothed]
                    \label{E:IterateInhomogeneouscg}\\
                & \qquad - \Qc \partial_k  [{^{(0)}\mathring{v}^k}] - c^{-2}(\gammac)^2\Qc\vsubc_k \vsubc^a \partial_a 
                    [{^{(0)}\mathring{v}^k}] \notag \\
                h^{(j)} & =  \big(3c^{-2}P - \Rc\big)\big(\partial_j \Phi +
                    (\gammac)^{-2}\vsubc^j[\partial_t
                    (c^{-2}\Phi) + \vsubc^k \partial_k (c^{-2} \Phi)]\big) \label{E:IterateInhomogeneouschj}\\
               	& \qquad -(\gammac)^2(\Rc + c^{-2}P)\big(\vsubc^k \partial_k
                    [{^{(0)}\mathring{v}^j}] + c^{-2}(\gammac)^2 \vsubc^j \vsubc_k \vsubc^a \partial_a [{^{(0)}\mathring{v}^k}]\big)  
                    \notag \\
               	& \qquad - \partial_j [e^{4\PhiID_c/c^2}\cdot\pIDSmoothed]
                    - c^{-2}(\gammac)^2 \vsubc^j \vsubc^k \partial_k [e^{4\PhiID_c/c^2}\cdot\pIDSmoothed].    \notag
            \end{align}

        In order to show that the hypotheses of Proposition \ref{P:ContinuationPrinciple} are satisfied, we
        will need to estimate $\partial_{\vec{\alpha}}\Wdot$ in $L^2.$ Therefore, we study
        the equation that $\partial_{\vec{\alpha}}\Wdot$ satisfies: for $0 \leq |{\vec{\alpha}}| \leq N,$
        we differentiate the EOV$_{\kappa}^c$ defined by the BGS $\Vbold$
        with inhomogeneous terms $\mathbf{b}$
        to which $\Wdot$ is a solution, obtaining that $\partial_{\vec{\alpha}} \Wdot$
        satisfies
        \begin{align} \label{E:EOVcUniformTime}
            \Ac^{\mu}(\scrW,\Phi) \partial_{\mu} \big(\partial_{\vec{\alpha}}\Wdot \big) &=
            \mathbf{b}_{\vec{\alpha}},
        \end{align}
        where (suppressing the dependence of the $\Ac^{\nu}(\cdot)$ on $\scrW$ and $\Phi$)
        \begin{align}
            \mathbf{b}_{\vec{\alpha}} \eqdef \Ac^0 \partial_{\vec{\alpha}} \left((\Ac^0)^{-1}\mathbf{b} \right) +
            \mathbf{k}_{\vec{\alpha}} \label{E:balphaDefc}
        \end{align}
        and 
        \begin{align}
            \mathbf{k}_{\vec{\alpha}} \eqdef \Ac^0 \big[(\Ac^0)^{-1}\Ac^k \partial_k (\partial_{\vec{\alpha}} \Wdot)
            -\partial_{\vec{\alpha}} \big((\Ac^0)^{-1} \Ac^k \partial_k \Wdot \big)\big]  \label{E:kalphaDefc}.
        \end{align}
        Thus, each $\partial_{\vec{\alpha}}\Wdot$ is a solution the EOV$_{\kappa}^c$
        defined by the \emph{same} BGS $\Vbold$ with inhomogeneous terms
        $\mathbf{b}_{\vec{\alpha}}.$ Furthermore, $\Phidot$ is a solution to the EOV$_{\kappa}^c$
        equation \eqref{E:EOVcKlein-Gordon} with $\Phidot(0,\sforspace) = 0,$ and the inhomogeneous
        term $l$ on the right-hand side of \eqref{E:EOVcKlein-Gordon} is 
        \begin{align} \label{E:lUniformTimeDef}
            l \eqdef (\kappa^2 - \Delta) \PhiID_c + 4 \pi G (\Rc - 3c^{-2}P).
        \end{align}
        We will return to these facts in Section \ref{SS:TechnicalLemmas}, where we will use them in the proofs of some technical
        lemmas.

        As an intermediate step in our proof of \eqref{E:UniformTime1a} - \eqref{E:UniformTime1f}, 
        we will prove the following weaker version of \eqref{E:UniformTime1d}:
        \begin{align}
            c^{-1} \mid\mid\mid \partial_t \Phi \mid\mid\mid_{H^{N},T} \leqc L_2'
            \tag{\ref{E:UniformTime1d}'}.
        \end{align}
        We now define the constants $\Lambda_1, \Lambda_2,$
        $L_2',$ and $L_4.$ We will then use a variety of energy estimates to
        define $L_1, L_2,$ and $L_3$ in terms of these four constants and to show that
        \eqref{E:UniformTime1a} - \eqref{E:UniformTime1f} are
        satisfied if $T$ is small enough. First, to motivate our definitions of $L_2', L_4,$ and $\Lambda_2,$ see inequalities 
        \eqref{E:Klein-GordonPartialtPhiHNNormBound2}
        and \eqref{E:Klein-GordonPartialSquaredtPhiHNMinusOneNormBound2} of
        Proposition \ref{P:Klein-GordonPhiBounds} and inequality \eqref{E:Klein-GordonPhiNocBound}
        of Corollary \ref{C:Klein-GordonPhiBounds}, and
        let $C_0(\kappa)$ denote the constant that appears throughout the lemma and its corollary.
        By a non-optimal application of Lemma \ref{L:Klein-GordonIntitialEnergyEstimates}, we have that
    \begin{align}
         C_0(\kappa)\big(c^{-1}\|\PsiID_0\|_{H^{N}} + \|l(0)\|_{H^{N-1}}\big)
            \leqc 1/2 \eqdef L_2'/2\\
        C_0(\kappa)\big(c\|l(0)\|_{H^{N-1}} + \|(\Delta - \kappa^2)\PsiID_0 - \partial_t l(0)\|_{H^{N-2}} \big) \leqc 1 \eqdef L_4.
    \end{align}
    Note also the trivial (and not optimal) estimate \\ 
    $(C_0(\kappa))^2c^{-2}\|\PsiID_0\|_{H^N}^2 \leqc 1/4 
    \eqdef (\Lambda_2)^2/4.$ With these considerations in mind, we have thus defined
    \begin{align}
        \Lambda_2 &\eqdef  1 \label{E:Lambda2Def}\\
        L_2' &\eqdef 1 \label{E:L2PrimeDef} \\
        L_4 &\eqdef 1. \label{E:Lambda4Def}
    \end{align}

   To define $\Lambda_1,$ we first define $Z=Z(id;\Lambda_2)$ to be the constant appearing in \eqref{E:Zdef}. Using this value of $Z,$
   which we emphasize depends only on $\Lambda_2$ and the initial data $\scrWID_{\infty}$ for the EP$_{\kappa}$ system,
   we then choose $\Lambda_1$ so that \eqref{E:WellDefinedc} and \eqref{E:IntialValueEstimatec} hold. Note that it is exactly
   at this step in the proof that the smoothing $\WIDSmoothed_c,$ which is defined in \eqref{E:IterateInitialValuec2}, of the initial 
   data $\WID_c,$ which are the first $5$ components of \eqref{E:ENkappacVID}, is fixed.

   We find it illuminating to display the dependence of other constants that will appear below on $\Lambda_1,\Lambda_2,L_2',L_4.$ 
   Therefore, we continue to refer to \eqref{E:Lambda2Def} - \eqref{E:Lambda4Def} by the symbols $\Lambda_2, L_2',$ and $L_4$ 
   respectively, even though they are equal to $1.$

    We now carry out the continuous induction in detail. Let $T^{max}_c$ be the maximal time for which the solution
    $\Vbold$ exists and satisfies the estimates \eqref{E:UniformTime1a},
    \eqref{E:UniformTime1b}, (\ref{E:UniformTime1d}'), and
    \eqref{E:UniformTime1f}; i.e.,
    \begin{align} \label{E:TmaxcDef}
        T^{max}_c \eqdef \sup \Big\lbrace T \ |\  \Vbold & \in \bigcap_{k=0}^{k=2}C^k([0,T],H_{\Vb_c}^{N-k}), \\ 
            & \mbox{and} \ \eqref{E:UniformTime1a}, \eqref{E:UniformTime1b}, \    
            \mbox{(\ref{E:UniformTime1d}')}, \mbox{and} \ \eqref{E:UniformTime1f} \ \mbox{hold} \Big\rbrace.  \notag 
    \end{align}
    Note that the set we are taking the $\sup$ of necessarily contains positive values of $T$ since for all large $c,$ 
    the relevant bounds are satisfied at $T=0,$ and therefore by Theorem \ref{T:LocalExistencec}, also for short times.
    Lemmas \ref{L:UniformtimeApriori1}, \ref{L:UniformtimeApriori2},
    \ref{L:PartialtandSquaredEntandPSobolevBounds}, \ref{L:UniformtimeAprioriPhidot},
    \ref{L:UniformtimeAprioriPartialtPhidot}, and inequalities
    \eqref{E:PartialtSquaredPhiImprovedBound} and \eqref{E:PartialtPhiImprovedBound} of Lemma
    \ref{L:UniformTimeImprovedBound} supply the following estimates which are valid for $0 \leq \tau < T^{max}_c:$
    \begin{align}
        &\mid\mid\mid \Wdot \mid\mid\mid_{H^N,\tau} \leqc \big[\Lambda_1/2 + \tau \cdot C(\Lambda_1,\Lambda_2,L_1,L_2')\big] \cdot
                \mbox{\textnormal{exp}} \big(\tau \cdot C(\Lambda_1,\Lambda_2,L_1,L_2') \big) \label{E:UniformtimeApriori1} \\
        &\mid\mid\mid\partial_t \Wbold \mid\mid\mid_{H^N,\tau} \leqc L_1(\Lambda_1,\Lambda_2,L_2')\label{E:UniformtimeApriori2}\\
        &\mid\mid\mid \partial_t^2 \Ent \mid\mid\mid_{H^{N-2},\tau}, \mid\mid\mid \partial_t^2 p \mid\mid\mid_{H^{N-2},\tau}
            \leqc L_3(\Lambda_1,\Lambda_2,L_1,L_2',L_4)\label{E:UniformtimeApriori3} \\
        &\mid\mid\mid \Phidot \mid\mid\mid^2_{H^{N+1},\tau} \leqc \frac{(\Lambda_2)^2}{4} +\tau \cdot C(\Lambda_1,\Lambda_2,L_2) + 
        \tau^2 \cdot C(\Lambda_1,\Lambda_2,L_1,L_2',L_3,L_4) \label{E:UniformtimeApriori4} \\
        &c^{-1} \mid\mid\mid \partial_t \Phi \mid\mid\mid_{H^{N},\tau} \leqc L_2'/2 + \tau \cdot 
        C(\Lambda_1,\Lambda_2,L_1,L_2')\label{E:UniformtimeApriori5}  \\
        &c^{-1} \mid\mid\mid \partial_t^2 \Phi \mid\mid\mid_{H^{N-1},\tau} \leqc L_4/2 + \tau \cdot
        C(\Lambda_1,\Lambda_2,L_1,L_2',L_3,L_4) \label{E:UniformtimeApriori6} \\
        & \mid\mid\mid \partial_t \Phi \mid\mid\mid_{H^{N},\tau} \leqc
            L_2(\Lambda_1,\Lambda_2,L_1,L_2')/2 + \tau \cdot C(\Lambda_1,\Lambda_2,L_1,L_2',L_3,L_4)
            \label{E:UniformtimeApriori7}.
    \end{align}

    We apply the following sequence of reasoning to interpret the above inequalities:
    first $L_1$ in \eqref{E:UniformtimeApriori2} is determined through the known constants
    $\Lambda_1,\Lambda_2,$ and $L_2'.$ Then $L_3$ in \eqref{E:UniformtimeApriori3} is
    determined through the known constants $\Lambda_1,\Lambda_2,L_1,L_2',$ and
    $L_4.$ Then $L_2$ in \eqref{E:UniformtimeApriori7} is determined through $\Lambda_1,\Lambda_2,L_1,$ and $L_2'.$ 
    Finally, the remaining constants $C(\cdots)$ in
    \eqref{E:UniformtimeApriori1} - \eqref{E:UniformtimeApriori6} are all
    determined through $\Lambda_1,\Lambda_2,L_1,L_2',L_3,L_4.$
    
    By Sobolev embedding and \eqref{E:PhibarcConvergence}, there exists a cube $[-a,a]^5$ (depending on the initial data, $\Lambda_1,$
    and $L_2$) such that for all large $c,$ the assumptions $\mid\mid \Phidot \mid\mid_{H^{N}} \leq \Lambda_1$ and  
    $\mid\mid \partial_t \Phi \mid\mid_{H^{N}} \leq L_2$ together imply that
    \begin{align} 
    	\Big(\Phidot,\partial_1 \Phidot,\partial_2\Phidot,\partial_3 \Phidot,\partial_t \Phi\Big)([0,T] \times \mathbb{R}^3) 	
    	\subset [-a,a]^5.
  	\end{align}
    \noindent Motivated by these considerations, we define both for use now and use later in the article the following compact sets:
    \begin{align}	
    	K \eqdef \bar{\mathcal{O}}_2 \times [-a,a]^5 \label{E:Kdef} \\
    	\mathfrak{K} \eqdef \bar{\mathfrak{O}}_2 \times [-a,a]^5. \label{E:frakKdef}
    \end{align}
    \noindent Here, $\mathcal{O}_2$ and $\mathfrak{O}_2$ are the sets defined in Section \ref{SS:AdmissibleStateSpacec}.

    We now choose $T$ so that when $0 \leq \tau \leq T,$ it
    algebraically follows that the right-hand sides of \eqref{E:UniformtimeApriori1} 
    and \eqref{E:UniformtimeApriori4} - \eqref{E:UniformtimeApriori7} are \emph{strictly} less than 
    $\Lambda_1, (\Lambda_2)^2, L_2', L_4,$ and $L_2$ respectively. Note that $T$ may be
    chosen independently of (all large) $c.$ We now show that $T_c^{max} \leq T$ is impossible. 
    
    Assume that $T_c^{max} \leq T.$ Then observe that the right-hand sides of \eqref{E:UniformtimeApriori1}
    and \eqref{E:UniformtimeApriori4} - \eqref{E:UniformtimeApriori7} are \emph{strictly} less than 
    $\Lambda_1, (\Lambda_2)^2, L_2', L_4,$ and $L_2$ respectively when $\tau = T_c^{max}.$
    Therefore, by the construction of the set $K$ described above,
    by \eqref{E:WellDefinedc}, and by Sobolev embedding, we conclude that for all large $c,$ $\Vbold([0,T^{max}_c) \times
    \mathbb{R}^3)$ is contained in the \emph{interior} of $K.$ Consequently, we may apply
    Proposition \ref{P:ContinuationPrinciple} to extend the solution in time beyond $T_c^{max},$ thus contradicting
    the definition of $T_c^{max}.$ Note that this argument also shows that 
    $\Vbold([0,T] \times \mathbb{R}^3) \subset K.$ This completes the proof of Theorem \ref{T:UniformLocalExistenceENkappac}. 
    \hspace{5.3in} $\square$

    \subsection{The technical lemmas} \label{SS:TechnicalLemmas} 
     We now state and prove the technical lemmas quoted in the proof
    of Theorem \ref{T:UniformLocalExistenceENkappac}. We will require some auxiliary lemmas
    along the way. Throughout this section, we assume the hypotheses of Theorem
    \ref{T:UniformLocalExistenceENkappac} and we use the notation from Section \ref{SSS:ProofofUniformTime}; i.e., $\Vbold$ denotes
    the solution, $\Wbold$ denotes its first 5 components, the relationship between
    $\Wbold$ and $\scrW$ is given by \eqref{E:Wboldarray} and \eqref{E:scrWboldarray}, $\Wdot$ and $\Phidot$ are 
    defined in \eqref{E:DotWcUniformTime} and \eqref{E:PhidotcUniformTime}
    respectively, $l$ is defined in \eqref{E:lUniformTimeDef}, and
    so forth. All of the estimates in this section hold on the time interval $\tau \in [0,T_c^{max}),$ where $T_c^{max}$ is defined 
    in \eqref{E:TmaxcDef}.

    \subsubsection{The induction hypotheses} \label{SS:InductionHypotheses} 
   	By the definition of $T_c^{max},$ we have the 
    following bounds, where $\Lambda_2, L_2',$ and $L_4$ are defined in 	
    \eqref{E:Lambda2Def} - \eqref{E:Lambda4Def} respectively, and we will soon elaborate on our choice $\Lambda_1$:
    \begin{align}
        \mid\mid\mid \Wbold - \WIDSmoothed_c \mid\mid\mid_{H^N,\tau} \leqc \Lambda_1 \label{E:UniformTime1aAgain}\\
        \mid\mid\mid \Phi - \PhiID_c \mid\mid\mid_{H^{N+1},\tau} \leqc \Lambda_2 \label{E:UniformTime1bAgain}\\
        c^{-1}\mid\mid\mid \partial_t \Phi \mid\mid\mid_{H^{N},\tau} \leqc
        L_2' \label{E:UniformTime1dPrimeAgain} \\
        c^{-1} \mid\mid\mid \partial_t^2 \Phi \mid\mid\mid_{H^{N-2},\tau} \leqc L_4
        \label{E:UniformTime1fPrimeAgain}.
  	\end{align}
  	We note the following easy consequence of \eqref{E:PhiIDcDef} and \eqref{E:UniformTime1bAgain}:  
    \begin{align}       \tag{\ref{E:UniformTime1bAgain}'}
    		\mid\mid\mid \Phi & - \Phibar_c \mid\mid\mid_{H^N,\tau}  \\ 
    			& \leq \mid\mid\mid \Phi - \PhiID_c \mid\mid\mid_{H^N,\tau}
    			+ \mid\mid\mid \PhiID_c - \Phibar_c \mid\mid\mid_{H^N,\tau} \leqc \Lambda_2 + C(id) \eqdef 
    			C(id;\Lambda_2). \notag
    \end{align}
    It then follows from \eqref{E:PhibarcConvergence}, (\ref{E:UniformTime1bAgain}'), and Sobolev embedding that
    \begin{align}                  \label{E:Zdef}
    	\mid\mid\mid \Phi \mid\mid\mid_{L^{\infty},\tau} \leqc Z(id;\Lambda_2). 
    \end{align}
    
    Let us recall how $\Lambda_1$ was chosen: using the value of $Z$ in \eqref{E:Zdef}, which depends only on the data 
    $\scrWID_{\infty}$ for the EP$_{\kappa}$ system and the known constant $\Lambda_2,$ we have chosen a constant $\Lambda_1 > 0$ 
    such that \eqref{E:WellDefinedc} and 
    \eqref{E:IntialValueEstimatec} hold. As discussed in sections \ref{S:SmoothingtheData} and \ref{SSS:ProofofUniformTime}, such a 
    choice of $\Lambda_1$ also involves fixing the smoothing ${^{(0)}\scrWID}$ of $\scrWID_{\infty},$
    which then defines ${^{(0)}\WID_c}$ via equation \eqref{E:IterateInitialValuec2}. We emphasize that it is this choice of 
    ${^{(0)}\WID_c}$ and $\Lambda_1$ that appear in \eqref{E:UniformtimeApriori1}, \eqref{E:UniformTime1a}, and  
    \eqref{E:UniformTime1aAgain}.
    
    By \eqref{E:IntialValueEstimatec} and \eqref{E:UniformTime1aAgain}, we also have that
    \begin{align} 				\tag{\ref{E:UniformTime1aAgain}'}
    	\mid\mid\mid \Wbold & - \Wb_c \mid\mid\mid_{H^N,\tau} \\
    		&\leq \ \mid\mid\mid \Wbold - \WIDSmoothed_c \mid\mid\mid_{H^N,\tau} \ + \
    			\mid\mid\mid \WIDSmoothed_c - \WID_c \mid\mid\mid_{H^N,\tau} 
    			\ + \ \mid\mid\mid \WID_c- \bar{\Wbold}_c \mid\mid\mid_{H^N,\tau} \notag \\
    			&\leqc \ \Lambda_1 + C(id;\Lambda_1) \ + \ C(id) \ \eqdef \ C(id;\Lambda_1). \notag 
    \end{align}
    Furthermore, by Lemma \ref{L:HjbySubtractingConstant}, \eqref{E:ScrWInTermsofWEstimate} with $m=0,$ and
    (\ref{E:UniformTime1aAgain}'), we have that
    \begin{align} \label{E:scrWHNBound}
    	\mid\mid\mid \scrW - \scrWb_c \mid\mid\mid_{H^N,\tau} \leqc C(id;\Lambda_1,\Lambda_2).
    \end{align}

    We also observe that \eqref{E:WellDefinedc}, \eqref{E:UniformTime1aAgain}, and 
    the definition of $\mathfrak{O}_2$ given in Section \ref{SS:AdmissibleStateSpacec} together imply that
  	for all large $c,$ we have that $\Wbold([0,T_c^{max}) \times \mathbb{R}^3) \subset \bar{\mathcal{O}}_2$
  	and $\scrW([0,T_c^{max}) \times \mathbb{R}^3) \subset \bar{\mathfrak{O}}_2.$

    In our discussion below, we will refer to \eqref{E:UniformTime1aAgain} - \eqref{E:scrWHNBound} 
    (\ref{E:UniformTime1aAgain}'), and (\ref{E:UniformTime1bAgain}') as the \emph{induction hypotheses}. Sobolev embedding and
    the induction hypotheses, which for all large $c$ are satisfied at $\tau=0,$ together imply that 
    $\Wbold,$ $\partial \Wbold,$ $\scrW,$ $\partial \scrW,$ $\Phi,$ $\partial \Phi,$ $c^{-1} \partial_t \Phi,$ and
    $c^{-1} \partial_t^2 \Phi$ are each contained in a compact convex set (depending only on the initial data, $\Lambda_1,$ $\Lambda_2,$
    $L_2',$ and $L_4$) on $[0,T_c^{max}) \times \mathbb{R}^3.$ As stated in Remark \ref{R:ArgumentsOmitted}, we will make use of this fact 
    without explicitly mentioning it every time.

    \subsubsection{Proofs of the technical lemmas}
    \begin{lemma} \label{L:lFirstEstimates}
    	Consider the quantity $l$ defined in \eqref{E:lUniformTimeDef}. Then for $m=0,1,2,$ we have that
        \begin{align}
            (4 \pi G)^{-1} l &= \mathfrak{R}_{\infty}(\Ent,p) - \mathfrak{R}_{\infty}(\EntID,\pID) + \mathfrak{F}_c
                \label{E:lInitialEstimate} \\
            (4 \pi G)^{-1}\partial_t l &= \partial_t \big(\mathfrak{R}_{\infty}(\Ent,p) \big)
                + \mathfrak{G}_c \label{E:PartialtlInitialEstimateA} \\
            (4 \pi G)^{-1}\partial_t^2 l &= \partial_t^2
                \big(\mathfrak{R}_{\infty}(\Ent,p) \big)
                + \mathfrak{H}_c, \label{E:PartialtSquaredlInitialEstimate}
        \end{align}
        where 
        \begin{align}
        \mathfrak{F}_c &\in \mathcal{I}^N(c^{m-2};\Ent,p,c^{-m}\Phi) \label{E:lInitialFcEstimate}\\
            \mathfrak{G}_c &\in \mathcal{I}^{N-1}(c^{m-2};\Ent,p,c^{-m}\Phi,\partial_t \Ent,\partial_t p,c^{-m}\partial_t \Phi)
            \label{E:PartialtlInitialGcEstimate} \\
        \mathfrak{H}_c &\in \mathcal{I}^{N-2}(c^{m-2};\Ent,p,c^{-m}\Phi,\partial_t \Ent,\partial_t
            p,c^{-m}\partial_t \Phi,\partial_t^2 \Ent,\partial_t^2p,c^{-m}\partial_t^2 \Phi). 
            \label{E:PartialtSquaredlInitialHcEstimate}
        \end{align}
        \end{lemma}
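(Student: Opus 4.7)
The plan is to first reduce $(\kappa^2 - \Delta)\PhiID_c$ to something involving $\mathfrak{R}_\infty(\EntID,\pID)$ plus an explicit $c$-dependent constant correction. Combining the defining identity \eqref{E:PhiIDinfinityIdentity} for $\PhiID_\infty$ with the background constraints \eqref{E:PhibarInfinityDef} for $\Phibar_\infty$ and \eqref{E:PhiBarc} for $\Phibar_c$, and using the relation \eqref{E:PhiIDcDef} that defines $\PhiID_c$, I obtain
\begin{align*}
(\kappa^2 - \Delta)\PhiID_c
  = -4\pi G\,\mathfrak{R}_\infty(\EntID,\pID)
    + 4\pi G\,\mathfrak{R}_\infty(\Entbar,\pbar)
    - 4\pi G\, e^{4\Phibar_c/c^2}\big[\mathfrak{R}_c(\Entbar,\pbar) - 3c^{-2}\pbar\big].
\end{align*}
Inserting this into the definition \eqref{E:lUniformTimeDef} of $l$ and rewriting $\Rc - 3c^{-2}P = e^{4\Phi/c^2}[\mathfrak{R}_c(\Ent,p) - 3c^{-2}p]$, the identity \eqref{E:lInitialEstimate} emerges with
\begin{align*}
\mathfrak{F}_c
  &= \big\{e^{4\Phi/c^2}[\mathfrak{R}_c(\Ent,p) - 3c^{-2}p] - \mathfrak{R}_\infty(\Ent,p)\big\} \\
  &\quad - \big\{e^{4\Phibar_c/c^2}[\mathfrak{R}_c(\Entbar,\pbar) - 3c^{-2}\pbar] - \mathfrak{R}_\infty(\Entbar,\pbar)\big\}.
\end{align*}

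The first brace is exactly $\Rc - \Rinfinity$ minus a $c^{-2}$ multiple of the bounded quantity $P$, so by \eqref{E:RcOrderctoNegativeTwoEstimate} of Lemma \ref{L:OrderctoNegativeTwoEstimates} together with Lemma \ref{L:MultiplyFcGcSobolevEstimate}, this first piece lies in $\mathcal{I}^N(c^{m-2};\Ent,p,c^{-m}\Phi)$. The second brace is a $c$-indexed \emph{constant} whose magnitude is $O(c^{-2})$, which therefore trivially lies in the same class once the $\|\cdot\|_{H^N}$ norm on the right-hand side of \eqref{E:IDef} is interpreted through the constant shift $\bar q$ convention; this establishes \eqref{E:lInitialFcEstimate}.

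For the time derivatives, the key observation is that $\PhiID_c$ is time-independent, so $\partial_t l = 4\pi G\, \partial_t(\Rc - 3c^{-2}P)$ and similarly for $\partial_t^2 l$. I split
\begin{align*}
(4\pi G)^{-1}\partial_t l = \partial_t\big(\mathfrak{R}_\infty(\Ent,p)\big) + \underbrace{\partial_t\big(\Rc - \Rinfinity\big)}_{\text{part of }\mathfrak{G}_c} - \underbrace{3c^{-2}\,\partial_t P}_{\text{part of }\mathfrak{G}_c},
\end{align*}
apply Corollary \ref{C:TimeDifferentiatedSobolevEstimate}--style reasoning (actually Lemma \ref{L:TimeDifferentiatedSobolevEstimate}) to \eqref{E:RcOrderctoNegativeTwoEstimate} and to $c^{-2}P = c^{-2}e^{4\Phi/c^2}p$, and collect the arguments that appear—namely $\Ent,p,c^{-m}\Phi$ together with their first time derivatives $\partial_t\Ent,\partial_t p,c^{-m}\partial_t\Phi$—to obtain \eqref{E:PartialtlInitialGcEstimate}. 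For $\partial_t^2 l$, I differentiate once more and apply Lemma \ref{L:TimeDifferentiatedSobolevEstimate} a second time; the product rule generates cross terms of the form (first derivative)$\times$(first derivative), which by Lemma \ref{L:MultiplyFcGcSobolevEstimate} remain in the ideal $\mathcal{I}^{N-2}$ at the stated $c^{m-2}$ rate, yielding \eqref{E:PartialtSquaredlInitialHcEstimate}.

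The main obstacle is bookkeeping: the hypotheses of Lemma \ref{L:TimeDifferentiatedSobolevEstimate} require $q_i \in \bigcap_{k=0}^{1} C^k([0,T],H_{\bar q_i}^{j-k})$ and that $c^{k_2}\partial_t q_i$ stay in a compact convex set; with the parameter choice $k_2 = -m$ applied to $\Phi$, one must verify that $c^{-m}\partial_t\Phi$ (and, for the second derivative, $c^{-m}\partial_t^2\Phi$) sits in an appropriate fixed set and has bounded Sobolev norm. This is exactly what the induction hypotheses \eqref{E:UniformTime1dPrimeAgain}--\eqref{E:UniformTime1fPrimeAgain} provide for $m \leq 1$, and for $m=2$ one uses the stronger control on $\partial_t \Phi, \partial_t^2 \Phi$ without any rescaling. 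The constant corrections arising from $\Phibar_c$ versus $\Phibar_\infty$ cause no difficulty because \eqref{E:PhibarcConvergence} gives their uniform $c$-boundedness, and the $c^{-2}$ decay of their difference is built into the $\mathcal{I}^N(c^{m-2};\cdots)$ estimate via the definition \eqref{E:NJVbNormDef} of the affine-shifted Sobolev norm.
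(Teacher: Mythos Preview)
Your derivation of the explicit formula for $(4\pi G)^{-1}l$ is correct and matches the paper's expression \eqref{E:lExpression}. Your treatment of $\mathfrak{G}_c$ and $\mathfrak{H}_c$ via Lemma \ref{L:TimeDifferentiatedSobolevEstimate} is also essentially the paper's argument and is fine, because time differentiation automatically lands in $\mathcal{I}^{j-1}$ regardless of whether the undifferentiated expression was in $\mathcal{I}^j$ or only in $\mathcal{R}^j$.

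There is, however, a genuine gap in your argument for $\mathfrak{F}_c$. You split $\mathfrak{F}_c$ into two braces and claim each lies in $\mathcal{I}^N(c^{m-2};\Ent,p,c^{-m}\Phi)$ separately. Neither does. The membership $\mathcal{I}^j$ requires, by definition \eqref{E:IDef}, a bound on the \emph{unshifted} $H^j$ norm $\|\mathfrak{F}_c\|_{H^j}$; the affine shift appears only in the arguments of the constant on the right-hand side. Your first brace $\Rc-\Rinfinity-3c^{-2}P$ lies in $\mathcal{R}^{N+1}(c^{m-2};\Ent,p,c^{-m}\Phi)$ by \eqref{E:RcOrderctoNegativeTwoEstimate}, but it does not vanish at the background state $(\Entbar,\pbar,\Phibar_c)$, so it approaches a nonzero constant at spatial infinity and is not in $L^2$, hence not in $\mathcal{I}^N$. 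Likewise your second brace is a nonzero $c$-indexed constant, which is never in $L^2$; the ``constant-shift convention'' you invoke does not rescue this, because that convention governs the $q_i$, not the expression $\mathfrak{F}_c$ itself.

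The fix is exactly the pairing you have already written down: your $\mathfrak{F}_c$ is of the form $G_c(\Ent,p,c^{-m}\Phi)-G_c(\Entbar,\pbar,c^{-m}\Phibar_c)$ with $G_c\in\mathcal{R}^{N+1}(c^{m-2};\cdot)$, and Lemma \ref{L:HjbySubtractingConstant} (which you never cite) is precisely the tool that converts this difference into membership in $\mathcal{I}^N(c^{m-2};\Ent,p,c^{-m}\Phi)$. This is what the paper does. Once you invoke that lemma for the combination rather than arguing piece by piece, the proof of \eqref{E:lInitialFcEstimate} goes through.
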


        \begin{proof}
            It follows from the discussion in Section \ref{S:IVPc} that
            \begin{align} \label{E:lExpression}
                (4 \pi G)^{-1}l = \big(e^{4\Phi/c^2}\mathfrak{R}_c(\Ent,p) & - e^{4\Phibar_c/c^2}\mathfrak{R}_c(\Entbar,\pbar)\big) \\
                    & + 3 c^{-2} \big(e^{4\Phibar_c/c^2}\pbar - e^{4\Phi/c^2}p \big) 
                    + \mathfrak{R}_{\infty}(\Entbar,\pbar) - \mathfrak{R}_{\infty}(\EntID,\pID). \notag
            \end{align}
            Therefore, \eqref{E:lInitialEstimate} + \eqref{E:lInitialFcEstimate} follows from Lemma
            \ref{L:HjbySubtractingConstant}, Lemma \ref{L:MultiplyFcGcSobolevEstimate}, and Lemma \ref{L:OrderctoNegativeTwoEstimates}. 
            \eqref{E:PartialtlInitialEstimateA} + \eqref{E:PartialtlInitialGcEstimate}
            and \eqref{E:PartialtSquaredlInitialEstimate} + \eqref{E:PartialtSquaredlInitialHcEstimate} then follow from Lemma 
            \ref{L:TimeDifferentiatedSobolevEstimate}.
        \end{proof}

        \begin{lemma}       \label{L:UniformtimeApriori2}
            \begin{align}
                \mid\mid\mid\partial_t \Wbold \mid\mid\mid_{H^{N-1},\tau}, \ \mid\mid\mid\partial_t \scrW
                \mid\mid\mid_{H^{N-1},\tau} \leqc 
                C(id;\Lambda_1,\Lambda_2,L_2')\eqdef
                L_1(id;\Lambda_1,\Lambda_2,L_2'). \label{E:UniformtimeApriori2Again}
            \end{align}
        \end{lemma}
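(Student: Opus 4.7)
The idea is to solve algebraically for $\partial_t \Wbold$ from the EN$_{\kappa}^c$ equations \eqref{E:ENkappac1} - \eqref{E:ENkappac3} written in matrix form \eqref{E:MatrixDef}, and then to bound the resulting expression in $H^{N-1}$ uniformly in $c$ by invoking the Sobolev-Moser calculus of Section \ref{S:cDependence} together with the induction hypotheses \eqref{E:UniformTime1aAgain} - \eqref{E:UniformTime1fPrimeAgain}. Since $\Ac^0(\scrW,\Phi)$ is uniformly invertible for all large $c$ on the relevant compact set (Lemma \ref{L:MatricesLargecNormEstimates}), I would rewrite \eqref{E:MatrixDef} as
\[
\partial_t \Wbold = \bigl(\Ac^0(\scrW,\Phi)\bigr)^{-1}\Bigl[\mathfrak{B}_c(\scrW,\Phi,D\Phi) - \Ac^k(\scrW,\Phi) \partial_k \Wbold\Bigr],
\]
where $\mathfrak{B}_c$ denotes the right-hand side of \eqref{E:ENkappac1} - \eqref{E:ENkappac3}.

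The key inputs for the estimate are the following. First, Lemma \ref{L:MatricesLargecNormEstimates} gives $(\Ac^0)^{-1}, \Ac^k \in \mathcal{R}^{N-1}(\Wbold, c^{-1}\Phi)$ with constants independent of all large $c$. Second, \eqref{E:BcIsInIN} in Lemma \ref{L:InhomogeneousTermsLargecSobolevEstimates} gives $\mathfrak{B}_c \in \mathcal{I}^{N-1}(\Wbold, c^{-1}\Phi, \partial\Phi, c^{-1}\partial_t \Phi)$. Third, (\ref{E:UniformTime1aAgain}') yields $\|\partial_k \Wbold\|_{H^{N-1}} \leqc C(id;\Lambda_1)$, while (\ref{E:UniformTime1bAgain}') yields the $H^N$-bound on $\partial \Phi$ and \eqref{E:UniformTime1dPrimeAgain} yields the $H^N$-bound on $c^{-1}\partial_t \Phi$. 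Combining these via Lemma \ref{L:MultiplyFcGcSobolevEstimate} and Remark \ref{R:LargecSobolevConstantRemark} (that $\mathcal{I}^{N-1}$ is an ideal in $\mathcal{R}^{N-1}$) produces the desired bound $\|\partial_t \Wbold\|_{H^{N-1}} \leqc C(id;\Lambda_1,\Lambda_2,L_2')$.

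For the bound on $\partial_t \scrW$, I would use \eqref{E:WscrWOrderctoNegativeTwoEstimate} to write $\scrW = \Wbold + \mathscr{O}^N(c^{-2}; P, \Phi)$; differentiating in $t$ via Lemma \ref{L:TimeDifferentiatedSobolevEstimate} yields $\partial_t \scrW = \partial_t \Wbold + \mathscr{O}^{N-1}(c^{-1}; \Wbold, \Phi, \partial_t \Wbold, c^{-1}\partial_t \Phi)$, whose $H^{N-1}$ norm is controlled by the same constant, using the just-proven bound on $\partial_t \Wbold$ and the induction hypothesis on $c^{-1}\partial_t \Phi$.

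The main subtlety is the one flagged in Remark \ref{R:WorstTerm}: the term $\mathfrak{B}_\infty(\scrW,\partial\Phi) \subset \mathfrak{B}_c$ a priori only satisfies $\mathfrak{B}_\infty \in \mathcal{I}^N(c^{1}; \Wbold, c^{-1}\Phi, c^{-1}\partial\Phi)$, so a naive bookkeeping would allow $c^1$ growth. What saves us here is the induction hypothesis \eqref{E:UniformTime1bAgain}, which controls $\|\Phi - \PhiID_c\|_{H^{N+1}}$ and hence $\|\partial\Phi\|_{H^N}$ directly (not just $\|c^{-1}\partial\Phi\|_{H^N}$). This is precisely why the constant $L_1$ is allowed to depend on $\Lambda_2$, and it is the reason the argument closes without picking up powers of $c$. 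The rest of the work is routine bookkeeping to ensure that every factor of $c$ is absorbed by a corresponding $c^{-1}$ supplied by the induction hypotheses.
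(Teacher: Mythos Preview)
Your proposal is correct and follows essentially the same route as the paper: solve for $\partial_t \Wbold$ via $(\Ac^0)^{-1}$, bound each piece using Lemmas \ref{L:MatricesLargecNormEstimates} and \ref{L:InhomogeneousTermsLargecSobolevEstimates} together with the induction hypotheses, and then pass to $\partial_t \scrW$ via \eqref{E:WscrWOrderctoNegativeTwoEstimate}/\eqref{E:SpatialDerivativesWscrWOrderctoNegativeTwoEstimate}. The only notable difference is that the paper, rather than bounding the $\Ac^{\nu}$ expression directly, first rewrites it as the EP$_\kappa$ principal part $(\Ainfinity^0)^{-1}[-\Ainfinity^k\partial_k\scrW + \mathfrak{B}_\infty]$ plus an error in $\mathscr{O}^{N-1}(\Wbold,\partial\Wbold,c^{-1}\Phi,c^{-1}D\Phi)\cap\mathscr{O}^{N-1}(c^{-2};\Wbold,\partial\Wbold,\Phi,D\Phi)$; this sharper decomposition is not needed for the present lemma but is recorded there because it is reused verbatim in Corollary \ref{C:ENkappacIsAlmostEPkappa}, Lemma \ref{L:Klein-GordonIntitialEnergyEstimates}, and Lemma \ref{L:PartialtandSquaredEntandPSobolevBounds}.
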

            \begin{proof}
            By using the EN$_{\kappa}^c$ equations \eqref{E:ENkappac1} -
            \eqref{E:ENkappac3} to solve for $\partial_t \Wbold$ and applying Lemma \ref{L:MultiplyFcGcSobolevEstimate},
            \eqref{E:SpatialDerivativesWscrWOrderctoNegativeTwoEstimate} in the cases $\nu =1,2,3,$ Lemma 
            \ref{L:ScrWInTermsofWEstimate}, Lemma \ref{L:MatricesLargecNormEstimates}, Lemma 
            \ref{L:InhomogeneousTermsLargecSobolevEstimates}, and Remark \ref{R:WorstTerm}, we have that
                \begin{align}  \label{E:SolveforTimeDerivatives}
                    \partial_t \Wbold & = \big(\Ac^0(\scrW,\Phi)\big)^{-1}\big[-\Ac^k(\scrW,\Phi) \partial_k \Wbold
                        + \mathfrak{B}_c(\scrW,\Phi,D\Phi)\big] \\
                    &= \big(\Ainfinity^0(\scrW)\big)^{-1}\big[-\Ainfinity^k(\scrW) \partial_k \scrW  + 
                    \mathfrak{B}_{\infty}(\scrW,\partial\Phi) \big] \notag \\
                    & \qquad + \mathscr{O}^{N-1}(\Wbold,\partial\Wbold,c^{-1}\Phi,c^{-1}D\Phi)
                        	\cap \mathscr{O}^{N-1}(c^{-2};\Wbold,\partial\Wbold,\Phi,D\Phi). \notag
                \end{align}

            The bound for $\mid\mid\mid\partial_t \Wbold \mid\mid\mid_{H^{N-1},\tau}$ now follows from Lemma 
            \ref{L:MultiplyFcGcSobolevEstimate}, \eqref{E:MatricescIsInfinityNormEstimates}, \eqref{E:BInfinityIsInIN},
            the induction hypotheses, \eqref{E:SolveforTimeDerivatives}, and the definition \eqref{E:ONotation} of \\ 	
            $\mathscr{O}^{N-1}(\Wbold,\partial\Wbold,c^{-1}\Phi,c^{-1}D\Phi).$ The bound for $\mid\mid\mid\partial_t \scrW 
            \mid\mid\mid_{H^{N-1},\tau}$ then follows from the bound for $\mid\mid\mid\partial_t \Wbold \mid\mid\mid_{H^{N-1},\tau},$ 
            \eqref{E:SpatialDerivativesWscrWOrderctoNegativeTwoEstimate} in the case $\nu =t, m=1,$ and
            the induction hypotheses. We remark that we have written the ``intersection term'' on the right-hand side of 
            \eqref{E:SolveforTimeDerivatives} in a form that will be useful in our proofs of 
            Lemma \ref{L:Klein-GordonIntitialEnergyEstimates}, and 
            Lemma \ref{L:PartialtandSquaredEntandPSobolevBounds}; the $``c^{-2}$ decay'' is used in Lemma 
            \ref{L:Klein-GordonIntitialEnergyEstimates} and Corollary \ref{C:ApproximateEPkappaSolutions}, while the ``dependence on 
            $c^{-1} D \Phi$'' is used in Lemma \ref{L:PartialtandSquaredEntandPSobolevBounds}. Similar comments apply to Corollary 
            \ref{C:ENkappacIsAlmostEPkappa} and equation \eqref{E:PartialtpExpression} below. \\
           	\end{proof}

        The following indispensable corollary shows that for large $c,$ the EN$_{\kappa}^c$ system can be written as
        a small perturbation of the EP$_{\kappa}$ system. See also Corollary \ref{C:ApproximateEPkappaSolutions}.
        
        \begin{corollary} \label{C:ENkappacIsAlmostEPkappa} {\bf (EN$_{\kappa}^c$ $\approx$ EP$_{\kappa}$ for Large $c$)}
            \begin{align} \label{E:SolveforTimeDerivativesscrW}
                \partial_t \scrW & = \big(\Ainfinity^0(\scrW)\big)^{-1}\big[-\Ainfinity^k(\scrW) \partial_k \scrW  +
                	\mathfrak{B}_{\infty}(\scrW,\partial\Phi) \big]  \\
                & \qquad + \mathscr{O}^{N-1}(\Wbold,\partial\Wbold,c^{-1}\Phi,c^{-1}D\Phi)
               		\cap \mathscr{O}^{N-1}(c^{-2};\Wbold,\partial\Wbold,\Phi,D\Phi). \notag 
             \end{align}
        \end{corollary}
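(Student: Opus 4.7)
The strategy is to reduce the claim to identity \eqref{E:SolveforTimeDerivatives}, which was already established in the proof of Lemma \ref{L:UniformtimeApriori2} for $\partial_t \Wbold$, and then to control the difference $\partial_t \scrW - \partial_t \Wbold$. Since $\scrW$ and $\Wbold$ differ only in their second component---carrying $p$ and $P = e^{4\Phi/c^2} p$ respectively---this difference is supported entirely on the second slot, where
\begin{align*}
\partial_t p - \partial_t P = (e^{-4\Phi/c^2} - 1)\,\partial_t P \;-\; 4c^{-2}(\partial_t \Phi)\, e^{-4\Phi/c^2}\, P.
\end{align*}

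Applying \eqref{E:expPhiOvercSquaredOrderctoNegativeTwoEstimate} with $m=1$ and with $m=2$, together with Lemma \ref{L:MultiplyFcGcSobolevEstimate} and the induction hypotheses of Section \ref{SS:InductionHypotheses}, the plan is to show that the right-hand side above lies in
\begin{align*}
\mathcal{I}^{N-1}(c^{-1};\Wbold,\partial\Wbold,c^{-1}\Phi,c^{-1}\partial_t\Phi) \;\cap\; \mathcal{I}^{N-1}(c^{-2};\Wbold,\partial\Wbold,\Phi,\partial_t\Phi).
\end{align*}
Since $\partial_t\Phi$ is one of the entries of $D\Phi$, both factors are contained in the corresponding $\mathscr{O}^{N-1}$ classes appearing on the right-hand side of the claimed identity \eqref{E:SolveforTimeDerivativesscrW}. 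In fact this is essentially the content of estimate \eqref{E:SpatialDerivativesWscrWOrderctoNegativeTwoEstimate} in the case $\nu=t$, so very little new work is required at this step.

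Adding the bound for $\partial_t\scrW - \partial_t\Wbold$ to the identity \eqref{E:SolveforTimeDerivatives} and absorbing the new terms into the pre-existing error class via Remark \ref{R:LargecSobolevConstantRemark} (which supplies the ideal property of the $\mathscr{O}^{N-1}$ classes) then yields \eqref{E:SolveforTimeDerivativesscrW}. I expect no serious obstacle: the only asymptotic rate demanding attention is the $c^{-1}$ factor multiplying $D\Phi$ in the first intersection component, and since $e^{-4\Phi/c^2}-1$ supplies $c^{-2}$ of decay---reassignable as one factor of $c^{-1}$ attached to $\Phi$ and one attached to $\partial_t\Phi$---multiplication by $\partial_t P \in \partial\Wbold$ never degrades the asymptotics. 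Thus the corollary is essentially a bookkeeping consequence of Lemma \ref{L:UniformtimeApriori2} combined with the ring/ideal calculus developed in Section \ref{S:cDependence}.
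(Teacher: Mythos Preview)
Your approach is essentially the paper's: reduce to \eqref{E:SolveforTimeDerivatives} and show that $\partial_t\scrW-\partial_t\Wbold$, which lives only in the second slot, lands in the stated error class. One bookkeeping point needs tightening. You write ``multiplication by $\partial_t P\in\partial\Wbold$,'' but in the paper's conventions $\partial\Wbold$ denotes \emph{spatial} derivatives, so $\partial_t P$ is not among the allowed arguments $(\Wbold,\partial\Wbold,c^{-1}\Phi,c^{-1}D\Phi)$. Likewise, invoking \eqref{E:SpatialDerivativesWscrWOrderctoNegativeTwoEstimate} with $\nu=t$ places $\partial_t\scrW-\partial_t\Wbold$ in a class whose argument list contains $\partial_t P$, not $\partial\Wbold$; this does not directly match the target $\mathscr{O}^{N-1}$ classes in \eqref{E:SolveforTimeDerivativesscrW}.

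The paper closes this gap by first extracting from \eqref{E:SolveforTimeDerivatives} the explicit second-component identity \eqref{E:PartialtPExpression}, which expresses $\partial_t P$ as $-v^k\partial_k p-\mathfrak{Q}_\infty(\Ent,p)\partial_k v^k$ plus a term already in the target error class. Substituting this into $(e^{-4\Phi/c^2}-1)\partial_t P$ replaces the time derivative by quantities that genuinely depend only on $\Wbold,\partial\Wbold,\Phi,D\Phi$ (via \eqref{E:ScrWInTermsofWEstimate} and \eqref{E:DerivativesWscrWOrderctoZeroEstimate}), after which your multiplication argument goes through verbatim. With that substitution inserted, your proof is complete and coincides with the paper's.
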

            \begin{proof}
              	Recall that $\partial_t \Wbold$ and $\partial_t \scrW$ differ only in
                that the second component of $\partial_t \Wbold$ is $\partial_t P,$
                while the second component of $\partial_t \scrW$ is $\partial_t p.$ Therefore, it follows 
                trivially from \eqref{E:SolveforTimeDerivatives} that 
                \eqref{E:SolveforTimeDerivativesscrW} holds for all the 
                components of $\partial_t \scrW$ except for the second component $\partial_t p.$
                
                To handle the component $\partial_t p,$ we first observe that the second component of the array 
                $-\big(\Ainfinity^0(\scrW)\big)^{-1}\big[-\Ainfinity^k(\scrW) \partial_k \scrW + 
                \mathfrak{B}_{\infty}(\scrW,\partial\Phi) \big]$ is equal to \\
                $- v^k \partial_k p - \mathfrak{Q}_{\infty}(\Ent,p)\partial_k v^k.$ It thus follows directly from
               	considering the second component of \eqref{E:SolveforTimeDerivatives} that
                \begin{align} \label{E:PartialtPExpression}
                	\partial_t P &= - v^k \partial_k p - \mathfrak{Q}_{\infty}(\Ent,p)\partial_k v^k \\
                  & \ \ + \ \mathscr{O}^{N-1}(\Wbold,\partial\Wbold,c^{-1}\Phi,c^{-1}D\Phi)
                        	\cap \mathscr{O}^{N-1}(c^{-2};\Wbold,\partial\Wbold,\Phi,D\Phi). \notag
                \end{align}
                Therefore, since $\partial_t p - \partial_t P= (e^{-4 \Phi / c^2} - 1) \partial_t P - 4 (c^{-2} \partial_t \Phi) 
                e^{-4\Phi/c^2} P,$ we 
                use Lemma \ref{L:MultiplyFcGcSobolevEstimate}, \eqref{E:expPhiOvercSquaredOrderctoNegativeTwoEstimate}, 	
                \eqref{E:SpatialDerivativesWscrWOrderctoNegativeTwoEstimate}, \eqref{E:DerivativesWscrWOrderctoZeroEstimate}, Lemma 
                \ref{L:ScrWInTermsofWEstimate}, and \eqref{E:PartialtPExpression} to conclude that 
                \begin{align} \label{E:PartialtpExpression}
                	\partial_t p &= - v^k \partial_k p - \mathfrak{Q}_{\infty}(\Ent,p)\partial_k v^k \\
                  & \ \ + \ \mathscr{O}^{N-1}(\Wbold,\partial\Wbold,c^{-1}\Phi,c^{-1}D\Phi)
                 		\cap \mathscr{O}^{N-1}(c^{-2};\Wbold,\partial\Wbold,\Phi,D\Phi). \notag 
                \end{align}
      		\end{proof}

    \begin{lemma}                                                   \label{L:Klein-GordonIntitialEnergyEstimates}
        There exists a constant $C(id) > 0$ such that
        \begin{align}
            \|l(0)\|_{H^N} & \leqc c^{-2}C(id) \label{E:lofZerobound} \\
            \|(\Delta - \kappa^2) \PsiID_0 - \partial_t l(0)\|_{H^{N-1}} &\leqc c^{-2} C(id) .
            \label{E:PartialtlofZerobound}
        \end{align}
    \end{lemma}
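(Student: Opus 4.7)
The plan is to reduce each inequality to a difference of a $c$-indexed expression evaluated at the initial data versus its value at the background constants, and then to extract the $c^{-2}$ decay using Lemma~\ref{L:HjbySubtractingConstant} together with the asymptotic estimates of Lemma~\ref{L:OrderctoNegativeTwoEstimates}.

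For \eqref{E:lofZerobound}, I would first substitute \eqref{E:PhiIDcDef} into \eqref{E:lUniformTimeDef} to replace $(\kappa^2 - \Delta)\PhiID_c$ by $(\kappa^2 - \Delta)\PhiID_\infty + \kappa^2(\Phibar_c - \Phibar_\infty)$, eliminate $(\kappa^2 - \Delta)\PhiID_\infty = -4\pi G\mathfrak{R}_\infty(\EntID,\pID)$ using the constraint \eqref{E:PhiIDinfinityIdentity} together with \eqref{E:PhibarInfinityDef}, and eliminate $\kappa^2(\Phibar_c - \Phibar_\infty)$ using \eqref{E:PhiBarc} and \eqref{E:PhibarInfinityDef}. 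After regrouping, $l(0)$ takes the form
\begin{align*}
(4\pi G)^{-1}l(0) = \bigl[\mathfrak{F}_c(\EntID,\pID,\PhiID_c) - \mathfrak{F}_c(\Entbar,\pbar,\Phibar_c)\bigr] - 3c^{-2}\bigl[e^{4\PhiID_c/c^2}\pID - e^{4\Phibar_c/c^2}\pbar\bigr],
\end{align*}
where $\mathfrak{F}_c(\Ent,p,\Phi) \eqdef e^{4\Phi/c^2}\mathfrak{R}_c(\Ent,p) - \mathfrak{R}_\infty(\Ent,p)$. By \eqref{E:RcOrderctoNegativeTwoEstimate} with $m=0$, $\mathfrak{F}_c \in \mathcal{R}^{N+1}(c^{-2};\cdot)$; applying Lemma~\ref{L:HjbySubtractingConstant} to the first difference and an analogous argument to the second (explicit) $c^{-2}$ term yields the $H^N$ bound $\leqc c^{-2}C(id)$.

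For \eqref{E:PartialtlofZerobound}, since $\PhiID_c$ is time-independent we have $\partial_t l = 4\pi G\,\partial_t(\Rc - 3c^{-2}P)$. Expanding $\partial_t \Rc$ at $t=0$ by the chain rule, the piece $\tfrac{4}{c^2}\PsiID_0 \cdot e^{4\PhiID_c/c^2}\mathfrak{R}_c(\EntID,\pID)$ and the piece $3c^{-2}\partial_t P|_0$ are each manifestly $\mathscr{O}^{N-1}(c^{-2};\cdot)$ (using Remark~\ref{R:PhiIDinHNPlusOne} for the former). For the remaining piece $e^{4\PhiID_c/c^2}[\partial_\Ent\mathfrak{R}_c \cdot \partial_t\Ent|_0 + \partial_p\mathfrak{R}_c \cdot \partial_t p|_0]$, use \eqref{E:ENkappac1} for the exact identity $\partial_t\Ent|_0 = -\vID^k\partial_k\EntID$, and apply Corollary~\ref{C:ENkappacIsAlmostEPkappa} at $t=0$ (where all induction hypotheses hold trivially from the data) to write $\partial_t p|_0 = -\vID^k\partial_k\pID - \mathfrak{Q}_\infty(\EntID,\pID)\partial_k\vID^k$ modulo an $\mathscr{O}^{N-1}(c^{-2};\cdot)$ remainder. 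Then Lemma~\ref{L:OrderctoNegativeTwoEstimates} lets us replace the $c$-indexed coefficients $e^{4\PhiID_c/c^2}$, $\mathfrak{R}_c$, $\partial_\Ent\mathfrak{R}_c$, $\partial_p\mathfrak{R}_c$ by their $c=\infty$ limits modulo $\mathcal{I}^{N-1}(c^{-2};\cdot)$. Collecting the surviving $c=\infty$ piece gives
\begin{align*}
-\vID^k\bigl[\partial_\Ent\mathfrak{R}_\infty\,\partial_k\EntID + \partial_p\mathfrak{R}_\infty\,\partial_k\pID\bigr] - \partial_p\mathfrak{R}_\infty \cdot \mathfrak{Q}_\infty\,\partial_k\vID^k,
\end{align*}
which by the chain rule and the Newtonian identity $\partial_p\mathfrak{R}_\infty \cdot \mathfrak{Q}_\infty = \mathfrak{R}_\infty$ from \eqref{E:QinfinityRelationship} equals $-\partial_k\bigl(\mathfrak{R}_\infty(\EntID,\pID)\vID^k\bigr)$. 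This is exactly $(4\pi G)^{-1}(\Delta - \kappa^2)\PsiID_0$ by \eqref{E:Psi0Identity}, so the leading-order terms cancel and the total $H^{N-1}$ error is $\leqc c^{-2}C(id)$.

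The main obstacle is engineering the leading-order cancellation in the second estimate: the identity \eqref{E:QinfinityRelationship} is indispensable, and one must invoke the sharper $\mathscr{O}^{N-1}(c^{-2};\cdot)$ component of the intersection in Corollary~\ref{C:ENkappacIsAlmostEPkappa} (rather than the weaker $\mathscr{O}^{N-1}(c^{-1};\cdot)$ component) — this requires checking that the bounded quantities $\Phi$ and $D\Phi$, not merely $c^{-1}\Phi$ and $c^{-1}D\Phi$, are controlled at $t=0$ through the initial data $\PhiID_c, \PsiID_0, \PsiID_j$.
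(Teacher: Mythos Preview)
Your proposal is correct and follows essentially the same approach as the paper: for \eqref{E:lofZerobound} you derive the same algebraic identity the paper records as \eqref{E:lExpression} (evaluated at $t=0$) and extract the $c^{-2}$ decay via \eqref{E:RcOrderctoNegativeTwoEstimate} with $m=0$; for \eqref{E:PartialtlofZerobound} you perform the same leading-order cancellation via \eqref{E:Psi0Identity}, \eqref{E:ENkappac1}, the $c^{-2}$ branch of \eqref{E:PartialtpExpression}, and the identity $\partial_p\mathfrak{R}_\infty \cdot \mathfrak{Q}_\infty = \mathfrak{R}_\infty$ from \eqref{E:QinfinityRelationship}. The only organizational difference is that the paper first establishes the general-time decomposition \eqref{E:lInitialEstimate}--\eqref{E:PartialtSquaredlInitialHcEstimate} in Lemma~\ref{L:lFirstEstimates} and then specializes to $t=0$, whereas you work directly at $t=0$ throughout.
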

          \begin{proof}
            The estimate \eqref{E:lofZerobound} follows from the estimate \eqref{E:lInitialEstimate}
            for $l(t)$ at $t=0$ and \eqref{E:lInitialFcEstimate} in the case $m=0.$

        To obtain the estimate \eqref{E:PartialtlofZerobound}, first recall that by
        assumption \eqref{E:Psi0Identity} and the chain rule, we have that
        \begin{align} \label{E:PsiIntialDef}
            (4 \pi G)^{-1}(\kappa^2 - \Delta) \PsiID_0 &=
            \partial_k\big(\mathfrak{R}_{\infty}(\EntID,\pID) \vID^k\big)        \\
            &= \frac{\partial \mathfrak{R}_{\infty}}{\partial \Ent}(\EntID,\pID) \vID^k \partial_k \EntID
                + \frac{\partial \mathfrak{R}_{\infty}}{\partial p}(\EntID,\pID) \vID^k \partial_k \pID
                + \mathfrak{R}_{\infty}(\EntID,\pID) \partial_k \vID^k. \notag
        \end{align}
        Furthermore, by Lemma \ref{L:MultiplyFcGcSobolevEstimate}, \eqref{E:PartialtlInitialEstimateA} at $t=0,$ 
        \eqref{E:PartialtlInitialGcEstimate} in the case $m=0,$ the chain rule,
        \eqref{E:ENkappac1}, \eqref{E:PartialtpExpression}, and 
        \eqref{E:partialRpartialpandStoNegativeTwoRelationshipc} + \eqref{E:QcFunctionDef} in the case $c=\infty,$ we have that
        \begin{align} \label{E:PartialtlInitialEstimateC}
            (4 \pi G)^{-1} \partial_t l(0) =
            - \frac{\partial \mathfrak{R}_{\infty}}{\partial \Ent}(\EntID,\pID) \vID^k 
            \partial_k \EntID  &- \frac{\partial \mathfrak{R}_{\infty}}{\partial p}(\EntID,\pID) \vID^k \partial_k \pID \\
            & - \mathfrak{R}_{\infty}(\EntID,\pID) \partial_k \vID^k + \mathscr{O}^{N-1}(c^{-2};id). \notag
        \end{align}
        The estimate \eqref{E:PartialtlofZerobound} now follows from \eqref{E:PsiIntialDef} and
        \eqref{E:PartialtlInitialEstimateC}.

        \end{proof}

    \begin{lemma} \label{L:PartialtandSquaredEntandPSobolevBounds}
        \begin{align}
            \mid\mid\mid\partial_t^2 \Ent\mid\mid\mid_{H^{N-2},\tau}, \ \mid\mid\mid\partial_t^2 p\mid\mid\mid_{H^{N-2},\tau}
                    &\leqc C(id;\Lambda_1,\Lambda_2,L_1,L_2',L_4) \label{E:PartialtSquaredpandEtabound} \\
                    & \eqdef L_3(id;\Lambda_1,\Lambda_2,L_1,L_2',L_4). \notag
      	\end{align}
    \end{lemma}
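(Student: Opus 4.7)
The plan is to obtain expressions for $\partial_t^2 \Ent$ and $\partial_t^2 p$ by differentiating the evolution equations for $\Ent$ and $p$ once in time, and then to apply the Sobolev--Moser calculus of Section \ref{S:cDependence} together with the induction hypotheses and the bound on $\partial_t \Wbold$ from Lemma \ref{L:UniformtimeApriori2}. Throughout, we work on the slab $[0, T_c^{max}) \times \mathbb{R}^3$ and use that $N \geq 4,$ so that $H^{N-2}(\mathbb{R}^3)$ is a Banach algebra and Proposition--type Sobolev multiplication applies.

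First, for $\partial_t^2 \Ent,$ differentiate the entropy equation \eqref{E:ENkappac1} in time to obtain
\begin{align*}
\partial_t^2 \Ent = -(\partial_t v^k) \partial_k \Ent - v^k \partial_k(\partial_t \Ent).
\end{align*}
Both factors of the first product lie in $H^{N-1};$ the factor $\partial_t v^k$ is controlled by Lemma \ref{L:UniformtimeApriori2} (i.e.\ by $L_1$), while $\partial_k \Ent$ is controlled by the induction hypothesis (\ref{E:UniformTime1aAgain}$'$). For the second product, $v^k - \Wb_c^{(\cdot)}$ is bounded in $H^N$ by $C(id;\Lambda_1),$ and $\partial_k \partial_t \Ent$ is bounded in $H^{N-2}$ via $\partial_t \Ent \in H^{N-1}$ using again Lemma \ref{L:UniformtimeApriori2}. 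The Sobolev multiplication estimates \eqref{E:ModifiedSobolevEstimate}--\eqref{E:ModifiedSobolevEstimateConstantArray} then yield the desired bound.

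Second, for $\partial_t^2 p,$ the strategy is to start from \eqref{E:PartialtpExpression} of Corollary \ref{C:ENkappacIsAlmostEPkappa}, namely
\begin{align*}
\partial_t p = -v^k \partial_k p - \mathfrak{Q}_\infty(\Ent,p)\partial_k v^k + \mathscr{O}^{N-1}(\Wbold,\partial\Wbold,c^{-1}\Phi,c^{-1}D\Phi),
\end{align*}
and differentiate in time. The first two terms, after differentiation, involve products of $\partial_t$-differentiated solution components and spatial derivatives, each of which is controlled in $H^{N-2}$ by either $\Lambda_1, \Lambda_2, L_1,$ or the equation-of-state assumptions (via Lemma \ref{L:HjbySubtractingConstant} for the composition $\mathfrak{Q}_\infty \circ (\Ent,p)$). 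The error term is handled by applying Lemma \ref{L:TimeDifferentiatedSobolevEstimate}: its time derivative is an element of $\mathcal{I}^{N-2}$ in the arguments $\Wbold, \partial \Wbold, c^{-1}\Phi, c^{-1}D\Phi$ together with their time derivatives $\partial_t \Wbold, \partial_t \partial \Wbold, c^{-1}\partial_t \Phi, c^{-1}\partial_t D\Phi.$ Each of these new arguments is controlled: $\partial_t \Wbold \in H^{N-1}$ by $L_1;$ $\partial_t \partial \Wbold \in H^{N-2}$ by $L_1;$ $c^{-1}\partial_t \Phi \in H^N$ by $L_2';$ $c^{-1}\partial \partial_t \Phi \in H^{N-1}$ by $L_2';$ and crucially $c^{-1}\partial_t^2 \Phi \in H^{N-1}$ by the induction hypothesis \eqref{E:UniformTime1fPrimeAgain} with the constant $L_4.$

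The main subtle point --- and really the only one --- is the appearance of $\partial_t^2 \Phi$ after differentiating the error term, since a priori we control only the weaker quantity $c^{-1}\partial_t^2 \Phi.$ This is precisely why the error term in \eqref{E:PartialtpExpression} is written in the form $\mathscr{O}^{N-1}(\Wbold,\partial\Wbold,c^{-1}\Phi,c^{-1}D\Phi)$ rather than with $D\Phi$ unscaled: the $c^{-1}$ in front of $D\Phi$ exactly combines with the $\partial_t$ to produce $c^{-1}\partial_t^2 \Phi,$ which is controlled by $L_4.$ Gathering all contributions and invoking Lemma \ref{L:MultiplyFcGcSobolevEstimate} yields
\begin{align*}
\mid\mid\mid \partial_t^2 p \mid\mid\mid_{H^{N-2},\tau} \leqc C(id;\Lambda_1,\Lambda_2,L_1,L_2',L_4),
\end{align*}
completing the proof, and defining $L_3$ as the maximum of the right-hand sides.
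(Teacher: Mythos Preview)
Your proposal is correct and follows essentially the same approach as the paper: differentiate \eqref{E:PartialtpExpression} (respectively \eqref{E:ENkappac1}) in time, apply Lemma \ref{L:TimeDifferentiatedSobolevEstimate} to the error term, and bound everything using the induction hypotheses and Lemma \ref{L:UniformtimeApriori2}. Your identification of the key point---that the $c^{-1}$ scaling of $D\Phi$ in the error term is exactly what allows $c^{-1}\partial_t^2 \Phi$ to appear after time-differentiation, controlled by $L_4$---is precisely the mechanism the paper relies on (note only that \eqref{E:UniformTime1fPrimeAgain} gives $c^{-1}\partial_t^2 \Phi$ in $H^{N-2}$, not $H^{N-1}$, but this suffices since the error lands in $\mathcal{I}^{N-2}$).
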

            \begin{proof}
                To obtain the bound for $\partial_t^2 p,$ differentiate
                each side of the expression \eqref{E:PartialtpExpression} with respect to $t,$ and
                then apply Lemma \ref{L:TimeDifferentiatedSobolevEstimate} to conclude that
              \begin{align}
                \partial_t^2 p &=  -\partial_t \big[v^k \partial_k p + \mathfrak{Q}_{\infty}(\Ent,p) \partial_k v^k
                \big] + \mathfrak{F}_c, \label{E:PartialtSquaredpIsolated}
            \end{align}
                where $\mathfrak{F}_c \in \mathcal{I}^{N-2}(\Wbold,D \Wbold, \partial\partial_t 
                \Wbold,c^{-1} \Phi, c^{-1} D \Phi, c^{-1} \partial \partial_t \Phi ,c^{-1}\partial_t^2 \Phi)$. 
                We now use Lemma \ref{L:MultiplyFcGcSobolevEstimate}, the induction hypotheses, the previously 
                established bounds \eqref{E:UniformtimeApriori2Again} on $\mid\mid\mid \partial_t \Wbold 
                \mid\mid\mid_{H^{N-1},\tau}$ and $\mid\mid\mid \partial_t \scrW 
                \mid\mid\mid_{H^{N-1},\tau},$ and the definition of $\mathcal{I}^{N-2}(\cdots)$ to conclude the estimate 
                \eqref{E:PartialtSquaredpandEtabound} for $\mid\mid\mid\partial_t^2 p\mid\mid\mid_{H^{N-2},\tau}.$

                The estimate for $\partial_t^2 \eta$ is similar, and in fact much
                simpler: use equation \eqref{E:ENkappac1} to solve for $\partial_t
                \eta,$ and then differentiate with respect to $t$ and reason as above.
            \end{proof}

        \begin{lemma}               \label{L:lSobolevBounds}
            \begin{align}
                \mid\mid\mid l \mid\mid\mid_{H^N,\tau} &\leqc C(id;\Lambda_1,\Lambda_2) \label{E:lHNaPrioriBound} \\
                \mid\mid\mid \partial_t l \mid\mid\mid_{H^{N-1},\tau} &\leqc C(id;\Lambda_1,\Lambda_2,L_1,L_2')
                \label{E:PartialtlHNMinusOneaPrioriBound} \\
                \mid\mid\mid \partial_t^2 l \mid\mid\mid_{H^{N-2},\tau} &\leqc 
                C(id;\Lambda_1,\Lambda_2,L_1,L_2',L_3,L_4) \label{E:PartialtSquaredlHNMinusTwoaPrioriBound}.
            \end{align}
        \end{lemma}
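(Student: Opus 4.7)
The plan is to apply the decompositions furnished by Lemma \ref{L:lFirstEstimates} and then bound each piece via the induction hypotheses together with the previously established bounds on time derivatives. The only work is careful bookkeeping of powers of $c$; there is no substantive new estimate to perform.

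For \eqref{E:lHNaPrioriBound}, I would start from the identity $(4\pi G)^{-1} l = \mathfrak{R}_{\infty}(\Ent,p) - \mathfrak{R}_{\infty}(\EntID,\pID) + \mathfrak{F}_c$ of \eqref{E:lInitialEstimate}. Since $\mathfrak{R}_{\infty} \in \mathcal{R}^{N+1}(\mathfrak{C};\Ent,p)$ by hypothesis \eqref{E:EOScHypothesis1}, Lemma \ref{L:HjbySubtractingConstant} (applied separately at time $\tau$ and at time $0$, together with the triangle inequality) bounds the difference $\mathfrak{R}_{\infty}(\Ent,p) - \mathfrak{R}_{\infty}(\EntID,\pID)$ in $H^N$ by a constant $C(id;\Lambda_1)$, using the induction hypothesis (\ref{E:UniformTime1aAgain}') and the fact that $\Ent,p$ stay in $\bar{\mathfrak{O}}_2$ from \eqref{E:scrWHNBound}. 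The remainder $\mathfrak{F}_c$ is estimated by invoking \eqref{E:lInitialFcEstimate} with $m=1$, so that $\mathfrak{F}_c \in \mathcal{I}^N(c^{-1};\Ent,p,c^{-1}\Phi)$; the definition \eqref{E:IDef} of $\mathcal{I}^N$ then yields an $H^N$ bound in terms of $\|\Ent - \Entbar\|_{H^N}$, $\|p - \pbar\|_{H^N}$, and $\|c^{-1}\Phi - c^{-1}\Phibar_c\|_{H^N}$, all of which are controlled by $C(id;\Lambda_1,\Lambda_2)$ via (\ref{E:UniformTime1aAgain}') and (\ref{E:UniformTime1bAgain}').

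For \eqref{E:PartialtlHNMinusOneaPrioriBound}, I would start from \eqref{E:PartialtlInitialEstimateA}. The chain rule gives $\partial_t (\mathfrak{R}_{\infty}(\Ent,p)) = \tfrac{\partial \mathfrak{R}_{\infty}}{\partial \Ent}\partial_t \Ent + \tfrac{\partial \mathfrak{R}_{\infty}}{\partial p}\partial_t p$; Lemma \ref{L:MultiplyFcGcSobolevEstimate} combined with Lemma \ref{L:HjbySubtractingConstant} and the bound \eqref{E:UniformtimeApriori2Again} of Lemma \ref{L:UniformtimeApriori2} on $\mid\mid\mid \partial_t \scrW \mid\mid\mid_{H^{N-1},\tau}$ yields an $H^{N-1}$ estimate of $C(id;\Lambda_1,\Lambda_2,L_1)$ for this main term. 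For the remainder $\mathfrak{G}_c$, I would invoke \eqref{E:PartialtlInitialGcEstimate} with $m=1$, so that one factor of $c^{-1}$ accompanies $\partial_t \Phi$; this matches exactly the quantity controlled by the induction hypothesis \eqref{E:UniformTime1dPrimeAgain}, and the resulting bound is $C(id;\Lambda_1,\Lambda_2,L_1,L_2')$.

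For \eqref{E:PartialtSquaredlHNMinusTwoaPrioriBound}, I would start from \eqref{E:PartialtSquaredlInitialEstimate} and \eqref{E:PartialtSquaredlInitialHcEstimate} with $m=1$. The main term $\partial_t^2(\mathfrak{R}_{\infty}(\Ent,p))$ expands, via the chain and product rules, into quadratic combinations of $\partial_t \Ent, \partial_t p$ plus linear terms in $\partial_t^2 \Ent, \partial_t^2 p$ with coefficients that are smooth functions of $(\Ent,p)$. All factors are controlled in the relevant Sobolev norms: $\partial_t \scrW$ by \eqref{E:UniformtimeApriori2Again}, and $\partial_t^2 \Ent, \partial_t^2 p$ by \eqref{E:PartialtSquaredpandEtabound} of Lemma \ref{L:PartialtandSquaredEntandPSobolevBounds}. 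The remainder $\mathfrak{H}_c$ with $m=1$ places the factor $c^{-1}\partial_t^2 \Phi$ in the arguments, which is precisely the quantity bounded by the induction hypothesis \eqref{E:UniformTime1fPrimeAgain}; combined with the previous bounds and Lemma \ref{L:MultiplyFcGcSobolevEstimate}, this gives an $H^{N-2}$ estimate of $C(id;\Lambda_1,\Lambda_2,L_1,L_2',L_3,L_4)$. The only mild subtlety, and the place one must be careful, is tracking that every factor of $\partial_t^j \Phi$ appears paired with an explicit $c^{-1}$ so that the available induction hypotheses (which only control $c^{-1}\partial_t\Phi$ and $c^{-1}\partial_t^2\Phi$) can be applied; this is precisely what is delivered by choosing $m=1$ in Lemma \ref{L:lFirstEstimates}.
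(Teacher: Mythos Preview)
Your proposal is correct and follows essentially the same route as the paper: both arguments use the decompositions \eqref{E:lInitialEstimate}--\eqref{E:PartialtSquaredlInitialHcEstimate} of Lemma \ref{L:lFirstEstimates}, handle the main $\mathfrak{R}_\infty$ term via Lemma \ref{L:HjbySubtractingConstant} and the triangle inequality (splitting through the constant state), and control the remainders $\mathfrak{F}_c,\mathfrak{G}_c,\mathfrak{H}_c$ by choosing the appropriate value of $m$ so that the induction hypotheses on $c^{-1}\partial_t\Phi$ and $c^{-1}\partial_t^2\Phi$ apply. The only cosmetic difference is that for \eqref{E:lHNaPrioriBound} the paper takes $m=0$ in \eqref{E:lInitialFcEstimate} (using the $H^N$ bound on $\Phi$ directly from (\ref{E:UniformTime1bAgain}')) whereas you take $m=1$; either choice works since only boundedness, not a decay rate, is needed there.
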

            \begin{proof}
                To prove \eqref{E:lHNaPrioriBound}, we first consider the formula for $l$ given in 
                \eqref{E:lInitialEstimate} + \eqref{E:lInitialFcEstimate}. By Lemma \ref{L:HjbySubtractingConstant} and 
                \eqref{E:scrWHNBound}, we have that 
                \begin{align}
                	\mid\mid\mid \mathfrak{R}_{\infty} (\Ent,p)  - 
                	\mathfrak{R}_{\infty}(\EntID,\pID)\mid\mid\mid_{H^N,\tau} & \leq  \
                		\mid\mid\mid \mathfrak{R}_{\infty}(\Ent,p) - \mathfrak{R}_{\infty}(\Entbar,\pbar) \mid\mid\mid_{H^N,\tau} 
                		\notag \\
                	& \qquad \qquad + \mid\mid\mid \mathfrak{R}_{\infty}(\Entbar,\pbar) - \mathfrak{R}_{\infty}(\EntID,\pID) 
                		\mid\mid\mid_{H^N,\tau}
                		\notag \\
                	& \leqc C(id;\Lambda_1,\Lambda_2). 
               	\end{align}
               	
               	To estimate $\mid\mid\mid \mathfrak{F}_c \mid\mid\mid_{H^N,\tau},$ where $\mathfrak{F}_c$ is 
                from \eqref{E:lInitialEstimate}, simply use \eqref{E:lInitialFcEstimate} in the case $m=0$ together with
                (\ref{E:UniformTime1bAgain}') and \eqref{E:scrWHNBound}. The proofs of \eqref{E:PartialtlHNMinusOneaPrioriBound}
                and \eqref{E:PartialtSquaredlHNMinusTwoaPrioriBound} follow similarly from the expressions
                \eqref{E:PartialtlInitialEstimateA}, \eqref{E:PartialtSquaredlInitialEstimate}, \eqref{E:PartialtlInitialGcEstimate} 
                in the case $m=1,$ and \eqref{E:PartialtSquaredlInitialHcEstimate} in the case $m=1,$ together with 
                Lemma \ref{L:MultiplyFcGcSobolevEstimate} and the bounds supplied by the induction hypotheses, Lemma 
                \ref{L:UniformtimeApriori2}, and Lemma \ref{L:PartialtandSquaredEntandPSobolevBounds}.
        	\end{proof}

        \begin{lemma}                                       \label{L:UniformtimeAprioriPartialtPhidot}
            \begin{align}
                c^{-1} \mid\mid\mid \partial_t \Phi \mid\mid\mid_{H^{N},\tau} &\leqc 
                	1/2 + \tau \cdot C(id;\Lambda_1,\Lambda_2,L_1,L_2') \label{E:PartialtPhiaPrioriHNBound} \\
                &\eqdef L_2'/2 + \tau \cdot C(id;\Lambda_1,\Lambda_2,L_1,L_2'). \notag
            \end{align}
        \end{lemma}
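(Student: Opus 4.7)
The plan is to apply the Klein-Gordon energy estimate \eqref{E:Klein-GordonPartialtPhiHNNormBound2} of Proposition \ref{P:Klein-GordonPhiBounds} to the equation \eqref{E:EOVcKlein-Gordon} satisfied by $\Phidot = \Phi - \PhiID_c,$ using the inhomogeneity $l$ defined in \eqref{E:lUniformTimeDef}. Since $\Phidot(0,\sforspace) = 0$ and $\partial_t \Phidot(0,\sforspace) = \partial_t \Phi(0,\sforspace) = \PsiID_0,$ the relevant initial-data quantities in the Klein-Gordon estimate are $c^{-1}\|\PsiID_0\|_{H^N}$ and $\|l(0)\|_{H^{N-1}},$ together with a time integral of $\|\partial_t l\|_{H^{N-1}}.$ Schematically, I expect the estimate to take the form
\begin{align*}
c^{-1}\|\partial_t \Phi(t)\|_{H^N} \leqc C_0(\kappa)\bigl[c^{-1}\|\PsiID_0\|_{H^N} + \|l(0)\|_{H^{N-1}}\bigr] + C_0(\kappa) \int_0^t \|\partial_t l(s)\|_{H^{N-1}}\, ds.
\end{align*}

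The first (initial data) term is controlled by the very choice of $L_2'$ made in \eqref{E:L2PrimeDef}: the non-optimal bound $C_0(\kappa)\bigl(c^{-1}\|\PsiID_0\|_{H^N} + \|l(0)\|_{H^{N-1}}\bigr) \leqc 1/2 = L_2'/2$ was built into the definition of $L_2'.$ The estimate $\|l(0)\|_{H^{N-1}} \leqc c^{-2} C(id)$ coming from Lemma \ref{L:Klein-GordonIntitialEnergyEstimates} (specifically \eqref{E:lofZerobound}) makes this initial contribution strictly less than $L_2'/2$ for all large $c,$ which is the key ingredient ensuring the initial data piece absorbs into $L_2'/2$ with room to spare.

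For the time-integral term, I would invoke the bound \eqref{E:PartialtlHNMinusOneaPrioriBound} of Lemma \ref{L:lSobolevBounds}, which gives $\mid\mid\mid \partial_t l \mid\mid\mid_{H^{N-1},\tau} \leqc C(id;\Lambda_1,\Lambda_2,L_1,L_2').$ Integrating from $0$ to $\tau$ then produces the required term $\tau \cdot C(id;\Lambda_1,\Lambda_2,L_1,L_2'),$ yielding \eqref{E:PartialtPhiaPrioriHNBound}. This is the pattern followed throughout Section \ref{SSS:ProofofUniformTime}: initial data plus Gronwall-type or time-integral contributions, all measured in $c$-uniform Sobolev norms.

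The main (minor) obstacle is simply bookkeeping: one must ensure that the Klein-Gordon estimate being quoted is the one that pairs $c^{-1}\|\partial_t \Phi\|_{H^N}$ on the left with $c^{-1}\|\PsiID_0\|_{H^N} + \|l(0)\|_{H^{N-1}}$ plus $\int_0^t \|\partial_t l\|_{H^{N-1}}$ on the right, rather than a version that measures $l$ itself in $H^{N}$ (which would not be dominated uniformly in $c$). All required hypotheses on the regularity of $l$ — namely $l \in \bigcap_{k=0}^{k=1} C^k([0,\tau], H^{N-k})$ — follow from the induction hypotheses together with Lemma \ref{L:UniformtimeApriori2} and Lemma \ref{L:lSobolevBounds}, so once the correct form of the estimate is invoked the rest is direct substitution.
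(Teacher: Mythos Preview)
Your proposal is correct and matches the paper's proof exactly: the paper cites precisely definition \eqref{E:L2PrimeDef}, Lemma \ref{L:Klein-GordonIntitialEnergyEstimates}, inequality \eqref{E:PartialtlHNMinusOneaPrioriBound} of Lemma \ref{L:lSobolevBounds}, and inequality \eqref{E:Klein-GordonPartialtPhiHNNormBound2} of Proposition \ref{P:Klein-GordonPhiBounds}. The only cosmetic difference is that the paper's Klein-Gordon estimate \eqref{E:Klein-GordonPartialtPhiHNNormBound2} is stated with $cT\mid\mid\mid \partial_t l \mid\mid\mid_{H^{N-1},T}$ rather than a time integral, but dividing by $c$ gives exactly the form you wrote.
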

            \begin{proof}
                \eqref{E:PartialtPhiaPrioriHNBound}
                follows from definition \eqref{E:L2PrimeDef}, Lemma \ref{L:Klein-GordonIntitialEnergyEstimates}, inequality 
                \eqref{E:PartialtlHNMinusOneaPrioriBound} of Lemma
                \ref{L:lSobolevBounds}, and inequality \eqref{E:Klein-GordonPartialtPhiHNNormBound2} of Proposition 
                \ref{P:Klein-GordonPhiBounds}.
            \end{proof}

        \begin{lemma}                       \label{L:UniformTimeImprovedBound}
        \begin{align}
            \mid\mid\mid \partial_t \Phi \mid\mid\mid_{H^{N},\tau} &\leqc  C(id;\Lambda_1,\Lambda_2,L_1,L_2') + \tau 
            \cdot C(id;\Lambda_1,\Lambda_2,L_1,L_2',L_3,L_4)  \label{E:PartialtPhiImprovedBound} \\
                &\eqdef L_2(id;\Lambda_1,\Lambda_2,L_1,L_2')/2 + \tau \cdot C(id;\Lambda_1,\Lambda_2,L_1,L_2',L_3,L_4)  
                \notag \\
        		c^{-1} \mid\mid\mid \partial_t^2 \Phi
                \mid\mid\mid_{H^{N-1},\tau} &\leqc 1/2 + \tau \cdot
                C(id;\Lambda_1,\Lambda_2,L_1,L_2',L_3,L_4)  \label{E:PartialtSquaredPhiImprovedBound} \\
                &\eqdef L_4/2 + \tau \cdot C(id;\Lambda_1,\Lambda_2,L_1,L_2',L_3,L_4). \notag
        \end{align}
        \end{lemma}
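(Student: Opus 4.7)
Both bounds follow from applying the Klein-Gordon-type energy estimates of Proposition~\ref{P:Klein-GordonPhiBounds} and Corollary~\ref{C:Klein-GordonPhiBounds} to suitably chosen Klein-Gordon equations. The starting point is that $\Phidot \eqdef \Phi - \PhiID_c$ satisfies \eqref{E:EOVcKlein-Gordon} with source $l$ defined by \eqref{E:lUniformTimeDef} and initial data $\Phidot(0)=0,$ $\partial_t \Phidot(0) = \PsiID_0.$ Evaluating \eqref{E:EOVcKlein-Gordon} at $t=0$ yields $\partial_t^2 \Phidot(0) = -c^2 l(0),$ and since $\PhiID_c$ is time-independent, $\partial_t^j \Phi = \partial_t^j \Phidot$ for $j \geq 1.$ The source bounds for $l$ and its time derivatives are supplied by Lemma~\ref{L:lSobolevBounds}, and the crucial initial-data smallness estimates $\|l(0)\|_{H^N} \leqc c^{-2}C(id)$ and $\|(\Delta - \kappa^2)\PsiID_0 - \partial_t l(0)\|_{H^{N-1}} \leqc c^{-2}C(id)$ come from Lemma~\ref{L:Klein-GordonIntitialEnergyEstimates}.

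For \eqref{E:PartialtSquaredPhiImprovedBound}, apply inequality~\eqref{E:Klein-GordonPartialSquaredtPhiHNMinusOneNormBound2} of Proposition~\ref{P:Klein-GordonPhiBounds} directly to $\Phidot.$ Its initial-data contribution is $C_0(\kappa)\bigl(c\|l(0)\|_{H^{N-1}} + \|(\Delta - \kappa^2)\PsiID_0 - \partial_t l(0)\|_{H^{N-2}}\bigr),$ which by Lemma~\ref{L:Klein-GordonIntitialEnergyEstimates} is bounded by $C(id)(c^{-1} + c^{-2});$ combining with the definition $L_4 = 1$ in \eqref{E:Lambda4Def}, this is $\leq L_4/2$ for all sufficiently large $c.$ The remaining contribution is a time-integral involving $\|\partial_t^2 l\|_{H^{N-2}},$ which by \eqref{E:PartialtSquaredlHNMinusTwoaPrioriBound} is bounded by $\tau \cdot C(id;\Lambda_1,\Lambda_2,L_1,L_2',L_3,L_4).$ Since $\partial_t^2 \Phi = \partial_t^2 \Phidot,$ this yields \eqref{E:PartialtSquaredPhiImprovedBound}.

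For \eqref{E:PartialtPhiImprovedBound}, let $v \eqdef \partial_t \Phidot = \partial_t \Phi.$ Differentiating \eqref{E:EOVcKlein-Gordon} in $t$ shows that $v$ itself satisfies a Klein-Gordon equation,
\begin{equation*}
    -c^{-2}\partial_t^2 v + (\Delta - \kappa^2) v = \partial_t l,
\end{equation*}
with initial data $v(0) = \PsiID_0$ and $\partial_t v(0) = -c^2 l(0).$ Applying the $c$-uniform Klein-Gordon estimate in the form of Corollary~\ref{C:Klein-GordonPhiBounds}, inequality~\eqref{E:Klein-GordonPhiNocBound}, to the equation for $v$ produces a bound on $\|v\|_{H^N} = \|\partial_t \Phi\|_{H^N}$ whose initial-data part is controlled by $\|\PsiID_0\|_{H^N} + c\|l(0)\|_{H^{N-1}} \leqc C(id) + c^{-1}C(id),$ and whose source part is a time integral of $\|\partial_t l\|_{H^{N-1}}$ together with $\|\partial_t^2 l\|_{H^{N-2}};$ these are bounded by \eqref{E:PartialtlHNMinusOneaPrioriBound} and \eqref{E:PartialtSquaredlHNMinusTwoaPrioriBound} of Lemma~\ref{L:lSobolevBounds}, respectively. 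Together these produce exactly the form \eqref{E:PartialtPhiImprovedBound}.

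The main obstacle is the apparent $c^2$ blow-up of the initial datum $\partial_t v(0) = -c^2 l(0)$ for the equation satisfied by $v.$ What rescues the argument is the structural gain $\|l(0)\|_{H^{N-1}} = O(c^{-2})$ furnished by Lemma~\ref{L:Klein-GordonIntitialEnergyEstimates}, which in turn hinges on the definition \eqref{E:PhiIDcDef} of $\PhiID_c$: this choice was engineered so that $l(0)$ is not merely a bounded Sobolev function, but one whose leading-order contributions cancel against the constraint \eqref{E:PhiBarc}, leaving an $O(c^{-2})$ remainder. Once this is in hand, the $c^{-1}$ prefactors intrinsic to the Klein-Gordon estimates absorb the factor of $c^2,$ making all the resulting bounds uniform in large $c.$
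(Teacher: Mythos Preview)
Your argument for \eqref{E:PartialtSquaredPhiImprovedBound} is correct and is exactly the paper's: apply \eqref{E:Klein-GordonPartialSquaredtPhiHNMinusOneNormBound2}, divide by $c$, and feed in Lemma~\ref{L:Klein-GordonIntitialEnergyEstimates} together with \eqref{E:PartialtSquaredlHNMinusTwoaPrioriBound}.

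For \eqref{E:PartialtPhiImprovedBound} there is a genuine gap in the tool you invoke. Inequality \eqref{E:Klein-GordonPhiNocBound} of Corollary~\ref{C:Klein-GordonPhiBounds}, applied to $v=\partial_t\Phidot$ at level $N-1$, carries on its right-hand side the factor $\mid\mid\mid \partial_t v \mid\mid\mid_{H^{N-1},\tau}=\mid\mid\mid \partial_t^2\Phi\mid\mid\mid_{H^{N-1},\tau}$. You only control $c^{-1}\mid\mid\mid\partial_t^2\Phi\mid\mid\mid_{H^{N-1},\tau}\leqc L_4$, so that factor is of size $\sim c$, and the bound you write down would not be uniform in $c$. (There is also the minor issue that the corollary assumes zero initial position, whereas $v(0)=\PsiID_0\neq 0$.) The ingredients you list --- $\|\partial_t l\|_{H^{N-1}}$, $\|\partial_t^2 l\|_{H^{N-2}}$, and the $O(c^{-2})$ smallness of $l(0)$ --- are in fact the right ones, but they belong to a different estimate: inequality \eqref{E:Klein-GordonNocDependencePartialtPhi} of Proposition~\ref{P:Klein-GordonNocDependence}. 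That proposition is proved by viewing $\partial_t\Phidot$ as a solution of the \emph{elliptic} equation $(\Delta-\kappa^2)\partial_t\Phidot=\partial_t l+c^{-2}\partial_t^3\Phidot$, applying Lemma~\ref{L:AlterateHNnormestimate}, and then controlling $c^{-2}\|\partial_t^3\Phidot\|_{H^{N-2}}$ via \eqref{E:Klein-GordonPartialCubedtPhiHNMinusTwoNormBound}; the latter step is where the cancellation from Lemma~\ref{L:Klein-GordonIntitialEnergyEstimates} (including \eqref{E:PartialtlofZerobound}) enters. Substituting this for your citation of Corollary~\ref{C:Klein-GordonPhiBounds} repairs the argument and reproduces the paper's proof.
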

            \begin{proof}
                The estimate \eqref{E:PartialtPhiImprovedBound} follows
                from Lemma \ref{L:Klein-GordonIntitialEnergyEstimates}, inequalities \eqref{E:PartialtlHNMinusOneaPrioriBound}
                and \eqref{E:PartialtSquaredlHNMinusTwoaPrioriBound} of Lemma \ref{L:lSobolevBounds},
                and inequality \eqref{E:Klein-GordonNocDependencePartialtPhi} of Proposition \ref{P:Klein-GordonNocDependence}.
                The estimate \eqref{E:PartialtSquaredPhiImprovedBound} follows from definition \eqref{E:Lambda4Def},
                Lemma \ref{L:Klein-GordonIntitialEnergyEstimates}, inequality \eqref{E:PartialtSquaredlHNMinusTwoaPrioriBound} of
                Lemma \ref{L:lSobolevBounds}, and inequality \eqref{E:Klein-GordonPartialSquaredtPhiHNMinusOneNormBound2} of
                Proposition \ref{P:Klein-GordonPhiBounds}.
            \end{proof}

        \begin{lemma}                                             \label{L:UniformtimeAprioriPhidot}
            \begin{align} \label{E:PhiHNAprioriEstimateAgain}
                \mid\mid\mid \Phidot \mid\mid\mid^2_{H^{N+1},\tau} \leqc
                \frac{(\Lambda_2)^2}{4} + \tau \cdot 
                C(id;\Lambda_1,\Lambda_2,L_2) + \tau^2 \cdot C(id;\Lambda_1,\Lambda_2,L_1,L_2',L_3,L_4).
            \end{align}
        \end{lemma}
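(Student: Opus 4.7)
The plan is to derive the inequality from a standard Klein--Gordon energy estimate for the linear equation \eqref{E:EOVcKlein-Gordon} satisfied by $\Phidot,$ namely $-c^{-2}\partial_t^2 \Phidot + (\Delta - \kappa^2)\Phidot = l,$ with initial data $\Phidot(0,\cdot) = 0$ and $\partial_t \Phidot(0,\cdot) = \PsiID_0.$ First I would apply $\partial_{\vec{\alpha}}$ (for $|\vec{\alpha}| \leq N$) to the equation, multiply by $\partial_t \partial_{\vec{\alpha}} \Phidot,$ integrate over $\mathbb{R}^3,$ and sum to obtain an identity for the energy $E(t) \eqdef \|\Phidot(t)\|_{H^{N+1}}^2 + c^{-2}\|\partial_t \Phidot(t)\|_{H^N}^2$ (equivalent, up to a $\kappa$-dependent multiplicative constant, to its quadratic-form definition, since $\kappa^2 > 0$). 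Integrating in time produces $E(\tau) \leqc E(0) + \int_0^\tau \|\partial_t \Phi\|_{H^N} \cdot \|l\|_{H^N}\,dt,$ which is essentially Proposition \ref{P:Klein-GordonPhiBounds} applied one spatial order higher than in the proof of Lemma \ref{L:UniformtimeAprioriPartialtPhidot}.

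I would then generate the three terms on the right-hand side of \eqref{E:PhiHNAprioriEstimateAgain} as follows. The initial datum $E(0) = c^{-2}\|\PsiID_0\|_{H^N}^2$ is bounded by $(\Lambda_2)^2/4$ by the very choice of $\Lambda_2$ recorded in \eqref{E:Lambda2Def}. To produce the linear-in-$\tau$ term, I would split the source-term integral into its top-order piece ($|\vec{\alpha}| = N$) and its lower-order piece ($|\vec{\alpha}| \leq N-1$). The top-order piece is handled by direct Cauchy--Schwarz, inserting the improved bound $\mid\mid\mid \partial_t \Phi \mid\mid\mid_{H^N,\tau} \leqc L_2$ from \eqref{E:PartialtPhiImprovedBound} of Lemma \ref{L:UniformTimeImprovedBound} together with the straight bound \eqref{E:lHNaPrioriBound} on $\|l\|_{H^N};$ this is precisely the step where the constant $L_2$ must be admitted into the estimate, and it yields the $\tau \cdot C(id;\Lambda_1,\Lambda_2,L_2)$ contribution.

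For the lower-order piece ($|\vec{\alpha}| \leq N-1$), I would exploit the initial smallness $\|l(0)\|_{H^N} \leqc c^{-2}C(id)$ from \eqref{E:lofZerobound} by substituting the Taylor expansion $l(t) = l(0) + \int_0^t \partial_s l(s)\,ds$ (and, in order to generate the $L_3, L_4$ dependence, the second-order remainder $l(t) = l(0) + t\,\partial_t l(0) + \int_0^t (t-s)\partial_s^2 l(s)\,ds$). Applying the bounds \eqref{E:PartialtlHNMinusOneaPrioriBound} and \eqref{E:PartialtSquaredlHNMinusTwoaPrioriBound} for $\partial_t l$ and $\partial_t^2 l$ and integrating in time then produces a contribution of order $\tau^2 \cdot C(id;\Lambda_1,\Lambda_2,L_1,L_2',L_3,L_4),$ plus a harmless $c^{-2}\tau$ piece that is absorbed into the $\tau$-coefficient for all large $c.$

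The main obstacle is the careful bookkeeping required to keep these three contributions genuinely separated: in particular, I must ensure that the top-order portion of the source sum---where $\partial_t l$ is controlled only in $H^{N-1}$ and hence no time-integration-by-parts (or Taylor expansion in the source) is available---contributes only to the $\tau$-coefficient, leaving $L_3$ and $L_4$ out of it, while the lower-order portions, where the Taylor expansion of $l$ about $t=0$ may be inserted, deliver the full $\tau^2$-coefficient with its $L_3, L_4$ dependence. Once this decomposition is in place, \eqref{E:PhiHNAprioriEstimateAgain} follows by combining the three pieces with the Klein--Gordon energy identity above.
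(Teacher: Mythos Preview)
Your opening energy identity is exactly the content of Corollary~\ref{C:Klein-GordonPhiBounds}, and your treatment of $E(0)$ is correct, so the skeleton matches the paper. The divergence begins at the source term, and there you overcomplicate the argument and introduce a gap.

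The issue is your claim ``$\mid\mid\mid \partial_t \Phi \mid\mid\mid_{H^N,\tau} \leqc L_2$.'' That bound is \emph{not} available on $[0,T_c^{max})$: it is neither among the induction hypotheses \eqref{E:UniformTime1aAgain}--\eqref{E:UniformTime1fPrimeAgain} nor a consequence of them at this stage. What \eqref{E:PartialtPhiImprovedBound} actually gives is
\[
\mid\mid\mid \partial_t \Phi \mid\mid\mid_{H^N,\tau} \leqc L_2/2 + \tau \cdot C(id;\Lambda_1,\Lambda_2,L_1,L_2',L_3,L_4),
\]
and the $\tau$-linear piece here is precisely the carrier of the $L_3,L_4$ dependence. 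Once you insert this full two-term bound together with \eqref{E:lHNaPrioriBound} into the single product $2\tau \cdot \mid\mid\mid \partial_t \Phi \mid\mid\mid_{H^N,\tau} \cdot \mid\mid\mid l \mid\mid\mid_{H^N,\tau}$ from Corollary~\ref{C:Klein-GordonPhiBounds}, the multiplication immediately produces $\tau \cdot C(id;\Lambda_1,\Lambda_2,L_2) + \tau^2 \cdot C(id;\Lambda_1,\Lambda_2,L_1,L_2',L_3,L_4)$, which is exactly \eqref{E:PhiHNAprioriEstimateAgain}. This is the paper's entire proof.

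Your top-order/lower-order split and the Taylor expansion of $l$ are therefore unnecessary: the $\tau^2$ coefficient with its $L_3,L_4$ dependence does not come from expanding $l$ in time (that route, as you outlined it, would produce either a $\tau^2$ coefficient depending on $L_2$ rather than $L_3,L_4$, or a $\tau^3$ term carrying $L_3,L_4$), but simply from the $\tau$-linear remainder already built into the bound on $\partial_t\Phi$. Drop the split, quote \eqref{E:PartialtPhiImprovedBound} in its full form, and the lemma is one line.
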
  
          
            \begin{proof}
                Inequality \eqref{E:PhiHNAprioriEstimateAgain} follows
                from definition \eqref{E:Lambda2Def}, \eqref{E:lHNaPrioriBound}, \eqref{E:PartialtPhiImprovedBound},
                and inequality \eqref{E:Klein-GordonPhiNocBound} of Corollary \ref{C:Klein-GordonPhiBounds}.
            \end{proof}

        \begin{lemma} \label{L:DivergenceJdotLargecL1Bound}
                Let $\Jscrdotc$ be the energy current \eqref{E:RescaledEnergyCurrent} for
                the variation $\Wdot$ defined by the BGS $\Vbold,$
                and let $\mathbf{b} \eqdef (f,g,\cdots,h^{(3)}),$ where
                $f,g,\cdots,h^{(3)}$ are the inhomogeneous terms from the EOV$_{\kappa}^c$ satisfied by
                $\Wdot$ that are defined in \eqref{E:IterateInhomogeneouscf} - \eqref{E:IterateInhomogeneouschj} 
                and that also appear in the expression \eqref{E:RescaledEnergyCurrentDivergence} for the divergence 
                of $\Jscrdotc.$ Then on $[0,T_c^{max}),$ we have that
                \begin{align}
                    \|\partial_{\mu} \big(\Jscrdotc^{\mu}\big)\|_{L^1} \leqc
                    C(id;\Lambda_1,\Lambda_2,L_1,L_2') \cdot \big[\|\Wdot\|_{L^2}^2 +
                    \|\Wdot\|_{L^2}\|\mathbf{b}\|_{L^2}\big].
                \end{align}
       	\end{lemma}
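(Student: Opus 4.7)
The plan is to proceed term by term through the explicit expression \eqref{E:RescaledEnergyCurrentDivergence} for $\partial_\mu \Jscrdotc^\mu,$ classifying every summand as either quadratic in $\Wdot$ (the terms arising from the ``lower-order divergence'' property of the current) or bilinear in $\Wdot$ and $\mathbf{b}$ (the terms $2\dot{\Ent}\mathit{f} + 2\wtQc^{-1}\dot{P}\mathit{g} + 2\dot{v}_j \mathit{h}^{(j)}$). For a type-(i) term, H\"older's inequality gives
\begin{align*}
	\|(\text{coef}) \cdot (\text{quadratic in } \Wdot)\|_{L^1} \leq \|\text{coef}\|_{L^\infty}\,\|\Wdot\|_{L^2}^2,
\end{align*}
and for a type-(ii) term, Cauchy-Schwarz gives
\begin{align*}
	\|(\text{coef}) \cdot \Wdot \cdot \mathbf{b}\|_{L^1} \leq \|\text{coef}\|_{L^\infty}\,\|\Wdot\|_{L^2}\,\|\mathbf{b}\|_{L^2}.
\end{align*}
Summing these bounds would immediately yield the lemma, so the entire burden reduces to producing a uniform-in-$c$ $L^\infty$ bound on every coefficient, with the bounding constant allowed to depend on $id,\Lambda_1,\Lambda_2,L_1,L_2'.$

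Each coefficient is an algebraic combination of $1/\wtQc,$ $(\wtg)^2,$ $\wtv^j,$ and $\wtRc + c^{-2}\wtP,$ together with their first-order spacetime derivatives. Using Remark \ref{R:1OverFisintheRing} and Lemma \ref{L:OrderctoNegativeTwoEstimates}, each \emph{undifferentiated} coefficient is seen to lie in $\mathcal{R}^N(\Wbold,c^{-1}\Phi).$ The spatial derivatives are handled by Corollary \ref{C:TimeDifferentiatedSobolevEstimate} applied to $\Wbold,\Phi,$ while the time derivatives are handled by Lemma \ref{L:TimeDifferentiatedSobolevEstimate}, which introduces factors of $\partial_t \Wbold$ and $c^{-1}\partial_t \Phi.$ The induction hypotheses \eqref{E:UniformTime1aAgain}--\eqref{E:UniformTime1dPrimeAgain} together with the bound on $\|\partial_t \Wbold\|_{H^{N-1}}$ from Lemma \ref{L:UniformtimeApriori2} provide uniform $H^{N-1}$ control, and since $N \geq 4$ gives the Sobolev embedding $H^{N-1}(\mathbb{R}^3) \hookrightarrow L^\infty(\mathbb{R}^3),$ the required uniform $L^\infty$ estimates follow.

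The main subtlety is that at the stage in the continuous induction when this lemma is invoked (inside the proof of Lemma \ref{L:UniformtimeApriori1}), we have only the weak control $c^{-1}\|\partial_t \Phi\|_{H^N} \leq L_2'$ rather than an $O(1)$ bound on $\partial_t\Phi$ itself. The essential observation is that $\partial_t \Phi$ enters the coefficients of \eqref{E:RescaledEnergyCurrentDivergence} only through $\partial_t \wtRc$ and $\partial_t \wtQc,$ each of which, because of the $e^{4\Phi/c^2}$ structure of $\wtRc$ and $\wtQc$ exposed in \eqref{E:RcDefII}--\eqref{E:QcDefII}, produces the good combination $c^{-2}\partial_t \Phi.$ Its $L^\infty$ norm is bounded by $c^{-2}\cdot cL_2' = c^{-1}L_2',$ which is in fact $O(c^{-1})$ and therefore harmless. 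Assembling these estimates term by term yields the stated inequality.
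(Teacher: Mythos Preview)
Your proof is correct and follows essentially the same approach as the paper: split the divergence formula \eqref{E:RescaledEnergyCurrentDivergence} into quadratic-in-$\Wdot$ and linear-in-$\Wdot$ terms, use H\"older/Cauchy--Schwarz to reduce to uniform $L^\infty$ bounds on the coefficients, and obtain those via Remark \ref{R:1OverFisintheRing}, Lemma \ref{L:OrderctoNegativeTwoEstimates} (with $m=1$), the induction hypotheses, \eqref{E:UniformtimeApriori2Again}, and Sobolev embedding. Your explicit discussion of why only $c^{-2}\partial_t\Phi$ appears (so that the weak bound $L_2'$ on $c^{-1}\partial_t\Phi$ suffices) is a helpful unpacking of what the paper encodes tersely by citing the $m=1$ case.
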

                \begin{proof}
                    We separate the terms on the right-hand side of \eqref{E:RescaledEnergyCurrentDivergence} into
                    two types: those that depend quadratically on the variations, and those that depend linearly on the variations.
                    We first bound (for all large $c$) the $L^1$ norm of the terms that depend quadratically on the variations by 
                    $C(id;\Lambda_1,\Lambda_2,L_1,L_2')\cdot \|\Wdot\|_{L^2}^2.$ This bound follows directly from the fact
                    that the coefficients of the quadratic variation terms can be bounded in $L^{\infty}$ by
                    $C(id;\Lambda_1,\Lambda_2,L_1,L_2');$ such an $L^{\infty}$ bound
                   	may be obtained by combining Remark \ref{R:1OverFisintheRing}, Lemma \ref{L:OrderctoNegativeTwoEstimates}
                   	in the case $m=1,$ Remark \ref{R:BoundedThroughALimit}, the induction hypotheses, 
                   	\eqref{E:UniformtimeApriori2Again}, and Sobolev embedding. 
                   	
                   	We similarly bound the $L^1$ norm of the terms that depend linearly on the variations
                   	by $C(id;\Lambda_1,\Lambda_2) \cdot \|\Wdot\|_{L^2}\|\mathbf{b}\|_{L^2},$ but for
                   	these terms, we also make use of the Cauchy-Schwarz inequality for integrals.
                \end{proof}
            
            We also state here the following corollary that will
            be used in the proof of Theorem
            \ref{T:NewtonianLimit}. 

            \begin{corollary} \label{C:DivergenceJdotInfinityL1Bound}
                Let $\scrV \in C_b^1([0,T] \times \mathbb{R}^3) \cap \bigcap_{k=0}^{k=1} C^k([0,T],H_{\scrVb_c}^{N-k}),$
								and assume that $\scrV([0,T] \times \mathbb{R}^3)\subset \mathfrak{K},$
								where $\mathfrak{K}$ is defined in \eqref{E:frakKdef}. Let
                $\scrWdot$ be a solution to the EOV$_{\kappa}^{\infty}$
                \eqref{E:EOVc1} - \eqref{E:EOVc3} defined by the BGS $\scrW$ with inhomogeneous terms \\
                $\mathbf{b} =(f,g,\cdots,h^{(3)}),$ where $\scrW$ denotes the first 5
                components of $\scrV.$ Let $\Jscrdotinfinity$ be the energy current
                \eqref{E:RescaledEnergyCurrentInfinity} for
                the variation $\scrWdot$ defined by the BGS $\scrW.$ Then on $[0,T],$ we have that
                \begin{align}
                    \|\partial_{\mu} \big(\Jscrdotinfinity^{\mu}\big)\|_{L^1} \leq
                    C(\mathfrak{K};\mid\mid\mid\scrW\mid\mid\mid_{L^{\infty},T},\mid\mid\mid \partial_t 
                    \scrW\mid\mid\mid_{L^{\infty},T}) \cdot \big[\|\scrWdot\|_{L^2}^2 + \|\scrWdot\|_{L^2}\|\mathbf{b}\|_{L^2}\big].
                \end{align}
            \end{corollary}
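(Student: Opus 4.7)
The plan is to mirror the proof of Lemma \ref{L:DivergenceJdotLargecL1Bound}, taking advantage of the significantly simpler form \eqref{E:RescaledEnergyCurrentDivergenceInfinity} of the divergence in the $c=\infty$ case. The right-hand side of \eqref{E:RescaledEnergyCurrentDivergenceInfinity} naturally splits into two pieces: the \emph{quadratic} terms in the variations, namely $\{\partial_t(\widetilde{Q}_\infty^{-1}) + \partial_j(\widetilde{v}^j/\widetilde{Q}_\infty)\} \dot{p}^2$ and $\{\partial_t \widetilde{R}_\infty + \partial_j(\widetilde{R}_\infty \widetilde{v}^j)\}\dot{v}_k \dot{v}^k$, and the \emph{linear} inhomogeneous block $2\dot{\Ent} f + 2\dot{P}g/\widetilde{Q}_\infty + 2\dot{v}_j h^{(j)}$.

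For the quadratic piece, I would expand the coefficients via the chain rule applied to $\mathfrak{R}_\infty(\widetilde{\Ent},\widetilde{p})$ and $\mathfrak{Q}_\infty(\widetilde{\Ent},\widetilde{p})$, obtaining a finite sum of terms of the form (smooth function of $\widetilde{\Ent},\widetilde{p}$) times a first-order derivative of $\scrW$. The hypotheses $\scrV([0,T] \times \mathbb{R}^3) \subset \mathfrak{K}$ and $\scrV \in C_b^1([0,T] \times \mathbb{R}^3)$ supply $L^\infty$ control on $\scrW$ and on its first derivatives, with the $L^\infty$ norm of $\partial_t \scrW$ made explicit in the statement; the physicality assumptions on the equation of state from Section \ref{SS:ApplicationtoENkappac}, combined with the compactness of $\mathfrak{K}$ and the definition of $\mathfrak{K}$ as an $\bar{\Ent},\bar{p}>0$ neighborhood, yield strictly positive uniform lower bounds for $\mathfrak{R}_\infty$ and $\mathfrak{Q}_\infty$ on the image of $\scrV$, so that $1/\widetilde{Q}_\infty$ and its partial derivatives are well-defined and uniformly bounded. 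Consequently, each quadratic coefficient is bounded in $L^\infty$ by a constant of the claimed form, and integrating then produces an $L^1$ bound of the quadratic part by this constant times $\|\scrWdot\|_{L^2}^2$.

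For the linear piece, I would bound each term pointwise by the factorization $|\dot{\Ent} f|, |\dot{P}g|/\widetilde{Q}_\infty, |\dot{v}_j h^{(j)}|$, using the uniform lower bound for $\widetilde{Q}_\infty$ to handle the denominator, and then invoke the integral Cauchy--Schwarz inequality to obtain a bound by $C\cdot \|\scrWdot\|_{L^2}\|\mathbf{b}\|_{L^2}$. Summing the two contributions yields the asserted estimate. I anticipate no substantive obstacle: all the structural work (construction of $\Jscrdotinfinity$, explicit computation of its divergence, and the positivity/boundedness of $\widetilde{R}_\infty$ and $\widetilde{Q}_\infty$ on $\mathfrak{K}$) has been carried out earlier in the paper, and the $c=\infty$ setting eliminates the $(\widetilde{\gamma}_c)^2$ and $c^{-2}$ bookkeeping that made Lemma \ref{L:DivergenceJdotLargecL1Bound} notationally heavier. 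The only mild point requiring care is ensuring that the constant truly depends only on the quantities listed in the statement; since the coefficient expansion produces terms involving all first-order partial derivatives of $\widetilde{\Ent}, \widetilde{p}, \widetilde{v}^j$, the $C_b^1$ hypothesis on $\scrV$ is implicitly supplying the needed $L^\infty$ bounds, with the constant absorbing them through the dependence on $\mathfrak{K}$ and on the norms that appear explicitly.
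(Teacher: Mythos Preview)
Your proposal is correct and follows essentially the same approach as the paper, which simply states that the argument mirrors Lemma \ref{L:DivergenceJdotLargecL1Bound} but is simpler since $c$ does not enter the estimates. You have in fact supplied more detail than the paper does, correctly identifying the quadratic/linear splitting of \eqref{E:RescaledEnergyCurrentDivergenceInfinity} and noting that the $C_b^1$ hypothesis on $\scrV$ absorbs the $L^\infty$ bounds on spatial derivatives of $\scrW$ that appear in the coefficients.
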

            \begin{proof}
            	We do not give any details since Corollary \ref{C:DivergenceJdotInfinityL1Bound} can proved by arguing as 
            	we did in our proof of Lemma \ref{L:DivergenceJdotLargecL1Bound}. In fact, the proof of
            	Corollary \ref{C:DivergenceJdotInfinityL1Bound} is simpler: $c$ does not enter into the estimates.
            \end{proof}

        \begin{lemma}                           \label{L:UniformtimeApriori1}
            \begin{align} \label{E:UniformtimeApriori1LemmaStatement}
                \mid\mid\mid \Wdot \mid\mid\mid_{H^N,\tau} \leqc \big[\Lambda_1/2 + \tau \cdot 
                C(id;\Lambda_1,\Lambda_2,L_1,L_2') \big] 
                \cdot \mbox{\textnormal{exp}} \big(\tau \cdot C(id;\Lambda_1,\Lambda_2,L_1,L_2') \big).
            \end{align}
        \end{lemma}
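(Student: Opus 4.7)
The plan is to use the energy current method on each differentiated variation $\partial_{\vec{\alpha}}\Wdot$ with $|\vec{\alpha}| \leq N$. As noted in the setup, each such $\partial_{\vec{\alpha}}\Wdot$ satisfies the EOV$_{\kappa}^c$ \eqref{E:EOVcUniformTime} defined by the same BGS $\Vbold$ with inhomogeneous term $\mathbf{b}_{\vec{\alpha}} = \Ac^0 \partial_{\vec{\alpha}}((\Ac^0)^{-1}\mathbf{b}) + \mathbf{k}_{\vec{\alpha}}$. I will construct the energy current $\Jscrdotc$ associated to the variation $\partial_{\vec{\alpha}} \Wdot$ using the formula \eqref{E:RescaledEnergyCurrent}, integrate its divergence over the slab $[0,s] \times \mathbb{R}^3$, and sum the resulting estimates over $|\vec{\alpha}| \leq N$.

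For each fixed $\vec{\alpha}$, the induction hypotheses together with the construction of $K$ and $Z$ in \eqref{E:Zdef} give $\Wbold([0,T_c^{max}) \times \mathbb{R}^3) \subset \bar{\mathcal{O}}_2$ and $\|\Phi\|_{L^\infty} \leqc Z$, so Lemma \ref{L:UniformPositivitywithc} provides a $c$-uniform constant $C_{\bar{\mathcal{O}}_2,Z}$ such that
\begin{align*}
    C_{\bar{\mathcal{O}}_2,Z} \|\partial_{\vec{\alpha}}\Wdot(s)\|_{L^2}^2 \leq \|\Jscrdotc^0(s)\|_{L^1} \leq \|\Jscrdotc^0(0)\|_{L^1} + \int_0^s \|\partial_\mu \Jscrdotc^\mu\|_{L^1}\, d\tau'.
\end{align*}
I then apply Lemma \ref{L:DivergenceJdotLargecL1Bound} (with $\Wdot$ replaced by $\partial_{\vec{\alpha}}\Wdot$ and $\mathbf{b}$ replaced by $\mathbf{b}_{\vec{\alpha}}$) to conclude that $\|\partial_\mu \Jscrdotc^\mu\|_{L^1} \leqc C(id;\Lambda_1,\Lambda_2,L_1,L_2') \cdot [\|\partial_{\vec{\alpha}}\Wdot\|_{L^2}^2 + \|\partial_{\vec{\alpha}}\Wdot\|_{L^2}\|\mathbf{b}_{\vec{\alpha}}\|_{L^2}]$. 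For the initial data term, $\|\Jscrdotc^0(0)\|_{L^1}$ is bounded via \eqref{E:UniformPositivitywithc} by $C_{\bar{\mathcal{O}}_2,Z}^{-1}\|\Wdot(0)\|_{H^N}^2$, which by \eqref{E:IntialValueEstimatec} gives $(\Lambda_1/2)^2$ after absorbing the constant.

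The key remaining task is to bound $\sum_{|\vec{\alpha}|\leq N} \|\mathbf{b}_{\vec{\alpha}}\|_{L^2} \leqc C(id;\Lambda_1,\Lambda_2,L_1,L_2')$ uniformly in $c$. For the term $\Ac^0 \partial_{\vec{\alpha}}((\Ac^0)^{-1}\mathbf{b})$, I apply the Sobolev--Moser calculus of Section \ref{S:cDependence}: the smoothed initial data satisfy $\WIDSmoothed_c \in H_{\Wb_c}^{N+1}$ uniformly in $c$ by Remark \ref{R:SmoothedDataLargecNPlusOneEstimate}, the terms in $\mathbf{b}$ involving $c^{-2}\partial_t\Phi$, $c^{-2}\Phi$, $\Rc$, $\Qc$, $\gamma_c^2$ are controlled in $H^N$ via Lemma \ref{L:OrderctoNegativeTwoEstimates}, the induction hypotheses, and Lemma \ref{L:UniformtimeApriori2}, and the matrix factors $\Ac^0, (\Ac^0)^{-1}$ are handled via Lemma \ref{L:MatricescIsFiniteNormEstimates}. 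For the commutator $\mathbf{k}_{\vec{\alpha}}$, the standard Kato--Ponce/Moser commutator estimates yield $\|\mathbf{k}_{\vec{\alpha}}\|_{L^2} \leqc \|\Wdot\|_{H^N}$ times a constant depending on Sobolev norms of $(\Ac^0)^{-1}\Ac^k$, all uniformly controlled for large $c$ via Lemma \ref{L:MatricesLargecNormEstimates}.

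Combining these ingredients and summing over $|\vec{\alpha}| \leq N$ produces an integral inequality of the form
\begin{align*}
    \|\Wdot(s)\|_{H^N}^2 \leqc (\Lambda_1/2)^2 + C_1 \int_0^s \big[\|\Wdot(\tau')\|_{H^N}^2 + \|\Wdot(\tau')\|_{H^N}\big]\, d\tau',
\end{align*}
with $C_1 = C(id;\Lambda_1,\Lambda_2,L_1,L_2')$. After using $ab \leq \tfrac{1}{2}(a^2+b^2)$ on the cross term and applying Gronwall's inequality to the resulting scalar ODE inequality, I obtain \eqref{E:UniformtimeApriori1LemmaStatement}. The main obstacle will be verifying that every constant in the argument depends only on the initial data and on $\Lambda_1,\Lambda_2,L_1,L_2'$, with no hidden dependence on $c$; this amounts to a careful bookkeeping exercise using the $\mathcal{R}^j$ and $\mathcal{I}^j$ calculus of Section \ref{S:cDependence}, and in particular using that the ``worst'' factor of $c$ in $\mathfrak{B}_c$ (cf. Remark \ref{R:WorstTerm}) is balanced by the factor $c^{-1}\partial\Phi$ already controlled via $L_2'$.
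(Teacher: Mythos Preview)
Your proposal is correct and follows essentially the same approach as the paper: construct energy currents for each $\partial_{\vec{\alpha}}\Wdot$, invoke the uniform positivity Lemma \ref{L:UniformPositivitywithc} and the divergence bound Lemma \ref{L:DivergenceJdotLargecL1Bound}, estimate $\|\mathbf{b}_{\vec{\alpha}}\|_{L^2}$ by splitting into the $\Ac^0\partial_{\vec{\alpha}}((\Ac^0)^{-1}\mathbf{b})$ piece (handled via Proposition \ref{P:CompositionProductSobolevMoser} and the $c$-calculus) and the commutator $\mathbf{k}_{\vec{\alpha}}$ (handled via Proposition \ref{P:SobolevMissingDerivativeProposition}), and close with Gronwall. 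The only cosmetic difference is that the paper works with the differential inequality for $\mathscr{E}(t) \eqdef (\sum_{|\vec{\alpha}|\leq N}\|\Jscrdot_{\vec{\alpha}}^0\|_{L^1})^{1/2}$ directly rather than your integral-inequality formulation, and it records the sharper bound $\|\mathbf{b}_{\vec{\alpha}}\|_{L^2} \leqc C(id;\Lambda_1,\Lambda_2,L_2')(1+\|\Wdot\|_{H^N})$ (no $L_1$ in that constant; $L_1$ enters only through the divergence lemma).
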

            \begin{proof}

        Our proof of Lemma \ref{L:UniformtimeApriori1} follows from a Gronwall estimate in
        the $H^N$ norm of the variation $\Wdot$ defined in \eqref{E:DotWcUniformTime}. Rather than directly estimating
        the $H^N$ norm of $\Wdot,$ we instead estimate the $L^1$ norm of $\Jscrdotc_{\vec{\alpha}}^0,$
        where $\Jscrdotc_{\vec{\alpha}}$ is the energy current for the variation $\partial_{\vec{\alpha}}\Wdot$ defined by the BGS
        $\Vbold.$ This is favorable because of property \eqref{E:PositiveDefinitec} and because
        by \eqref{E:RescaledEnergyCurrentDivergence}, the divergence of $\Jscrdotc$ is lower order in $\Wdot.$
        We follow the method of proof of local existence from \cite{jS2008a}; the only difficulty
        is checking that our estimates are independent of all large $c.$ An important ingredient
        in our proof is showing that for $0 \leq |{\vec{\alpha}}| \leq N,$ we have the bound
        \begin{align}       \label{E:ComponentsinL2c}
            \|\mathbf{b}_{\vec{\alpha}}\|_{L^2} \leqc C(id;\Lambda_1,\Lambda_2,L_2')\big(1 + \|\Wdot\|_{H^N}\big),
        \end{align}
        where $\mathbf{b}_{\vec{\alpha}}$ is defined in \eqref{E:balphaDefc}.
        Let us assume \eqref{E:ComponentsinL2c} for the moment; we will provide a proof at the end of the proof of the lemma.

        We now let $\Jscrdotc_{\vec{\alpha}}$ denote the energy current \eqref{E:RescaledEnergyCurrent} for
        the variation $\partial_{\vec{\alpha}} \Wdot$ defined by the BGS
        $\Vbold,$ and abbreviating $\Jscrdot_{\vec{\alpha}} \eqdef \Jscrdotc_{\vec{\alpha}}$ to ease the notation, we define
        $\mathscr{E}(t) \geq 0$ by
            \begin{align} \label{E:UniformTimeEnergyc}
                \mathscr{E}^2(t) \eqdef \sum_{|\vec{\alpha}| \leq N} \int_{\mathbb{R}^3}
                \Jscrdot_{\vec{\alpha}}^0(t,\sforspace) \, d^3\sforspace .
            \end{align}
        By \eqref{E:UniformPositivitywithc}, and the Cauchy-Schwarz inequality for sums, we have that
                \begin{align} \label{E:LargecAlternateSobolevNormEstimate}
                    C_{\bar{\mathcal{O}}_2,Z} \|\Wdot\|_{H^N}^2 \leqc \mathscr{E}^2(t) \leqc C^{-1}_{\bar{\mathcal{O}}_2,Z}
                        \|\Wdot\|_{H^N}^2.
                \end{align}
                \noindent Here, the value of $Z=Z(id;\Lambda_2)$ is given by \eqref{E:Zdef}.

                Then by Lemma \ref{L:DivergenceJdotLargecL1Bound},
                \eqref{E:ComponentsinL2c}, \eqref{E:LargecAlternateSobolevNormEstimate},
                with $C \eqdef C(id;\Lambda_1,\Lambda_2,L_1,L_2'),$ we have

                \begin{align} \label{E:UniformTimeDifferentialEnergyInequalityc}
                    2\mathscr{E}\frac{d}{dt}\mathscr{E} &= \sum_{|\vec{\alpha}| \leq N}
                        \int_{\mathbb{R}^3} \partial_{\mu} \Jscrdot_{\vec{\alpha}}^{\mu} \, d^3 \sforspace
                        \leqc C \cdot \sum_{|\vec{\alpha}| \leq N}
                        \big(\|\partial_{\vec{\alpha}}\Wdot\|_{L^2}^2
                        + \|\partial_{\vec{\alpha}}\Wdot\|_{L^2}\|\mathbf{b}_{\vec{\alpha}}\|_{L^2}\big) \\
                        &\leqc C \cdot \big(\|\Wdot\|_{H^N}^2 + \|\Wdot\|_{H^N}\big) \notag
                        \leqc C \cdot \big(\mathscr{E}^2 + \mathscr{E}\big). \notag
                \end{align}
        We now apply Gronwall's inequality to
                \eqref{E:UniformTimeDifferentialEnergyInequalityc}, concluding that 
                \begin{align} \label{LargecAlternateHNEnergyBound}
                    \mathscr{E}(t) \leqc \big[\mathscr{E}(0)+ Ct\big] \cdot \mbox{exp}(Ct).
                \end{align}
        Using \eqref{E:LargecAlternateSobolevNormEstimate} again, it follows from \eqref{LargecAlternateHNEnergyBound} that
                \begin{align}
                    \|\Wdot(t)\|_{H^N} \leqc \big(C_{\bar{\mathcal{O}}_2,Z}^{-1} \|\Wdot(0)\|_{H^N}
                        + C t\big)\cdot\mbox{exp}(C t).
                \end{align}
                Recalling that $\Wdot(0)=\WID_c - \WIDSmoothed_c$ and taking into account inequality
                \eqref{E:IntialValueEstimatec}, the estimate \eqref{E:UniformtimeApriori1LemmaStatement} now follows.

        It remains to show \eqref{E:ComponentsinL2c}. Our proof is based on the Sobolev-Moser
        propositions stated in Appendix \ref{A:SobolevMoser} and the $c-$independent estimates of Section
        \ref{S:cDependence}. With the 5 components of the array $\mathbf{b}$ defined by \eqref{E:IterateInhomogeneouscf} - 
        \eqref{E:IterateInhomogeneouschj}, we first claim that the term $\Ac^0 \partial_{\vec{\alpha}}
        \big((\Ac^0)^{-1}\mathbf{b} \big)$ from \eqref{E:balphaDefc} satisfies
        \begin{align}                                                                           \label{E:SobolevClaimc}
            \| \Ac^0 \partial_{\vec{\alpha}} \big((\Ac^0)^{-1}\mathbf{b} \big) \|_{L^2} \leqc
            C(id;\Lambda_1,\Lambda_2,L_2').
        \end{align}
        Because \eqref{E:MatricescIsFiniteNormEstimates} and the induction hypotheses together imply that \\
        $\|\Ac^0(\scrW,\Phi)\|_{L^{\infty}}
        \leqc C(id;\Lambda_1,\Lambda_2),$ it suffices to bound the
        $H^N$ norm of $(\Ac^0)^{-1}\mathbf{b}$ by the right-hand side of \eqref{E:SobolevClaimc}. To this end,
        we use the induction hypotheses, \eqref{E:MatricescIsFiniteNormEstimates}, 
        Proposition \ref{P:CompositionProductSobolevMoser}, and Remark
        \ref{R:SobolevCalculusRemark}, with 
        $(\Ac^0(\scrW,\Phi))^{-1}$ playing the role of $\mathfrak{F}$ in the proposition and 
        $\mathbf{b}$ playing the role of $G,$ to conclude that
        \begin{align} \label{E:AtbHNEstimatec}
            \|(\Ac^0)^{-1}\mathbf{b}\|_{H^N} \leqc C(id;\Lambda_1,\Lambda_2)\|\mathbf{b}\|_{H^N}.
        \end{align}

        To estimate $\|\mathbf{b}\|_{H^N},$ we first split the array $\mathbf{b}$ into two arrays:
        \begin{align} \label{E:bSplit}
            \mathbf{b} = \mathfrak{B}_c(\scrW,\Phi,D\Phi) + \mathfrak{I}_c(id,\scrW,\Phi),
        \end{align}
        where $\mathfrak{B}_c$ is defined in Lemma \ref{L:InhomogeneousTermsLargecSobolevEstimates} and the 5-component array
        $\mathfrak{I}_c$ comprises the
        terms from the right-hand sides of \eqref{E:IterateInhomogeneouscf} - \eqref{E:IterateInhomogeneouschj}
        containing at least one factor of the smoothed initial data. By Lemma
        \ref{L:MultiplyFcGcSobolevEstimate}, Lemma
        \ref{L:OrderctoNegativeTwoEstimates}, Remark \ref{R:BoundedThroughALimit}, and Remark \ref{R:SmoothedDataLargecNPlusOneEstimate}, 
        we have that
      	\begin{align} \label{E:InitialDataInhomogeneousLargecSobolevEstimateA}
            \mathfrak{I}_c \in \mathcal{I}^{N}(id,\scrW,\Phi),
        \end{align}
        and from \eqref{E:InitialDataInhomogeneousLargecSobolevEstimateA} and the induction hypotheses, 
        it follows that
        \begin{align} \label{E:InitialDataInhomogeneousLargecSobolevEstimateB}
            \|\mathfrak{I}_c(id,\scrW,\Phi)\|_{H^N}\leqc C(id;\Lambda_1,\Lambda_2).
        \end{align}
        Furthermore, by \eqref{E:BcIsInIN} in the case $m=1$ and the induction hypotheses,
        we have that
        \begin{align} \label{E:InhomogeneousLargecSobolevEstimate}
            \|\mathfrak{B}_c(\scrW,\Phi,D\Phi)\|_{H^N} \leqc C(id;\Lambda_1,\Lambda_2,L_2').
        \end{align}
        Combining \eqref{E:bSplit}, \eqref{E:InitialDataInhomogeneousLargecSobolevEstimateB} and
        \eqref{E:InhomogeneousLargecSobolevEstimate}, we have that
        \begin{align}                                           \label{E:bHNEstimatec}
            \|\mathbf{b}\|_{H^N} \leqc
            C(id;\Lambda_1,\Lambda_2,L_2').
        \end{align}
        We now observe that \eqref{E:AtbHNEstimatec} and \eqref{E:bHNEstimatec} together imply \eqref{E:SobolevClaimc}.

        We next claim that the $\mathbf{k}_{\vec{\alpha}}$ terms
        \eqref{E:kalphaDefc} satisfy
        \begin{align}                                                               \label{E:SobolevClaim2c}
            \|\mathbf{k}_{\vec{\alpha}}\|_{L^2} \leqc
            C(id;\Lambda_1,\Lambda_2)\|\Wdot\|_{H^N}.
        \end{align}
        Since $\|\Ac^0(\scrW,\Phi)\|_{L^{\infty}} \leqc
        C(id;\Lambda_1,\Lambda_2),$ to prove \eqref{E:SobolevClaim2c}, it suffices to control the $L^2$ norm of
        $(\Ac^0)^{-1}\Ac^k \partial_k (\partial_{\vec{\alpha}}\Wdot) -\partial_{\vec{\alpha}}
        \left((\Ac^0)^{-1} \Ac^k \partial_k \Wdot \right).$ By the induction hypotheses, \eqref{E:MatricescIsFiniteNormEstimates}, 
        Proposition \ref{P:SobolevMissingDerivativeProposition}, and Remark
        \ref{R:SobolevMissingDerivativeRemark}, with \\
        $(\Ac^0)^{-1}\Ac^k = \Big((\Ac^0)^{-1}\Ac^k\Big)(\scrW,\Phi)$ playing the role 
        of $\mathfrak{F}$ in the proposition, and $\partial_k \Wdot$ playing the role of $G,$ we have (for $0 \leq 
        |\vec{\alpha}| \leq N$) that
        \begin{align}
            \|(\Ac^0)^{-1}\Ac^k \partial_{\vec{\alpha}} (\partial_k \Wdot) -\partial_{\vec{\alpha}}
                \big((\Ac^0)^{-1} \Ac^k \partial_k \Wdot \big)\|_{L^2}
            \leqc C(id;\Lambda_1,\Lambda_2) \|\partial \Wdot\|_{H^{N-1}},
        \end{align}
        from which \eqref{E:SobolevClaim2c} readily follows. This concludes the proof of
        \eqref{E:ComponentsinL2c}, and therefore also the proof of Lemma \ref{L:UniformtimeApriori1}.
       \end{proof}

 \section{The Non-Relativistic Limit of the EN$_{\kappa}^c$ System} \label{S:NonrelativisticLimit}

        In this section, we state and prove our main theorem regarding
        the non-relativistic limit of the EN$_{\kappa}^c$ system.
        Before stating our main theorem, we first state and prove a corollary of Theorem \ref{T:UniformLocalExistenceENkappac}
				that will be used in the proof of Theorem \ref{T:NewtonianLimit}, and we also briefly discuss local existence for the 
				EP$_{\kappa}$ system.

        \subsection{EN$_{\kappa}^c$ well-approximates EP$_{\kappa}$ for large $c$}
        	 The following corollary, which is an extension of Corollary \ref{C:ENkappacIsAlmostEPkappa}, shows that for large $c,$ 
        	 solutions to the EN$_{\kappa}^c$ system are ``almost'' solutions to the EP$_{\kappa}$ system.
					
                \begin{corollary} \label{C:ApproximateEPkappaSolutions}
                    For all large $c,$ the solutions $\scrV=(\scrW,\Phi,D\Phi)$ to the 
                    EN$_{\kappa}^c$ system \eqref{E:ENkappac1} - \eqref{E:PDefcII} furnished by Theorem 
                    \ref{T:UniformLocalExistenceENkappac} satisfy
                    \begin{align} \label{E:ApproximateEPkappaSolutions}
                        \Ainfinity^{\mu}(\scrW) \partial_{\mu} \scrW
                            &= \mathfrak{B}_{\infty}(\scrW,\partial\Phi) + \mathfrak{E}1_c \\
                        \Delta (\Phi - \PhiID_c) - \kappa^2 (\Phi - \PhiID_c) &= 4 \pi G [\mathfrak{R}_{\infty}(\Ent,p)
                        - \mathfrak{R}_{\infty}(\EntID,\pID)] + \mathfrak{E}2_c,
                    \end{align}
                    where
                    \begin{align}
                        \mid\mid\mid \mathfrak{E}1_c\mid\mid\mid_{H^{N-1},T} &\leqc c^{-2} C(id;\Lambda_1,\Lambda_2,L_2)
                            \label{E:ApproximateSolutionErrorSobolevEstimate1}\\
                        \mid\mid\mid \mathfrak{E}2_c\mid\mid\mid_{H^{N-1},T} &\leqc c^{-1}
                        C(id;\Lambda_1,\Lambda_2,L_4),
                            \label{E:ApproximateSolutionErrorSobolevEstimate2}
                    \end{align}
                    and $T$ is from Theorem \ref{T:UniformLocalExistenceENkappac}.
                 \end{corollary}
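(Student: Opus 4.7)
The proof splits naturally into two independent pieces, one for each displayed equation. For the first equation, the work has essentially already been done in Corollary \ref{C:ENkappacIsAlmostEPkappa}: writing that identity as
\begin{align*}
\partial_t \scrW = \big(\Ainfinity^0(\scrW)\big)^{-1}\big[-\Ainfinity^k(\scrW) \partial_k \scrW + \mathfrak{B}_{\infty}(\scrW,\partial\Phi)\big] + \mathscr{O}^{N-1}(c^{-2};\Wbold,\partial\Wbold,\Phi,D\Phi),
\end{align*}
I will multiply through by $\Ainfinity^0(\scrW)$ and set $\mathfrak{E}1_c$ equal to the resulting error. Since $\Ainfinity^0(\scrW) \in \mathcal{R}^{N-1}$ by \eqref{E:MatricescIsInfinityNormEstimates}, Lemma \ref{L:MultiplyFcGcSobolevEstimate} preserves the $c^{-2}$ decay, and the Sobolev norms governing the $\mathscr{O}^{N-1}$ term are controlled by the uniform-in-time bounds of Theorem \ref{T:UniformLocalExistenceENkappac} (namely \eqref{E:UniformTime1a}--\eqref{E:UniformTime1d}), giving the claimed bound $\leqc c^{-2}\,C(id;\Lambda_1,\Lambda_2,L_2)$.

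For the second equation, I will compute $\Delta(\Phi-\PhiID_c) - \kappa^2(\Phi-\PhiID_c)$ explicitly. From the EN$_{\kappa}^c$ field equation \eqref{E:ENkappac4} we have $\Delta \Phi - \kappa^2 \Phi = c^{-2}\partial_t^2 \Phi + 4\pi G(\Rc - 3c^{-2}P)$, while from the definition \eqref{E:PhiIDcDef} of $\PhiID_c$ together with \eqref{E:PhiIDinfinityIdentity} one obtains
\begin{align*}
\Delta \PhiID_c - \kappa^2 \PhiID_c = 4\pi G\big[\mathfrak{R}_\infty(\EntID,\pID) - \mathfrak{R}_\infty(\Entbar,\pbar)\big] - \kappa^2 \Phibar_c.
\end{align*}
Subtracting and using the defining equation \eqref{E:PhiBarc} to eliminate $\kappa^2 \Phibar_c$ in favor of $e^{4\Phibar_c/c^2}[\mathfrak{R}_c(\Entbar,\pbar) - 3c^{-2}\pbar]$, then adding and subtracting $\mathfrak{R}_\infty(\Ent,p)$ and $\mathfrak{R}_\infty(\Entbar,\pbar)$ so that the $\mathfrak{R}_\infty$-differences combine as required by the right-hand side, I arrive at the identity
\begin{align*}
\mathfrak{E}2_c &= c^{-2}\partial_t^2 \Phi + 4\pi G\big[\Rc - e^{4\Phibar_c/c^2}\mathfrak{R}_c(\Entbar,\pbar)\big]  - 4\pi G\big[\mathfrak{R}_\infty(\Ent,p) - \mathfrak{R}_\infty(\Entbar,\pbar)\big]  - 12\pi G c^{-2}\big[P - e^{4\Phibar_c/c^2}\pbar\big].
\end{align*}

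To estimate $\mathfrak{E}2_c$ in $H^{N-1}$, I will handle each group separately. The bracketed differences can be written as differences of functions of the state-space variables evaluated at $(\Ent,p,\Phi)$ and at the constants $(\Entbar,\pbar,\Phibar_c)$, so by Lemma \ref{L:HjbySubtractingConstant}, \eqref{E:RcOrderctoNegativeTwoEstimate}, and \eqref{E:expPhiOvercSquaredOrderctoNegativeTwoEstimate} they lie in $H^{N-1}$, and the hypothesis \eqref{E:EOScHypothesis2} together with Lemma \ref{L:MultiplyFcGcSobolevEstimate} shows that their combination contributes only $O(c^{-2})$. The final term $-12\pi G c^{-2}[P-e^{4\Phibar_c/c^2}\pbar]$ is manifestly $O(c^{-2})$. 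The dominant error is $c^{-2}\partial_t^2 \Phi$, whose $H^{N-1}$ norm is bounded by $c^{-1}L_4$ thanks to \eqref{E:UniformTime1f} of Theorem \ref{T:UniformLocalExistenceENkappac}; this is what dictates the slower $O(c^{-1})$ rate and the dependence on $L_4$.

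\textbf{The main obstacle} is the algebraic bookkeeping needed to verify that the various constant-at-infinity contributions from $\Rc$, $P$, and $\kappa^2 \Phibar_c$ combine compatibly with the difference $\mathfrak{R}_\infty(\Ent,p) - \mathfrak{R}_\infty(\EntID,\pID)$ on the right-hand side so that $\mathfrak{E}2_c$ is genuinely an $H^{N-1}$ function rather than merely a bounded one; this is the point at which the two defining constraints \eqref{E:PhibarInfinityDef} and \eqref{E:PhiBarc} must be deployed together, and it is precisely the reason the datum $\PhiID_c$ was defined with the ``shift'' \eqref{E:PhiIDcDef} so that $\PhiID_c - \Phibar_c = \PhiID_\infty - \Phibar_\infty$. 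Once the algebra is organized this way, the remaining Sobolev estimates are routine applications of the calculus developed in Section \ref{S:cDependence}.
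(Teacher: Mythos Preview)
Your proposal is correct and follows essentially the same approach as the paper. The only difference is organizational: for $\mathfrak{E}2_c$ the paper packages your explicit algebraic computation by invoking the already-established identity $\Delta(\Phi-\PhiID_c)-\kappa^2(\Phi-\PhiID_c)=c^{-2}\partial_t^2\Phi+l$ together with Lemma~\ref{L:lFirstEstimates} (specifically \eqref{E:lInitialEstimate} and \eqref{E:lInitialFcEstimate} with $m=0$), whereas you redo that algebra directly; and for $\mathfrak{E}1_c$ the paper cites Proposition~\ref{P:CompositionProductSobolevMoser} rather than Lemma~\ref{L:MultiplyFcGcSobolevEstimate} to handle the multiplication by $\Ainfinity^0(\scrW)$, but the effect is the same.
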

                   
                    \begin{proof}
                        The estimate \eqref{E:ApproximateSolutionErrorSobolevEstimate1}
                        follows from multiplying each side of \eqref{E:SolveforTimeDerivativesscrW}
                        by $\Ainfinity^0(\scrW),$ and then combining
                        Proposition \ref{P:CompositionProductSobolevMoser}, Remark \ref{R:SobolevCalculusRemark},
                        \eqref{E:UniformTime1d}, and the induction hypotheses from
                        Section \ref{SS:InductionHypotheses}, which are valid on $[0,T];$ 
                        we remark that we are making use of the $\mathscr{O}^{N-1}(c^{-2};\Wbold,\partial\Wbold,\Phi,D\Phi)$
                        estimate on the right-hand side of \eqref{E:SolveforTimeDerivativesscrW}.
                        Similarly, the 
                        estimate \eqref{E:ApproximateSolutionErrorSobolevEstimate2}
                        follows from the fact that \\ $\Delta (\Phi - \PhiID_c) - \kappa^2 (\Phi - \PhiID_c)= c^{-2}\partial_t^2 \Phi 
                        + l,$ where $l$ is given by \eqref{E:lExpression}, together with \eqref{E:lInitialEstimate},
                        \eqref{E:lInitialFcEstimate} in the case $m=0,$ \eqref{E:UniformTime1f}, and the
                        induction hypotheses.
                    \end{proof}

        \subsection{Local existence for EP$_{\kappa}$} 
				 In this section, we briefly discuss local existence for the EP$_{\kappa}$ system.

        \begin{theorem} \label{T:EPkappaLocalExistence} {\bf (Local Existence for EP$_{\kappa}$)}
            Let $\scrVID_{\infty}$ denote initial data \eqref{E:EPkappaData} for
            the EP$_{\kappa}$ system \eqref{E:EPkappa1} - \eqref{E:QinfinityRelationship}
            that are subject to the conditions described in Section
            \ref{S:IVPc}. Assume further that the equation of state is ``physical'' as
            described in sections \ref{SS:ENkappacDerivation} and \ref{SS:ApplicationtoENkappac}. Then there exists a
            $T_{\infty} > 0$ such that 
            \eqref{E:EPkappa1} - \eqref{E:QinfinityRelationship} has a unique classical solution $\scrV_{\infty} \in
            C_b^2([0,T_{\infty}] \times \mathbb{R}^3)$ of the form $\scrV_{\infty} \eqdef (\Ent_{\infty},p_{\infty},v_{\infty}^1,\cdots, 
            \partial_3 \Phi_{\infty}),$ and such that 
            $\scrV_{\infty}(0,\sforspace)=\scrVID_{\infty}(\sforspace).$ Additionally, $T_{\infty}$ can be
            chosen such that \\ 
            $\scrV_{\infty}([0,T_{\infty}] \times \mathbb{R}^3) \subset \mathfrak{K},$
            where the compact convex set $\mathfrak{K}$ is defined in
            \eqref{E:frakKdef}. Finally,  
            $\scrV_{\infty} \in \bigcap_{k=0}^{k=2}C^k([0,T_{\infty}],H_{\scrVb_{\infty}}^{N-k})$ and 
            $\Phi \in  C_b^3([0,T_{\infty}] \times \mathbb{R}^3) \cap 
            \bigcap_{k=0}^{k=3}C^k([0,T_{\infty}],H_{\Phibar_{\infty}}^{N+1-k}).$
       \end{theorem}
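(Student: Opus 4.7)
The EP$_{\kappa}$ system is a quasilinear symmetric-hyperbolizable system \eqref{E:EPkappa1}--\eqref{E:EPkappa3} for the fluid array $\scrW = (\Ent,p,v^1,v^2,v^3)$ coupled to the elliptic equation \eqref{E:EPkappa4} for the potential. Because the vanishing-at-infinity solution to \eqref{E:EPkappa4} is given by the explicit convolution formula \eqref{E:EPkappaPotential}, the potential $\Phi$ is algebraically determined at each time by $(\Ent,p)$. Thus the system effectively reduces to a quasilinear hyperbolic system for $\scrW$ with a nonlocal forcing term. The proof will parallel, in substantially simplified form, the argument for Theorem \ref{T:UniformLocalExistenceENkappac}, employing the $c=\infty$ energy current $\Jscrdotinfinity$ from \eqref{E:RescaledEnergyCurrentInfinity} in place of $\Jscrdotc$; since there is no $c$-dependence to track, the Sobolev-Moser bookkeeping of Section \ref{S:cDependence} is not needed.

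\textbf{Iteration.} Mollify the first five components of the initial data as in Section \ref{S:SmoothingtheData} to obtain $\scrWIDSmoothed$, and set $\scrW_{(0)}(t,\sforspace) \equiv \scrWIDSmoothed(\sforspace)$. Given $\scrW_{(n)} \in \bigcap_{k=0}^{1}C^k([0,T],H_{\scrWb_{\infty}}^{N-k})$ with $\scrW_{(n)}([0,T]\times \mathbb{R}^3) \subset \bar{\mathfrak{O}}_2$, define $\Phi_{(n)}$ via \eqref{E:EPkappaPotential} applied to $(\Ent_{(n)},p_{(n)})$; by Remark \ref{R:KernelRemark}, $\Phi_{(n)}(t,\cdot) - \Phibar_{\infty} \in H^{N+2}$ with norm controlled by $\|\scrW_{(n)}(t) - \scrWb_{\infty}\|_{H^N}$. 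Then solve the linear symmetric-hyperbolizable Cauchy problem
\begin{align}
\Ainfinity^{\mu}(\scrW_{(n)}) \partial_{\mu} \scrW_{(n+1)} &= \mathfrak{B}_{\infty}(\scrW_{(n)}, \partial \Phi_{(n)}), \notag \\
\scrW_{(n+1)}(0,\sforspace) &= \scrWID_{\infty}(\sforspace). \notag
\end{align}
For $|\vec{\alpha}| \leq N$, commuting $\partial_{\vec{\alpha}}$ through this system (as in \eqref{E:EOVcUniformTime}--\eqref{E:kalphaDefc}) produces a linear equation for $\partial_{\vec{\alpha}}(\scrW_{(n+1)} - \scrWIDSmoothed)$, and the corresponding energy current $\Jscrdotinfinity$ with BGS $\scrV_{(n)} = (\scrW_{(n)}, \Phi_{(n)}, D\Phi_{(n)})$ has a $\Jscrdotinfinity^0$ that is manifestly positive definite because $p_{(n)} > 0$ and $\Ent_{(n)} > 0$ by the inclusion in $\bar{\mathfrak{O}}_2$.

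\textbf{Uniform bounds, Cauchy property, and the limit.} Applying Corollary \ref{C:DivergenceJdotInfinityL1Bound} and the Sobolev-Moser estimates of Appendix \ref{A:SobolevMoser} to control commutator and initial-data inhomogeneous terms (analogous to \eqref{E:IterateInhomogeneouscf}--\eqref{E:IterateInhomogeneouschj}), a Gronwall argument yields constants $\Lambda > 0$ and $T_{\infty} > 0$, independent of $n$, so that
\[
\mid\mid\mid \scrW_{(n+1)} - \scrWIDSmoothed \mid\mid\mid_{H^N,T_{\infty}} \leq \Lambda
\]
for every $n$, with $\Lambda$ chosen small enough (via an analogue of \eqref{E:WellDefinedc}) that $\scrW_{(n+1)}([0,T_{\infty}]\times\mathbb{R}^3) \subset \bar{\mathfrak{O}}_2$, closing the induction on $n$. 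A parallel $L^2$ energy estimate for the difference $\scrW_{(n+1)} - \scrW_{(n)}$, whose inhomogeneous terms are controlled by $\|\scrW_{(n)} - \scrW_{(n-1)}\|_{L^2}$ (the nonlocal $\Phi$-contribution being handled by the $L^2 \to H^1$ boundedness of $\mathcal{K}*$ from Remark \ref{R:KernelRemark}), shows that $\{\scrW_{(n)}\}$ is Cauchy in $C([0,T_{\infty}],L^2)$. Interpolating against the uniform $H^N$ bound via Proposition \ref{P:SobInterpolation} gives convergence in $C([0,T_{\infty}], H^{N'})$ for every $N' < N$; the argument of Proposition \ref{P:ContinuationPrinciple}, applied to the limit, then promotes the limit $\scrV_{\infty}$ to a classical solution lying in $C^0([0,T_{\infty}],H_{\scrVb_{\infty}}^N) \cap C_b^1([0,T_{\infty}] \times \mathbb{R}^3)$.

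\textbf{Upgrading regularity and uniqueness.} The higher time regularity claims $\scrV_{\infty} \in \bigcap_{k=0}^{2}C^k([0,T_{\infty}],H^{N-k}_{\scrVb_{\infty}})$ and $\Phi \in \bigcap_{k=0}^{3}C^k([0,T_{\infty}],H^{N+1-k}_{\Phibar_{\infty}})$ follow by using the evolution equations to solve iteratively for $\partial_t^k \scrW$ and $\partial_t^k \Phi$ in terms of spatial derivatives of lower-order time derivatives, with each application of $\mathcal{K}*$ gaining two spatial derivatives by Remark \ref{R:KernelRemark} and Lemma \ref{L:AlterateHNnormestimate}. The $C_b^2$ and $C_b^3$ conclusions then follow from Sobolev embedding, using $N \geq 4$. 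Uniqueness is obtained from a standard energy estimate for the difference of two solutions, using $\Jscrdotinfinity$ once more. The only genuinely subtle point, rather than any delicate analytic obstruction, is the bookkeeping required to preserve the compact-set inclusion $\scrW_{(n)} \subset \bar{\mathfrak{O}}_2$ throughout the iteration, since this is what guarantees simultaneously the positive-definiteness of $\Jscrdotinfinity^0$, the positivity of $p$ and $\Ent$ (needed to make the definitions \eqref{E:RinfinityDef}, \eqref{E:QinfinityRelationship} meaningful and for the fluid equations to be hyperbolic), and the uniform bounds on the coefficients of $\Ainfinity^{\mu}(\scrW_{(n)})$ used in the energy estimates.
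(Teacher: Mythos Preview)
Your proposal is correct and follows essentially the same approach the paper sketches: the paper's proof is itself only an outline, indicating that one should run an iteration scheme using the energy currents $\Jscrdotinfinity$ to control $\|\scrW_{\infty}\|_{H_{\scrWb_{\infty}}^N}$ and the elliptic estimate $\|f\|_{H^2} \leq C\|(\Delta - \kappa^2)f\|_{L^2}$ (Lemma \ref{L:AlterateHNnormestimate}) to control $\Phi_{\infty}$, with reference to Makino \cite{tM1986} and to the techniques used in the proof of Theorem \ref{T:NewtonianLimit}. Your write-up simply fleshes out this outline; the one cosmetic point is that Proposition \ref{P:ContinuationPrinciple} is stated for EN$_{\kappa}^c$ rather than EP$_{\kappa}$, so strictly speaking you are invoking its obvious analogue.
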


       \begin{proof}
            Theorem \ref{T:EPkappaLocalExistence} can be proved
            by an iteration scheme based on the method of energy currents: energy currents ${^{(\infty)}{\dot{\mathscr{J}}}}$ can be 
            used to control $\|\scrW_{\infty}\|_{H_{\scrWb_{\infty}}^{N}},$ while $\|\Phi_{\infty}\|_{H_{\Phibar_{\infty}}^{N+1}}$
            can be controlled using the estimate 
            $\| f \|_{H^2} \leq C \|(\Delta - \kappa^2) f \|_{L^2}$ for $f \in H^2.$ These methods are employed in the
            proof of Theorem \ref{T:NewtonianLimit} below, so we don't
            provide a proof here. Similar techniques are used by Makino in \cite{tM1986}. We remark that these methods 
            apply in particular to the system studied by Kiessling (as described in Section \ref{SS:EPkappac}) in \cite{mK2003}.
       \end{proof}

        \subsection{Statement and proof of the main theorem} 

        \begin{theorem} \label{T:NewtonianLimit} {\bf (The Non-Relativistic Limit of EN$_{\kappa}^c$)}
            Let $\scrVID_{\infty}$ denote initial data \eqref{E:EPkappaData} for
            the EP$_{\kappa}$ system \eqref{E:EPkappa1} - \eqref{E:QinfinityRelationship}
            that are subject to the conditions described in Section
            \ref{S:IVPc}. Let $\scrVID_c$ denote the corresponding initial data \eqref{E:ENkappacscrVID} for
            the EN$_{\kappa}^c$ system \eqref{E:ENkappac1} - \eqref{E:PDefcII}
            constructed from $\scrVID_{\infty}$ as described in Section \ref{S:IVPc}, and assume that the 
            $c-$indexed equation of state satisfies the hypotheses \eqref{E:EOScHypothesis1} and \eqref{E:EOScHypothesis2}
            and is ``physical'' as described in sections \ref{SS:ENkappacDerivation} and \ref{SS:ApplicationtoENkappac}.
            Let $\scrV_{\infty} \eqdef (\Ent_{\infty},p_{\infty},v_{\infty}^1,\cdots, \partial_3 \Phi_{\infty})$
            $\big(\scrV_c \eqdef(\Ent_c,p_c,v_c^1,\cdots, \partial_3 \Phi_c) \big)$ denote the
            solution to the EP$_{\kappa}$ \big(EN$_{\kappa}^c$\big) system launched by $\scrVID_{\infty}$
            $\big(\scrVID_c\big)$ as furnished by Theorem
            \ref{T:EPkappaLocalExistence} (Theorem \ref{T:UniformLocalExistenceENkappac}). By Theorem
            \ref{T:EPkappaLocalExistence} and Theorem \ref{T:UniformLocalExistenceENkappac}, we may assume that for all large $c,$
            $\scrV_{\infty}$ and $\scrV_c$ exist on a common spacetime slab $[0,T] \times
            \mathbb{R}^3,$ where $T$ is the minimum of the two
            times from the conclusions of the theorems.
            Let $\scrW_{\infty}$ and $\scrW_c$ denote the first $5$ components of $\scrV_{\infty}$ and
            $\scrV_c$ respectively. Then there exists a constant $C>0$ such that
            \begin{align}
                \mid\mid\mid \scrW_{\infty} - \scrW_c \mid\mid\mid_{H^{N-1},T} &\leqc c^{-1} \cdot C \label{E:NewtonianConvergence} \\
                \mid\mid\mid\big(\Phi_{\infty} - \Phibar_{\infty}\big) - \big(\Phi_c -
                    \Phibar_c\big)\mid\mid\mid_{H^{N+1},T} &\leqc c^{-1} \cdot C
                    \label{E:ClassicalConvergencePhi}\\
                \lim_{c \to \infty} |\Phibar_{\infty} - \Phibar_{c}| &=0 \label{E:ClassicalConstantConvergence},
         		\end{align}
         		where the constants $\Phibar_{\infty}$ and $\Phibar_{c}$ are defined through the initial data by
          	\eqref{E:PhibarInfinityDef} and \eqref{E:PhiBarc} respectively.
        \end{theorem}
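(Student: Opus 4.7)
The plan is to set up a coupled Gronwall estimate for the differences $\scrWdot \eqdef \scrW_\infty - \scrW_c$ and $\Phidot \eqdef (\Phi_\infty - \Phibar_\infty) - (\Phi_c - \Phibar_c)$, using the energy current $\Jscrdotinfinity$ defined by the BGS $\scrV_\infty$ for the fluid variation together with the elliptic estimate $\|f\|_{H^{k+2}} \leq C(\kappa)\|(\Delta - \kappa^2) f\|_{H^k}$ for the potential variation. A preliminary observation simplifies matters: by \eqref{E:PhiIDcDef} combined with the construction of $\scrVID_c$ from $\scrVID_\infty$ in Section \ref{S:IVPc}, both initial variations vanish identically, $\scrWdot(0,\cdot) \equiv 0$ and $\Phidot(0,\cdot) \equiv 0$, so the Gronwall argument starts from zero.

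Next I would derive the equations satisfied by the differences. Subtracting the exact EP$_{\kappa}$ system satisfied by $\scrV_\infty$ from the approximate EP$_{\kappa}$ form of the EN$_{\kappa}^c$ system satisfied by $\scrV_c$ provided by Corollary \ref{C:ApproximateEPkappaSolutions} yields
\begin{align}
\Ainfinity^\mu(\scrW_\infty) \partial_\mu \scrWdot &= \big[\Ainfinity^\mu(\scrW_\infty) - \Ainfinity^\mu(\scrW_c)\big]\partial_\mu \scrW_c \notag \\
& \quad + \mathfrak{B}_\infty(\scrW_\infty, \partial \Phi_\infty) - \mathfrak{B}_\infty(\scrW_c, \partial \Phi_c) - \mathfrak{E}1_c,
\end{align}
exhibiting $\scrWdot$ as a solution of EOV$_{\kappa}^\infty$ defined by the BGS $\scrV_\infty$ whose inhomogeneous terms are linear in $\scrWdot$ and $\partial \Phidot$ up to the $\mathfrak{E}1_c$ source, which by \eqref{E:ApproximateSolutionErrorSobolevEstimate1} is $O(c^{-2})$ in $H^{N-1}$. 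Using the initial-data identity \eqref{E:PhiIDinfinityIdentity} to eliminate the $\PhiID_c$ and $\Phibar_c$ terms from the approximate Poisson equation in Corollary \ref{C:ApproximateEPkappaSolutions}, and subtracting from \eqref{E:EPkappa4}, I obtain
\begin{equation}
(\Delta - \kappa^2)\Phidot = 4\pi G\big[\mathfrak{R}_\infty(\Ent_\infty, p_\infty) - \mathfrak{R}_\infty(\Ent_c, p_c)\big] - \mathfrak{E}2_c,
\end{equation}
where by \eqref{E:ApproximateSolutionErrorSobolevEstimate2} the source $\mathfrak{E}2_c$ is $O(c^{-1})$ in $H^{N-1}$.

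Then I would control $\|\scrWdot(t)\|_{H^{N-1}}^2$ by the energy $\mathscr{E}^2(t) \eqdef \sum_{|\vec\alpha|\leq N-1}\int_{\mathbb{R}^3} \Jscrdotinfinity_{\vec\alpha}^0 \, d^3 \sforspace$, where $\Jscrdotinfinity_{\vec\alpha}$ is the energy current for the variation $\partial_{\vec\alpha}\scrWdot$ defined by the BGS $\scrV_\infty$. Theorem \ref{T:EPkappaLocalExistence} confines $\scrV_\infty$ to the compact convex set $\mathfrak{K}$ and supplies the required $C^1$ and Sobolev regularity, so Lemma \ref{L:UniformPositivitywithc} (applicable at $c = \infty$ via Remark \ref{R:NoZdependence}) yields $\mathscr{E}(t) \sim \|\scrWdot(t)\|_{H^{N-1}}$. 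Applying Corollary \ref{C:DivergenceJdotInfinityL1Bound} together with the Sobolev-Moser machinery of Appendix \ref{A:SobolevMoser} to bound the commutator and composition contributions to the inhomogeneous terms for $\partial_{\vec\alpha}\scrWdot$, I would obtain $\frac{d}{dt}\mathscr{E}^2 \leq C \mathscr{E}^2 + C \mathscr{E}\big(\|\partial \Phidot\|_{H^{N-1}} + c^{-2}\big)$. In parallel, the elliptic estimate applied to the Poisson equation for $\Phidot$, combined with the composition estimate $\|\mathfrak{R}_\infty(\Ent_\infty, p_\infty) - \mathfrak{R}_\infty(\Ent_c, p_c)\|_{H^{N-1}} \leq C \|\scrWdot\|_{H^{N-1}}$ and the bound on $\mathfrak{E}2_c$, yields $\|\Phidot\|_{H^{N+1}} \leq C \|\scrWdot\|_{H^{N-1}} + C c^{-1}$. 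Feeding this into the energy inequality and invoking Gronwall's inequality with $\mathscr{E}(0) = 0$ gives $\mathscr{E}(t) \leqc c^{-1}$ uniformly on $[0,T]$, which produces \eqref{E:NewtonianConvergence} and, via the elliptic estimate, \eqref{E:ClassicalConvergencePhi}. Finally, \eqref{E:ClassicalConstantConvergence} is just the already-established \eqref{E:PhibarcConvergence}, obtained from the implicit equations \eqref{E:PhibarInfinityDef} and \eqref{E:PhiBarc} defining the background potentials together with hypotheses \eqref{E:EOScHypothesis1}, \eqref{E:EOScHypothesis2}.

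The main obstacle I anticipate is cleanly closing the coupling between the fluid and potential variations without losing derivatives: the bound on $\Phidot$ is at the $H^{N+1}$ level whereas the one on $\scrWdot$ is at $H^{N-1}$, and bridging them requires carefully combining the two-derivative gain from elliptic regularity with a Sobolev-Moser composition estimate for $\mathfrak{R}_\infty$ that only consumes the $H^N$ regularity of $\scrW_\infty$ and $\scrW_c$ guaranteed by Theorems \ref{T:EPkappaLocalExistence} and \ref{T:UniformLocalExistenceENkappac}. A secondary technical point is verifying that every commutator $\Ainfinity^0 \big[(\Ainfinity^0)^{-1}\Ainfinity^k \partial_k \partial_{\vec\alpha}\scrWdot - \partial_{\vec\alpha}\big((\Ainfinity^0)^{-1}\Ainfinity^k \partial_k \scrWdot\big)\big]$ for $|\vec\alpha| \leq N-1$ can be controlled using only $\|\scrV_\infty\|_{H^N}$, exactly as in the proof of \eqref{E:ComponentsinL2c} in Lemma \ref{L:UniformtimeApriori1}, so that no regularity threshold is crossed.
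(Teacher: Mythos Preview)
Your proposal is correct and follows essentially the same approach as the paper's proof: you define the same variations $\scrWdot$ and $\Phidot$, derive the same EOV$_\kappa^\infty$ system with BGS $\scrW_\infty$ and the same inhomogeneous terms built from $\mathfrak{E}1_c$, $\mathfrak{E}2_c$, use the energy current $\Jscrdotinfinity_{\vec\alpha}$ at order $N-1$, couple it to the elliptic estimate from Lemma \ref{L:AlterateHNnormestimate} for $\Phidot$, and close via Gronwall from zero initial data. The paper carries out exactly this scheme, including the commutator control via Proposition \ref{P:SobolevMissingDerivativeProposition} that you flag as a secondary technical point.
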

        \begin{remark}
            \eqref{E:NewtonianConvergence}, \eqref{E:ClassicalConvergencePhi}, \eqref{E:ClassicalConstantConvergence}, and Sobolev
            embedding imply that \\
            $\scrW_c \rightarrow \scrW_{\infty}$ uniformly and $\Phi_c \rightarrow \Phi_{\infty}$ 
            uniformly on $[0,T] \times \mathbb{R}^3$ as $c \to \infty.$ Furthermore, the interpolation estimate 
            \eqref{E:SobInterpolation},
            together with the uniform bound \\
            $\mid\mid\mid \scrW_c \mid\mid\mid_{H_{\scrWb_c}^{N},T} \leqc C$
            that follows from combining \eqref{E:ScrWInTermsofWEstimate}, \eqref{E:UniformTime1a}, and \eqref{E:UniformTime1b},
            collectively imply that $\lim_{c \to \infty} \mid\mid\mid \scrW_{\infty} - \scrW_c \mid\mid\mid_{H^{N'},T}=0$
            for any $N' < N.$ The reason that we cannot use our argument to obtain the $H^{N}$ norm on the left-hand side of 
            \eqref{E:NewtonianConvergence} instead of the $H^{N-1}$ norm
            is that the expression \eqref{E:bClassicalLimitdef} for $\mathbf{b}$ already involves one derivative of $\scrW,$ and therefore 
            can only be controlled in the $H^{N-1}$ norm.
        \end{remark}

\begin{proof}
            Throughout the proof, we refer to the constants
            $\Lambda_1, \Lambda_2,$ etc., from the conclusion of Theorem
            \ref{T:UniformLocalExistenceENkappac}. 
            To ease the notation, we drop the subscripts $c$ from the solution $\scrV_c$ and its first 5 components $\scrW_c,$ setting 
            $\scrV \eqdef \scrV_c,$ $\scrW \eqdef \scrW_c,$ etc. We then define with the aid of \eqref{E:PhiIDcDef}
            \begin{align}
                \scrWdot &\eqdef \scrW_{\infty} - \scrW \label{E:WdotClassicalLimit}\\
                \Phidot &\eqdef (\Phi_{\infty} - \Phibar_{\infty}) - (\Phi - \Phibar_c)
                = (\Phi_{\infty} - \PhiID_{\infty}) - (\Phi - \PhiID_c). \label{E:PhidotClassicalLimit}
            \end{align}
            Our proof of Theorem \ref{T:NewtonianLimit} is similar to our proof of
            Lemma \ref{L:UniformtimeApriori1}; we use energy currents and elementary harmonic
            analysis (i.e. Lemma \ref{L:AlterateHNnormestimate}) to obtain a Gronwall estimate for the $H^{N-1}$ norm of the 
            variation $\scrWdot$ defined in \eqref{E:WdotClassicalLimit}. It will also follow from our proof that the $H^{N+1}$ norm 
            of $\Phidot$ is controlled in terms of $\|\scrWdot\|_{H^{N-1}}$ plus a small remainder. We remark that all of the
            estimates in this proof are valid on the interval $[0,T],$ where $T$ is as in the statement of Theorem 
            \ref{T:NewtonianLimit}.
                    
          	From definitions \eqref{E:WdotClassicalLimit} and \eqref{E:PhidotClassicalLimit},
           	it follows that $\scrWdot, \Phidot$ are solutions to the following EOV$_{\kappa}^{\infty}$ defined
           	by the BGS $\scrW_{\infty}:$
                    \begin{align}
                        \Ainfinity^{\mu}(\scrW_{\infty}) \partial_{\mu} \scrWdot
                            &= \mathbf{b} \label{E:EOVClassicalLimit} \\
                        (\Delta - \kappa^2) \Phidot &= l, \label{E:PhidotClassicalLimitEquation}
                    \end{align}
                    where
                    \begin{align}
                        \mathbf{b} & \eqdef \mathfrak{B}_{\infty}(\scrW_{\infty},\partial\Phi_{\infty})
                           - \mathfrak{B}_{\infty}(\scrW,\partial\Phi)
                            + \big[\Ainfinity^{\mu}(\scrW) - \Ainfinity^{\mu}(\scrW_{\infty})\big]\partial_{\mu}\scrW - 
                          	\mathfrak{E}1_c
                            \label{E:bClassicalLimitdef} \\
                        l & \eqdef 4 \pi G \big[\mathfrak{R}_{\infty}(\Ent_{\infty},p_{\infty}) - \mathfrak{R}_{\infty}(\Ent,p)\big]
                            - \mathfrak{E}2_c, \label{E:ClassicallLimitdef}
                    \end{align}
                    $\mathfrak{B}_{\infty}$ is defined in Lemma \ref{L:InhomogeneousTermsLargecSobolevEstimates},
                    and $\mathfrak{E}1_c,$ $\mathfrak{E}2_c$ are
                    defined in Corollary \ref{C:ApproximateEPkappaSolutions}. Note that the definition of $l$ in
                    \eqref{E:ClassicallLimitdef} differs from the definition \eqref{E:lExpression} of $l$ that is used in the
                    proof of Corollary \ref{C:ApproximateEPkappaSolutions}. By comparing
                    \eqref{E:EPkappaData} and \eqref{E:ENkappacscrVID}, we see that the initial condition satisfied by $\scrWdot$ is
                    \begin{align} \label{E:ClassicalLimitInitialConditions}
                        \scrWdot(0) &= \mathbf{0}.
                    \end{align}

                    Differentiating equation \eqref{E:EOVClassicalLimit} with the spatial multi-index operator 
                    $\partial_{\vec{\alpha}},$ we have that
                    \begin{align}
                        \Ainfinity^{\mu}(\scrW_{\infty}) \partial_{\mu} \big( \partial_{\vec{\alpha}} \scrWdot \big) =
                            \mathbf{b}_{\vec{\alpha}},
                    \end{align}
                    where (suppressing the dependence of $\Ainfinity^{\nu}(\cdot)$ on $\scrW_{\infty}$ for $\nu =0,1,2,3$)
                    \begin{align}
                        \mathbf{b}_{\vec{\alpha}} \eqdef \Ainfinity^0 \partial_{\vec{\alpha}} \left((\Ainfinity^0)^{-1}\mathbf{b} \right) +
                            \mathbf{k}_{\vec{\alpha}} \label{E:balphaDefinfinity}
                    \end{align}
          and
          \begin{align}
            \mathbf{k}_{\vec{\alpha}} \eqdef \Ainfinity^0 \big[(\Ainfinity^0)^{-1}\Ainfinity^k \partial_{\vec{\alpha}}
            (\partial_k \scrWdot) -\partial_{\vec{\alpha}} \big((\Ainfinity^0)^{-1} \Ainfinity^k \partial_k \scrWdot
            \big)\big]  \label{E:kalphaDefinfinity}.
            \end{align}
            As an intermediate step, we will show that for $0 \leq |\vec{\alpha}| \leq N-1,$ we have that
                \begin{align}
                    \|\mathbf{b}_{\vec{\alpha}}\|_{L^2} &\leqc C\big(id;\mid\mid\mid \scrW_{\infty} 
                    \mid\mid\mid_{H_{\scrWb_{\infty}}^N,T},\Lambda_1,\Lambda_2,L_1,L_2,L_4 \big)\cdot
                    \big(\|\scrWdot\|_{H^{N-1}} + c^{-1} \big) \label{E:ClassicalbalphaSobolevEstimate}.
                \end{align}

                Let us assume \eqref{E:ClassicalbalphaSobolevEstimate} for the moment and
                proceed as in Lemma \ref{L:UniformtimeApriori1}: we let $\Jscrdotinfinity_{\vec{\alpha}}$
                denote the energy current \eqref{E:RescaledEnergyCurrentInfinity} for
                $\partial_{\vec{\alpha}} \scrWdot$ defined by the BGS
                $\scrW_{\infty},$ and define $\mathscr{E}(t) \geq 0$ by
                \begin{align} \label{E:ClassicalEnergyc}
                    \mathscr{E}^2(t) \eqdef \sum_{|\vec{\alpha}| \leq N-1} \int_{\mathbb{R}^3}
                        \Jscrdot_{\vec{\alpha}}^{0}(t,\sforspace) \, d^3\sforspace,
                \end{align}
                where we have dropped the superscript $(\infty)$ on $\Jscrdot$ to ease the notation.
                By \eqref{E:UniformPositivitywithc}, Remark \ref{R:NoZdependence}, and the Cauchy-Schwarz inequality for sums, we have 		
                that
                \begin{align} \label{E:LargecAlternateSobolevNormEstimateAgain}
                    C_{\bar{\mathcal{O}}_2} \|\scrWdot\|_{H^{N-1}}^2 \leqc \mathscr{E}^2(t) \leqc C_{\bar{\mathcal{O}}_2}^{-1}
                        \|\scrWdot\|_{H^{N-1}}^2.
                \end{align}
                Then by Corollary \ref{C:DivergenceJdotInfinityL1Bound} + Sobolev embedding, \eqref{E:ClassicalbalphaSobolevEstimate}, and
                \eqref{E:LargecAlternateSobolevNormEstimateAgain}, with \\
                $C=C\big(id;\mid\mid\mid\scrW_{\infty}\mid\mid\mid_{H_{\scrWb_{\infty}}^N,T},
                \mid\mid\mid \partial_t \scrW_{\infty}\mid\mid\mid_{H^{N-1},T},\Lambda_1,
                \Lambda_2,L_1,L_2,L_4\big),$ we have that
                \begin{align} \label{E:ClassicalDifferentialEnergyInequality}
                    2 \mathscr{E}\frac{d}{dt}\mathscr{E} &= \sum_{|\vec{\alpha}| \leq {N-1}}
                        \int_{\mathbb{R}^3} \partial_{\mu} \Jscrdot_{\vec{\alpha}}^{\mu} \, d^3 \sforspace
                        \leqc C \cdot \sum_{|\vec{\alpha}| \leq {N-1}} \big(\|\partial_{\vec{\alpha}}\scrWdot\|_{L^2}^2
                        + \|\partial_{\vec{\alpha}}\scrWdot\|_{L^2}\|\mathbf{b}_{\vec{\alpha}}\|_{L^2}\big) \\
                        & \leqc C \cdot \|\scrWdot\|_{H^{N-1}}^2 + c^{-1} C \cdot \|\scrWdot\|_{H^{N-1}}
                            \leqc C \cdot \mathscr{E}^2 + c^{-1} C \cdot \mathscr{E}. \notag
                \end{align}
                Taking into account \eqref{E:ClassicalLimitInitialConditions}, which implies that
                $\mathscr{E}(0)=0,$ we apply Gronwall's inequality to
                \eqref{E:ClassicalDifferentialEnergyInequality}, concluding
                that for $t \in [0,T],$
                \begin{align} \label{E:EisOrderOneOverc}
                    \mathscr{E}(t) \leqc c^{-1} C \cdot t \cdot \mbox{exp}(C \cdot t).
                \end{align}
                From \eqref{E:LargecAlternateSobolevNormEstimateAgain} and \eqref{E:EisOrderOneOverc}, it follows that
                \begin{align} \label{E:MainTheoremWdotConclusion}
                    \mid\mid\mid \scrWdot \mid\mid\mid_{H^{N-1},T} \leqc c^{-1}C \cdot T \cdot \mbox{exp}(T \cdot C),
                \end{align}
                \noindent which implies \eqref{E:NewtonianConvergence}.
                
           			We now return to the proof of \eqref{E:ClassicalbalphaSobolevEstimate}.
                To prove \eqref{E:ClassicalbalphaSobolevEstimate}, we show only that
                the following bound holds, where for the remainder of this proof, we abbreviate \\
                $C=C\big(id;\mid\mid\mid\scrW_{\infty}\mid\mid\mid_{H_{\scrWb_{\infty}}^{N-1},T},\Lambda_1,
                \Lambda_2,L_1,L_2,L_4\big):$
                \begin{align} \label{E:ClassicalbSobolevEstimate}
                    \|\mathbf{b}\|_{H^{N-1}} &\leqc C\cdot\|\scrWdot\|_{H^{{N-1}}} + c^{-1}C.
                \end{align}
                The remaining details, which we leave up to the reader, then follow as in the proof of Lemma 
                \ref{L:UniformtimeApriori1}. By \eqref{E:scrWHNBound}, which is valid for
                $\tau = T,$ and by \eqref{E:ModifiedSobolevEstimate2}, we have
                that
                \begin{align} \label{E:KleinGordonClassicalLimitSobolevDifferenceEstimate}
                    \|\mathfrak{R}_{\infty}(\Ent_{\infty},p_{\infty}) - \mathfrak{R}_{\infty}(\Ent,p)\|_{H^{{N-1}}} \leqc
                        C \cdot \|\scrWdot\|_{H^{{N-1}}},
                \end{align}
                and combining \eqref{E:ApproximateSolutionErrorSobolevEstimate2}, \eqref{E:PhidotClassicalLimitEquation}, 
                \eqref{E:ClassicallLimitdef}, \eqref{E:KleinGordonClassicalLimitSobolevDifferenceEstimate}, and Lemma
                \ref{L:AlterateHNnormestimate}, it follows that
                \begin{align} \label{E:PhidotClassicalLimitSobolevDifferenceEstimate}
                    \|\Phidot\|_{H^{N+1}} \leqc C \cdot \|l\|_{H^{N-1}} \leqc C \cdot\|\scrWdot\|_{H^{{N-1}}}
                        + c^{-1} C.
                \end{align}

                Similarly, taking into account \eqref{E:PhidotClassicalLimitSobolevDifferenceEstimate}, we have that
                \begin{align} \label{E:WdotInhomogeneousClassicalLimitSobolevDifferenceEstimate}
                    \|\mathfrak{B}_{\infty}(\scrW_{\infty},\partial\Phi_{\infty})  -
                    \mathfrak{B}_{\infty}(\scrW,\partial\Phi)\|_{H^{{N-1}}} & \leqc C \cdot(\|\scrWdot\|_{H^{{N-1}}} +
                        \|\partial\Phidot\|_{H^{{N-1}}}) \\
                    & \leqc C \cdot \|\scrWdot\|_{H^{{N-1}}} + c^{-1} C. \notag
                \end{align}

                Finally, by \eqref{E:scrWHNBound} and \eqref{E:UniformtimeApriori2Again}, which
                are both valid for $\tau = T,$ by \eqref{E:ModifiedSobolevEstimate}, and by
                \eqref{E:ModifiedSobolevEstimate2}, we have that
                \begin{align} \label{E:MatrixDifferenceTerm}
                    \big\|\big[\Ainfinity^{\mu}(\scrW) -
                    \Ainfinity^{\mu}(\scrW_{\infty})\big]\partial_{\mu}\scrW \big\|_{H^{N-1}}
                    \leqc C \cdot \|\scrWdot \|_{H^{N-1}}.
                \end{align}

                Inequality \eqref{E:ClassicalbSobolevEstimate} now follows from
                \eqref{E:ApproximateSolutionErrorSobolevEstimate1}, \eqref{E:bClassicalLimitdef},
                \eqref{E:WdotInhomogeneousClassicalLimitSobolevDifferenceEstimate}, and \eqref{E:MatrixDifferenceTerm}. The estimate 
                \eqref{E:ClassicalConvergencePhi} then follows from \eqref{E:PhidotClassicalLimit},
                \eqref{E:MainTheoremWdotConclusion}, and \eqref{E:PhidotClassicalLimitSobolevDifferenceEstimate}, while
                \eqref{E:ClassicalConstantConvergence} is merely a restatement of \eqref{E:PhibarcConvergence}.
            \end{proof}

\section*{Acknowledgments}
I would like to thank Michael Kiessling and A.
Shadi Tahvildar-Zadeh for discussing this project with me and for providing 
comments that were helpful in my revision of the earlier drafts. I would also like
to thank the anonymous referee for providing suggestions that helped me to clarify certain
points and for providing some of the references. Work supported by NSF Grant DMS-0406951. Any opinions,
conclusions, or recommendations expressed in this material are those of the author
and do not necessarily reflect the views of the NSF.

\appendix

\setcounter{section}{0}
   \setcounter{subsection}{0}
   \setcounter{subsubsection}{0}
   \setcounter{paragraph}{0}
   \setcounter{subparagraph}{0}
   \setcounter{figure}{0}
   \setcounter{table}{0}
   \setcounter{equation}{0}
   \setcounter{theorem}{0}
   \setcounter{definition}{0}
   \setcounter{example}{0}
   \setcounter{remark}{0}
   \setcounter{proposition}{0}
   \renewcommand{\thesection}{\Alph{section}}
   \renewcommand{\theequation}{\Alph{section}.\arabic{equation}}
   \renewcommand{\theproposition}{\Alph{section}-\arabic{proposition}}
   \renewcommand{\thecorollary}{\Alph{section}.\arabic{corollary}}
   \renewcommand{\thedefinition}{\Alph{section}.\arabic{definition}}
   \renewcommand{\thetheorem}{\Alph{section}.\arabic{theorem}}
   \renewcommand{\theremark}{\Alph{section}.\arabic{remark}}
   \renewcommand{\thelemma}{\Alph{section}-\arabic{lemma}}
   \renewcommand{\theexample}{\Alph{section}.\arabic{example}}   

\section{Inhomogeneous Linear Klein-Gordon Estimates} \label{A:Klein-GordonEstimates}
  
                In this appendix, we collect together some standard energy estimates for the linear Klein-Gordon equation
                with an inhomogeneous term. We provide some proofs for convenience. Throughout this appendix,
                we abbreviate $L^p=L^p(\mathbb{R}^d)$ and \\ 
                $H^j=H^j(\mathbb{R}^d).$
                \begin{proposition}    \label{P:Klein-Gordon}
                    Let $l \in C^0([0,T],H^{N})$ and $\PsiID_0(\sforspace)\in H^{N},$ where $N
                    \in \mathbb{N}.$ Then there is a unique
                    solution $\Phidot(t,\sforspace):\mathbb{R}\times\mathbb{R}^d \rightarrow
                    \mathbb{R}$ to the equation
                    \begin{align}     \label{E:Klein-Gordon}
                        -c^{-2} \partial_t^2 \Phidot + \Delta \Phidot - \kappa^2 \Phidot = l
                    \end{align}
                    with initial data $\Phidot(0,\sforspace)=0,$ $\partial_t \Phidot(0,\sforspace)=
                    \PsiID_0(\sforspace),$ where $\Delta \eqdef \sum_{i=1}^d \partial_i^2.$ The solution
                    has the regularity property $\Phidot \in \bigcap_{k=0}^{k=1}C^k([0,T], H^{N+1-k}).$
            	\end{proposition}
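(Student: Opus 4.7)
The plan is to solve \eqref{E:Klein-Gordon} explicitly via the spatial Fourier transform and then read off the claimed regularity using Plancherel's theorem, finishing with a short uniqueness argument. Applying $\mathcal{F}_{\sforspace}$ to \eqref{E:Klein-Gordon} reduces the PDE, for each fixed $\xi \in \mathbb{R}^d$, to the inhomogeneous linear ODE
\begin{equation*}
\partial_t^2 \widehat{\Phidot}(t,\xi) + \omega_c^2(\xi)\,\widehat{\Phidot}(t,\xi) = -c^2\, \widehat{l}(t,\xi),\qquad \omega_c(\xi) \eqdef c\sqrt{|\xi|^2 + \kappa^2},
\end{equation*}
subject to $\widehat{\Phidot}(0,\xi) = 0,\ \partial_t \widehat{\Phidot}(0,\xi) = \widehat{\PsiID_0}(\xi).$ Since $\kappa^2 > 0,$ we have $\omega_c(\xi) > 0$ for all $\xi,$ and the Duhamel formula gives the explicit solution
\begin{equation*}
\widehat{\Phidot}(t,\xi) = \frac{\sin(\omega_c(\xi) t)}{\omega_c(\xi)}\, \widehat{\PsiID_0}(\xi) - c^2 \int_0^t \frac{\sin(\omega_c(\xi)(t-s))}{\omega_c(\xi)}\, \widehat{l}(s,\xi)\, ds.
\end{equation*}

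The next step is to verify the claimed Sobolev regularity by combining Plancherel's theorem with the elementary multiplier bounds $|\sin(\omega_c(\xi) t)/\omega_c(\xi)| \leq \min\{t, \omega_c(\xi)^{-1}\}$ and $|\cos(\omega_c(\xi)t)| \leq 1.$ The key observation is that $\omega_c(\xi)^{-2} \leq (c^2 \kappa^2)^{-1}$ and $\omega_c(\xi)^{-2} \leq c^{-2}|\xi|^{-2},$ so that
\begin{equation*}
\frac{(1+|\xi|^2)^{N+1}}{\omega_c^2(\xi)} \lesssim (1+|\xi|^2)^{N},
\end{equation*}
which is the Fourier-side expression of the ``gain of one derivative'' associated to the Klein–Gordon operator with a positive mass. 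Using this together with Minkowski's integral inequality applied to the Duhamel term, I would conclude $\widehat{\Phidot}(t,\cdot) \in L^2((1+|\xi|^2)^{N+1}\,d\xi),$ i.e.\ $\Phidot(t,\cdot) \in H^{N+1},$ with norm controlled by $\|\PsiID_0\|_{H^N} + T\,\sup_{s \in [0,T]}\|l(s)\|_{H^N}.$ A parallel estimate for $\partial_t \widehat{\Phidot},$ whose Fourier expression only contains the bounded multipliers $\cos(\omega_c t)$ and $\sin(\omega_c(t-s)),$ gives $\partial_t \Phidot(t,\cdot) \in H^N.$ The continuous dependence on $t$ in the respective Sobolev spaces then follows from a standard dominated convergence argument applied to the Fourier integrands, using the continuity of $t \mapsto \widehat{l}(t,\cdot)$ into $L^2((1+|\xi|^2)^N\,d\xi).$

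For uniqueness, suppose $\Phidot_1,\Phidot_2$ are two solutions with the stated regularity and identical data. Their difference $\dot{\Psi} \eqdef \Phidot_1 - \Phidot_2$ is in $\bigcap_{k=0}^{1}C^k([0,T],H^{N+1-k})$ and satisfies the homogeneous equation with vanishing data. Multiplying $-c^{-2}\partial_t^2 \dot{\Psi} + \Delta \dot{\Psi} - \kappa^2 \dot{\Psi} = 0$ by $\partial_t \dot{\Psi},$ integrating over $\mathbb{R}^d,$ and integrating the spatial derivative terms by parts yields $\frac{d}{dt}\mathscr{E}(t) = 0$ for
\begin{equation*}
\mathscr{E}(t) \eqdef \tfrac{1}{2}\int_{\mathbb{R}^d}\Big( c^{-2}|\partial_t \dot{\Psi}|^2 + |\nabla \dot{\Psi}|^2 + \kappa^2 |\dot{\Psi}|^2\Big)\,d^d \sforspace,
\end{equation*}
and since $\mathscr{E}(0) = 0,$ the positivity of $\mathscr{E}(t)$ (using $\kappa^2 > 0$) forces $\dot{\Psi} \equiv 0.$

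I do not anticipate a genuine obstacle; the proposition is a standard fact. The only point that deserves care is the regularity gain $H^N \to H^{N+1},$ which relies crucially on the strict positivity of $\kappa^2$ to keep $\omega_c(\xi)^{-1}$ bounded at $\xi = 0$—consistent with the paper's remark that the methods break down in the massless case.
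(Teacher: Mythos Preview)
Your proposal is correct. The paper does not actually supply a proof of this proposition: it simply states ``This is a standard result; consult \cite{cS1995} for a proof.'' Your Fourier--Duhamel construction, together with the multiplier bound exploiting $\kappa^2>0$ to gain one spatial derivative and the energy identity for uniqueness, is exactly the kind of standard argument the cited reference would contain, so there is nothing to compare.
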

                    \begin{proof}
                        This is a standard result; consult \cite{cS1995} for a proof.
                    \end{proof}

                \begin{proposition}                                           \label{P:Klein-GordonPhiBounds}
                    Assume the hypotheses of Proposition \ref{P:Klein-Gordon}. Assume further 
                    that \\ $l \in \bigcap_{k=0}^{k=2}C^k([0,T],H^{N-k}).$
                    Then there exists a constant $C_0(\kappa) > 0$ such that
                    \begin{align}
                        &\mid\mid\mid \Phidot \mid\mid\mid_{H^{N+1},T}
                            \leq  C_0(\kappa) \cdot \Big \lbrace c^{-1} \|\PsiID_0\|_{H^{N}}
                            + c T \mid\mid\mid l \mid\mid\mid_{H^{N},T}\Big \rbrace  \label{E:Klein-GordonPhiHNPlusOneNormBound} \\
                        &\mid\mid\mid \partial_t \Phidot \mid\mid\mid_{H^{N},T}
                        \leq  C_0(\kappa) \cdot \Big \lbrace \|\PsiID_0\|_{H^{N}}
                            + c^2 T \mid\mid\mid l \mid\mid\mid_{H^{N},T}\Big \rbrace 
                            \label{E:Klein-GordonPartialtPhiHNNormBound} \\
                        &\mid\mid\mid \partial_t \Phidot \mid\mid\mid_{H^{N},T}
                        \leq  C_0(\kappa) \cdot \Big \lbrace \|\PsiID_0\|_{H^{N}} + c\|l(0)\|_{H^{N-1}}
                            + c T \mid\mid\mid \partial_t l \mid\mid\mid_{H^{N-1},T}\Big \rbrace  \label{E:Klein-GordonPartialtPhiHNNormBound2} \\
                        &\mid\mid\mid \partial_t^2 \Phidot \mid\mid\mid_{H^{N-1},T}
                        \leq  C_0(\kappa) \cdot \Big \lbrace c\|\PsiID_0\|_{H^{N}} + c^2 \|l(0)\|_{H^{N-1}}
                            + c^2 T \mid\mid\mid \partial_t l \mid\mid\mid_{H^{N-1},T}\Big \rbrace  \label{E:Klein-GordonPartialSquaredtPhiHNMinusOneNormBound} \\
                         &\mid\mid\mid \partial_t^2 \Phidot \mid\mid\mid_{H^{N-1},T} \notag \\
                         & \ \leq C_0(\kappa) \cdot \Big \lbrace c^2\|l(0)\|_{H^{N-1}} + c\|(\Delta - \kappa^2)\PsiID_0 - 
                         \partial_t l(0)\|_{H^{N-2}}
                            + c T \mid\mid\mid \partial_t^2 l \mid\mid\mid_{H^{N-2},T}\Big \rbrace  
                            \label{E:Klein-GordonPartialSquaredtPhiHNMinusOneNormBound2} \\
                        &\mid\mid\mid \partial_t^3 \Phidot \mid\mid\mid_{H^{N-2},T} \notag \\
                        & \ \leq C_0(\kappa) \cdot \Big \lbrace c^3\|l(0)\|_{H^{N-1}} + c^2\|(\Delta - \kappa^2)\PsiID_0 - 
                        	\partial_t l(0)\|_{H^{N-2}} + c^2 T \mid\mid\mid \partial_t^2 l \mid\mid\mid_{H^{N-2},T}\Big \rbrace. 
                        	\label{E:Klein-GordonPartialCubedtPhiHNMinusTwoNormBound}
                    \end{align}
                    \end{proposition}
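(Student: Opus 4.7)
The plan is to run the standard energy estimate for the inhomogeneous Klein-Gordon operator $-c^{-2}\partial_t^2 + \Delta - \kappa^2$ and then, for the sharper bounds involving $\partial_t l$ and $\partial_t^2 l$, to differentiate the equation in time and recognize that each time-derivative of $\Phidot$ again solves a Klein-Gordon equation, but with a source that is the corresponding time-derivative of $l$ and with initial data that is read off from the equation itself. The fact that $\kappa^2>0$ is what makes the natural energy control the full $H^{k+1}$-norm of $\Phidot$ (rather than just $\|\nabla\Phidot\|_{L^2}$).

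\textbf{Step 1: the basic energy identity.} Multiplying \eqref{E:Klein-Gordon} by $-\partial_t\Phidot$, integrating over $\mathbb{R}^d$, and integrating by parts yields
\begin{align*}
	\frac{d}{dt}\, \mathcal{E}(t) \;=\; -\int_{\mathbb{R}^d} l\,\partial_t\Phidot\, d^d \sforspace,
	\quad \mathcal{E}(t) \eqdef \tfrac{1}{2}\int_{\mathbb{R}^d}\Big[c^{-2}(\partial_t\Phidot)^2
	+ |\nabla\Phidot|^2 + \kappa^2 \Phidot^2\Big]\,d^d\sforspace.
\end{align*}
Since $\kappa^2 > 0,$ there is a constant $C(\kappa)$ with
$C(\kappa)^{-1}\big(\|\nabla\Phidot\|_{L^2}^2 + \|\Phidot\|_{L^2}^2\big) \leq 2\mathcal{E} - c^{-2}\|\partial_t\Phidot\|_{L^2}^2 \leq C(\kappa)\big(\|\nabla\Phidot\|_{L^2}^2 + \|\Phidot\|_{L^2}^2\big).$ Bounding the right-hand side by $c\|l\|_{L^2}\sqrt{2\mathcal{E}}$ and dividing by $\sqrt{\mathcal{E}}$ gives $\sqrt{\mathcal{E}(t)} \leq \sqrt{\mathcal{E}(0)} + \tfrac{c}{\sqrt{2}}\int_0^t \|l(s)\|_{L^2}\,ds.$ Applying $\partial_{\vec{\alpha}}$ to \eqref{E:Klein-Gordon} for $|\vec{\alpha}|\leq N$ and summing produces the higher-order energy $\mathcal{E}_N(t)$ whose square root controls $c^{-1}\|\partial_t\Phidot\|_{H^N} + \|\Phidot\|_{H^{N+1}}$ (up to $C(\kappa)$) and satisfies the analogous inequality with $\|l\|_{L^2}$ replaced by $\|l\|_{H^N}.$ Using the initial conditions $\Phidot(0)=0,\ \partial_t\Phidot(0)=\PsiID_0$ gives $\mathcal{E}_N(0) \leq \tfrac12 c^{-2}\|\PsiID_0\|_{H^N}^2,$ which immediately produces \eqref{E:Klein-GordonPhiHNPlusOneNormBound} and \eqref{E:Klein-GordonPartialtPhiHNNormBound}.

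\textbf{Step 2: differentiating in time.} The point of the remaining four inequalities is that they involve $\partial_t l,$ $\partial_t^2 l,$ rather than $l$ itself, together with initial-data ``compatibility conditions.'' Let $\Phidot_t \eqdef \partial_t\Phidot.$ Differentiating \eqref{E:Klein-Gordon} shows that $\Phidot_t$ satisfies the same Klein-Gordon equation with source $\partial_t l.$ The initial data for $\Phidot_t$ are $\Phidot_t(0)=\PsiID_0$ and $\partial_t\Phidot_t(0) = \partial_t^2\Phidot(0);$ evaluating \eqref{E:Klein-Gordon} at $t=0$ and using $\Phidot(0)=0$ yields the explicit identity $\partial_t^2\Phidot(0) = -c^2 l(0).$ Applying Step 1 at order $N-1$ to $\Phidot_t$ then gives
\begin{align*}
	\sqrt{\mathcal{E}^{(\Phidot_t)}_{N-1}(0)} \;\leq\; C(\kappa)\bigl(\|\PsiID_0\|_{H^N} + c\|l(0)\|_{H^{N-1}}\bigr),
\end{align*}
and invoking the analogue of the $\sqrt{\mathcal{E}}$-inequality shows that $\sqrt{\mathcal{E}^{(\Phidot_t)}_{N-1}(t)}$ is bounded by the right-hand side of the last display plus $\tfrac{c}{\sqrt{2}} T\, \mid\mid\mid \partial_t l \mid\mid\mid_{H^{N-1},T}.$ Reading off the $H^N$ norm of $\Phidot_t$ (controlled by $C(\kappa)\sqrt{\mathcal{E}^{(\Phidot_t)}_{N-1}}$) yields \eqref{E:Klein-GordonPartialtPhiHNNormBound2}, while reading off $\|\partial_t\Phidot_t\|_{H^{N-1}}$ (controlled by $c\sqrt{2\mathcal{E}^{(\Phidot_t)}_{N-1}}$) yields \eqref{E:Klein-GordonPartialSquaredtPhiHNMinusOneNormBound}.

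\textbf{Step 3: differentiating twice in time.} Setting $\Phidot_{tt} \eqdef \partial_t^2\Phidot,$ the same argument shows $\Phidot_{tt}$ solves \eqref{E:Klein-Gordon} with source $\partial_t^2 l.$ Its initial data are $\Phidot_{tt}(0) = -c^2 l(0)$ (from Step~2) and $\partial_t\Phidot_{tt}(0) = \partial_t^3\Phidot(0),$ the latter being computed by differentiating \eqref{E:Klein-Gordon} once in time and evaluating at $t=0,$ which gives $\partial_t^3 \Phidot(0) = c^2\bigl[(\Delta-\kappa^2)\PsiID_0 - \partial_t l(0)\bigr].$ Applying Step 1 at order $N-2$ to $\Phidot_{tt}$ and reading off $\|\Phidot_{tt}\|_{H^{N-1}}$ and $\|\partial_t\Phidot_{tt}\|_{H^{N-2}}$ produces \eqref{E:Klein-GordonPartialSquaredtPhiHNMinusOneNormBound2} and \eqref{E:Klein-GordonPartialCubedtPhiHNMinusTwoNormBound} respectively. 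The only real bookkeeping issue — and the place where one must be most careful — is tracking the powers of $c$ through the initial-data identities, since a factor of $c^2$ is picked up each time the equation is used to convert a time-derivative at $t=0$ into spatial data and $l.$ Everything else is the standard energy method combined with the $H^k$-equivalence afforded by the positivity of $\kappa^2.$
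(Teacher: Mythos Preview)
Your proposal is correct and follows essentially the same approach as the paper's proof: the standard Klein--Gordon energy (with the $\kappa^2\Phidot^2$ term yielding control of the full $H^{N+1}$ norm), applied to spatial derivatives for the first two estimates, and then applied to the time-differentiated equation(s) with the initial conditions $\partial_t^2\Phidot(0)=-c^2 l(0)$ and $\partial_t^3\Phidot(0)=c^2[(\Delta-\kappa^2)\PsiID_0-\partial_t l(0)]$ for the remaining four. The paper's argument is identical in substance, only differing in notation (it writes the summed energy as $E^2(t)=\sum_{k=0}^{N}E_k^2(t)$ and phrases the higher-order inequalities as ``differentiate in $t$ and argue as above'').
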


                    \begin{proof}
                    Because $\partial^{(k)} \Phidot$ is a
                    solution to the Klein-Gordon equation \\ $-c^{-2} \partial_t^2 \big(\partial^{(k)}\Phidot \big) +
                    \Delta \big(\partial^{(k)}  \Phidot \big) - \kappa^2 \big( \partial^{(k)}  \Phidot \big) = \partial^{(k)}
                    l,$ we will use standard energy estimates for
                    the linear Klein-Gordon equation to estimate $\mid\mid\mid
                    \Phidot \mid\mid\mid_{H^{N+1},T}.$ Thus, for $0 \leq k \leq N,$ we define $E_{k}(t) \geq 0$ by
                        \begin{align} \label{E:Klein-GordonEnergyDifferentialEnergyEquality}
                            E^2_{k}(t) \eqdef
                                \|\kappa \partial^{(k)} \Phidot(t) \|_{L^2}^2  + \|\partial^{(k+1)}\Phidot \|_{L^2}^2
                                + \| c^{-1} \partial^{(k)} \partial_t \Phidot(t) \|_{L^2}^2.
                        \end{align}
                        We now multiply each side of the equation satisfied by
                        $\partial^{(k)} \Phidot$ by $-\partial^{(k)}\partial_t \Phidot,$ integrate by parts over
                        $\mathbb{R}^d,$ and use H\"{o}lder's inequality to arrive at the following
                        chain of inequalities:
                        \begin{align} \label{E:EkDifferentialInequality}
                            {E_{k}}(t) \frac{d}{dt}  {E_{k}}(t) &= \frac{1}{2}\frac{d}{dt} \left({E^2_{k}}(t)\right) =  \int_{\mathbb{R}^d}
                                \big(-\partial^{(k)} \partial_t \Phidot \big) \cdot \big(\partial^{(k)} l \big) \, d^d \sforspace \\
                            &\leq \|\partial^{(k)}\partial_t \Phidot(t)\|_{L^2} \|\partial^{(k)} l(t) \|_{L^2}, \notag
                        \end{align}
                        where $\big(-\partial^{(k)} \partial_t \Phidot \big) \cdot \big(\partial^{(k)} l
                        \big)$ denotes the array-valued quantity
                        formed by taking the component by
                        component product of the two arrays $-\partial^{(k)} \partial_t
                        \Phidot$ and $\partial^{(k)} l.$

                        If we now define $E(t) \geq 0$ by
                        \begin{align}                    \label{E:Klein-GordonHNPlusOneEnergyforPsi}
                            E^2(t) \eqdef
                            \left(\sum_{k=0}^{N} {E^2_{k}(t)}\right)= \kappa^2\|\Phidot(t)\|^2_{H^{N}}
                            + \|\partial\Phidot(t)\|^2_{H^{N}} + c^{-2}\|\partial_t \Phidot(t)\|^2_{H^{N}},
                        \end{align}
                        it follows from \eqref{E:EkDifferentialInequality} and the Cauchy-Schwarz
                        inequality for sums that
                        \begin{align}   \label{E:EnergySquaredInequality}
                            E(t) \frac{d}{dt} E(t) = \frac{1}{2}\frac{d}{dt} (E^2(t)) \leq \|\partial_t \Phidot\|_{H^{N}}\|l(t)\|_{H^{N}}
                            \leq c E(t)\|l(t)\|_{H^{N}},
                        \end{align}
                        and so
                        \begin{align} \label{E:HNPlusOneKlein-GordonEnergyDifferentialEnergyEquality}
                            \frac{d}{dt} E(t) \leq c \|l(t)\|_{H^{N}}.
                        \end{align}
                        Integrating \eqref{E:HNPlusOneKlein-GordonEnergyDifferentialEnergyEquality} over time, we have the
                        following inequality, valid for $t \in [0,T]:$
                        \begin{align} \label{E:Klein-GordonEnergyHNPlusOneEnergyEquality}
                            E(t) \leq E(0) + c t \mid\mid\mid  l \mid\mid\mid_{H^{N},T}.
                        \end{align}

                        From the definition of $E(t)$ and the initial condition $\Phidot=0,$ we have that
                        \begin{align}                                       \label{E:PhiHNPlusOneLessThanE}
                            \|\Phidot(t)\|_{H^{N+1}} &\leq C(\kappa) E(t) \\
                            \|\partial_t \Phidot(t)\|_{H^{N}} &\leq c E(t) \label{E:PartialtPhiHNLessThanE}\\
                            E(0) &= c^{-1} \|\PsiID_0\|_{H^{N}}. \label{E:BoundonHNPlusOneEnergyattEqualsZero}
                        \end{align}

                        Combining \eqref{E:Klein-GordonEnergyHNPlusOneEnergyEquality},
                        \eqref{E:PhiHNPlusOneLessThanE}, \eqref{E:PartialtPhiHNLessThanE},
                        and \eqref{E:BoundonHNPlusOneEnergyattEqualsZero}, and taking the sup over $t \in[0,T]$ proves
                        \eqref{E:Klein-GordonPhiHNPlusOneNormBound} and \eqref{E:Klein-GordonPartialtPhiHNNormBound}.

                        To prove \eqref{E:Klein-GordonPartialtPhiHNNormBound2}
                        - \eqref{E:Klein-GordonPartialCubedtPhiHNMinusTwoNormBound},
                        we differentiate the Klein-Gordon equation with respect to
                        $t$ (twice to prove \eqref{E:Klein-GordonPartialSquaredtPhiHNMinusOneNormBound2}
                        and \eqref{E:Klein-GordonPartialCubedtPhiHNMinusTwoNormBound})
                        and argue as above, taking into account the initial conditions
                        \begin{align}
                        \label{E:PartialtSquaredPhidotattequals0}
                            \partial_t^2 \Phidot(0) & = -c^{2}l(0) \\
                            \partial_t^3 \Phidot(0) & = c^{2}\big[(\Delta - \kappa^2) \PsiID_0
                            - \partial_tl(0)\big]. \label{E:PartialtCubedPhidotattequals0}
                        \end{align} \\
                    \end{proof}

                    \begin{corollary}                                                       \label{C:Klein-GordonPhiBounds}
                        Assume the hypotheses of Proposition \ref{P:Klein-GordonPhiBounds}, 
                        and let $C_0(\kappa)$ be the constant appearing in the conclusions of the proposition. Then
                        \begin{align}
                            &\mid\mid\mid \Phidot \mid\mid\mid_{H^{N+1},T}^2 \leq \big (C_0(\kappa)\big)^2 \cdot
                                \Big \lbrace c^{-2}\|\PsiID_0\|_{H^N}^2
                                + 2 T \cdot \mid\mid\mid \partial_t \Phidot \mid\mid\mid_{H^{N},T}
                                \cdot \mid\mid\mid l \mid\mid\mid_{H^{N},T} \Big \rbrace. \label{E:Klein-GordonPhiNocBound}
                        \end{align}
                    \end{corollary}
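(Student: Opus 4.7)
The plan is to revisit the energy identity derived in the proof of Proposition A.2 and refrain from carrying out the step that trades $\|\partial_t \Phidot\|_{H^N}$ for a factor of $c$ times the energy. The corollary is a refinement of the bound \eqref{E:Klein-GordonPhiHNPlusOneNormBound}: in the earlier argument, the factor $c^2$ in the $c T \mid\mid\mid l \mid\mid\mid_{H^N,T}$ term arose from the inequality $\|\partial_t \Phidot(t)\|_{H^N} \leq c E(t)$, which was needed in order to obtain a closed Gronwall-type inequality for $E(t)$ alone. For the corollary, we do not need a closed inequality because $\mid\mid\mid \partial_t \Phidot \mid\mid\mid_{H^N,T}$ is carried on the right-hand side.

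Specifically, I would start from the same energy
\[
    E^2(t) \eqdef \kappa^2 \|\Phidot(t)\|_{H^N}^2 + \|\partial \Phidot(t)\|_{H^N}^2 + c^{-2}\|\partial_t \Phidot(t)\|_{H^N}^2
\]
as in \eqref{E:Klein-GordonHNPlusOneEnergyforPsi}, which satisfies $\|\Phidot(t)\|_{H^{N+1}}^2 \leq (C_0(\kappa))^2 E^2(t)$ for the same constant $C_0(\kappa)$ appearing in the proposition. Differentiating, multiplying each component of $\partial^{(k)} \Phidot$'s Klein--Gordon equation by $-\partial^{(k)} \partial_t \Phidot$, integrating by parts, and summing over $0 \leq k \leq N$ (exactly as in \eqref{E:EkDifferentialInequality}--\eqref{E:EnergySquaredInequality}) yields
\[
    \tfrac{1}{2}\tfrac{d}{dt}\bigl(E^2(t)\bigr) \leq \|\partial_t \Phidot(t)\|_{H^N} \cdot \|l(t)\|_{H^N}.
\]
Rather than further estimating $\|\partial_t \Phidot(t)\|_{H^N} \leq c E(t)$, I keep it as-is and bound both factors by their suprema over $[0,T]$, obtaining
\[
    \tfrac{1}{2}\tfrac{d}{dt}\bigl(E^2(t)\bigr) \leq \mid\mid\mid \partial_t \Phidot \mid\mid\mid_{H^N,T} \cdot \mid\mid\mid l \mid\mid\mid_{H^N,T}.
\]
Integrating this over $[0,t]$ and using the initial conditions $\Phidot(0) = 0$ and $\partial_t \Phidot(0) = \PsiID_0$, which give $E^2(0) = c^{-2}\|\PsiID_0\|_{H^N}^2$, I conclude
\[
    E^2(t) \leq c^{-2}\|\PsiID_0\|_{H^N}^2 + 2 T \cdot \mid\mid\mid \partial_t \Phidot \mid\mid\mid_{H^N,T} \cdot \mid\mid\mid l \mid\mid\mid_{H^N,T}.
\]

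The final step is to take the sup over $t \in [0,T]$ and apply the pointwise bound $\|\Phidot(t)\|_{H^{N+1}}^2 \leq (C_0(\kappa))^2 E^2(t)$ to arrive at \eqref{E:Klein-GordonPhiNocBound}. There is no substantial obstacle here; the only subtle point is recognizing that the proof of Proposition A.2 already assembles every ingredient, and that one merely needs to \emph{stop short} of the estimate $\|\partial_t \Phidot\|_{H^N} \leq c E(t)$ and instead absorb that term into a sup-norm on the right-hand side. The payoff is that the resulting bound depends on $l$ and $\partial_t \Phidot$ \emph{bilinearly}, with no explicit positive power of $c$ multiplying the $l$ term, which is exactly what is needed to close the continuous induction argument for $\mid\mid\mid \Phidot \mid\mid\mid_{H^{N+1},T}$ in Lemma \ref{L:UniformtimeAprioriPhidot}.
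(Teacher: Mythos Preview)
Your proposal is correct and follows essentially the same approach as the paper: the paper's proof cites inequality \eqref{E:EnergySquaredInequality} directly (which is precisely your $\tfrac{1}{2}\tfrac{d}{dt}(E^2) \leq \|\partial_t \Phidot\|_{H^N}\|l\|_{H^N}$), then invokes \eqref{E:PhiHNPlusOneLessThanE} and \eqref{E:BoundonHNPlusOneEnergyattEqualsZero} to finish. Your write-up simply spells out the integration step and the sup-norm bound that the paper leaves implicit.
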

                        \begin{proof}
                            Inequality \eqref{E:EnergySquaredInequality} gives that $\frac{1}{2}\frac{d}{dt} (E^2(t)) \leq 
                            \|\partial_t \Phidot\|_{H^{N}}\|l(t)\|_{H^{N}}.$ Taking into account 
                            \eqref{E:PhiHNPlusOneLessThanE} and \eqref{E:BoundonHNPlusOneEnergyattEqualsZero},
                            the proof of \eqref{E:Klein-GordonPhiNocBound} easily follows.
                        \end{proof}

                     \begin{lemma} \label{L:AlterateHNnormestimate}
                        Let $N \in \mathbb{N},$ and $\mathscr{I} \in H^{N-1}.$
                        Suppose that $\Phidot \in L^2$ and that \\ 
                        $\Delta \Phidot -
                      	\kappa^2 \Phidot  = \mathscr{I}.$  Then $\Phidot \in H^{N+1}$ and
                        \begin{align} \label{E:AlternateHNnormestimate}
                            \|\Phidot\|_{H^{N+1}}
                            \leq C(N,\kappa)
                            \|\mathscr{I}\|_{H^{N-1}}.
                        \end{align}
                     \end{lemma}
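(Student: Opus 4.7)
The plan is to prove this by passing to the Fourier side, where the operator $(\Delta - \kappa^2)$ becomes multiplication by $-(|\xi|^2 + \kappa^2)$, and then estimating the resulting Fourier multiplier. Since $\Phidot \in L^2$ and (for $N \geq 1$) $\mathscr{I} \in H^{N-1} \subset L^2$, both $\hat{\Phidot}$ and $\hat{\mathscr{I}}$ are well-defined $L^2$ functions, and the PDE $\Delta \Phidot - \kappa^2 \Phidot = \mathscr{I}$ is equivalent (in the sense of tempered distributions) to the pointwise relation
\begin{align*}
\hat{\Phidot}(\xi) = -\frac{\hat{\mathscr{I}}(\xi)}{|\xi|^2 + \kappa^2}.
\end{align*}

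By Plancherel's theorem, to prove \eqref{E:AlternateHNnormestimate} it suffices to show that
\begin{align*}
\int_{\mathbb{R}^3} (1+|\xi|^2)^{N+1}|\hat{\Phidot}(\xi)|^2 \, d^3\xi = \int_{\mathbb{R}^3} \frac{(1+|\xi|^2)^{N+1}}{(|\xi|^2 + \kappa^2)^2}|\hat{\mathscr{I}}(\xi)|^2 \, d^3\xi \leq C(N,\kappa)^2 \int_{\mathbb{R}^3} (1+|\xi|^2)^{N-1} |\hat{\mathscr{I}}(\xi)|^2 \, d^3\xi.
\end{align*}
Hence the entire proof reduces to the pointwise multiplier estimate
\begin{align*}
\frac{(1+|\xi|^2)^{2}}{(|\xi|^2 + \kappa^2)^2} \leq C(\kappa)^2 \qquad \text{for all } \xi \in \mathbb{R}^3.
\end{align*}
This bound is an elementary consequence of the fact that $\kappa^2 > 0$: if $\kappa^2 \geq 1$ the ratio is at most $1$, while if $0 < \kappa^2 < 1$ one bounds $1 + |\xi|^2 \leq \kappa^{-2}(\kappa^2 + |\xi|^2)$, giving $C(\kappa) = \max(1,\kappa^{-2})$. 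Taking $C(N,\kappa) = C(\kappa)$ (in fact independent of $N$) then yields \eqref{E:AlternateHNnormestimate}, and in particular the finiteness of the right-hand side guarantees $\Phidot \in H^{N+1}$, so the regularity claim is obtained as a byproduct.

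There is really no obstacle here; the argument is entirely elementary once one is on the Fourier side. The only point worth emphasizing is that the hypothesis $\kappa^2 > 0$ is indispensable --- it is precisely what prevents the multiplier from blowing up at $\xi = 0$ --- which is consistent with Remark \ref{R:KernelRemark} explaining why the entire singular-limit argument breaks down in the case $\kappa^2 = 0$.
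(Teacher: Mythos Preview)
Your proof is correct and follows essentially the same Fourier--Plancherel multiplier argument as the paper. The only cosmetic difference is that the paper establishes the $N=1$ case and then differentiates the equation to reach higher orders, whereas you handle all orders at once via the weight $(1+|\xi|^2)^{N+1}$; this is why the paper's constant nominally depends on $N$ while yours does not.
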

                        \begin{proof}
                            For use in this argument, we define the Fourier transform through its action on integrable functions $F$ by
                            $\widehat{F}(\boldsymbol{\xi}) \eqdef
                            \int_{\mathbb{R}^d} F(\sforspace) e^{-2 \pi i \boldsymbol{\xi} \cdot \sforspace} \,
                            d^d \sforspace.$ The following chain of
                            inequalities uses standard results from Fourier analysis, including Plancherel's theorem:
                            \begin{align}                                                       \label{E:AlterateHNnormestimate2}
                            &\|\Phidot\|^2_{H^{2}} \leq C \|(1 + |2\pi \boldsymbol{\xi}|^2)^2 
                            	\widehat{\Phidot}(\boldsymbol{\xi})\|^2_{L^2} 
                           		\leq C(\kappa) \int_{\mathbb{R}^d} (\kappa^2 + |2 \pi 
                            	\boldsymbol{\xi}|^2)^2|\widehat{\Phidot}(\boldsymbol{\xi})|^2 \, 
                            	d^d \boldsymbol{\xi} \\
                            &=C(\kappa)\|(\kappa^2- \Delta)\Phidot \|^2_{L^2}
                            	=C(\kappa)\|\mathscr{I}\|^2_{L^2}, \notag
                            \end{align}
                            and this proves
                            \eqref{E:AlternateHNnormestimate} in the
                            case $N=1.$ To estimate $L^2$ norms of the $k^{th}$ order derivatives of
                            $\Phidot$ for $k \geq 1,$ we differentiate the equation $k$ times
                            to arrive at the equation $\Delta \big(\partial^{(k)}\Phidot\big) -
                            \kappa^2 \big(\partial^{(k)} \Phidot \big) = \partial^{(k)} \mathscr{I},$ and argue
                            as above to conclude that
                            \begin{align} \label{E:AlterateHNnormestimate3}
                                \|\partial^{(k)} \Phidot \|^2_{H^2} \leq
                                C(\kappa) \|\partial^{(k)}
                                \mathscr{I}\|^2_{L^2}.
                            \end{align}
                            Now we add the estimate \eqref{E:AlterateHNnormestimate2}
                            to the estimates \eqref{E:AlterateHNnormestimate3} for $1 \leq k \leq N - 1$ to conclude
                            \eqref{E:AlternateHNnormestimate}.
                        \end{proof}

          \begin{remark}
            The hypothesis $\Phidot \in L^2$ does not follow from the remaining
                assumptions. For example, consider $g (x) = e^x.$ Then $g - \frac{d^2}{dx^2}g
                \in L^2(\mathbb{R}),$ but \\
                $g \not\in H^2(\mathbb{R}).$
          \end{remark}

                \begin{proposition}
                \label{P:Klein-GordonNocDependence}
                    Assume the hypotheses of Proposition \ref{P:Klein-Gordon}. Assume further that \\
                  	$l \in \bigcap_{k=0}^{k=2}C^k([0,T],H^{N-k}).$ Then
                    \begin{align}                                               \label{E:Klein-GordonNocDependencePhi}
                        &\mid\mid\mid \Phidot  \mid\mid\mid_{H^{N+1},T} \ \leq  C(N,\kappa) \\
                        & \ \ \ \cdot \Big \lbrace c^{-1}\|\PsiID_0\|_{H^{N}} + \|l(0)\|_{H^{N-1}}
                            + \mid\mid\mid l \mid\mid\mid_{H^{N-1},T} + \ T \mid\mid\mid \partial_t l \mid\mid\mid_{H^{N-1},T}\Big 			
                            \rbrace \notag
                   	\end{align}
                   	\noindent and
                   	\begin{align}
                     		\mid\mid\mid \partial_t \Phidot \mid\mid\mid_{H^{N},T} \ \leq  C(N,\kappa) 
                        		  \cdot & \Big\lbrace c\|l(0)\|_{H^{N-1}} + \|(\Delta - \kappa^2)\PsiID_0 - \partial_tl(0)\|_{H^{N-2}} 
                        		  \label{E:Klein-GordonNocDependencePartialtPhi}  \\
                            & \qquad + \mid\mid\mid \partial_t l \mid\mid\mid_{H^{N-2},T} +
                            T \mid\mid\mid \partial_t^2 l \mid\mid\mid_{H^{N-2},T}\Big \rbrace. \notag
                    \end{align}
                \end{proposition}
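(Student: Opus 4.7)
The central idea is to trade the singular $c^{-2}$ coefficient in front of $\partial_t^2 \Phidot$ against the two-derivative gain provided by elliptic regularity (Lemma \ref{L:AlterateHNnormestimate}). Rewriting \eqref{E:Klein-Gordon} as
\begin{equation*}
(\Delta - \kappa^2)\Phidot = l + c^{-2}\partial_t^2 \Phidot,
\end{equation*}
we may apply Lemma \ref{L:AlterateHNnormestimate} (noting that $\Phidot \in H^{N+1} \subset L^2$ by Proposition \ref{P:Klein-Gordon}) to obtain
\begin{equation*}
\|\Phidot(t)\|_{H^{N+1}} \leq C(N,\kappa)\bigl(\|l(t)\|_{H^{N-1}} + c^{-2}\|\partial_t^2 \Phidot(t)\|_{H^{N-1}}\bigr).
\end{equation*}
The hyperbolic estimates of Proposition \ref{P:Klein-GordonPhiBounds} will supply bounds on $\partial_t^2 \Phidot$ with explicit $c$-growth; the point is that the $c^{-2}$ factor will precisely cancel the worst $c^2$ growth, leaving a clean bound.

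For the proof of \eqref{E:Klein-GordonNocDependencePhi}, I will use inequality \eqref{E:Klein-GordonPartialSquaredtPhiHNMinusOneNormBound}, which yields
\begin{equation*}
c^{-2}\mid\mid\mid \partial_t^2\Phidot\mid\mid\mid_{H^{N-1},T} \leq C_0(\kappa)\bigl\{c^{-1}\|\PsiID_0\|_{H^N} + \|l(0)\|_{H^{N-1}} + T\mid\mid\mid \partial_t l\mid\mid\mid_{H^{N-1},T}\bigr\}.
\end{equation*}
Substituting this into the elliptic estimate above and taking the supremum over $t\in[0,T]$ gives \eqref{E:Klein-GordonNocDependencePhi} directly. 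Note that the $c^{-1}\|\PsiID_0\|_{H^N}$ term in the conclusion arises naturally in $E(0)$ when one treats $\partial_t\Phidot$ as the ``new unknown'' satisfying a Klein-Gordon equation with forcing $\partial_t l$ and time-derivative initial datum $\partial_t^2\Phidot(0) = -c^2 l(0)$ (from the original equation at $t=0$ using $\Phidot(0)=0$).

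For \eqref{E:Klein-GordonNocDependencePartialtPhi}, differentiate the Klein-Gordon equation once in $t$ and apply the same elliptic trick to $\partial_t \Phidot$:
\begin{equation*}
(\Delta - \kappa^2)(\partial_t \Phidot) = \partial_t l + c^{-2}\partial_t^3 \Phidot,
\end{equation*}
which by Lemma \ref{L:AlterateHNnormestimate} yields
\begin{equation*}
\|\partial_t \Phidot(t)\|_{H^{N}} \leq C(N,\kappa)\bigl(\|\partial_t l(t)\|_{H^{N-2}} + c^{-2}\|\partial_t^3 \Phidot(t)\|_{H^{N-2}}\bigr).
\end{equation*}
Now invoke the already-proved bound \eqref{E:Klein-GordonPartialCubedtPhiHNMinusTwoNormBound} to control $\mid\mid\mid \partial_t^3 \Phidot\mid\mid\mid_{H^{N-2},T}$ in terms of $c^3\|l(0)\|_{H^{N-1}}, c^2\|(\Delta-\kappa^2)\PsiID_0 - \partial_t l(0)\|_{H^{N-2}},$ and $c^2 T\mid\mid\mid\partial_t^2 l\mid\mid\mid_{H^{N-2},T}$; the factor $c^{-2}$ reduces each of these by two powers of $c$, yielding precisely the right-hand side of \eqref{E:Klein-GordonNocDependencePartialtPhi}.

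There is no serious obstacle here beyond the conceptual step of recognizing that the elliptic Helmholtz estimate of Lemma \ref{L:AlterateHNnormestimate} can absorb the $c^{-2}$ factor and thus rescue the estimate from the $c^2$-blowup intrinsic to the hyperbolic energy approach; all the hard analytic work was done in Proposition \ref{P:Klein-GordonPhiBounds}. The only care required is matching indices correctly (regularity drops by one when differentiating in $t$, hence the $H^{N-1}$ versus $H^{N-2}$ spaces) and using the correct version of the initial-data expression — $\partial_t^2\Phidot(0) = -c^2 l(0)$ from \eqref{E:PartialtSquaredPhidotattequals0} and $\partial_t^3\Phidot(0) = c^2[(\Delta-\kappa^2)\PsiID_0 - \partial_t l(0)]$ from \eqref{E:PartialtCubedPhidotattequals0} — to ensure the constants on the right-hand side of the final inequalities are indeed the ones stated.
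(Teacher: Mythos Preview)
Your proof is correct and follows essentially the same approach as the paper: rewrite the Klein--Gordon equation as a Helmholtz equation with right-hand side $l + c^{-2}\partial_t^2\Phidot$, apply the elliptic estimate of Lemma \ref{L:AlterateHNnormestimate}, and control $c^{-2}\partial_t^2\Phidot$ via \eqref{E:Klein-GordonPartialSquaredtPhiHNMinusOneNormBound}; then repeat with $\partial_t\Phidot$ using \eqref{E:Klein-GordonPartialCubedtPhiHNMinusTwoNormBound}. The paper's proof is identical in structure and in the specific inequalities invoked.
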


                    \begin{proof}
                        Define $\mathscr{I}
                        \eqdef l + c^{-2}
                        \partial_t^2 \Phidot$ and observe that $\Phidot$ is a solution to
                        \begin{align}
                            \Delta \Phidot - \kappa^2 \Phidot  = \mathscr{I}.
                        \end{align}
                        By inequality \eqref{E:Klein-GordonPartialSquaredtPhiHNMinusOneNormBound} of Proposition 
                        \ref{P:Klein-GordonPhiBounds}, Lemma \ref{L:AlterateHNnormestimate}, and
                        the triangle inequality, we have that
                        \begin{align}
                            &\mid\mid\mid \Phidot \mid\mid\mid_{H^{N+1},T}
                            \leq C(N,\kappa) \mid\mid\mid l + c^{-2} \partial_t^2 \Phidot \mid\mid\mid_{H^{N-1},T} \\
                            & \qquad \leq C(N,\kappa) \cdot \Big \lbrace c^{-1}\|\PsiID_0\|_{H^{N}} + \|l(0)\|_{H^{N-1}}
                            + \mid\mid\mid l \mid\mid\mid_{H^{N-1},T} + \ T \mid\mid\mid \partial_t l 
                            \mid\mid\mid_{H^{N-1},T}\Big \rbrace, 	\notag
                        \end{align}
                        which proves $\eqref{E:Klein-GordonNocDependencePhi}.$

                        Because $\partial_t \Phidot$ satisfies the equation $-c^{-2}
                        \partial_t^2 (\partial_t \Phidot) +
                        \Delta \big(\partial_t \Phidot \big) - \kappa^2 \big(\partial_t \Phidot \big) = \partial_t
                        l,$ we may use a similar argument to prove \eqref{E:Klein-GordonNocDependencePartialtPhi}; we leave the simple
                        modification, which makes use of \eqref{E:Klein-GordonPartialCubedtPhiHNMinusTwoNormBound}, up to the reader.
                    \end{proof}

\section{Sobolev-Moser Estimates} \label{A:SobolevMoser}
		In this Appendix, we use notation that is as consistent as possible with
    our use of notation in the body of the paper. To conserve space, we refer the reader
    to the literature instead of providing proofs: propositions \ref{P:CompositionProductSobolevMoser} and
    \ref{P:SobolevTaylor} are similar to propositions proved in
    chapter 6 of \cite{lH1997}, while Proposition
    \ref{P:SobolevMissingDerivativeProposition} is proved in
    \cite{sKaM1981}. The corollaries and remarks below are straightforward extensions
    of the propositions. With the exception of Proposition \ref{P:SobInterpolation}, which is a standard Sobolev interpolation
    inequality, the proofs of the propositions given in the literature are commonly based on the following version of the
    Gagliardo-Nirenberg inequality \cite{lN1959}, together with repeated use of H\"{o}lder's inequality and/or Sobolev embedding,
    where throughout this appendix, we abbreviate $L^p=L^p(\mathbb{R}^d),H^j=H^j(\mathbb{R}^d),$ and $H_{\Vb}^j = 
    H_{\Vb}^j(\mathbb{R}^d)$:
\begin{lemma}                                                                                           \label{L:GN}
    If $i,k \in \mathbb{N}$ with $0 \leq i \leq k,$ and $\mathbf{V}$ is a
    scalar-valued or array-valued function on $\mathbb{R}^d$ satisfying $\mathbf{V} \in
    L^{\infty}$ and $\|\partial^{(k)} \mathbf{V}\|_{L^2} < \infty,$ then
    \begin{align}
        \| \partial^{(i)} \mathbf{V} \|_{L^{2k/i}} \leq C(k) \|\mathbf{V}\|_{L^{\infty}}^{1 -
        \frac{i}{k}}\|\partial^{(k)} \mathbf{V}\|_{L^2}^{\frac{i}{k}}.
    \end{align}
\end{lemma}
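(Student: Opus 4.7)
The plan is to reduce immediately to the scalar case (the array-valued assertion follows by applying the scalar inequality componentwise and invoking the equivalence of norms on finite-dimensional spaces), and then to establish the scalar inequality by induction on $k$. The endpoint cases $i=0$ (with the convention $L^{2k/0} = L^{\infty}$) and $i=k$ are immediate, since both sides of the stated inequality reduce respectively to $\|\mathbf{V}\|_{L^{\infty}}$ and $\|\partial^{(k)} \mathbf{V}\|_{L^2}$.

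The crux of the argument is the base case $k=2$, $i=1$. For a scalar Schwartz function $V$ and a fixed coordinate direction $j$, integrating by parts (antidifferentiating one $\partial_j V$ factor back to $V$) yields
\begin{align*}
\int_{\mathbb{R}^d} (\partial_j V)^4 \, dx = -3 \int_{\mathbb{R}^d} V \, (\partial_j V)^2 \, \partial_j^2 V \, dx,
\end{align*}
and H\"older's inequality then bounds the right-hand side by $3 \|V\|_{L^{\infty}} \|\partial_j V\|_{L^4}^2 \|\partial_j^2 V\|_{L^2}$. Cancelling the common factor $\|\partial_j V\|_{L^4}^2$ and summing over $j$ produces
\[
\|\partial V\|_{L^4} \leq C \|V\|_{L^{\infty}}^{1/2} \|\partial^2 V\|_{L^2}^{1/2},
\]
which is the $k=2$, $i=1$ assertion. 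Extending this from Schwartz functions to general $V$ with $V \in L^{\infty}$ and $\partial^{(2)} V \in L^2$ is accomplished by a standard mollification-and-truncation approximation.

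For the inductive step one proves the family of inequalities for $k \geq 3$ and all $1 \leq i \leq k-1$ by induction on $k$, chaining the base estimate applied to derivatives $\partial^{(j)} V$ of various orders with H\"older's inequality in order to interpolate between the resulting $L^p$-norm bounds. The main obstacle is the combinatorial bookkeeping required to ensure that, after all interpolations, the exponents on the right-hand side collapse to the prescribed powers $1-i/k$ of $\|\mathbf{V}\|_{L^{\infty}}$ and $i/k$ of $\|\partial^{(k)} \mathbf{V}\|_{L^2}$; the delicate point is the correct choice of intermediate derivative orders and H\"older conjugate exponents at each step, rather than any single calculation. A self-contained induction of this kind is carried out in \cite{lN1959} and in Chapter 6 of \cite{lH1997}, to which I would refer for the full details.
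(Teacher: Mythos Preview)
The paper does not actually supply a proof of this lemma; it merely states the inequality and cites \cite{lN1959}, using it as a tool to prove the subsequent propositions in the appendix. Your sketch---reduction to the scalar case, the integration-by-parts computation for the base case $k=2$, $i=1$, and induction on $k$ with H\"older interpolation---is the standard argument found in the references you cite, and is correct. So you have not deviated from the paper's approach so much as supplied one where the paper offered only a citation.
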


\begin{proposition}                                                                                                \label{P:CompositionProductSobolevMoser}
    Let $K \subset \mathbb{R}^{n}$ be a compact set, and let $j,d \in \mathbb{N}$ with $ j > \frac{d}{2}.$
    Let $\mathbf{V}:\mathbb{R}^d \rightarrow \mathbb{R}^n$ be an element of $H^j,$ and assume that
    $\mathbf{V} \subset K.$ Let $\mathfrak{F} \in C_b^j(K)$ be a $q \times q$ matrix-valued function,
    and let $G \in H^j$ be a $q \times q$ ($q \times 1)$ matrix-valued (array-valued) function. Then the $q \times q$ 	
    $(q \times 1)$ matrix-valued (array-valued) function $(\mathfrak{F} \circ \mathbf{V})G$ is an element of $H^j$ and
    \begin{align} \label{E:CompositionProductSobolevMoser}
        \|(\mathfrak{F} \circ \mathbf{V})G\|_{H^j} \leq C(j,d)|\mathfrak{F}|_{j,K}(1 +
        \|\mathbf{V}\|_{H^j}^j)\|G\|_{H^j}.
    \end{align}
	\end{proposition}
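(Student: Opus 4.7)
The plan is to reduce the estimate of $\|(\mathfrak{F} \circ \mathbf{V})G\|_{H^j}$ to a finite sum of mixed $L^p$ products and then control each product by the Gagliardo--Nirenberg inequality (Lemma \ref{L:GN}). First I would apply the Leibniz rule to $\partial_{\vec{\alpha}}[(\mathfrak{F} \circ \mathbf{V}) G]$ for every multi-index $\vec{\alpha}$ with $|\vec{\alpha}| \leq j$, obtaining a finite sum of terms of the form $\partial^{(a)}(\mathfrak{F} \circ \mathbf{V}) \cdot \partial^{(b)} G$ with $a + b = |\vec{\alpha}| \leq j$. To each such term I would apply H\"older's inequality in the form $\|F_1 F_2\|_{L^2} \leq \|F_1\|_{L^{2j/a}} \|F_2\|_{L^{2j/b}}$ (with the usual conventions when $a = 0$ or $b = 0$, in which case one factor is placed in $L^\infty$).

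Next, for the $G$-factor I would invoke Lemma \ref{L:GN} directly to obtain
\begin{align*}
\|\partial^{(b)} G\|_{L^{2j/b}} \leq C(j) \|G\|_{L^\infty}^{1 - b/j} \|\partial^{(j)} G\|_{L^2}^{b/j} \leq C(j,d) \|G\|_{H^j},
\end{align*}
where the second inequality uses the Sobolev embedding $H^j \hookrightarrow L^\infty$ for $j > d/2$. For the composition factor $\partial^{(a)}(\mathfrak{F} \circ \mathbf{V})$, the Fa\`a di Bruno formula expresses it as a finite sum (with combinatorial coefficients depending only on $a$ and $d$) of terms of the form $(\partial^{(r)}\mathfrak{F}) \circ \mathbf{V} \cdot \prod_{i=1}^{r} \partial^{(a_i)} \mathbf{V}$, where $r \leq a$ and $\sum a_i = a$. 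The scalar prefactor is bounded pointwise by $|\mathfrak{F}|_{j,K}$ since $\mathbf{V}(\mathbb{R}^d) \subset K$, and the product of derivatives of $\mathbf{V}$ is then controlled in $L^{2j/a}$ by another round of H\"older with exponents $2j/a_i$ followed by Lemma \ref{L:GN} applied to each $\mathbf{V}$-factor; each factor yields a power $\|\mathbf{V}\|_{L^\infty}^{1 - a_i/j} \|\partial^{(j)} \mathbf{V}\|_{L^2}^{a_i/j}$, and since $\|\mathbf{V}\|_{L^\infty} \leq \sup_{y \in K} |y| + 1 \leq C(K)$ (or is absorbed into the constant, using $\mathbf{V} \subset K$), the total contribution is at most $C(j,d) |\mathfrak{F}|_{j,K} \|\mathbf{V}\|_{H^j}^{r}$ times inessential $L^\infty$-factors of $\mathbf{V}$.

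Collecting these estimates and summing over all multi-indices and all terms produced by Leibniz and Fa\`a di Bruno gives a bound of the form $\sum_{r=0}^{j} C(j,d)|\mathfrak{F}|_{j,K} \|\mathbf{V}\|_{H^j}^{r} \|G\|_{H^j}$, and absorbing this polynomial into $(1 + \|\mathbf{V}\|_{H^j}^{j})$ yields the claimed inequality \eqref{E:CompositionProductSobolevMoser}. The main obstacle I expect is the bookkeeping for the Fa\`a di Bruno expansion: one must verify that the exponent pairs arising from the interpolation really match (that is, that the H\"older conjugacy $\sum_i a_i/j + b/j = 1$ holds and that each Gagliardo--Nirenberg application is legitimate because every individual order $a_i, b \leq j$). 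Everything else is a routine combination of H\"older, Sobolev embedding (which provides the $L^\infty$ bound on $G$ and on $\mathbf{V}$ in the edge cases), and the compactness of $K$ to dominate $|\partial^{(r)} \mathfrak{F}(\mathbf{V}(\cdot))|$ uniformly by $|\mathfrak{F}|_{j,K}$.
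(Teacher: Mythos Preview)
Your approach is correct and is precisely the one the paper indicates: the paper does not spell out a proof but refers to the literature and states that the argument rests on the Gagliardo--Nirenberg inequality (Lemma~\ref{L:GN}) together with repeated use of H\"older's inequality and Sobolev embedding, which is exactly the Leibniz/Fa\`a di Bruno/H\"older/GN scheme you outline. One small bookkeeping correction: your H\"older exponents $2j/a,\,2j/b$ are only conjugate when $a+b=j$, so for $|\vec{\alpha}|=k<j$ use $2k/a,\,2k/b$ instead (or simply observe that it suffices to control the $L^2$ norm and the top-order $\|\partial^{(j)}[(\mathfrak{F}\circ\mathbf{V})G]\|_{L^2}$, where your exponents do match).
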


\begin{corollary}                                                                                               \label{C:SobolevCorollary}
    Assume the hypotheses of Proposition
    \ref{P:CompositionProductSobolevMoser} with the following
    changes: $\mathbf{V},G \in C^0([0,T],H^j).$
    Then the $q \times q$ $(q \times 1)$ matrix-valued (array-valued) function $(\mathfrak{F} \circ \mathbf{V})G$ is an element of
    $C^0([0,T],H^j).$
\end{corollary}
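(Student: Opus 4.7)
Fix $t_0 \in [0,T]$; continuity at $t_0$ suffices. The plan is to decompose
\begin{equation*}
(\mathfrak{F} \circ \mathbf{V}(t))G(t) - (\mathfrak{F} \circ \mathbf{V}(t_0))G(t_0) = I(t) + II(t),
\end{equation*}
with $I(t) \eqdef (\mathfrak{F} \circ \mathbf{V}(t))[G(t) - G(t_0)]$ and $II(t) \eqdef [\mathfrak{F} \circ \mathbf{V}(t) - \mathfrak{F} \circ \mathbf{V}(t_0)]G(t_0)$, and to show that each piece has $H^j$ norm tending to $0$ as $t \to t_0$. For $I(t)$ this is immediate: by the continuity of $\mathbf{V}$ and the compactness of $[0,T]$, the quantity $M \eqdef \sup_{t\in[0,T]}\|\mathbf{V}(t)\|_{H^j}$ is finite, and Proposition \ref{P:CompositionProductSobolevMoser} yields
\begin{equation*}
\|I(t)\|_{H^j} \leq C(j,d)\,|\mathfrak{F}|_{j,K}\,(1+M^j)\,\|G(t) - G(t_0)\|_{H^j},
\end{equation*}
which tends to $0$ by the continuity of $G$.

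The substance of the argument lies in showing $\|II(t)\|_{H^j} \to 0$. For each multi-index $\alpha$ with $|\alpha| \leq j$, I would expand $\partial^{\alpha}II(t)$ via the Leibniz rule and Fa\`a di Bruno's formula into a finite linear combination of differences of the form
\begin{equation*}
D^k\mathfrak{F}(\mathbf{V}(t)) \prod_{i=1}^k \partial^{\beta_i}\mathbf{V}(t) \cdot \partial^{\gamma}G(t_0) - D^k\mathfrak{F}(\mathbf{V}(t_0)) \prod_{i=1}^k \partial^{\beta_i}\mathbf{V}(t_0) \cdot \partial^{\gamma}G(t_0),
\end{equation*}
with $\sum_i|\beta_i| + |\gamma| = |\alpha|$ and $0 \leq k \leq |\alpha| \leq j$. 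I would then telescope each such difference into $k+1$ summands, each carrying exactly one \emph{difference factor} of one of two types: (i) $\partial^{\beta_i}[\mathbf{V}(t) - \mathbf{V}(t_0)]$ for some $i$, or (ii) $D^k\mathfrak{F}(\mathbf{V}(t)) - D^k\mathfrak{F}(\mathbf{V}(t_0))$. The remaining factors involve only $\mathbf{V}(\tau)$, $D^m\mathfrak{F}(\mathbf{V}(\tau))$, $G(t_0)$ (with $\tau \in \{t, t_0\}$), and their spatial derivatives, and are uniformly in $t$ bounded in suitable $L^p$ spaces by Lemma \ref{L:GN}, the Sobolev embedding $H^j \hookrightarrow L^\infty$ (valid since $j > d/2$), and $\mathfrak{F} \in C_b^j(K)$.

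It then remains to see that each difference factor vanishes in an appropriate norm as $t \to t_0$. For type (i), Lemma \ref{L:GN} gives
\begin{equation*}
\|\partial^{\beta_i}[\mathbf{V}(t) - \mathbf{V}(t_0)]\|_{L^{2j/|\beta_i|}} \leq C\|\mathbf{V}(t) - \mathbf{V}(t_0)\|_{L^\infty}^{1-|\beta_i|/j}\|\mathbf{V}(t) - \mathbf{V}(t_0)\|_{H^j}^{|\beta_i|/j},
\end{equation*}
which tends to $0$ since $\mathbf{V}(t) \to \mathbf{V}(t_0)$ both in $H^j$ and in $L^\infty$ (the latter by Sobolev embedding). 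For type (ii), each $D^k\mathfrak{F}$ with $k \leq j$ is uniformly continuous on the compact set $K$, so $\|\mathbf{V}(t) - \mathbf{V}(t_0)\|_{L^\infty} \to 0$ implies $\|D^k\mathfrak{F}(\mathbf{V}(t)) - D^k\mathfrak{F}(\mathbf{V}(t_0))\|_{L^\infty} \to 0$. In either case, H\"older's inequality against the uniformly $L^p$-bounded remaining factors yields $L^2$ convergence to $0$ of that summand; summing finitely many summands gives $\|\partial^{\alpha}II(t)\|_{L^2} \to 0$, hence $\|II(t)\|_{H^j} \to 0$.

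The main obstacle is the H\"older-exponent bookkeeping: for each Fa\`a di Bruno summand, one must verify that the reciprocals of the Gagliardo--Nirenberg exponents of the remaining factors and the exponent of the difference factor sum to $1/2$. This is precisely the computation already performed in the proof of Proposition \ref{P:CompositionProductSobolevMoser} to obtain the $L^2$ bound on $(\mathfrak{F} \circ \mathbf{V})G$, so reusing that structure and simply tracking where the vanishing difference factor enters completes the argument.
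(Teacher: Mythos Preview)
Your argument is correct. The paper gives no proof of this corollary at all --- it is listed among the ``corollaries and remarks'' in Appendix~\ref{A:SobolevMoser} that are declared to be ``straightforward extensions of the propositions,'' with the reader referred to the literature. So there is nothing to compare your approach against; you have simply supplied the omitted details.

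One remark worth making: had the hypothesis been $\mathfrak{F}\in C_b^{j+1}(K)$ rather than $C_b^j(K)$, your term $II(t)$ could be dispatched in one line by Proposition~\ref{P:SobolevTaylor}, which gives $\|\mathfrak{F}\circ\mathbf{V}(t)-\mathfrak{F}\circ\mathbf{V}(t_0)\|_{H^j}\leq C\,\|\mathbf{V}(t)-\mathbf{V}(t_0)\|_{H^j}$, followed by the algebra property of $H^j$. Under the stated $C_b^j$ hypothesis, however, that proposition is unavailable, and your uniform-continuity argument for the type-(ii) difference factors is exactly what is needed. Your observation that the H\"older-exponent bookkeeping is inherited verbatim from the proof of Proposition~\ref{P:CompositionProductSobolevMoser} is the right way to avoid redoing that computation.
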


\begin{remark}                                                                                          \label{R:SobolevCalculusRemark}
    We often make use of a slight modification of Proposition \ref{P:CompositionProductSobolevMoser}
    in which the assumption $\mathbf{V} \in H^j$ is replaced with the assumption
    $\mathbf{V} \in H_{\Vb}^j,$ where $\bar{\mathbf{V}} \in \mathbb{R}^n$ is a
    constant array. Under this modified assumption, the conclusion of
    Proposition \ref{P:CompositionProductSobolevMoser}
    is modified as follows:
    \begin{align}                                                                                       \label{E:ModifiedSobolevEstimate}
        \|(\mathfrak{F} \circ \mathbf{V})G\|_{H^j} \leq C(j,d)|\mathfrak{F}|_{j,K}(1 +
        \|\mathbf{V}\|_{H_{\Vb}^j}^j)\|G\|_{H^j}.
    \end{align}
    A similar modification can be made to Corollary \ref{C:SobolevCorollary}.
\end{remark}

\begin{proposition}                                                                                             \label{P:SobolevTaylor}
    Let $K \subset \mathbb{R}^{n}$ be a compact convex set, and let $j,d \in 	
    \mathbb{N}$ with \\ 
    $j > \frac{d}{2}.$ Let $\mathfrak{F} \in C_b^j(K)$ be a  scalar or array-valued
    function. Let $\mathbf{V}, \widetilde{\mathbf{V}}: \mathbb{R}^d \rightarrow
    \mathbb{R}^n,$ and assume that $\mathbf{V}, \widetilde{\mathbf{V}}
    \in H^j.$ Assume further that $\mathbf{V}, \widetilde{\mathbf{V}} \subset 
    K.$ Then \\
    $\mathfrak{F} \circ \mathbf{V} - \mathfrak{F} \circ \widetilde{\mathbf{V}} \in H^j$ and
    \begin{align}
        \|\mathfrak{F} \circ \mathbf{V} - \mathfrak{F} \circ \widetilde{\mathbf{V}}\|_{H^j}
        \leq C(j,d,\|\mathbf{V}\|_{H^j},\|\widetilde{\mathbf{V}}\|_{H^j})|\mathfrak{F}|_{j+1,K} \|\mathbf{V} 
        - \widetilde{\mathbf{V}}\|_{H^j}.
    \end{align}
\end{proposition}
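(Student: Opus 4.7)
The plan is to reduce the difference estimate to the product-composition estimate (Proposition \ref{P:CompositionProductSobolevMoser}, in its constant-shifted form \eqref{E:ModifiedSobolevEstimate}) by writing the difference as an integral of a derivative via the fundamental theorem of calculus. This is the standard Moser trick for Lipschitz-type estimates, and convexity of $K$ is exactly the hypothesis that makes the straight-line homotopy well defined.

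First, I would introduce the one-parameter family $\mathbf{V}_s \eqdef \widetilde{\mathbf{V}} + s(\mathbf{V} - \widetilde{\mathbf{V}})$ for $s \in [0,1]$. Since $K$ is convex and $\mathbf{V}, \widetilde{\mathbf{V}}$ take values in $K$, we have $\mathbf{V}_s(\sforspace) \in K$ for every $s \in [0,1]$ and $\sforspace \in \mathbb{R}^d$. Then by the fundamental theorem of calculus applied pointwise,
\begin{align*}
\mathfrak{F} \circ \mathbf{V} - \mathfrak{F} \circ \widetilde{\mathbf{V}} = \int_0^1 \big(\nabla\mathfrak{F} \circ \mathbf{V}_s\big) \cdot (\mathbf{V} - \widetilde{\mathbf{V}}) \, ds,
\end{align*}
where $\nabla \mathfrak{F}$ denotes the gradient of $\mathfrak{F}$ with respect to its $n$ arguments, viewed as an array-valued function on $K$.

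Next, for each fixed $s \in [0,1]$, I would apply the modified composition-product estimate \eqref{E:ModifiedSobolevEstimate} of Remark \ref{R:SobolevCalculusRemark} to the integrand, with $\nabla \mathfrak{F}$ playing the role of $\mathfrak{F}$ and $(\mathbf{V} - \widetilde{\mathbf{V}})$ playing the role of $G$. Observe that $\mathbf{V} - \widetilde{\mathbf{V}} \in H^j$ (not merely an affine shift of such), so $G$ satisfies the correct hypothesis. The constant term associated to $\mathbf{V}_s$ can be taken to be $\widetilde{\mathbf{V}}(\infty)$-style shifts from either endpoint; since $\|\mathbf{V}_s\|_{H_{\Vb}^j} \leq \|\widetilde{\mathbf{V}}\|_{H_{\Vb}^j} + s \|\mathbf{V} - \widetilde{\mathbf{V}}\|_{H^j}$, this quantity is uniformly bounded in $s$ by $C(\|\mathbf{V}\|_{H^j},\|\widetilde{\mathbf{V}}\|_{H^j})$ (interpreted with appropriate reference constants). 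Also, $|\nabla\mathfrak{F}|_{j,K} \leq |\mathfrak{F}|_{j+1,K}$, which is where the extra order of regularity of $\mathfrak{F}$ enters. This produces the $s$-independent bound
\begin{align*}
\big\| (\nabla\mathfrak{F} \circ \mathbf{V}_s) \cdot (\mathbf{V} - \widetilde{\mathbf{V}}) \big\|_{H^j} \leq C(j,d,\|\mathbf{V}\|_{H^j},\|\widetilde{\mathbf{V}}\|_{H^j}) \, |\mathfrak{F}|_{j+1,K} \, \|\mathbf{V} - \widetilde{\mathbf{V}}\|_{H^j}.
\end{align*}

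Finally, integrating in $s$ over $[0,1]$ and applying Minkowski's integral inequality in the $H^j$ norm yields the desired estimate. The one mildly delicate point is the bookkeeping on which $H^j$ vs.\ $H^j_{\Vb}$ norm applies to which object: $\mathbf{V} - \widetilde{\mathbf{V}}$ is a genuine $H^j$ function, so \eqref{E:ModifiedSobolevEstimate} applies cleanly, but one must verify that the ``affine constant'' chosen for $\mathbf{V}_s$ (say, the common background value of $\mathbf{V}$ and $\widetilde{\mathbf{V}}$ at spatial infinity, or any fixed point of $K$) yields the stated dependence on $\|\mathbf{V}\|_{H^j}$ and $\|\widetilde{\mathbf{V}}\|_{H^j}$. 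This is the only subtle step; the rest is routine.
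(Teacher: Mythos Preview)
Your approach is correct and is precisely the standard argument one finds in the literature the paper cites (e.g., H\"{o}rmander, Chapter 6); the paper itself does not supply a proof of this proposition but merely refers the reader there. One small simplification: since in the stated hypotheses $\mathbf{V}, \widetilde{\mathbf{V}} \in H^j$ (not merely an affine shift thereof), the interpolant $\mathbf{V}_s$ is itself in $H^j$ with $\|\mathbf{V}_s\|_{H^j} \leq (1-s)\|\widetilde{\mathbf{V}}\|_{H^j} + s\|\mathbf{V}\|_{H^j}$, so you may apply Proposition \ref{P:CompositionProductSobolevMoser} directly and dispense with the $H^j_{\Vb}$ bookkeeping entirely.
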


\begin{remark}     \label{R:SobolevTaylorCalculusRemark}
    As in Remark \ref{R:SobolevCalculusRemark}, we may replace the hypotheses $\mathbf{V}, \mathbf{\widetilde{V}} \in H^j$ 
    from Proposition
    \ref{P:SobolevTaylor} with the hypotheses $\mathbf{V}, \mathbf{\widetilde{V}} \in
    H_{\Vb}^j,$ in which case the conclusion
    of the proposition is:
    \begin{align}                                                                                       \label{E:ModifiedSobolevEstimate2}
        \|(\mathfrak{F} \circ \mathbf{V}) - (\mathfrak{F} \circ \mathbf{\widetilde{V}}) \|_{H^j} \leq
        C(j,d,\|\mathbf{V}\|_{H_{\Vb}^j},\|\widetilde{\mathbf{V}}\|_{H_{\Vb}^j})|\mathfrak{F}|_{j+1,K} 
        \|\mathbf{V} - \widetilde{\mathbf{V}}\|_{H^j}.
    \end{align}
    Furthermore, a careful analysis of the special case $\widetilde{\mathbf{V}} =
    \bar{\mathbf{V}},$ where $\Vb \in K$ is a constant array, gives the bound
    \begin{align} 														\label{E:ModifiedSobolevEstimateConstantArray}
        \|\mathfrak{F} \circ \mathbf{V} - \mathfrak{F} \circ \bar{\mathbf{V}}\|_{H^j} \leq C(j,d)|\partial 
        \mathfrak{F}/\partial \mathbf{V}|_{j-1,K}(1 + \|\mathbf{V}\|_{H_{\Vb}^j}^{j-1}) 
        (\|\mathbf{V}\|_{H_{\Vb}^j}),
    \end{align}
    in which we require less regularity of $\mathfrak{F}$ than we do in the general case.
\end{remark}

\begin{proposition}                                                                             \label{P:SobolevMissingDerivativeProposition}
        Assume the hypotheses of Proposition \ref{P:CompositionProductSobolevMoser} with the following two changes:
        \begin{enumerate}
            \item Assume $j > \frac{d}{2} + 1.$
            \item Assume that $G \in H^{j-1}.$
        \end{enumerate}
        Let $\vec{\alpha}$ be a spatial
        derivative multi-index such that $1 \leq |\vec{\alpha}| \leq j.$ Then
        \begin{align}       \label{E:SobolevMissingDerivativeProposition}
            \|\partial_{\vec{\alpha}}&\left[(\mathfrak{F} \circ \mathbf{V})G\right] - (\mathfrak{F} \circ
                \mathbf{V})\partial_{\vec{\alpha}}G\|_{L^2} \notag \\
            &\leq C(j,d)|\partial \mathfrak{F}/\partial \mathbf{V}|_{j-1,K}(\|\mathbf{V}\|_{H^j} +
                \|\mathbf{V}\|_{H^j}^j)\|G\|_{H^{j-1}}. 
        \end{align}
\end{proposition}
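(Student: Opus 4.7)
The plan is to proceed by expansion via the Leibniz rule and to exploit the crucial cancellation that removes the top-order-on-$G$ term, leaving only commutator-type terms in which at least one derivative falls on the composition $\mathfrak{F} \circ \mathbf{V}$. Writing $\partial_{\vec{\alpha}}\bigl[(\mathfrak{F} \circ \mathbf{V}) G\bigr] = \sum_{\vec{\beta} \leq \vec{\alpha}} \binom{\vec{\alpha}}{\vec{\beta}} \bigl(\partial_{\vec{\alpha}-\vec{\beta}}(\mathfrak{F} \circ \mathbf{V})\bigr)\bigl(\partial_{\vec{\beta}} G\bigr)$, the $\vec{\beta}=\vec{\alpha}$ summand is precisely $(\mathfrak{F} \circ \mathbf{V})\partial_{\vec{\alpha}} G$, so the quantity we must bound is $\sum_{\vec{\beta}<\vec{\alpha}} \binom{\vec{\alpha}}{\vec{\beta}} \bigl(\partial_{\vec{\alpha}-\vec{\beta}}(\mathfrak{F} \circ \mathbf{V})\bigr)\bigl(\partial_{\vec{\beta}} G\bigr)$ in $L^2$. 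I would estimate each summand separately using H\"older's inequality with well-chosen exponents and then convert those mixed Lebesgue norms into $L^2$ norms of top-order derivatives via Lemma \ref{L:GN} (Gagliardo-Nirenberg).

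For a fixed summand with $|\vec{\alpha}-\vec{\beta}| = i \geq 1$ and $|\vec{\beta}| = j - i \leq j - 1$, I would pair $\partial_{\vec{\alpha}-\vec{\beta}}(\mathfrak{F} \circ \mathbf{V}) \in L^{2j/i}$ with $\partial_{\vec{\beta}} G \in L^{2j/(j-i)}$ by H\"older (the exponents being conjugate after noting the trivial $L^\infty$ case when $i=j$, which is handled by embedding $H^{j-1} \hookrightarrow L^\infty$ since $j-1 > d/2$). Applying the Fa\`a di Bruno formula, each derivative $\partial_{\vec{\alpha}-\vec{\beta}}(\mathfrak{F} \circ \mathbf{V})$ is a polynomial in derivatives of $\mathbf{V}$ of order $\leq i$ with coefficients bounded in $L^\infty$ by $|\partial\mathfrak{F}/\partial\mathbf{V}|_{i-1,K} \leq |\partial\mathfrak{F}/\partial\mathbf{V}|_{j-1,K}$; applying Lemma \ref{L:GN} to each derivative factor (with $\mathbf{V} \in L^\infty$ via Sobolev embedding $H^j \hookrightarrow L^\infty$, controlled by $\|\mathbf{V}\|_{H^j}$) converts the $L^{2j/i}$ norm of the product to a bound of the form $|\partial\mathfrak{F}/\partial\mathbf{V}|_{j-1,K}(1 + \|\mathbf{V}\|_{H^j}^{i-1})\|\partial^{(i)}\mathbf{V}\|_{L^{2j/i}}^{\text{(at most 1 factor)}}$, which ultimately scales like $|\partial\mathfrak{F}/\partial\mathbf{V}|_{j-1,K}(\|\mathbf{V}\|_{H^j}+\|\mathbf{V}\|_{H^j}^j)$ after recombining. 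Similarly, Lemma \ref{L:GN} applied to $G$ gives $\|\partial_{\vec{\beta}}G\|_{L^{2j/(j-i)}} \leq C \|G\|_{L^\infty}^{i/j}\|\partial^{(j-i)}G\|_{L^2}^{(j-i)/j}$, which is dominated by $\|G\|_{H^{j-1}}$ since $j-i \leq j-1$ and $H^{j-1}\hookrightarrow L^\infty$.

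Summing over the finitely many multi-indices $\vec{\beta}<\vec{\alpha}$ and collecting the dependence on $\|\mathbf{V}\|_{H^j}$ yields \eqref{E:SobolevMissingDerivativeProposition}. The main obstacle, and what forces the hypothesis $j > d/2 + 1$ (as opposed to $j > d/2$ in Proposition \ref{P:CompositionProductSobolevMoser}), is the bookkeeping when all derivatives concentrate on $\mathbf{V}$ (the $i=j$ case): here $\partial_{\vec{\beta}}G$ must be placed in $L^\infty$, which requires $H^{j-1}\hookrightarrow L^\infty$, i.e.\ $j-1 > d/2$. Once this borderline case is handled correctly, every other summand distributes derivatives strictly between the two factors and is more forgiving. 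I would suppress the routine combinatorial verification of the Fa\`a di Bruno expansion and refer to \cite{sKaM1981} for the detailed accounting of powers of $\|\mathbf{V}\|_{H^j}$.
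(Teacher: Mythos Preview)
The paper does not supply its own proof but refers to \cite{sKaM1981}, noting only that the argument rests on Lemma~\ref{L:GN} together with H\"older's inequality and Sobolev embedding---exactly the toolkit you invoke. Your overall strategy (Leibniz expansion, cancellation of the $\vec{\beta}=\vec{\alpha}$ term, Fa\`a di Bruno on the composition, then H\"older plus Gagliardo--Nirenberg) is the standard one and is correct in outline.

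There is, however, a gap in your execution of the H\"older/GN step on the $G$ factor. With the split $L^{2j/i}\times L^{2j/(j-i)}$, Lemma~\ref{L:GN} applied with $k=j$ yields $\|\partial^{(j-i)}G\|_{L^{2j/(j-i)}}\le C\|G\|_{L^\infty}^{i/j}\|\partial^{(j)}G\|_{L^2}^{(j-i)/j}$, not the expression you wrote with $\|\partial^{(j-i)}G\|_{L^2}$; and since $G$ is only assumed to lie in $H^{j-1}$, the factor $\|\partial^{(j)}G\|_{L^2}$ is unavailable. Nor can Sobolev embedding rescue the case $i=1$: there $\partial^{(j-1)}G\in L^2$ only, while $2j/(j-1)>2$. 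The standard remedy (and what is done in \cite{sKaM1981}) is to peel one derivative off the composition factor first, writing $\partial_{\vec{\alpha}-\vec{\beta}}(\mathfrak{F}\circ\mathbf{V})=\partial_{\vec{\gamma}}\bigl[(\partial\mathfrak{F}/\partial\mathbf{V})(\mathbf{V})\,\partial\mathbf{V}\bigr]$ with $|\vec{\gamma}|=i-1$; each resulting summand then has total derivative order at most $j-1$ distributed over factors that are either $\partial\mathbf{V}$ or $G$, and the Moser product estimate at level $j-1$ (proved via Lemma~\ref{L:GN} with $k=j-1$) closes cleanly because both $\|\partial\mathbf{V}\|_{L^\infty}$ and $\|G\|_{L^\infty}$ are controlled by $\|\mathbf{V}\|_{H^j}$ and $\|G\|_{H^{j-1}}$ respectively once $j-1>d/2$. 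You correctly identified this last embedding as the reason for the strengthened hypothesis $j>d/2+1$; it is only the bookkeeping of the interpolation that needs to be rerouted through level $j-1$ rather than level $j$.
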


    \begin{remark}                                                              \label{R:SobolevMissingDerivativeRemark}
        As in Remark \ref{R:SobolevCalculusRemark}, we may replace the assumption $\mathbf{V} \in
        H^j$ in Proposition \ref{P:SobolevMissingDerivativeProposition}
        with the assumption $\mathbf{V} \in H_{\Vb}^j,$ where $\bar{\mathbf{V}}$ is a constant array, in
        which case we obtain
        \begin{align}
             \|\partial_{\vec{\alpha}}&\left[(\mathfrak{F} \circ \mathbf{V})G\right] - (\mathfrak{F} \circ
             \mathbf{V})\partial_{\vec{\alpha}}G\|_{L^2} \notag \\
             &\leq C(j,d)|\partial \mathfrak{F}/\partial \mathbf{V}|_{j-1,K}(\|\mathbf{V}\|_{H_{\Vb}^j} +
                \|\mathbf{V}\|_{H_{\Vb}^j}^j)\|G\|_{H^{j-1}}. 
        \end{align}
    \end{remark}

\begin{proposition}                                                           \label{P:SobInterpolation}
            Let $N',N \in \mathbb{R}$ be such that $0 \leq N' \leq N,$ and assume that $\mathfrak{F} \in H^N.$ Then
            \begin{align} \label{E:SobInterpolation}
                \|\mathfrak{F}\|_{H^{N'}} \leq C(N',d) \|\mathfrak{F}\|_{L^2}^{1-
                N'/N}\|\mathfrak{F}\|_{H^N}^{N'/N}.
            \end{align}
\end{proposition}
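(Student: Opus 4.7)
My plan is to prove Proposition \ref{P:SobInterpolation} by passing to the Fourier side and reducing the estimate to a single application of H\"older's inequality on the weighted $L^1$ integral $\int (1+|\xi|^2)^{N'} |\widehat{\mathfrak{F}}(\xi)|^2 \, d^d \xi$. I would first use Plancherel's theorem, together with the standard characterization of the Sobolev norm, to write
\begin{equation*}
	\|\mathfrak{F}\|_{H^{s}}^2 \asymp \int_{\mathbb{R}^d} (1 + |\xi|^2)^s |\widehat{\mathfrak{F}}(\xi)|^2 \, d^d\xi
\end{equation*}
for any $s \geq 0,$ with implicit constants depending only on $s$ and $d.$ This replaces the Sobolev norms in \eqref{E:SobInterpolation} with weighted $L^2$ integrals in frequency space.

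Next, I would rewrite the integrand for $s = N'$ as a product, splitting the Japanese bracket weight according to the convex combination $N' = \theta N + (1-\theta) \cdot 0$ with $\theta \eqdef N'/N \in [0,1]:$
\begin{equation*}
	(1 + |\xi|^2)^{N'} |\widehat{\mathfrak{F}}(\xi)|^2 = \Big[ (1 + |\xi|^2)^N |\widehat{\mathfrak{F}}(\xi)|^2 \Big]^{\theta} \cdot \Big[ |\widehat{\mathfrak{F}}(\xi)|^2 \Big]^{1 - \theta}.
\end{equation*}
Then I would apply H\"older's inequality with conjugate exponents $p = 1/\theta$ and $q = 1/(1-\theta)$ (treating the degenerate cases $N' = 0$ and $N' = N$ trivially) to obtain
\begin{equation*}
	\int (1+|\xi|^2)^{N'} |\widehat{\mathfrak{F}}|^2 \, d^d\xi \leq \left(\int (1+|\xi|^2)^N |\widehat{\mathfrak{F}}|^2 \, d^d\xi\right)^{N'/N} \left(\int |\widehat{\mathfrak{F}}|^2 \, d^d\xi\right)^{1 - N'/N}.
\end{equation*}
Taking square roots and invoking Plancherel a second time to return to the physical-space Sobolev norms yields \eqref{E:SobInterpolation}.

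There is no real obstacle here — the argument is entirely routine once one spots the right H\"older splitting. The only small care needed is to track the constant $C(N',d),$ which arises purely from the equivalence between the Sobolev norm $\|\cdot\|_{H^s}$ and its Fourier-theoretic counterpart $\|(1+|\xi|^2)^{s/2} \widehat{\cdot}\,\|_{L^2};$ this equivalence is a standard consequence of bounding $(1 + |\xi|^2)^s$ from above and below by constant multiples of $\sum_{|\vec\alpha| \leq \lceil s \rceil} |\xi^{\vec\alpha}|^2$ on $\mathbb{R}^d.$ Since $N'$ is allowed to be non-integer, I would carry out the proof directly in the Fourier formulation (which is well-defined for all real $N' \geq 0$) rather than attempting to differentiate a fractional number of times in physical space.
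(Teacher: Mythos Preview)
Your argument is correct and is the standard textbook proof of this interpolation inequality. Note that the paper does not actually supply a proof of Proposition~\ref{P:SobInterpolation}; it explicitly flags the result as ``a standard Sobolev interpolation inequality'' and refers the reader to the literature, so there is no ``paper's own proof'' to compare against. Your Plancherel-plus-H\"older approach is exactly the sort of short derivation one would expect here. One small remark on the constant: since the statement allows $N$ and $N'$ to be arbitrary reals, the Fourier-side norm $\|(1+|\xi|^2)^{s/2}\widehat{\mathfrak{F}}\|_{L^2}$ is really the \emph{definition} of $\|\mathfrak{F}\|_{H^s}$ in that generality, so the norm-equivalence constants you mention are either absent or depend on both $N$ and $N'$ (and $d$) when one passes back to a derivative-based norm for integer orders; the paper's notation $C(N',d)$ is slightly loose on this point, but it does not affect your argument.
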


\bibliographystyle{amsalpha}
\bibliography{JBib}
\end{document}